\DeclareSymbolFont{symbols}{OMS}{cmsy}{m}{n}
\newcommand{\A}{\mathcal{A}}
\newcommand{\cB}{\mathcal{B}}
\newcommand{\cD}{\mathcal{D}}
\newcommand{\cE}{\mathcal{E}}
\newcommand{\C}{\mathcal{C}}
\newcommand{\cC}{\mathcal{C}}
\newcommand{\cM}{\mathcal{M}}
\newcommand{\cI}{\mathcal{I}}
\renewcommand{\O}{\mathcal{O}}
\newcommand{\cO}{\mathcal{O}}
\newcommand{\Hil}{\mathcal{H}}
\newcommand{\cK}{\mathcal{K}}
\newcommand{\M}{\mathcal{M}}
\newcommand{\N}{\mathcal{N}}
\newcommand{\RR}{\mathbb{R}}
\newcommand{\RRc}{\overline{\mathbb{R}}}
\newcommand{\CC}{\mathbb{C}}
\newcommand{\ZZ}{\mathbb{Z}}
\newcommand{\NN}{\mathbb{N}}
\newcommand{\MM}{\mathbb{M}}
\newcommand{\e}{\mathrm{e}}
\newcommand{\dd}{\mathrm{d}}
\newcommand{\ima}{\mathrm{i}}
\DeclareMathOperator{\Ad}{Ad}
\DeclareMathOperator{\aarg}{arg} 
\DeclareMathOperator{\B}{B}
\DeclareMathOperator{\Aut}{Aut}
\DeclareMathOperator{\Pic}{Pic}
\DeclareMathOperator{\id}{id}
\DeclareMathOperator{\DHR}{DHR}
\DeclareMathOperator{\Vir}{Vir}
\DeclareMathOperator{\End}{End}
\DeclareMathOperator{\tr}{tr}
\DeclareMathOperator{\Bim}{Bim}
\DeclareMathOperator{\Mod}{Mod}
\DeclareMathOperator{\Rep}{Rep}
\DeclareMathOperator{\Hom}{Hom}
\DeclareMathOperator{\Ob}{Ob}
\DeclareMathOperator{\Mor}{Mor}
\newcommand{\punkt}{\,\mathrm{.}}
\newcommand{\komma}{\,\mathrm{,}}
\newcommand{\SL}{\mathop{\mathsf{SL}}}
\newcommand{\PSL}{\mathop{\mathsf{PSL}}}
\newcommand{\SU}{\mathop{\mathsf{SU}}}
\newcommand{\PSU}{\mathop{\mathsf{PSU}}}
\newcommand{\U}{{\mathsf{U}}}
\newcommand{\Mob}{\mathsf{M\ddot ob}}
\newcommand{\Mobc}{\widetilde{\mathsf{M\ddot ob}}}
\newcommand{\gG}{\mathsf{G}}
\newcommand{\Sc}[1][]{\mathbb{S}^{1#1}}
\DeclareRobustCommand{\eg}{e.g.\@\xspace}
\DeclareRobustCommand{\cf}{cf.\@\xspace}
\DeclareRobustCommand{\ie}{i.e.\@\xspace}
\DeclareRobustCommand{\etc}{%
    \@ifnextchar{.}%
        {etc}%
        {etc.\@\xspace}%
}
\newcommand\local{commutative\@\xspace} 
\newcommand{\CFTnet}{conformal net\@\xspace}
\newcommand{\three}{III\@\xspace}
\newcommand{\threeone}{III$_1$\@\xspace}
\newcommand{\D}[2]{{}_{#1}\Delta_{#2}}
\newcommand{\NDN}{{}_\N\Delta_\N}
\newcommand{\NDM}{{}_\N\Delta_\M}
\newcommand{\MDN}{{}_\M\Delta_\N}
\newcommand{\MDM}{{}_\M\Delta_\M}
\newcommand{\LR}{{\mathrm{LR}}}
\newcommand{\loc}{{\mathrm{loc}}}
\newcommand{\s}[2]{{}_{#1}\mathcal{C}_{#2}}
\newcommand{\NNs}{\s{\N}{\N}}
\newcommand{\NMs}{\s{\N}{\M}}
\newcommand{\MNs}{\s{\M}{\N}}
\newcommand{\MMs}{\s{\M}{\M}}
\newcommand{\slot}{{~\cdot~}}
\newcommand{\tikzmatht}[2][0.30]
{\vcenter{\hbox{\begin{tikzpicture}[scale=#1]#2
				 \end{tikzpicture}}}
}
\def\colM{black!10}
\def\colN{black!25}
\def\coN{white}
\def\coM{black!12}
\def\coP{black!18}
\newcommand{\mydot}[1]{\begin{scope}[shift={#1}] \fill[shift only] (0,0) circle (1.5pt); \end{scope}}
\newtheorem{thm}{Theorem}[section]
\newtheorem{lem}[thm]{Lemma}
\newtheorem{cor}[thm]{Corollary}
\newtheorem{prop}[thm]{Proposition}
\theoremstyle{definition}
\newtheorem{defi}[thm]{Definition}
\theoremstyle{remark}
\newtheorem{rmk}[thm]{Remark}
\newtheorem{example}[thm]{Example}
\renewcommand{\H}{\Hil}
\renewcommand{\N}{N}
\renewcommand{\M}{M}
\begin{document}
\date{\today}
\dateposted{\today}
\newcommand{\mytitle}{Characterization of 2D rational local
 conformal nets and its boundary conditions: the maximal case}
\title{\mytitle}
\author[M.\ Bischoff]{Marcel Bischoff}
\address[M.\ Bischoff]
{Vanderbilt University, Department of Mathematics, 1326 Stevenson
Center, Nashville, TN 37240, USA}
\email{marcel.bischoff@vanderbilt.edu}
\thanks{
The first author is supported by the German Research Foundation (Deutsche Forschungsgemeinschaft
(DFG)) by the DFG Research Training Group 1493 ``Mathematical Structures in
Modern Quantum Physics'' until August 2014.
}

\author[Y.\ Kawahigashi]{Yasuyuki Kawahigashi}
\address[Y.\ Kawahigashi]{
Department of Mathematical Sciences, The University of Tokyo, 
Komaba, Tokyo 153-8914, Japan
\newline
        Kavli IPMU (WPI), The University of Tokyo
5-1-5 Kashiwanoha, Kashiwa, 277-8583, Japan
}
\email{yasuyuki@ms.u-tokyo.ac.jp}
\thanks{
The second author is supported by the Grants-in-Aid for Scientific Research, JSPS
}
\author[R.\ Longo]{Roberto Longo}
\address[R.\ Longo]
{
Dipartimento di Matematica, Universit\'a di Roma ``Tor Vergata'',
Via della Ricerca Scientifica, 1, I-00133 Roma, Italy
}
\email{longo@mat.uniroma2.it}
\thanks{
       The third author is supported in part by the ERC Advanced Grant 669240 QUEST ``Quantum Algebraic Structures and Models'', PRIN-MIUR and GNAMPA-INdAM. 
}

\begin{abstract}
Let $\A$ be a completely rational local M\"obius covariant net on $S^1$, which
describes a set of chiral observables.  
We show that local M\"obius covariant nets $\cB_2$ on
2D Minkowski space which contains $\A$ as chiral left-right symmetry are in one-to-one
correspondence with Morita equivalence classes of Q-systems in the unitary
modular tensor category $\DHR(\A)$. The M\"obius covariant boundary conditions
with symmetry $\A$ of such a net $\cB_2$ are given by the Q-systems in the
Morita equivalence class or by simple objects in the module category modulo
automorphisms of the dual category. We generalize to reducible boundary conditions.

To establish this result we define the notion of Morita equivalence for Q-systems
(special symmetric $\ast$-Frobenius algebra objects) and non-degenerately braided
subfactors. We prove a conjecture by Kong and Runkel, namely that Rehren's construction (generalized Longo-Rehren
construction, $\alpha$-induction construction) coincides with the categorical
full center. This gives a new view and new results for the study of braided
subfactors.
\end{abstract}

\subjclass[2010]{1T40, 18D10, 81R15, 46L37}

\maketitle
\tableofcontents

\section{Introduction}
The subject of algebraic quantum field theory 
has led to many structural results and recently
also to interesting constructions and classifications in quantum field theory.
Conformal quantum field theory can be 
conveniently studied in this approach.
In particular there is the notion of a conformal QFT on Minkowski space 
and boundary conformal QFT on Minkowski half-plane $x>0$.

One can associate with a boundary conformal QFT (boundary theory) a conformal QFT
on Minkowski space (bulk theory), but in general several boundary theories can have the same
bulk theory, which correspond to different boundary conditions of the bulk
theory.

In a different framework Fuchs, Runkel and Schweigert 
gave a general construction, the so-called TFT construction, of 
a (euclidean) rational full conformal field theory (CFT).
The construction can be divided into two steps: 
first one chooses a certain vertex operator algebra (VOA),
whose representation category $\C$ is a modular tensor category
and which specifies chiral fields. This can be seen as the analytical part. 
Then with a choice of a special symmetric Frobenius algebra object $A\in \C$ one
can construct correlators on an arbitrary Riemann surface. 
The bulk field content depends on the Morita equivalence class of $A$, while 
$A$ itself fixes a boundary condition.

Carpi, and two of the authors gave a general procedure starting from 
an algebraic quantum field theory on the Minkowski space,
to obtain all locally isomorphic boundary conformal QFT nets, in other words to
find all possible boundary conditions (with unique vacuum).
The main purpose of this paper is to show that there is a similar classification for the
boundary conditions for maximal (full) 
(conformal) local nets on Minkowski space and its boundary conditions 
as in the afore mentioned TFT construction.

Let us consider more concretely a quantum field theory on Minkowski space.
By introducing new coordinates $x_\pm=t\mp x$
we identify the two-dimensional Minkowski space $\MM=\{(t,x) \in \RR^2\}$ with
metric $\dd
s^2=\dd t^2-\dd x^2$ with 
the product $L_+\times L_-$ of two light rays $L_\pm=\{(t,x) : t\pm x=0\}$
with metric $\dd s^2= \dd x_+ \dd x_-$.
The densities of conserved quantities (symmetries) are prescribed by left and
right moving chiral fields, \ie fields just depending on $x_+$ or $x_-$,
respectively.

For example for the stress-energy tensor holds $T_{00,01}=T_+(x_+) \pm T_-(x_-)$
and for the conserved $\U(1)$-current holds $j_{0,1}(t,x) = j_+(x_+)\pm j_-(x_-)$.
In the algebraic setting such conserved quantities are abstractly given by a net
$\A_2(O)=\A_+(I)\otimes\A_-(J)$.

In general, there can be other local
observables, so the net of observables is a local extension $\cB(O)\supset
\A_2(O)$ of $\A_2$. We ask this extension to be irreducible
($\cB(O)\cap\A_2(O)'=\CC\cdot 1$), which is for example true if we assume that
$\A_2$ contains the stress energy tensor of $\cB$.

We will also assume 
that the algebras of left and right moving chiral fields are
isomorphic, in other words $\A_2(O)=\A(I)\otimes\A(J)$ 
where $O=I\times J \subset L_+\times L_-$ and $\A$ is a local M\"obius covariant net
on $\RR$. So in this case symmetries are prescribed by the net $\A$.
 
We further assume $\A$ to be completely rational, this is for example true for the net
$\Vir_c$ generated by the stress energy tensor with central charge $c<1$,
$\SU(N)$ loop group models,  or conformal nets associated with even lattices
(lattice compactifications).
The category of Doplicher--Haag--Roberts superselection sectors of a completely
rational conformal
net is a unitary modular tensor category \cite{KaLoMg2001}.

Fixing $\A$ we are, as a first step, interested in classifying all nets $\cB$ ``containing the
symmetries described by $\A$'', \ie to classify all local extensions  $\cB_2\supset \A_2$.
It turns out that the maximal ones are classified by Morita equivalence classes of
chiral extensions $\A\subset \cB$.

Let us look a moment into nets defined on $\MM_+ = \{(t,x) \in M : x>0\}$,
\ie nets with a boundary at $x=0$. 
We are interested to prescribe boundary conditions of $\cB_2$ without 
flow of ``charges'' associated with $\A$. 
The vanishing of the chargeflow across the boundary 
of the charges associated with $\A$ is encoded in the 
algebraic framework via the trivial boundary net 
$\A_+(O) = \A(I)\vee \A(J)$ with $I\times J\in\MM_+$.
This net is locally isomorphic to $\A_2$ restricted to $\MM_+$.
In other words $\A_+$ prescribes the boundary condition of $\A_2$ such that
there is no charge flow across the boundary.

Now given a two-dimensional net $\cB_2$ which contains the given rational symmetries 
described by $\A$, \ie a local irreducible extension $\cB_2\supset \A_2$,
we are now interested in all boundary conditions with no charge flow associated with
$\A$ as above. Such a 
        boundary condition is abstractly given \cite{LoRe2004,CaKaLo2013} by a net $\cB_+\supset \A_+$ 
on $\MM_+$ which is locally isomorphic to $\cB_2$ such that this isomorphism 
restricts to an isomorphism of $\A_+\cong \A_2$.

A classification gets feasibile by operator algebraic methods.
Finite index subfactors $\N\subset\M$ are in one-to-one correspondence with
algebra objects (Q-systems)
in the unitary tensor category $\End(\N)$ of endomorphisms of $\N$.

Local irreducible extension $\cB\supset\A$ of nets with finite index 
give rise to nets of subfactors $\A(O)\subset \cB(O)$ and the
corresponding Q-system (up to isomorphism) is independent of $O$ and is in the category of localized DHR endomorphisms.
Conversely, every such Q-system gives a relatively local extension, which is
local if and only if the Q-system is commutative.
In particular, one has a one-to-one correspondence between Q-systems and
relatively local extensions.

This situation can be abstracted to the setting of braided subfactors, namely we
fix an interval $I$, set $N=A(I)$ and denote by $\NNs$ the category of 
localized DHR endomorphisms which are localized in $I$. 
We can start with a type III factor $N$ and a modular tensor category
$\NNs\subset \End(I)$ and look into subfactors $N\subset M$ such that the
corresponding Q-system is in $\NNs$.
We introduce the notion of Morita equivalence of such braided subfactors.
As a main technical result we show that a conjecture of Kong and Runkel \cite{KoRu2010}
is true. Namely, we show  in Prop.\ \ref{prop:main} 
that the generalized Longo--Rehren construction \cite{Re2000} coincides 
with the full center construction in the categorical literature (\eg \cite{FrFuRuSc2006,KoRu2008}).
We give some consequences on the study of 
braided subfactors and modular invariants. 
This result opens the possiblity to apply many results from the categorical literature
to the braided subfactor and conformal net setting. In particular, we make
use of the result that Q-systems are  
Morita equivalent if and only if they have the same full center \cite{KoRu2008}.

Going back to the conformal net setting we get the main result. Namely, 
maximal 2D extensions $\cB_2\supset \A_2$ are classified 
by Morita equivalence classes of Q-systems in $\Rep(\A)$ (see Prop.\ \ref{prop:class2Dmax}
 and 
	irreducible boundary conditions of $\cB_2$ are classified by 
equivalence classes of irreducible Q-systems in the Morita class (see Prop.\ref{prop:BC}).
We also treat reducible boundary conditions, which were not conisidered before in the literature, and show that we get a classification
by reducible Q-systems.

The article is structured as follows.

In Sec.\ \ref{sec:Pre} we give some background 
on the category of endomorphisms of a type III factor,
Q-systems, unitary modular tensor categories (UMTC), 
braided subfactors and the $\alpha$-induction construction.

In Sec.\ \ref{sec:Morita} we give a notion of Morita equivalence 
for subfactors and Q-systems in UMTCs. The Morita equivalence 
class of a subfactor in a UMTC can be described by 
irreducible sectors in the module category of the subfactor 
modulo automorphisms of some dual category.

In Sec.\ \ref{sec:FullCentre} we show that the $\alpha$-induction 
construction in subfactors coincide with the full center construction
in the categorical literature. This is the first main technical result.

In Sec.\ \ref{sec:Modular} we study maximal \local Q-systems in the category
$\NNs\boxtimes\overline{\NNs}$ (the Drinfel'd center of $\NNs$) 
and give a characterization of them.
We give some application to the study of modular invariants 
and examples of inequivalent extensions with same modular invariant,
\ie example of non-vanishing second cohomology.

In Sec.\ \ref{sec:CNet} we apply our former results to the study of conformal 
field theory on the Minkowski space in the operator algebraic (Haag--Kastler)
framework. We give a proof of a folk theorem about the representation theory of local extensions (Prop.\ \ref{prop:folk}). 
Given a completely rational conformal net $\A$, as the main result, we obtain a classification of
maximal local CFTs containing the chiral observables described by $\A$
and all its boundary conditions.
We also discuss reducible boundary conditions, \ie we drop the assumption
that the boundary condition possesses a unique vacuum.
Finally, we give a relation to the construction of adding a boundary in
\cite{CaKaLo2013}, which gives an alternative proof for the classification of boundary conditions.

\section{Preliminaries}
\label{sec:Pre}
\subsection{Endomorphisms of type III factors and Q-systems}
Let us look into the following strict 2--C$^\ast$-category $\C$. Its  $0$-cells
$\Ob(\C)=\{\N,\M,P,\ldots \}$ are given by a (finite) set of type III  factors.
The 1-cells are given for $\M,\N\in\Ob(\C)$  by $\Mor(\M,\N)$, \ie the set of
unital $\ast$-homomorphisms (morphism) from $\rho:\M \to \N$ with finite (statistical)
dimension $d\rho\equiv d_\rho =[\N:\rho(\M)]^{\frac12}$, where $[\N:\rho(\M)]$ denotes the
minimal index \cite{Jo1983,Ko1986}. The 2-cells are intertwiners, \ie for
$\lambda,\mu\in\Mor(\M,\N)$ we define $\Hom(\lambda,\mu)= \{ t\in\N:
t\lambda(m)=\mu(m)t \text{ for all } m\in\M\}$. Then $\Hom(\lambda,\mu)$ is a vector
space and we write $\langle\lambda,\mu\rangle= \dim\Hom(\lambda,\mu)$ for its
dimension.  Let $\rho\in\Mor(\M,\N)$. We call $\rho$ \textbf{irreducible} if
$\rho(\M)'\cap\N=\CC\cdot 1_\N$.  A sector is a unitary equivalence class
$[\rho]=\{\Ad U\circ\rho: U\in\N \text{ unitary}\}$.  We denote by
$\End(\N)=\Mor(\N,\N)$, which is a 2--C$^\ast$-category with only one $0$-cell,
so a C$^\ast$-tensor category. 

Let $\rho_1,\ldots, \rho_n \in\Mor(\M,\N)$, and let $r_i\in\N$ be generators of
the Cuntz algebra $\mathcal \cO_n$, \ie  $\sum_{i=1}^n r_ir_i^\ast = 1_\N$ and
$r_j^\ast r_i=\delta_{ij}\cdot 1_\N$.  The morphism 
$$
 \rho=\sum_{i=1}^n \Ad r_i\circ \rho_i \in\Mor(\M,\N),
$$
is called \textbf{direct sum} of $\rho_1,\ldots,\rho_n$ and we have $r_i\in
\Hom(\rho_i,\rho)$. The direct sum is unique on sectors and we write it as
\begin{align*}
        [\rho]&=:[\rho_1]\oplus\cdots\oplus[\rho_n]=:\bigoplus_{i=1}^n
[\rho_i]\komma\\
\intertext{and for the multiple direct sum we introduce the notation:
}
        n[\sigma]&:=\bigoplus_{i=1}^n[\sigma]\,, &n\in\NN, \sigma\in\Mor(M,N)
        \punkt
\end{align*}
We say that a full and replete subcategory $\C$ of $\Mor(\M,\N)$ has 
\textbf{subobjects}, if every object is a finite direct sum of irreducible sectors 
in $\C$. Similarly, we say it has \textbf{direct sums},
if $\rho_1,\ldots,\rho_n\in\C$ implies that also their direct sum is in $\cC$.  Let

us assume $\C$ has subobjects.  If $e\in\Hom(\rho,\rho)$ is a (not
necessarily orthogonal) projection (idempotent), then there exists a
$\rho'\in\C$ and $s\in\Hom(\rho',\rho)$ and $t\in\Hom(\rho,\rho')$ such
that $s\cdot t=e$ and $t\cdot s= 1_ {\rho'}\equiv 1_\N$.  We note that if we have
$e\in\Hom(\theta,\theta)$ we have an orthonormal projection $p=e(1+e-e^\ast)^{-1}
\in \Hom(\theta,\theta)$ with the same range.
If $[\rho]=\bigoplus_{i=1}^m [\rho_i]$ and $[\sigma
]=\bigoplus_{j=1}^n[\sigma_j]$ we can decompose $t\in\Hom(\rho,\sigma)$ as 
\begin{align*}
  t&=\bigoplus_{ij} t_{ij}:= s_i\cdot t_{ij}\cdot r_i^\ast, 
  &t_{ij}\in \Hom(\rho_i,\sigma_j)
  \komma
\end{align*}
where $r_i\in\Hom(\rho_i,\rho)$ and $s_j\in\Hom(\sigma_j,\sigma)$ are isometries
as above. Similarly, one can decompose $t\in\Hom(\rho,\sigma\tau)$ etc.

Let us briefly explain the graphical notation (string diagrams)
\cite{JoSt1991,BcEvKa1999,BcEvKa2000,Se2011,BaDoHe2011} 
 which we will use. The 0-cells $\N,\M,\ldots$ are drawn as shaded
two-dimensional regions, with different shadings for each factor. 
A 1-cell $\rho\in\Mor(\N,\M)$ is a vertical 
line (one dimensional) between the region $\M$ and $\N$ and composition of 1-cells 
correspond to	horizontal concatenation. The identity $\id_\N\in\End(\N)$ is 
not drawn. The 2-cells $t\in \Hom(\rho,\sigma)$ are 
vertices between two lines. Sometimes we draw also boxes and 
again the identity $1_\rho\equiv 1 \in \Hom(\rho,\rho)$ is in general not drawn.
The composition of intertwiners is vertical concatenation and the monoidal
product  horizontal concatenation. 

We use a Frobenius rotation invariant convention for trivalent vertices,  
namely for an isometry 
$e\in\Hom(\nu,\lambda\mu)$ we introduce the diagram  
$$
  \tikzmatht{
	  \fill[black!10] (-2,-2) rectangle (2,2);
    \fill[black!20] (1,2)--(1,1.5).. controls (1,1) .. (0,0) 
      .. controls (-1,1) ..  (-1,1.5)--(-1,2);
    \fill[black!30] (1,2)--(1,1.5).. controls (1,1) .. (0,0)--(0,-2)--
      (2,-2)--(2,2);
	  \draw[thick] (0,-2) node [below] {$\nu$}--(0,0) .. controls (1,1) ..
       (1,1.5)--(1,2) node [above] {$\mu$};
		\draw[thick] (0,0) node [right] {$e$} .. controls (-1,1) .. 
      (-1,1.5)--(-1,2) node [above] {$\lambda$};
        \mydot{(0,0)};
	}
  =:\sqrt[4]{\frac{d\lambda d\mu}{d\nu}} e
  \punkt
$$

Let $\C\subset \End(\N)$ and $\cD\subset \End(\M)$ be two full subcategories.
We define the \textbf{Deligne product} $\C\boxtimes \cD$ to be the completion
of $\C\otimes_\CC\cD$ under subobjects and direct sums \cf \cite[Appendix]{LoRo1997}.

A morphism $\bar\rho\colon\N\to \M$ 
is said to be a
  \index{conjugated morphism}
	\textbf{conjugate}
to $\rho\colon\M\to\N$ if there exist intertwiners 
$R\in(\id_\M,\bar\rho\rho)$ and $\bar R\in(\id_\N,\rho\bar\rho)$ 
such that the \textbf{conjugate equations} hold:
\begin{align}
  \label{eq:conj1}
  (1_\rho\otimes R^\ast)\cdot (\bar R\otimes 1_\rho) &\equiv \rho(R^\ast)
  \cdot \bar R =1_\rho
	\\
	\label{eq:conj2}
	(1_{\bar\rho}\otimes \bar R^\ast)\cdot (R\otimes 1_{\bar\rho})
	&\equiv \bar\rho(\bar R^\ast)\cdot R=1_{\bar \rho}
        \punkt
\end{align}
The 2--morphisms $R,\bar R$ will graphically be represented by 
$$
  \bar R=
  \tikzmatht{
		\fill[\colN] (-1.5,-1) rectangle (1.5,1);
		\fill[\colM] (-0.5,1)--(-0.5,0.5) arc (180:360:0.5)--(0.5,1)--cycle;
		\draw (-0.5,1)node [above]{$\rho$}--(-0.5,0.5) arc (180:360:0.5) -- (0.5,1) node[above]{$\bar\rho$};
		\node at (0,-1) [below]{$\id_\N$};
	}
\qquad
	R=
  \tikzmatht{
		\fill[\colM] (-1.5,-1) rectangle (1.5,1);
		\fill[\colN] (-0.5,1)--(-0.5,0.5) arc (180:360:0.5)--(0.5,1)--cycle;
		\draw (-0.5,1)node [above]{$\bar\rho$}--(-0.5,0.5) arc (180:360:0.5) -- (0.5,1) node[above]{$\rho$};
		\node at (0,-1) [below]{$\id_\M$};
	}
$$
and the above equations \eqref{eq:conj1}, \eqref{eq:conj2} are sometimes called
	\index{zig-zag identity}
	\textbf{zig-zag identities}, 
because in diagrams they are given by
$$
	\tikzmatht{
		\fill[\colN] (-2,-1.5) rectangle (2,1.5);
		\fill[\colM] (2,1.5)--(-1,1.5)--(-1,0) arc (180:360:0.5) arc (180:0:0.5)-- (1,-1.5)--(2,-1.5)--cycle;
		\draw (-1,1.5)node[above]{$\rho$}--(-1,0) arc (180:360:0.5) arc (180:0:0.5)-- (1,-1.5)node[below]{$\rho$};
		}
		=
			\tikzmatht{
				 \fill[\colN] (-1,-1.5) rectangle (0,1.5); 
				 \fill[\colM] (0,-1.5) rectangle (1,1.5); 
				 \draw (0,-1.5)node [below]{$\rho$}--(0,1.5)
					node [above]{$\rho$};
			}
		,\qquad    \tikzmatht{
		\fill[\colM] (-2,-1.5) rectangle (2,1.5);
		\fill[\colN] (2,1.5)--(-1,1.5)--(-1,0) arc (180:360:0.5) arc (180:0:0.5)-- (1,-1.5)--(2,-1.5)--cycle;
		\draw (-1,1.5)node[above]{$\bar\rho$}--(-1,0) arc (180:360:0.5) arc (180:0:0.5)-- (1,-1.5) node[below]{$\bar\rho$};
		}
		=
			\tikzmatht{
				 \fill[\colM] (-1,-1.5) rectangle (0,1.5); 
				 \fill[\colN] (0,-1.5) rectangle (1,1.5); 
				 \draw (0,-1.5)node [below]{$\bar\rho$}--(0,1.5)
					node [above]{$\bar\rho$};
			}
		\punkt
$$
If $\rho$ is irreducible we ask the solution $R,\bar R$ to be  
	\textbf{normalized},
\ie $\|R\|=\|\bar R\|$. In the case that $\rho$ is not irreducible 
we further ask that $R,\bar R$ is a 
	\textbf{standard}
solution of the conjugate equation, \ie $R$ (and similar $\bar R$) is of the form
\begin{align*}
	R&= \sum_i (\bar W_i\otimes  W_i) \cdot R_i \equiv\bigoplus_i R_i
	\komma
\end{align*}
where $R_i\in(\id_\M,\bar \rho_i \rho_i)$ is a normalized solution 
for an irreducible object $\rho_i\prec\rho$ and $W_i \in (\rho_i,\rho)$ and $\bar W_i\in(\bar\rho_i,\rho)$ 
are isometries expressing $\rho$ and $\bar \rho$ as direct sums of irreducibles.
We note that for the dimension
$d_\rho\equiv d\rho$ of $\rho$ we have $R^\ast R=d_\rho\cdot 1_\M$ and  
$d\rho=d\bar\rho$.
For $\N\neq \M$ we may always choose $\bar R_\rho=R_{\bar\rho}$.
If we have a subcategory $\NNs\subset\End(\N)$ 
we may choose a system $\NDN$ of representants for every sector in $\NNs$ 
and choose $R_\rho$ for every $\rho\in\NDN$
such that for $[\rho]\neq [\bar\rho]$ we have $\bar R_\rho=R_{\bar\rho}$. 
For $[\bar\rho]=[\rho]$ the intertwiners $R_\rho$ and $\bar R_\rho$ 
are intrinsically related, namely $\bar R_\rho=\pm R_\rho$ holds, where the sign $\pm1$ is called the Frobenius--Schur indicator. In this case the sector $[\rho]$ is 
called \textbf{real} for $+1$ and \textbf{pseudo-real} for $-1$.
Although $[\rho]$ and $[\bar \rho]$ might be represented by the same $\rho\in\NDN$ 
we still use $\bar\rho$ in the diagrammatically notation to distinguish 
between $R_\rho$ and $\bar R_\rho$.

A triple $\Theta=(\theta,w,x)$ with $\theta\in \End(\N)$ and isometries
$w\colon\id_N\to \theta$ and $x\colon\theta\to \theta^2$,
which we will graphically display as 
\begin{align*}
  \sqrt[4]{d\theta}\,w&=
  \tikzmatht{
    \useasboundingbox (-1,-1.5)--(1,2);
    \draw[thick] (0,0)--(0,1.5) node[above]{$\theta$};
    \mydot{(0,0)};
    \node at (0,0) [right] {$w$};
  }
  &
  \sqrt[4]{d\theta}\,x&=
  \tikzmatht{
    \draw[thick] (-1,1.5) node[above]{$\theta$}--(-1,1) arc (180:360:1)--(1,1.5)
      node[above]{$\theta$} (0,0)--(0,-1) node[below]{$\theta$};
    \mydot{(0,0)};
    \node at (0,0) [above] {$x$};
  }
\end{align*}
is called a \textbf{Q-sytem} (\cf \cite{Lo1994,LoRo1997}) if it fulfills
\begin{align*}
  xx&=\theta(x)x & (x\otimes 1_\theta)x &= (1_\theta\otimes x) x
  &\text{(associativity)}\\
  w^\ast x &= \theta(w^\ast)x=\lambda 1_\theta 
  &(w^\ast \otimes 1_\theta)x &= (1_\theta\otimes w^\ast)x=\lambda 1_\theta 
  &\text{(unit law)}
\end{align*}
where $\lambda=\sqrt{d\theta}^{-1}$. In graphical notation this reads:
\begin{align*}
  \tikzmatht{
    \draw[thick] (0,-1) node [below] {$\theta$}--(0,0);
    \mydot{(0,0)};
    \draw[thick] (-1,2.5) node [above] {$\theta$}--(-1,1) arc (180:360:1);
    \draw[thick] (0,2.5) node [above] {$\theta$}--(0,2) arc (180:360:1)--(2,2.5)
      node [above] {$\theta$};
    \mydot{(1,1)};
  }
  &=
  \tikzmatht{
    \begin{scope}[xscale=-1]
    \draw[thick] (0,-1) node [below] {$\theta$}--(0,0);
    \mydot{(0,0)};
    \draw[thick] (-1,2.5) node [above] {$\theta$}--(-1,1) arc (180:360:1);
    \draw[thick] (0,2.5) node [above] {$\theta$}--(0,2) arc (180:360:1)--(2,2.5)
      node [above] {$\theta$};
    \mydot{(1,1)};  
    \end{scope}
  }
  \,;
  &
  \tikzmatht{
    \draw[thick] (0,-1) node [below] {$\theta$}--(0,0);
    \draw[thick] (-1,2) node [above] {$\theta$}--(-1,1) arc (180:360:1)--
    (1,1.5);
    \mydot{(1,1.5)};
    \mydot{(0,0)};
  }
  &=
  \tikzmatht{
    \begin{scope}[xscale=-1]
    \draw[thick] (0,-1) node [below] {$\theta$}--(0,0);
    \draw[thick] (-1,2) node [above] {$\theta$}--(-1,1) arc (180:360:1)--
      (1,1.5);
    \mydot{(1,1.5)};
    \mydot{(0,0)};
    \end{scope}
  }
  = 
  \tikzmatht{
    \draw[thick] (0,-1) node [below] {$\theta$}--(0,2) node [above] {$\theta$};
  }
  \punkt
\end{align*}
Two Q-systems $\Theta=(\theta,w,x)$ and $\tilde\Theta=(\tilde\theta,\tilde
w,\tilde x)$ in $\End(\N)$ are called equivalent, if there is a unitary 
$u\in\Hom(\theta,\tilde\theta)$, such that
\begin{align*}
  \tilde x u&= (u\otimes u) x\equiv u\theta(u)x \,;& 
  u\tilde w&=w\\
\intertext{hold, or graphically:}
	\tikzmatht{
		\draw[thick] (1,3) node [above] {$\tilde\theta$}
		--(1,1.5) arc (360:180:1)--(-1,3) node [above] {$\tilde\theta$};
		\draw[thick] (0,-1.5) node [below] {$\theta$}--(0,0.5);
		\mydot{(0,0.5)};
		\node at (0,0.5) [above] {$\tilde x$};
		\fill[white](-0.5,0) rectangle (.5,-1);
		\draw (-.5,-0) rectangle node{$u$} (.5,-1);
	}
  &=
	\tikzmatht{
		\draw[thick] (1,3) node [above] {$\tilde\theta$}
		--(1,1.5) arc (360:180:1)--(-1,3) node [above] {$\tilde\theta$};
		\draw[thick] (0,-1.5) node [below] {$\theta$}--(0,0.5);
		\mydot{(0,0.5)};
		\node at (0,0.5) [below right] {$x$};
		\fill[white](-0.5,1.5) rectangle (-1.5,2.5);
		\draw (-0.5,1.5) rectangle node{$u$} (-1.5,2.5);
		\fill[white](0.5,1.5) rectangle (1.5,2.5);
		\draw (0.5,1.5) rectangle node{$u$} (1.5,2.5);
  }
  \,;
  &
	\tikzmatht{
    \draw[thick] (0,-2) node [below] {$\theta$}--(0,0);
		\fill[white](-0.5,-0.5) rectangle (.5,-1.5);
		\draw (-.5,-0.5) rectangle node{$u$} (.5,-1.5);
		\mydot{(0,0)};
    \node at (0,0) [above right] {$\tilde w^\ast$};
  }
  &=
	\tikzmatht{
    \draw[thick] (0,-2) node [below] {$\theta$}--(0,0);
		\mydot{(0,0)};
    \node at (0,0) [above right] {$w^\ast$};
  }
  \punkt
\end{align*}

A Q-system in a  C$^\ast$-tensor category automatically \cite{LoRo1997} fulfills
the ``Frobenius law'' 
\begin{align*}
  \begin{array}{rcccl}
  (x^\ast \otimes 1_\theta)(1_\theta \otimes x) 
  \equiv x^\ast\theta(x)
  &=& xx^\ast
  &=&(1_\theta\otimes x^\ast)(x\otimes 1_\theta)\equiv \theta(x^\ast)x\\
\intertext{or graphically:}
  \tikzmatht{
    \draw[thick] (-2,-2) node [below] {$\theta$}--(-2,0) arc (180:0:1) arc
(180:360:1)--(2,2) node [above] {$\theta$};
    \draw[thick] (-1,1)--(-1,2) node [above] {$\theta$};
    \draw[thick] (1,-1)--(1,-2) node [below] {$\theta$};
		\mydot{(1,-1)};
    \mydot{(-1,1)};
  }
  &=&
  \tikzmatht{
    \draw[thick] (-1,-2) node [below] {$\theta$} -- (-1,-1.5) arc (180:0:1)
      -- (1,-2) node [below] {$\theta$};
    \draw[thick] (0,-.5)--(0,.5);
    \draw[thick] (-1,2) node [above] {$\theta$} -- (-1,1.5) arc (180:360:1)
      -- (1,2) node [above] {$\theta$};
		\mydot{(0,-.5)};
    \mydot{(0,.5)};
  }
  &=&
  \tikzmatht{
    \begin{scope}[xscale=-1]
    \draw[thick] (-2,-2) node [below] {$\theta$}--(-2,0) arc (180:0:1) arc
      (180:360:1)--(2,2) node [above] {$\theta$};
    \draw[thick] (-1,1)--(-1,2) node [above] {$\theta$};
    \draw[thick] (1,-1)--(1,-2) node [below] {$\theta$};
		\mydot{(1,-1)};
    \mydot{(-1,1)};
    \end{scope}
  }
  \punkt
  \end{array}
\end{align*}
This means a Q-system is a special symmetric $\ast$-Frobenius algebra object,
but we prefer to use the name Q-system which is most common in the subfactor
context, (other names would be monoid, 
algebra object, monoidal algebra).
We say a Q-system $\Theta=(\theta,w,x)$ is \textbf{irreducible} (called haploid
in the Frobenius algebra context) if 
$\langle \id_\N,\theta\rangle=1$. 

\begin{defi}
  \label{defi:TrivialQSystem}
Every irreducible $a\in\Mor(\M,\N)$ defines an irreducible Q-system
\begin{align*}
  \Theta_a&=\left(\theta_a,w_a,x_a\right)
  := \left(a\bar a, \bar r_a,a(r_a)\right)
\end{align*}
in $\End(\N)$, where $r_a\colon\id_\M\to\bar a a$ and $\bar r_a\colon\id_\N\to a\bar a$ are 
isometries such that 
$\bar R_a=\sqrt{da}\cdot \bar r_a$ and $R_a=\sqrt{da}\cdot r_a$  
fulfill the conjugate equations (\ref{eq:conj1},\ref{eq:conj2}) for $a$. In graphical notation:
\begin{align*}
  \theta_a &= 
  \tikzmatht{
    \fill[\coN] (0.5,-1) rectangle (1,1); 
    \fill[\coM] (-0.5,-1) rectangle (0.5,1); 
    \fill[\coN] (-1,-1) rectangle (-0.5,1); 
    \draw (-0.5,-1)node [below]{$a$}--(-0.5,1)
      node [above]{$a$};
    \draw (0.5,-1)node [below]{$\bar a$}--(0.5,1)
      node [above]{$\bar a$};
  }
  \komma&
  \sqrt{da}\,w_a&=
  \tikzmatht{
    \fill[\coN] (-1,-1) rectangle (1,1);
    \fill[\coM] (-.5,1)--(-0.5,0.5) arc (180:360:.5)--(.5,1);
    \draw (-.5,1) node [above] {$a$}--(-0.5,0.5) arc (180:360:.5)--(.5,1)
      node [above] {$\bar a$};
  }
  \komma
  &\sqrt{da}\,x & =
  \tikzmatht{
    \fill[\coN] (-2,-1.5) rectangle (2,1.5);
    \path (-0.5,-1.5) coordinate (A);
    \path ($(0,1.15)+(235:1.5)$) coordinate (B);
    \path (-1.5,1.15) coordinate (C);
    \path (0.5,-1.5) coordinate (F);
    \path ($(0,1.15)+(-55:1.5)$) coordinate (E);
    \path (1.5,1.15) coordinate (D);
    \fill[\coM] (-1.5,1.5) rectangle (-0.5,1.145);
    \fill[\coM] (1.5,1.5) rectangle (0.5,1.145);
    \fill[\coM] (A) .. controls +(90:0.5) and +(-35:0.5) .. (B)
      arc (235:180:1.5) --++(1,0) arc (180:360:0.5) -- (D)
      arc (0:-55:1.5) .. controls +(-135:0.5) and +(90:0.5) .. (F) -- cycle;
    \draw  (A) node[below]{$a$}.. controls +(90:0.5) and +(-35:0.5) 
      .. (B) arc (235:180:1.5) ++(1,0) arc (180:360:0.5) 
      (D) arc (0:-55:1.5) .. controls +(-135:0.5) and +(90:0.5) .. (F)
      node[below]{$\bar a$};
    \draw (-1.5,1.5) node[above]{$a$}--(-1.5,1.145);
    \draw (-.5,1.5) node[above]{$\bar a$}--(-.5,1.145);
    \draw (.5,1.5) node[above]{$a$}--(.5,1.145);
    \draw (1.5,1.5) node[above]{$\bar a$}--(1.5,1.145);
  }\punkt
\end{align*}
\end{defi}
We remark that up to this point everything can abstractly be defined 
in a 2--C$^\ast$-category. 

Consider now a finite index irreducible subfactor $\N\subset\M$ 
with inclusion $\iota\colon\N\to\M$ then $\Theta:=\Theta_{\bar\iota}$ 
gives \textbf{dual canonical Q-system} of $\N\subset\M$ 
(and $\Gamma=\Theta_\iota$ the canonical Q-system). The
endomorphism $\theta\equiv\bar\iota\iota \in\End(\N)$ is called the \textbf{dual canonical 
endomorphism} of $\N\subset\M$ ($\gamma\equiv\iota\bar\iota\in\End(\M)$ is
called the canonical endomorphism).

Conversely, starting from an irreducible Q-system $\Theta$ in $\End(\N)$, 
there is a subfactor
$\N_1\subset \N$, where $\N_1$ is defined to be the image $\N_1:=E(\N)$ of the 
conditional expectation $E(\slot)=x^\ast\theta(\slot)x$
and there is subfactor (extension) $\N\subset \M$ defined 
by the Jones basic construction $\N_1\subset\N\subset \M$ (\cf \cite{LoRe1995}).
One can make the construction of $\M$ explicit (\cf \cite{BiKaLoRe2014-2})  and obtains this way a dual
morphism
$\bar \iota\colon \M \to \N$ of the inclusion $\iota\colon\N\to\M$ such 
that $\Theta =\Theta_{\bar\iota}$.

The upshot of this discussion is that there is a one-to-one correspondence (\cf \cite{Lo1994}) of 
\begin{itemize}
  \item Q-systems in $\End(\N)$ up to equivalence.
  \item Irreducible finite index subfactors $\N\subset \M$ up to conjugation.
\end{itemize}

\begin{rmk}
\label{rmk:2ndCom}
We note that $\theta$ alone does not fix $\N\subset \M$, which can be seen
as a cohomological obstruction. Izumi and Kosaki \cite{IzKo2002} define
the \textbf{second cohomology}
$H^2(\N\subset\M)$
to be all equivalence classes of Q-systems $\Theta=(\theta,w,x)$ 
with $\theta$ the dual canonical endomorphism of $\N\subset \M$ (their definition
uses actually the canonical endomorphism). We say the second cohomology of
$N\subset M$
vanishes if there up to equivalence is just one Q-system $\Theta=(\theta,x,w)$, 
where $\theta$ is the dual canonical endomorphism of $N\subset M$.
\end{rmk}

We finally note that $\Theta$ is a Q-system in the full $C^\ast$-tensor
subcategory with subobjects generated by $\theta$.
The Q-system becomes ``trivial'', \ie is of the form $\Theta_{\bar\iota}$, in the
2--C$^\ast$-category formed of 0-cells $\{N,M\}$ and full and replete
subcategories $\s{L}{P}\subset
\Mor(P,L)$ with subobjects and direct sums, which is generated by $\{\iota,\bar\iota\}$.
We remark that this is actually a general feature of Frobenius algebra object in
rigid tensor categors, in particular the obtained 2--C$^\ast$-category
together with the 1-morphisms $\iota\colon\N\to \M$ and
$\bar\iota\colon\M\to\N$ appears in \cite{Mg2003} under the name \textbf{Morita
context}. In the general situation having a special symmetric Frobenius algebra
$A$ in a rigid tensor category $\cC$ one can find a bicategory $\tilde  \cC\supset \cC$ giving a Morita context in which the Frobenius
algebra becomes trivial, \cf \cite{Mg2003} for details.

\subsection{UMTCs in \texorpdfstring{$\End(\N)$}{End(N)} and braided subfactors}
\label{subsec:UMTC}

Let us fix a type III factor $\N$ and 
write $\NNs \subset \End(\N)$ for a full and replete 
subcategory $\NNs$ of $\End(\N)$,
such that each object is a finite direct sum of irreducible objects
and $\NNs$ is closed under taking finite direct sums. We use this notation
to stress that it is a category of $\N$-$\N$ morphisms.
We may choose an endomorphism for each irreducible sector and denote the set of these 
endomorphisms by $\NDN$.
Let us assume the following properties:
\begin{enumerate}
	\item $\id_\N\in\NDN$.
	\item There are only finitely many irreducible sectors in $\NNs$, \ie $|\NDN|< \infty$.
	\item If $\sigma\in\NDN$ then also a conjugate (dual) $\bar\sigma\in\NDN$.
	\item If $\rho,\sigma \in \NDN$, then $\rho\circ\sigma\in\NNs$, in other words 
	we have that 
	\begin{align*}
	[\mu\circ\nu]&=\bigoplus N_{\mu\nu}^\rho[\rho],
	&N_{\mu\nu}^\rho&=\langle \rho,\mu\nu\rangle,
\end{align*}
where $N_{\mu\nu}^\rho$ are called \textbf{fusion rule coefficients}.
	\newcounter{enumTemp}
	\setcounter{enumTemp}{\theenumi}
\end{enumerate}
This means that $\NNs$ is a finite rigid C$^\ast$--tensor category
\cite{LoRo1997}, \ie a \textbf{unitary 
fusion category}. 
We associated with $\NNs$ a finite dimensional vector space $K_0(\NNs)\otimes_{\ZZ}\CC
\cong \CC^{|\NDN|}$, where $|\NDN|$ denotes the cardinality of the system $\NDN$
and $K_0(\NNs)$ is the Grothendieck group of the monoidal category $\NNs$.

We define the \textbf{global dimension} $\dim\NNs$ of $\NNs$ to be
$$
\dim\NNs=\sum_{\rho\in\NDN} (d\rho)^2\punkt
$$
 
We remark that for convenience we assume $\NNs$ to be a subcategory of
$\End(\N)$. But it turns out that this is not a lost of generality, because 
every countable generated rigid C$^\ast$--tensor can be embedded 
in $\End(\N)$ by the result of \cite{Ya2003}.

We will need more structure on $\NNs$, in particular we additionally assume:
\begin{enumerate}
	\setcounter{enumi}{\theenumTemp}
	\item There is a natural family $\{\varepsilon(\mu,\nu)\in\Hom(\mu\nu,\nu\mu):\mu,\nu\in \NNs\}$ 
	fulfilling:
	\begin{align*}
		\varepsilon(\lambda,\mu\nu)&=
		(1_\mu\otimes \varepsilon(\lambda,\nu))\cdot
		(\varepsilon(\lambda,\mu)\otimes 1_\nu)
		\equiv \mu(\varepsilon(\lambda,\nu))\cdot
		\varepsilon(\lambda,\mu)
		\\
		\varepsilon(\lambda\mu,\nu)&=(\varepsilon(\lambda,\nu)\otimes
		1_\mu)\cdot(1_\lambda\otimes \varepsilon(\mu,\nu))
		\equiv \varepsilon(\lambda,\nu)\cdot
		\lambda(\varepsilon(\mu,\nu)).
	\end{align*}
	Naturality means, that for $s\colon\sigma\to\sigma'$ and $t\colon\tau\to\tau'$
	\begin{align*}
		(t\otimes s)\cdot \varepsilon(\sigma,\tau)
		&\equiv t\cdot\tau (s)\cdot\varepsilon(\sigma,\tau)
		\\= 
		\varepsilon(\sigma',\tau')\cdot(s\otimes t)
		&\equiv\varepsilon(\sigma',\tau')\cdot
		s\cdot\sigma(t).
	\end{align*}
	We note that this family is determined by $\{\varepsilon(\mu,\nu)\in\Hom(\mu\nu,\nu\mu):\mu,\nu\in \NDN\}$.
	\setcounter{enumTemp}{\theenumi}
\end{enumerate}
That means that $\NNs$ is a \textbf{braided unitary fusion category} 
which has automatically the structure of a 
\textbf{unitary ribbon fusion category}.
We then say that $\NNs\subset \End(\N)$ is a \textbf{URFC}.
The braiding $\varepsilon^+(\lambda,\mu):=\varepsilon(\lambda,\mu)$ 
always comes along with an opposite braiding 
$\varepsilon^-(\lambda,\mu):=\varepsilon(\mu,\lambda)^\ast$
which in general is different from	
 $\varepsilon^+(\lambda,\mu)$. We will graphically denote the braiding by:
\begin{align*}  
  \varepsilon^+(\lambda,\nu)&=
  \tikzmatht{
    \useasboundingbox (-1.5,-3) rectangle (1.5,3);
    \draw[thick]  (-1,2) node [above] {$\nu$} --(-1,1.5)
  	  .. controls (-1,0) and (1,0) .. (1,-1.5)--(1,-2) node [below] {$\nu$};
    \draw[thick] (-1,-2) node [below] {$\lambda$} 	--(-1,-1.5) .. controls 
       (-1,0) and (1,0) ..  (1,1.5) --(1,2) node [above] {$\lambda$};
    \draw [double,ultra thick,white] (-1,-1.5) .. controls (-1,0) and 
      (1,0) ..  (1,1.5);
    \draw[thick] (-1,-2) node [below] {$\lambda$} 	--(-1,-1.5) .. controls 
      (-1,0) and (1,0) ..  (1,1.5) --(1,2) node [above] {$\lambda$};
  }
  &
  \varepsilon^-(\lambda,\nu)&=
  \tikzmatht{
    \useasboundingbox (-1.5,-3) rectangle (1.5,3);
    \draw[thick] (-1,-2) node [below] {$\lambda$} 	--(-1,-1.5) .. controls 
      (-1,0) and (1,0) ..  (1,1.5) --(1,2) node [above] {$\lambda$};
    \draw[thick] (-1,-2) node [below] {$\lambda$} 	--(-1,-1.5) .. controls 
       (-1,0) and (1,0) ..  (1,1.5) --(1,2) node [above] {$\lambda$};
    \draw [double,ultra thick,white] (1,-1.5) .. controls (1,0) and 
      (-1,0) ..  (-1,1.5);
    \draw[thick]  (-1,2) node [above] {$\nu$} --(-1,1.5)
  	  .. controls (-1,0) and (1,0) .. (1,-1.5)--(1,-2) node [below] {$\nu$};
  }
  \punkt
\end{align*}

We denote by $\overline{\NNs}$ the braided category obtained by interchanging
the braiding with the opposite braiding.

Finally, most of the time we will also use the following additional assumption:
\begin{enumerate}
	\setcounter{enumi}{\theenumTemp}
	\item The braiding is non-degenerate, \ie
	$\varepsilon^+(\lambda,\mu)=\varepsilon^-(\lambda,\mu)$
	for all  $\mu\in\NDN$ implies $[\lambda]=[\id_\N]$.
	\setcounter{enumTemp}{\theenumi}
\end{enumerate}
We then say $\NNs$ is \textbf{modular}. In other words $\NNs$ is a 
\textbf{unitary modular tensor category} (UMTC).

We define (see \cite{BcEvKa1999}) for $\lambda,\mu\in\NDN$
\begin{align}
	Y_{\lambda\mu}&=
	\tikzmatht{
	\begin{scope}[yscale=-1]
		\draw[thick] (-2.5,0) arc (180:60:1.5);
		\draw[thick] (-2.5,0) node [left] {$\bar\lambda$} arc (180:390:1.5);
		\draw[thick]  (2.5,0) node [right] {$\bar\mu$} arc (0:210:1.5);
		\draw[thick] (2.5,0) arc (360:240:1.5);
	\end{scope}
	}
  \,;
  &
  \omega_\lambda\cdot 1_\lambda &=
	\tikzmatht{
	\begin{scope}[yscale=-1]
		\draw[thick]  (0,-1.5)node [above] {$\lambda$}--(0,-0.7)
  		(0,0)--(0,1.5) node [below] {$\lambda$};
		\draw [thick] (0,0) arc  (180:490:.7);
	\end{scope}
	}
        \notag
\intertext{and the following $|\NDN|\times|\NDN|$-matrices}
  \label{eq:ST}
  S_{\lambda\mu}&=(\dim \NNs)^{-\frac 12} Y_{\lambda,\mu}\komma 
  &T_{\lambda\mu}&= \e^{-\pi \ima c/12} \delta_{\lambda\mu}\omega_\lambda\komma
\end{align}
where
\begin{align*}
  z&=\sum_{\rho\in\NDN} (d\rho)^2\omega_\rho\,;&
  c&=4\aarg(z)/\pi\punkt
\end{align*}
They obey the relations of the \textbf{partial Verlinde modular algebra}:
  $TSTST=S$,
  $CTC=T$, and $CSC=S$,
where $C_{\mu\nu}=\delta_{\mu,\bar\nu}$ is the \textbf{charge conjugation matrix}.

The property (\theenumTemp) is equivalent to:
\begin{enumerate}
	\setcounter{enumi}{\theenumTemp}
	\item[(6')] $Z(\NNs)\cong\NNs{} \boxtimes {}\overline{\NNs}$,
where $Z(\NNs)$ is the Drinfeld center of $\NNs$ \cite[Corollary 7.11]{Mg2003II} and
	\item[(6'')] the matrix $S=(S_{\lambda\mu})$ is unitary.
\end{enumerate}

In particular, in the modular case we have (\cite[Prop.~2.5]{BcEvKa1999}):
\begin{align*}
  S^\ast S&=T^\ast T =1\komma&
  (ST)^3&=S^2=C\komma&
  CTC=T 
  \komma
\end{align*}
\ie $S$ and $T$ define a unitary representation of $\SL(2,\ZZ)\cong
\ZZ_6\ast_{\ZZ_2}\ZZ_4$ on
$\CC^{|\NDN|}$ if and only if $\NNs$ is modular.

\subsection{Braided subfactors and \texorpdfstring{$\alpha$}{alpha}-induction}
Let $\N$ be a type III factor, $\NNs\subset \End(\N)$ a URFC and let
$\iota(\N)\subset \M$ be an irreducible subfactor such that
$\theta\equiv\bar\iota\iota\in\NNs$.  We call the data $(\iota(\N)\subset
\M,\NNs)$ a \textbf{braided subfactor}. If $\NNs\subset\End(\N)$ happens to be
a UMTC we call the braided subfactor a \textbf{non-degenerately braided}.
There is an obvious
one-to-one correspondence between (the equivalence classes of)  braided subfactors in $\NNs$ and Q-systems in
$\NNs$.

For $\rho \in\NNs$ we define its \textbf{$\alpha$-induction} by
\begin{align*}
	\alpha^\pm_\lambda=\bar\iota^{-1}\circ \Ad (\varepsilon^\pm(\lambda,\theta))  
  \circ \lambda\circ \bar \iota \in \End(\M)
  \punkt
\end{align*}
We define the \textbf{module category} $\NMs$ to be the full subcategory with subobjects and direct sums of
$\Mor(\M,\N)$, which is  generated by $\NNs\bar\iota\equiv\{\rho\bar\iota:\rho\in\NNs\}$
and choose a set of representatives of irreducible sectors $\NDM$.
In the same way we define $\MNs$ and the \textbf{dual category} $\MMs$ generated by $\iota\NNs$ and
$\iota\NNs\bar \iota$, respectively. Finally we define $\MMs^\pm$ to be generated by
$\alpha^\pm(\NNs)$, respectively, and the \textbf{ambichiral category}
$\MMs^0=\MMs^+\cap \MMs^-$. Again we choose a set of representatives of irreducible sectors $\MDN,\MDM,\MDM^\pm,\MDM^0$
in the respective categories.

It turns out  that $\MMs^\pm\subset\MMs$ and that
$\MMs^+\cup\MMs^-$ generates $\MMs$ \cite[Thm.\ 5.10]{BcEvKa1999}.  It will be convenient to work in the 2-category generated by $\NNs\cup\NMs\cup\MNs\cup\MMs$. 

As shown in \cite[Prop.~3.1]{BcEvKa1999}, we have for $a\in\NMs$, 
$\lambda\in\NNs$:
\begin{align*}
  \varepsilon^\pm(\lambda,a\iota)&\in \Hom(\lambda a,a\alpha^\pm_\lambda)&
  \cE^\pm(\lambda,\bar a)\in\Hom(\alpha^\pm_\lambda\bar a,\bar a \lambda)
  \komma
\end{align*}
where
$\cE^\pm(\lambda,\bar a):=T^\ast \iota(\varepsilon^\pm(\lambda,\bar \nu))
\alpha^\pm_\lambda(T)$ for $a\in\NMs$ with $\bar a\prec \bar \iota \nu$ for some
$\nu\in\NNs$ and $T\in(\bar a,\bar\iota \nu)$ an isometry. The definition does
not depend on the choice of $\nu$ and $T$.  We set $\cE^\pm(\bar a,\lambda):=(\cE^\mp(\lambda,\bar
a))^\ast$.  We represent this graphically---where we use 
thin lines for morphisms in $\MNs$ and $\NMs$, normal lines
for endomorphisms in $\NNs$ and thick lines for endomorphisms in $\MMs$---as follows:
\begin{align*}  
	\varepsilon^+(\lambda,a\iota)&=
  \tikzmatht{
    \useasboundingbox (-1.5,-3) rectangle (1.5,3);
    \fill[\coM] (1.5,-2)--(1.5,2)--(-1,2)--(-1,1.5).. controls (-1,0)
      and (1,0) ..  (1,-1.5)--(1,-2);
    \fill[\coN] (-1.5,-2)--(-1.5,2)--(-1,2)--(-1,1.5).. controls 
      (-1,0) and (1,0) ..  (1,-1.5)--(1,-2);
	  \draw  (-1,2) node [above] {$a$} --(-1,1.5)
  	  .. controls (-1,0) and (1,0) .. (1,-1.5)--(1,-2) node [below] {$a$};
    \draw[thick] (-1,-2) node [below] {$\lambda$} 	--(-1,-1.5) .. controls 
      (-1,0) and (1,0) ..  (1,1.5) --(1,2) node [above] {$\alpha^+_\lambda$};
  	\draw [double,ultra thick,\coN] (-1,-1.5) .. controls (-1,0) and 
      (1,0) ..  (1,1.5);
    \draw [double,ultra thick,\coM] (0,0) .. controls (0.5,.375) and
      (1,0.75) ..  (1,1.5)--(1,2);
	  \draw [thick] (-1,-1.5) .. controls (-1,0) and (1,0) ..	(1,1.5);
    \draw[ultra thick] (0,0)..controls (0.5,.375) and (1,0.75)..(1,1.5)--(1,2);
	}
  \,;
  &
  \cE^+(\lambda,\bar a)&=
	\tikzmatht{
    \useasboundingbox (-1.5,-2.5) rectangle (1.5,2.5);
    \begin{scope}[rotate=180]
    \fill[\coM] (1.5,-2)--(1.5,2)--(-1,2)--(-1,1.5).. controls (-1,0)
       and (1,0) ..  (1,-1.5)--(1,-2);
    \fill[\coN] (-1.5,-2)--(-1.5,2)--(-1,2)--(-1,1.5).. controls 
      (-1,0) and (1,0) ..  (1,-1.5)--(1,-2);
	  \draw (-1,2) node [below] {$\bar a$} --(-1,1.5)
    	.. controls (-1,0) and (1,0) .. (1,-1.5)--(1,-2) node [above] {$\bar a$};
    \draw[thick]
      (-1,-2) node [above] {$\lambda$} 	--(-1,-1.5)
      .. controls (-1,0) and (1,0) ..  (1,1.5) --(1,2) node [below] 
      {$\alpha^+_\lambda$};
  	\draw[double,ultra thick, \coN] (-1,-1.5) .. controls (-1,0) and 
      (1,0)..(1,1.5); 
    \draw[double,ultra thick, \coM] (0,0) .. controls (0.5,.375) and
      (1,0.75) ..  (1,1.5)--(1,2);
	  \draw[thick] (-1,-1.5) .. controls (-1,0) and (1,0) ..	(1,1.5);
    \draw[ultra thick] (0,0)..controls (0.5,.375) and (1,0.75)..(1,1.5)--(1,2);
  \end{scope}
	}
  \punkt
\end{align*}
The \textbf{intertwining braided fusion equations} (IBFE's) 
\cite[Prop.~3.3]{BcEvKa1999} hold, namely 
\begin{align*}
  \rho(t)\,\varepsilon^\pm(\lambda,\rho)&=\varepsilon^\pm(a\iota,\rho)\,
  a(\cE^\pm (\bar b,\rho))\,t\,,
  \\
  t\,\varepsilon^\pm(\rho,\lambda)&=a(\cE^\pm(\rho,\bar b))\,
  \varepsilon^\pm(
  \rho,a\iota)\,\rho(t)\,,
  \\
  \rho(y)\,\varepsilon^\pm,(a\iota,\rho)&=\varepsilon^\pm(\lambda,\rho)\,
  \lambda( \varepsilon^\pm(b\iota,\rho))\,y\,,
  \\
  y\,\varepsilon^\pm(\rho,a\iota)&=\lambda(\varepsilon^\pm(\rho,b
  \iota)\,\varepsilon^\pm(\rho,\lambda))\,\rho(y)\,,
  \\
  \alpha^\mp(Y)\,\cE^\pm(\bar a,\rho)&=\cE^\pm(\bar b,\rho)\,
  \bar b(\varepsilon^\pm(\lambda,\rho))\,Y\,,
  \\
  Y\,\cE^\pm(\rho,\bar a)&=\bar b (\varepsilon^\pm(\rho,\lambda))\,
  \cE^\pm(\rho, \bar b)\,\alpha^\pm_\rho\rho(Y)
  \komma
\end{align*}
where $\lambda,\rho\in\NNs$, $a,b\in\NMs$ with conjugates $\bar a,\bar b\in\MNs$;
$t\in\Hom(\lambda,a\bar b)$, $y\in\Hom(a,\lambda b)$ and $Y\in\Hom(\bar a,\bar
b\lambda)$. The IBFE's have simple graphical interpretation, \eg the first and sixth
equations are represented by:
\begin{align*}
  \tikzmatht{
    \useasboundingbox (-3,-4.5) rectangle (3,4.5);
	  \fill[\coN] (-3,-3) rectangle (3,3);
		\draw[thick](0,-3) node [below] {$\lambda$}--(0,0);
		\fill[\coM] (1,3)--(1,-1) .. controls (1,-1.5) ..(0,-2.5)
       .. controls  (-1,-1.5)  .. (-1,-1.5)--(-1,3);
		\draw (1,3) node [above] {$\bar b$}--(1,-1) .. controls (1,-1.5) ..
      (0,-2.5) node [above] {$t$} .. controls  (-1,-1.5)  .. (-1,-1)--(-1,3)
		node [above] {$a$};
    \draw[thick,-round cap] (2,-3) node [below] {$\rho$}
      ..controls (2,-2)..(1.1,-1.1);
    \draw[thick,round cap-] (-1.1,1.1)..controls (-2,2)..(-2,3)
      node [above] {$\rho$};
    \draw[ultra thick, line cap=round] (0.85,-0.85)-- node
      [above] {$\alpha^-_\rho$} (-0.85,0.85);
		\mydot{(0,-2.5) };
	}
  &=
  \tikzmatht{
    \useasboundingbox (-3,-4.5) rectangle (3,4.5);
	  \fill[\coN] (-3,-3) rectangle (3,3);
		\draw[thick](0,-3) node [below] {$\lambda$}--(0,1);
		\fill[\coM] (1,3)--(1,2.5) .. controls (1,2) ..(0,1)
       .. controls  (-1,2)  .. (-1,2.5)--(-1,3);
		\draw (1,3) node [above] {$\bar b$}--(1,2.5) .. controls (1,2) ..
      (0,1) node [above] {$t$} .. controls  (-1,2)  .. (-1,2.5)--(-1,3)
		node [above] {$a$};
    \draw[thick,-round cap] (2,-3) node [below] {$\rho$}
      ..controls (2,-2)..(1.1,-1.1)--(.15,-.15);
    \draw[thick,round cap-]  (-.15,.15)--(-1.1,1.1)..controls
      (-2,2)..(-2,3) node [above] {$\rho$};
		\mydot{(0,1) };
	}
  \,;&
  \tikzmatht{
    \useasboundingbox (-3,-3.5) rectangle (3,3.5);
	  \fill[\coN] (-3,-3) rectangle (3,3);
		\fill[\coM] (-3,3)--(-3,-3)--(0,-3)--(0,-2.5).. controls 
       (-1,-1.5)  .. (-1,-1)--(-1,3);		
    \draw[thick] (1,3) node [above] {$\lambda$}--(1,-1) .. controls (1,-1.5) ..(0,-2.5);
    \draw (0,-3) node [below] {$\bar a$} -- (0,-2.5) node [above] {$Y$} 
      .. controls  (-1,-1.5)  .. (-1,-1)--(-1,3) node [above] {$\bar b$};
    \draw[thick, -round cap] (2,-3) node [below] {$\rho$}
      ..controls (2,-2).. (1.1,-1.1);
    \draw[thick, line cap=round] (.9,-.9)--(-.9,.9);
    \draw[ultra thick,round cap-] (-1.15,1.15)..controls (-2,2)..(-2,3) node 
      [above] {$\alpha^-_\rho$};
		\mydot{(0,-2.5) };
	}
  &=
  \tikzmatht{
    \useasboundingbox (-3,-3.5) rectangle (3,3.5);
	  \fill[\coN] (-3,-3) rectangle (3,3);
		\fill[\coM] (-3,3)--(-3,-3)--(0,-3)--(0,1).. controls 
       (-1,2)  .. (-1,2.5)--(-1,3);		
    \draw[thick] (1,3) node [above] {$\lambda$}--(1,2.5) .. controls (1,2) ..(0,1);
    \draw (0,-3) node [below] {$\bar a$} -- (0,1) node [above] {$Y$} 
      .. controls  (-1,2)  .. (-1,2.5)--(-1,3) node [above] {$\bar b$};
    \draw[thick, -round cap] (2,-3) node [below] {$\rho$}
      ..controls (2,-2).. (1.1,-1.1)--(.1,-.1);
    \draw[ultra thick,round cap-] (-.15,.15)..controls (-2,2)..(-2,3) node
      [above] {$\alpha^-_\rho$};
		\mydot{(0,1) };
	}
  \punkt
\end{align*}
For details we refer to \cite[Sect.~3.3]{BcEvKa1999}.

There is a \textbf{relative braiding} \cite[p.\ 738]{BcEvKa2000}
\begin{align}
 \cE_\mathrm{r}(\beta_+,\beta_-)&:=S^\ast\alpha_{\mu}(T^\ast)\varepsilon(\lambda,\mu)\alpha^+_\lambda(S)T
	\in \Hom(\beta_+\beta_-,\beta_+\beta_-)\komma
\end{align}
where for fixed $\beta_\pm\in\MMs^\pm$, we choose $\lambda,\mu\in \NNs$, such that $\beta_+\prec \alpha^+_\lambda$, $\beta_-\prec\alpha^-_\nu$and isometries $S,T$, such that $T\in\Hom(\beta_+,\alpha^+\mu)$
and $S\in\Hom(\beta_-,\alpha^-_\mu)$. The definition is independent of the particular choice of  $\lambda,\mu,S,T$.

The relative braidings give a non-degenerate braiding $\varepsilon(\slot,\slot):=\cE_\mathrm{r}(\slot,\slot)$ on $\MMs^0$ by \cite[Sec.\ 4]{BcEvKa2000}, so in particular
$\MMs^0$ becomes a UMTC.

In general for two braided subfactors $\iota_a(\N)\subset\M_a$ and
$\iota_b(\N)\subset \M_b$ in $\NNs$ we define $\s{\M_a}{\M_b}$ as a full
subcategory of $\Mor(\M_b,\M_a)$ with subobjects and direct sums generated by
$\iota_a\NNs\bar \iota_b$.

\section{Morita equivalence for braided subfactors}
\label{sec:Morita}
\subsection{Module categories, modules and bimodules} 
In this section we give the notion of Morita equivalent non-degenerately braided
subfactors. 

We adapt the following definitions from \cite{Os2003}.
\begin{defi}
  A (strict) \textbf{module category} over a tensor category $\cC$ is a category
  $\cM$ together with an exact bifunctor $\otimes\colon\cC\times \cM\to\cM$ such that
  $(X\otimes Y)\otimes M = X\otimes(Y\otimes M)$ for all $X,Y\in\cC$ and
  $M\in\cM$.

  Let $\cM_1,\cM_2$ be two module categories over $\cC$.  A (strict)
  \textbf{module functor} from $\cM_1$ to $\cM_2$ is a functor $F\colon\cM_1 \to 
  \cM_2$ such that $F(X\otimes M)=X\otimes F(M)$.

  Two module categories $\cM_1$ and $\cM_2$ over $\cC$ are called
  \textbf{isomorphic} if there exist a module functor, which is an isomorphism
  of categories.
\end{defi}
Let $\NNs \subset \End(\N)$ be a UFC and let
$\Theta=(\theta,w,x)$ be a Q-system in $\NNs$ corresponding to $\N\subset \M$.
A (right) $\Theta$-module (\cf \cite{EvPi2003}) is a pair $(\rho,r)$ with
$\rho \in \NNs$ and $\tilde r \in \Hom(\rho\circ\theta,\rho)$, such that 
$r^\ast$ is an isometry and $\tilde r=\sqrt[4]{d\theta}\, r$ satisfies
\begin{align*}
  \tilde r\cdot (1_\rho\otimes m) 
    &= \tilde r \cdot (\tilde e \otimes 1_\theta ) &\Leftrightarrow&& \tilde
r\cdot \rho(m)
    &= \tilde \rho(\tilde r^2)\\ 
      \tilde r\cdot (1_\rho\otimes r)
    &=1_\rho &\Leftrightarrow&& \tilde r\cdot \rho(e)
    &=1_\rho 
\end{align*} 
where $m = \sqrt[4]{d\theta} x^\ast$ the
multiplication and $e=\sqrt[4]{d\theta} w$ the unit of 
the (Frobenius) algebra object corresponding to $\Theta$.  
Graphically this means:
\begin{align*}
  \tikzmatht{
    \draw[thick] (0,-4) node [below] {$\rho$}--(0,1) node [above] {$\rho$};
    \draw[thick] (3,-4) node [below] {$\theta$}--(3,-3.5) arc (0:180:1)--
      (1,-4) node [below] {$\theta$};
    \draw[thick] (2,-2.5) .. controls (2,-2) .. (0,0);
    \mydot{(0,0)};
    \node at (0,0) [left] {$r$};
    \mydot{(2,-2.5)};
    \node at (2,-2.5) [above right] {$x^\ast$};
  }
  &=
  \tikzmatht{
    \draw[thick] (0,-4) node [below] {$\rho$}--(0,1) node [above] {$\rho$};
    \draw[thick] (2,-4) node [below] {$\theta$}--(2,-3.5)--
      (2,-2.5) .. controls (2,-2) .. (0,0);
    \draw[thick] (1,-4) node [below] {$\theta$}--(1,-3.5)..controls (1,-3)..
      (0,-2);
    \mydot{(0,0)};
    \node at (0,0) [left] {$r$};
    \mydot{(0,-2)};
    \node at (0,-2) [left] {$r$};
  }\,;
  &
  \tikzmatht{
    \draw[thick] (0,-3) node [below] {$\rho$}--(0,2) node [above] {$\rho$};
    \draw[thick] (1,-2)--(1,-1.5) .. controls (1,-1) .. (0,0);
    \mydot{(0,0)};
    \node[thick] at (0,0) [left] {$r$};
    \mydot{(1,-2)}
    \node at (1,-2) [right] {$w$};
  }
  &=
  \tikzmatht{
    \draw[thick] (0,-3) node [below] {$\rho$}--(0,2) node [above] {$\rho$};
  }
  \punkt
\end{align*}

A left $\Theta$-module can be defined similarly. We note that because we are
working in C$^\ast$-categories and ask $r^\ast$ to be an isometry, 
that a module is also a co-module by the action
$r^\ast$.
The endomorphism $\rho\theta$ with $\rho\in\NNs$ has the structure of a right
$\Theta$-module, where the action is given by $\tilde r=1_\rho\otimes m \equiv
\rho(m)\equiv \sqrt[4]{d\theta}\cdot
\rho(x^\ast)\in\Hom(\rho\theta\theta,\rho\theta)$ in other words
$r=\rho(x^\ast)$, graphically:
\begin{align*}
  \tikzmatht{
    \draw[thick] (0,-4) node [below] {$\rho\theta$}--(0,0) node [above] {$\rho\theta$};
    \draw[thick] (1,-4) node [below] {$\theta$}--(1,-3.5)..controls (1,-3)..
      (0,-2);
    \mydot{(0,-2)};
    \node at (0,-2) [left] {$r$};
  }
  :=
  \tikzmatht{%
    \draw[thick] (0,-4) node [below] {$\rho$}--(0,0) node [above] {$\rho$};
    \draw[thick] (3,-4) node [below] {$\theta$}--(3,-3.5) arc (0:180:1)--
      (1,-4) node [below] {$\theta$};
    \draw[thick] (2,-2.5) .. controls (2,-1) and (1,-1.5).. (1,0)
        node [above] {$\theta$};
    \mydot{(2,-2.5)};
    \node at (2,-2.5) [above right] {$x^\ast$};
  }
  \punkt
\end{align*}
It is called the 
  \textbf{induced module}.
Any irreducible right $\Theta$-module is equivalent to a submodule of an induced module \cf
\cite{Os2003}.

The $\Theta$-modules form a category with $\Hom_\Theta(\rho,\sigma)\equiv
\Hom_\Theta((\rho,r),(\sigma,s))=\{
t\in\Hom(\rho,\sigma) : tr=st\}$, so the arrows are arrows of the objects which
intertwine the actions. There is a correspondence between projections $p\in
\Hom_\Theta(\rho,\rho)$ and submodules, namely we can choose $\rho_p$ and
$t\in\Hom(\rho_p,\rho)$ with $t^\ast t=1_{\rho_p}$, $tt^\ast=p$ and define
$r_p=t^\ast r t$.

Let $\Theta_a=(\theta_a,w_a,x_a)$ and $\Theta_b=(\theta_b,w_b,x_b)$ be two
Q-systems in $\NNs$. A $\Theta_a$-$\Theta_b$ bimodule is a triple
$(\rho,r_a,r_b)$ with $\rho\in\NNs$ and $\rho_a\in \Hom(\theta_a\rho,\rho)$ and
$\rho_b\in \Hom(\rho\theta_b,\rho)$, such that $(\rho,r_a)$ is a left
$\Theta_a$-module and $(\rho,r_b)$ is a (right) $\Theta_b$-module and which
commute, \ie $$r_a\cdot \theta_a(r_b) = r_b\cdot r_a.$$ We can define:
\begin{align*}
	r:=r_a\cdot (1_{\theta_a}\otimes r_b) &= r_b\cdot (r_a \otimes 1_{\theta_a})
	\in (\theta_a\circ\rho\circ\theta_b,\rho).
\end{align*}
Let $\rho=(\rho,r_a,r_b)$ and $\sigma=(\sigma,s_a,s_b)$ be two
$\Theta_a$--$\Theta_b$ bimodules.  An intertwiner $t\colon\rho \to \sigma$ is an
$\Theta_a$--$\Theta_b$ bimodule intertwiner, if $t$ intertwines the actions $r$ and $s$, \ie 
\begin{align*}
	 tr &=s (1_{\theta_a}\otimes t \otimes 1_{\theta_b})\equiv s\theta_a(t)
\punkt
\end{align*}
Let us denote by $\Bim(\Theta_a,\Theta_b)$ the category 
of bimodules with $\Hom_{\Theta_a-\Theta_b}(\rho,\sigma)$
$\Theta_a$-$\Theta_b$ bimodule intertwiner.
We note that one can give Q-systems, bimodules and intertwiners
the structure of a bicategory, by introducing a relative tensor product between
bimodules.

We set $\Mod(\Theta)=\Bim(1,\Theta)$ to be the category of 
(right) $\Theta$-modules.

The category $\Mod(\Theta)$ has a natural structure of a (strict)
left $\NNs$ module category, where the functor 
$\NNs\times \Mod(\Theta)$ is given by $(\mu, \rho) \mapsto \mu\rho$
where $\mu\rho$ is a right-module with $r_{\mu\sigma}=\mu(r_\rho)$
and $\Hom_{\Mod(\Theta)}(\rho,\sigma) \ni T\mapsto 
\mu(T)\in\Hom_{\Mod(\Theta)}(\mu\rho,\mu\sigma)$.

\begin{prop}[{\cite[Lemma 3.1.]{EvPi2003}}] 
	\label{prop:EquivCat}
        Let $\NNs$ be a UMTC and $\Theta_a,\Theta_b$ irreducible Q-systems in
$\NNs$.
	The category of $\Theta_a$-$\Theta_b$ bimodules is equivalent to the  
  category $\s{\M_a}{\M_b}$. The functor $\Phi$ maps $\beta \in \s{\M_a}{\M_b}$ to
  $\bar \iota_a\circ\beta\circ\iota_b$ and $t\in \Hom(\beta,\beta')$ to 
  $\bar \iota_a(t)\in \Hom_{\Theta_a\text-\Theta_b}(\Phi(\beta),\Phi(\beta'))$.
\end{prop}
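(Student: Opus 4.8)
The plan is to show that $\Phi\colon\s{\M_a}{\M_b}\to\Bim(\Theta_a,\Theta_b)$ is a well-defined functor which is full, faithful and essentially surjective. For well-definedness I would first use that $\Theta_a=\Theta_{\bar\iota_a}$ and $\Theta_b=\Theta_{\bar\iota_b}$ in the sense of Definition~\ref{defi:TrivialQSystem}, so $\theta_a=\bar\iota_a\iota_a$ and $\theta_b=\bar\iota_b\iota_b$. Given $\beta\in\s{\M_a}{\M_b}$ with $\beta\prec\iota_a\lambda\bar\iota_b$ for some $\lambda\in\NNs$, one gets $\Phi(\beta)=\bar\iota_a\beta\iota_b\prec\theta_a\lambda\theta_b\in\NNs$. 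I would equip $\Phi(\beta)$ with its left $\Theta_a$-action by applying $\bar\iota_a$ to $R_{\bar\iota_a}^\ast\in\Hom(\iota_a\bar\iota_a,\id_{\M_a})$ and whiskering with $1_{\beta\iota_b}$, and with its right $\Theta_b$-action by applying $\bar\iota_a$ to $\beta(\bar R_{\iota_b}^\ast)$, $\bar R_{\iota_b}^\ast\in\Hom(\iota_b\bar\iota_b,\id_{\M_b})$, and whiskering with $1_{\iota_b}$ (inserting the normalizing powers of $\sqrt[4]{d\theta_a}$, $\sqrt[4]{d\theta_b}$). The module axioms for each action are precisely the conjugate equations \eqref{eq:conj1}, \eqref{eq:conj2} for $\bar\iota_a$ resp.\ $\iota_b$ (\ie the Q-system relations of $\Theta_{\bar\iota_a}$, $\Theta_{\bar\iota_b}$) read off diagrammatically, and the two actions commute automatically because they whisker on the disjoint outer legs $\bar\iota_a$ and $\iota_b$ of $\Phi(\beta)=\bar\iota_a\beta\iota_b$ (interchange law). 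On morphisms $t\in\Hom(\beta,\beta')\mapsto\bar\iota_a(t)$, and compatibility with both actions is immediate from naturality of whiskering, so $\Phi$ is a functor.

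Faithfulness is then immediate, since $\bar\iota_a$ is injective (being a unital normal $\ast$-homomorphism out of the factor $\M_a$). For fullness, given a $\Theta_a$-$\Theta_b$ bimodule intertwiner $s\in\Hom(\bar\iota_a\beta\iota_b,\bar\iota_a\beta'\iota_b)$, I would reconstruct $t$ by ``capping off'': apply $\iota_a$ to $s$, contract the outer $\iota_a\bar\iota_a$-leg on the left against $R_{\bar\iota_a}$, and thread the remaining $\iota_b$-leg through $\bar R_{\iota_b}$. The left $\Theta_a$-module property of $s$ makes the left contraction collapse cleanly by the unit law of $\Theta_a$ (no averaging term survives), and the right $\Theta_b$-module property together with the conjugate equations for $\iota_b$ force $t\in\Hom(\beta,\beta')$ with $\bar\iota_a(t)=s$. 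This Frobenius-reciprocity computation is the step I expect to be the main obstacle, since it is the only place where the module structure of $s$ enters in an essential way; everything else is bookkeeping with normalizations and the interchange law.

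For essential surjectivity, let $(\rho,r_a,r_b)$ be a $\Theta_a$-$\Theta_b$ bimodule with $\rho\in\NNs$. Since $r_a^\ast$, $r_b^\ast$ are isometries, $\rho$ is a bicomodule, so the bicoaction realizes $(\rho,r_a,r_b)$ as a subbimodule of the induced bimodule $\theta_a\rho\theta_b$. On the other hand $\Phi(\iota_a\rho\bar\iota_b)=\bar\iota_a\iota_a\rho\bar\iota_b\iota_b=\theta_a\rho\theta_b$, and one checks directly that the bimodule structure $\Phi$ assigns to $\iota_a\rho\bar\iota_b$ is (isomorphic to) this induced one; hence $(\rho,r_a,r_b)$ is a subbimodule of $\Phi(\iota_a\rho\bar\iota_b)$. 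Because $\Phi$ is already known to be fully faithful, the bimodule endomorphism algebra of $\Phi(\iota_a\rho\bar\iota_b)$ equals $\End_{\s{\M_a}{\M_b}}(\iota_a\rho\bar\iota_b)$, so the projection cutting out our subbimodule is $\bar\iota_a$ of a projection in $\s{\M_a}{\M_b}$; since $\s{\M_a}{\M_b}$ has subobjects by definition, there is $\beta\prec\iota_a\rho\bar\iota_b$ with $\Phi(\beta)\cong(\rho,r_a,r_b)$. Therefore $\Phi$ is an equivalence.

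As a remark, the statement can alternatively be deduced from the general theory of special symmetric Frobenius algebras and Morita contexts in rigid $C^\ast$-tensor categories \cite{Mg2003}: in the $2$-$C^\ast$-category generated by $\{\iota_a,\bar\iota_a,\iota_b,\bar\iota_b\}\cup\NNs$ the Q-systems $\Theta_a,\Theta_b$ become the trivial Q-systems $\Theta_{\bar\iota_a},\Theta_{\bar\iota_b}$, and $\Phi$ is then the standard equivalence between modules over a trivial algebra and morphisms in the ambient category. We also note that the braiding of $\NNs$ (hence modularity) plays no role in this particular statement --- only that $\NNs$ is a unitary fusion category containing $\theta_a$ and $\theta_b$ --- but we keep the hypotheses as stated in \cite{EvPi2003}.
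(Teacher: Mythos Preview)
Your proposal is correct and follows essentially the same approach as the paper: the paper's own proof consists entirely of citing \cite{EvPi2003} for the facts that $\Phi$ is fully faithful and essentially surjective, while you supply those details explicitly. Your remark that modularity plays no role is also accurate and worth keeping.
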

\begin{proof}
In \cite[Lemma 3.1.]{EvPi2003} is shown that the functor $\Phi$ is fully
faithful. It is also shown that is is essentially surjective, so it gives an
equivalence of categories.
\end{proof}
The functor $\Phi$ is graphically given as follows, where $\rho=\Phi(\beta)$
$\tilde r\in\Hom(\theta_a\rho\theta_b,\rho)$ the action:
$$
    \Phi\colon
    \tikzmatht[.42]{
      \fill[\coM] (-1,-1) rectangle (0,1);
      \fill[\coP] (0,-1) rectangle (1,1);
      \draw[ultra thick] (0,-1) node [below] {$\beta$}--(0,1) node [above]
        {$\beta'$};
      \fill[white] (-.5,-.5) rectangle (.5,.5);
      \draw (-.5,-.5) rectangle node {$t$} (.5,.5);
    }
    \mapsto
    \tikzmatht[.42]{
      \fill[\coM] (-1,-1) rectangle (0,1);
      \fill[\coP] (0,-1) rectangle (1,1);
      \draw[ultra thick] (0,-1) node [below] {$\beta$}--(0,1) node [above]
        {$\beta'$};
      \draw (-1,-1) node [below] {$\bar\iota_a$}--(-1,1) node 
        [above] {$\bar\iota_a$};
      \draw (1,-1) node [below] {$\iota_b$}--(1,1) node 
        [above] {$\iota_b$};
      \fill[white] (-.5,-.5) rectangle (.5,.5);
      \draw (-.5,-.5) rectangle node {$t$} (.5,.5);
  }
  \komma
  \qquad 
  \tilde r=
  \tikzmatht[.42]{
    \draw[thick] (0,-1) node [below] {$\rho$}--(0,1) node [above] 
      {$\rho$};
    \draw[thick] (-1,-1) node [below] {$\theta_a$} .. controls (-1,-.5) .. 
      (0,0);
    \draw[thick] (1,-1) node [below] {$\theta_b$} .. controls (1,-.5) .. 
      (0,0);
    \mydot{(0,0)};
  }
  :=
    \tikzmatht[.42]{
      \fill[\coP] (0,1)--(0,-1)--(3,-1) arc (0:90:1.5)
        .. controls +(-.5,0)..(1,1);
      \fill[\coM] (0,1)--(0,-1)--(-3,-1) arc (180:90:1.5)
        .. controls +(.5,0)..(-1,1);
      \fill[\coN] (1,-1) arc (180:0:.5);
      \fill[\coN] (-1,-1) arc (0:180:.5);
      \draw[ultra thick] (0,-1) node [below] {$\beta$}--(0,1) node [above]
        {$\beta$};
      \draw (-3,-1) node[below] {$\bar \iota_a$}arc (180:90:1.5).. 
        controls +(.5,0)..  (-1,1) node [above] {$\bar\iota_a$};
      \draw (3,-1) node[below] {$\iota_b$}arc (0:90:1.5).. controls +(-.5,0)..
        (1,1) node [above] {$\iota_b$};
      \draw (1,-1) node [below] {$\iota_b$} arc (180:0:.5) node [below]
         {$\bar \iota_b$};
      \draw (-1,-1) node [below] {$\bar\iota_a$} arc (0:180:.5) node [below]
         {$ \iota_a$};
    }
  \punkt
$$

\begin{rmk}
        \label{rmk:alphaInduction}
Let $\Theta=(\theta,w,x)$ be a Q-system in a UMTC $\NNs$ with corresponding
subfactor $\iota(N)\subset M$.
The bimodule $\Phi(\alpha^\pm_\lambda)\equiv\bar\iota\alpha^\pm_\lambda\iota\equiv
\bar\iota\iota\lambda$ is the object
$\theta\lambda$ with left
action the induced action $x^\ast$ and right action by
$x^\ast\varepsilon^\pm(\lambda,\theta)$, namely for the $+$-case: 
$$
  \tilde r=
  \tikzmatht[.42]{
    \draw[thick] (0,-1) node [below] {$\theta$}--(0,1) node [above] 
      {$\theta$};
    \draw[thick] (-1,-1) node [below] {$\theta$} .. controls (-1,-.5) .. 
      (0,0);
    \draw[thick] (1,-1) node [below] {$\theta$} .. controls (1,-.5) .. 
      (0,0);
    \mydot{(0,0)};
    \draw[ultra thick,double, white] (0.5,-1) --(0.5,1);  
    \draw[thick] (0.5,-1) node [below] {$\lambda$}--(0.5,1) node [above] 
      {$\lambda$};
  }
  =
    \tikzmatht[.42]{
      \fill[\coM] (0,1)--(0,-1)--(3,-1) arc (0:90:1.5)
        .. controls +(-.5,0)..(1,1);
      \fill[\coM] (0,1)--(0,-1)--(-3,-1) arc (180:90:1.5)
        .. controls +(.5,0)..(-1,1);
      \fill[\coN] (1,-1) arc (180:0:.5);
      \fill[\coN] (-1,-1) arc (0:180:.5);
      \draw[ultra thick] (0,-1) node [below] {$\alpha^+_\lambda$}--(0,1) node [above]
        {$\alpha^+_\lambda$};
      \draw (-3,-1) node[below] {$\bar \iota$}arc (180:90:1.5).. 
        controls +(.5,0)..  (-1,1) node [above] {$\bar\iota$};
      \draw (3,-1) node[below] {$\iota$}arc (0:90:1.5).. controls +(-.5,0)..
        (1,1) node [above] {$\iota$};
      \draw (1,-1) node [below] {$\iota$} arc (180:0:.5) node [below]
         {$\bar \iota$};
      \draw (-1,-1) node [below] {$\bar\iota$} arc (0:180:.5) node [below]
         {$ \iota$};
    }
\komma
$$
where equality can
be seen easily using $\iota\lambda = \alpha^+_\lambda\iota$,
$\Theta=\Theta_{\bar\iota}$ and the IBFEs by pulling the $\lambda$-string
between $\bar \iota$ and $\iota$. The $-$-case works analogous using the opposite
        braiding.
The obtained bimodules coincide with the notion of
$\alpha$-induction in the categorical literature.
\end{rmk}
The category $\Bim(\Theta,\Theta)$ becomes a tensor category, where
$\rho\otimes_\Theta\sigma$ is the object associated to the projection in
$P_{\rho\otimes_\Theta\sigma}\in\Hom(\rho\sigma,\rho\sigma)$ given by:

$$
P_{\rho\otimes_\Theta\sigma}
  =\frac1{\sqrt{d\theta}}
  \tikzmatht{
    \draw[thick] (-2,1)--(-2,-3) node [below] {$\rho$};
    \draw[thick] (2,1)--(2,-3) node [below] {$\sigma$};
    \draw[thick] (-1,-1) arc (180:360:1)  .. controls (1,-.5) .. 
      (2,0);
    \draw[thick] (-1,-1) .. controls (-1,-.5) .. 
      (-2,0);
    \mydot{(2,0)};
    \mydot{(-2,0)};
    \mydot{(0,-2)};
    \draw[thick] (0,-2)--(0,-2.5);
    \mydot{(0,-2.5)};
  }
\punkt
$$
and it is easy to check that $\Phi$ is a tensor functor. Thus,
$\Bim(\Theta,\Theta)$ and $\MMs$ are equivalent as tensor categories.
We note that this category is non-strict. We can define the categories 
$\Bim^\pm(\Theta,\Theta)$ to be the image of $\MMs^\pm$ under $\Phi$ and
$\Bim^0(\Theta,\Theta)=\Bim^+(\Theta,\Theta)\cap\Bim^-(\Theta,\Theta)$.

In the special case
$\M_a=\N$ and $\M_b=\M$ and $\theta_a=\theta$
we have an equivalence of the category $\s{\N}{\M}$ and the category $\Mod(\Theta)$ of right $\Theta$-modules 
given by $\bar a \mapsto \bar a \iota$.
The category of right $\Theta$-modules $\Mod(\Theta)$ 
becomes a module category over $\NNs$ using the monoidal structure
inherent from $\End(\N)$. The same is true for $\NMs$.

In particular, it follows:
\begin{prop}
  \label{prop:IsoCats}
	Let $\NNs\subset \End(\N)$ be a UMTC and $\Theta$ be a Q-system in $\NNs$
	with corresponding subfactor $\N\subset \M$.
	Then  $\Mod(\Theta)$ and $\NMs$ are equivalent as module categories.
\end{prop}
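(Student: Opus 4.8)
The plan is to combine the two equivalences of categories already established in this excerpt: Proposition~\ref{prop:EquivCat}, which for $\M_a=\N$, $\M_b=\M$ and $\theta_a=1$, $\theta_b=\theta$ gives an equivalence $\Phi\colon\NMs\xrightarrow{\sim}\Bim(1,\Theta)=\Mod(\Theta)$ via $\beta\mapsto\bar\iota_{\N}\beta\iota = \beta\iota$ (since $\iota_{\N}=\id_\N$), and the remark that both $\Mod(\Theta)$ and $\NMs$ carry natural structures of left $\NNs$-module categories, the first by $(\mu,\rho)\mapsto\mu\rho$ with action $r_{\mu\rho}=\mu(r_\rho)$, the second by the monoidal structure inherited from $\End(\N)$, i.e. $(\mu,\beta)\mapsto\mu\beta$. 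So the real content is to check that the functor $\Phi$ (equivalently its quasi-inverse $\bar a\mapsto\bar a\iota$) is a \emph{module} functor, that is, it intertwines these two actions compatibly with the module-associativity constraint.

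First I would spell out what $\Phi$ does on objects: for $\beta\in\NMs$ the object $\Phi(\beta)$ is the endomorphism $\beta\iota\in\NNs$ equipped with the right $\Theta$-module action obtained by composing $\beta$ with the identity on $\Theta$ — concretely, writing $\bar\iota$ for a conjugate of $\iota$ and using $\Theta=\Theta_{\bar\iota}$, the action is $\beta\iota(r)$ up to the normalization conventions of Definition~\ref{defi:TrivialQSystem}. Then for $\mu\in\NNs$ I would observe that $\Phi(\mu\beta)$ has underlying object $\mu\beta\iota=\mu\cdot\Phi(\beta)$, and that the induced module action on $\mu\Phi(\beta)$ is by definition $\mu$ applied to the action on $\Phi(\beta)$, which is exactly the right $\Theta$-action on $\Phi(\mu\beta)$ since $\mu$ is applied on the left and the $\Theta$-action acts on the right, so they literally coincide (no constraint cell is needed because everything is strict on the $\End(\N)$ side). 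On morphisms, $\Phi$ sends $t\colon\beta\to\beta'$ to $t\cdot 1_\iota = t\iota$ viewed as a module intertwiner, and $\Phi(1_\mu\otimes t)=1_\mu\otimes(t\iota)$, so functoriality of the action is immediate. Hence $\Phi$ is a strict module functor, and being an equivalence of categories it is an isomorphism onto its image, giving the claimed equivalence of module categories.

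The main obstacle I anticipate is purely bookkeeping rather than conceptual: one must be careful that the left $\NNs$-action on $\Mod(\Theta)$ used here is the one with action cell $r_{\mu\rho}=\mu(r_\rho)$ (stated just before Proposition~\ref{prop:IsoCats}) and verify that under $\Phi$ this matches the left $\NNs$-action $\mu\cdot\beta=\mu\beta$ on $\NMs$ \emph{as a $\Theta$-module}, i.e. that the right $\Theta$-module structure on $\mu\beta\iota$ coming from $\Phi$ agrees with $\mu$ applied to the $\Theta$-module structure on $\beta\iota$. This is a one-line check once the action on $\Phi(\beta)$ is written out, but it does require unwinding the graphical definition of $\tilde r$ given after Proposition~\ref{prop:EquivCat} and noting that precomposition by $\mu$ on the left commutes with all the cups/caps built from $\iota,\bar\iota$ on the right. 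I would also remark (as the excerpt essentially does) that the same argument shows $\Mod(\Theta)$ and $\NMs$ are isomorphic as module categories, not merely equivalent, since $\Phi$ is strict. The proof can therefore be written in two or three sentences citing Proposition~\ref{prop:EquivCat} and the preceding discussion.

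\begin{proof}
By Proposition~\ref{prop:EquivCat} applied to $\M_a=\N$, $\M_b=\M$ with
$\Theta_a=(\id_\N,1,1)$ the trivial Q-system and $\Theta_b=\Theta$, the functor
$\Phi\colon \NMs\to\Bim(1,\Theta)=\Mod(\Theta)$, $\beta\mapsto \bar\iota_\N\beta\iota
=\beta\iota$, is an equivalence of categories, with quasi-inverse $\bar a\mapsto \bar
a\iota$ as noted after Remark~\ref{rmk:alphaInduction}. It remains to see that $\Phi$
respects the module structures. On $\NMs$ the left $\NNs$-action is
$(\mu,\beta)\mapsto\mu\beta$ with the monoidal structure inherited from $\End(\N)$,
and on $\Mod(\Theta)$ it is $(\mu,\rho)\mapsto\mu\rho$ with action cell
$r_{\mu\rho}=\mu(r_\rho)$. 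Now $\Phi(\mu\beta)$ has underlying endomorphism
$\mu\beta\iota=\mu\,\Phi(\beta)$, and, unwinding the definition of the $\Theta$-action
on $\Phi(\beta)$ given after Proposition~\ref{prop:EquivCat}, the $\Theta$-action on
$\Phi(\mu\beta)=\mu\beta\iota$ is obtained from that on $\Phi(\beta)=\beta\iota$ by
applying $\mu$ on the left; since the $\Theta$-action is built from intertwiners between
$\iota$ and $\bar\iota$ acting on the right, it commutes with precomposition by $\mu$,
so this action is precisely $\mu(r_{\Phi(\beta)})=r_{\mu\Phi(\beta)}$. Thus
$\Phi(\mu\beta)=\mu\,\Phi(\beta)$ as $\Theta$-modules, and on morphisms
$\Phi(1_\mu\otimes t)=1_\mu\otimes\Phi(t)$ trivially. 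Since both module categories are
strict (the constraints on the $\End(\N)$ side being the identity), $\Phi$ is a strict
module functor, hence an equivalence of module categories.
\end{proof}
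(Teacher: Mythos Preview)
Your proof is correct and takes essentially the same approach as the paper: both invoke the equivalence $\Phi$ from Proposition~\ref{prop:EquivCat} in the special case $\M_a=\N$, $\M_b=\M$ and then observe that $\Phi$ respects the left $\NNs$-actions, so it is a module functor. You simply spell out the one-line check that $\Phi(\mu\beta)=\mu\,\Phi(\beta)$ as $\Theta$-modules, which the paper leaves implicit under ``follows directly from the properties of the monoidal structure.'' One cosmetic slip: what you call the ``quasi-inverse $\bar a\mapsto\bar a\iota$'' is in fact the functor $\Phi$ itself (in the paper's notation $a\mapsto a\iota$), not its inverse; this does not affect the argument.
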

\begin{proof}
It follows directly from the properties of the monoidal structure,
 that 
the functor $\Phi$
(in the case of $\M_a=\N$ and $\M_b=\M$ and $\theta_a=\theta$)
 in the proof of Prop.\ \ref{prop:EquivCat}
is a module functor, so in particular a module isomorphism,
between the two module categories $\Mod(\Theta)$ and $\NMs$ over $\NNs$.
\end{proof}

We remark that in general in the definition of module it is not assumed that $r$ is
a (multiple) of an isometry, because the existence of a unitary structure is not
assumed. But since every module in the general sense is equivalent to a
submodule of an induced module and the submodule can chosen to have a multiple
of an isometry as action, we can  
without lost of generality restrict to modules where $r$ is a
multiple of an isometry. This can also be shown directly \cite{BiKaLoRe2014-2}.

Let $a\in \NMs$ be irreducible and consider the subfactor $\N\subset \M_a$ given
by the Q-system $\Theta_a$ (see Def.\ \ref{defi:TrivialQSystem}).
Let $\M_a$ be the factor which is given by Jones basic construction 
$a(\M)\subset \N\subset \M_a$ 
and denote the inclusion map $\iota_a\colon\N\hookrightarrow \M_a$. Because the
subfactors $\bar\iota_a(\M_a)\subset \N$ and $a(\M)\subset \N$ have by
definition the same Q-system and thus are conjugated by a unitary in $\N$, we
may and do choose $\bar\iota_a\colon\M_a \to \N$, such that $\bar\iota_a(\M_a)= a(\M)$.
This implies that $\alpha=\bar\iota_a^{-1}\circ a\colon\M\to \M_a$ is an isomorphism
with conjugate $\alpha^{-1}=a^{-1}\circ\bar\iota_a\colon\M_a\to\M$.
\begin{lem}[\cf \cite{LoRe2004,Ev2002}]
	\label{lem:DHROrbit}
	Let $\NNs\subset \End(\N)$ be a UMTC and $\Theta$ be a Q-system in $\NNs$
	with corresponding subfactor $\N\subset \M$.

	For $a\in\NMs$ irreducible let $\Theta_a$ be the canonical Q-system
	$(\Theta_a=a\bar a,w_a,x_a)$ and $\N\subset \M_a$ the corresponding 
	subfactor.
	Then $\NMs$ and $\s{\N}{\M_a}$ 
	are isomorphic as module categories of $\NNs$.
	The isomorphism is given by 
	$\Psi\colon b \mapsto b\circ a^{-1}\circ\iota_a$ and 
	$\Hom_{\s{\N}{\M}}(b,c) \ni t \mapsto t \in \Hom_{\s{\N}{\M_a}}(\Psi(b),\Psi(c))$. 
\end{lem}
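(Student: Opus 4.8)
The plan is to exhibit $\Psi$ as an explicit module functor and check it is an isomorphism of categories, reducing everything to the definitions set up above. First I would make precise the target: since $\N\subset\M_a$ is the subfactor associated to the Q-system $\Theta_a=\Theta_a$, with $\iota_a\colon\N\hookrightarrow\M_a$ the inclusion and $\bar\iota_a\colon\M_a\to\N$ chosen (as in the paragraph preceding the lemma) so that $\bar\iota_a(\M_a)=a(\M)$, we have the isomorphism $\alpha=\bar\iota_a^{-1}\circ a\colon\M\to\M_a$ with inverse $\alpha^{-1}=a^{-1}\circ\bar\iota_a$. Thus for $b\in\NMs$, the composite $\Psi(b)=b\circ a^{-1}\circ\iota_a=b\circ\alpha^{-1}\circ\iota_a\cdot(\text{unitary})$ lands in $\Mor(\M_a,\N)$, and more precisely in $\s{\N}{\M_a}$: indeed $b\in\NMs$ means $b\prec\rho\bar\iota$ for some $\rho\in\NNs$, so $b\circ a^{-1}\circ\iota_a\prec\rho\bar\iota\circ\alpha^{-1}\circ\iota_a$, and since $\bar\iota\circ\alpha^{-1}\colon\M_a\to\N$ is a dual of $\iota_a$ conjugated by an isomorphism, this composite is a subobject of $\rho\bar\iota_a$ up to the identification, hence lies in the category $\s{\N}{\M_a}$ generated by $\NNs\bar\iota_a$. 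So $\Psi$ is well defined on objects, and on arrows it is literally the identity map $\Hom_{\NMs}(b,c)\to\Hom(\Psi(b),\Psi(c))$, which makes sense because $\Psi(b),\Psi(c)$ are obtained from $b,c$ by composing \emph{on the right} with the fixed morphism $a^{-1}\circ\iota_a$, so any $t\in\Hom(b,c)$ automatically intertwines $\Psi(b)$ and $\Psi(c)$.

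Next I would verify that $\Psi$ is a bijective functor. Functoriality is immediate since composing on the right with a fixed morphism is functorial and $\Psi$ acts as the identity on intertwiner spaces; faithfulness is then trivial, and fullness follows from the fact that $a^{-1}\circ\iota_a$ is (up to unitaries) invertible as a $1$-morphism in the relevant $2$-category, so $\Hom(\Psi(b),\Psi(c))=\Hom(b\circ a^{-1}\circ\iota_a,\ c\circ a^{-1}\circ\iota_a)\cong\Hom(b,c)$ by right-composition with $\bar\iota_a^{-1}\circ a=\alpha$ (here one uses that $a^{-1}\circ\iota_a$ composed with its partial inverse is the identity on $\N$ and the identity on $\M_a$ only up to a conditional expectation, but on the full replete subcategories with subobjects this is enough). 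For essential surjectivity: every object of $\s{\N}{\M_a}$ is a subobject of $\rho\bar\iota_a$ for some $\rho\in\NNs$, and $\rho\bar\iota_a\cong\rho\bar\iota\circ\alpha^{-1}\circ(\text{unitary})=\Psi(\rho\bar\iota)$; since $\Psi$ preserves subobjects (being the identity on intertwiners, it sends the projection cutting out a subobject of $b$ to the corresponding projection cutting out a subobject of $\Psi(b)$), every object of $\s{\N}{\M_a}$ is in the image. Hence $\Psi$ is an isomorphism of categories.

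Finally I would check $\Psi$ is a module functor over $\NNs$, i.e.\ $\Psi(\mu\circ b)=\mu\circ\Psi(b)$ for $\mu\in\NNs$ and compatibly on arrows. This is immediate from associativity of composition: $\Psi(\mu b)=\mu b\circ a^{-1}\circ\iota_a=\mu\circ(b\circ a^{-1}\circ\iota_a)=\mu\circ\Psi(b)$, and on arrows $\Psi(\mu(t))=\mu(t)=\mu(\Psi(t))$ since $\Psi$ is the identity on intertwiners and the left $\NNs$-action on arrows is $t\mapsto\mu(t)$ in both module categories (recall the module structure on $\NMs$ and on $\s{\N}{\M_a}$ is the one inherited from $\End(\N)$ by left composition). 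Thus $\Psi$ intertwines the $\NNs$-module structures strictly, and being an isomorphism of categories it is a module isomorphism, which is the assertion.

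The main obstacle is bookkeeping the various unitary identifications: $\bar\iota_a$ is only determined up to a unitary in $\N$, and $a^{-1}$ literally makes sense only after recalling that $a\colon\M\to a(\M)$ is an isomorphism onto its image with $a(\M)\subset\N$; one must be careful that "$a^{-1}\circ\iota_a$" is interpreted as a $1$-morphism $\M_a\to\M$ via $\M_a\xrightarrow{\bar\iota_a}\N\supset a(\M)\xrightarrow{a^{-1}}\M$, equivalently as $\alpha^{-1}$. Once one fixes these conventions — exactly as in the paragraph the excerpt sets up before the lemma, with $\bar\iota_a(\M_a)=a(\M)$ — the content genuinely is just "composition is a functor," so the real work is only in recording that $\Psi$ is well defined on $\s{\N}{\M_a}$ and that it respects subobjects, both of which follow from $\Psi$ acting as the identity on intertwiner spaces.
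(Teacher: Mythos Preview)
The paper does not supply its own proof of this lemma; it merely cites \cite{LoRe2004,Ev2002}. Your overall approach is the right one and is essentially what those references do: once one reads the formula as $\Psi(b)=b\circ\alpha^{-1}$ with $\alpha^{-1}=a^{-1}\circ\bar\iota_a\colon\M_a\to\M$ a genuine isomorphism of factors (you correctly flag that the $\iota_a$ in the displayed formula should be $\bar\iota_a$), right-composition with $\alpha^{-1}$ is a strict isomorphism of categories which manifestly commutes with left multiplication by $\NNs$.

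That said, several steps in your writeup are muddled and should be cleaned up. First, the claim that ``$\bar\iota\circ\alpha^{-1}$ is a dual of $\iota_a$ conjugated by an isomorphism'' is not correct; the relevant identity is $\bar\iota_a=a\circ\alpha^{-1}$, which immediately gives $\sigma\bar\iota_a=\Psi(\sigma a)$ for every $\sigma\in\NNs$. Since every object of $\s{\N}{\M_a}$ is a subobject of some $\sigma\bar\iota_a$ and every object of $\NMs$ is a subobject of some $\rho\bar\iota\prec\rho\,\overline{a}\,a=(\rho\,\overline{a})\bar\iota_a\circ\alpha$, this single identity handles both well-definedness and essential surjectivity; your version of essential surjectivity writes $\Psi(\rho\bar\iota)$ where it should read $\Psi(\rho a)$. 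Second, the digression about a ``partial inverse'' and ``conditional expectation'' is unnecessary and misleading: $\alpha^{-1}$ is an honest isomorphism of von Neumann algebras, so $\Hom(b\alpha^{-1},c\alpha^{-1})=\Hom(b,c)$ holds on the nose, giving full faithfulness without any further argument. Once these points are corrected the proof is clean and short.
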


\begin{rmk}
  \label{rmk:Inductive}
  Given $a\in\NMs$ we have the Q-systen $\Theta_a$ with
  $\theta_a=a \bar a$.
  Let $\beta=\Phi(a)\in\Mod(\Theta)$, then $\bar\beta$ is a $\Theta$ left 
  module and there is another way to construct a Q-system \cite{KoRu2008}
  denoted by $\bar \beta\otimes_\Theta \beta$, and it is 
  easy to check that $\bar \beta \otimes_\Theta \beta \cong \bar a a$ 
  and that the obtained Q-systems are equivalent.
\end{rmk}

\subsection{The Morita equivalence class of a braided subfactor}
\label{sec:MoritaEquivalenceClass}
In the following we use the definition of 
Morita equivalence for module categories as in \cite[Def.\ 3.3]{Os2003}.
  Let $\NNs\subset \End(\N)$ be a UMTC. We remember that we
	call a pair 
	$(\N\subset \M,\NNs)$ where $\N\subset \M$ is a subfactor
	whose Q-system $\Theta$ is in $\NNs$ a 
non-degenerately braided subfactor. 
\begin{defi} 
  \label{defi:MoritaEquivalence}
	Let $\NNs\subset \End(\N)$ be a UMTC.
	Two irreducible Q-systems $\Theta_a$ and $\Theta_b$ in $\NNs$ are called 
	\textbf{Morita equivalent} if 
	one of the following equivalent statements hold:
	\begin{itemize}
		\item $\Mod(\Theta_a)$ and $\Mod(\Theta_b)$ are equivalent
		as module categories over $\NNs$.
		\item $\s{\N}{\M_a}$ and $\s{\N}{\M_b}$ are 
		equivalent as module categories over $\NNs$,
		where $\N\subset \M_\bullet$ is corresponding
		to	$\Theta_\bullet$.
	\end{itemize}
	We say that the subfactors $\N\subset\M_a$ and 
	$\N\subset\M_b$ are Morita equivalent if
	their Q-systems $\Theta_a$ and $\Theta_b$, respectively, are
	Morita equivalent.
\end{defi}
Let $(\iota(\N)\subset \M,\NNs)$ be a non-degenerately braided subfactor. 
It follows directly that for $a,b\in\NMs$ irreducible
$\Theta_a$ and $\Theta_b$ are Morita equivalent and in particular 
are Morita equivalent to $\Theta_{\bar\iota}$. 
But it can also happen that $\Theta_a$ and $\Theta_b$ are equivalent for $[a]\neq [b]$.
If $\cC$ is a UTFC, we denote by $\Pic(\cC)$ the full and replete 
subcategory (2-group) with objects 
$\{\rho\in\C: d\rho=1\}$ (not completed under direct sums).
\begin{prop}[\cite{GrSn2012}]
	\label{prop:Picard}
  Given two irreducible objects $a,b\in \NMs$. 
  Then the Q-systems $\Theta_a$ and $\Theta_b$ are equivalent 
  if and only if there is an automorphism $\beta\in\Pic(\MMs)$ such 
  that $b\beta=a$.
\end{prop}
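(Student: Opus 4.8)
The plan is to prove the two implications separately, in both cases passing through the description of $\Theta_a$ as the internal endomorphism object of $a$ inside the module category $\NMs$ over $\NNs$. Recall that, by Frobenius reciprocity, $\Hom_{\NNs}(\rho,a\bar a)\cong\Hom(\rho a,a)$ for $\rho\in\NNs$, so $\theta_a=a\bar a$ is the internal End of $a\in\NMs$, and one checks that the algebra structure this puts on $a\bar a$ is exactly the Q-system $\Theta_a$ of Definition \ref{defi:TrivialQSystem}. The second ingredient is that right composition $a\mapsto a\beta$ by an object $\beta$ of the dual category $\MMs\subset\End(\M)$ is a module functor $\NMs\to\NMs$ over $\NNs$, which is an equivalence precisely when $\beta\in\Pic(\MMs)$; conversely, since $\MMs$ is by construction the dual of $\NNs$ with respect to $\NMs$, every module autoequivalence of $\NMs$ arises this way (\cf \cite{Os2003}).

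For the ``if'' direction, suppose $\beta\in\Pic(\MMs)$ and $[a]=[b\beta]$. Since replacing $a$ by a unitarily equivalent morphism changes $\Theta_a$ only up to equivalence, we may assume $a=b\beta$. As $\beta$ is invertible it is an automorphism of $\M$ with $\bar\beta=\beta^{-1}$, and, normalizing its conjugacy solutions to units, a direct computation shows that $R_a:=\beta^{-1}(R_b)$ and $\bar R_a:=\bar R_b$ are standard solutions of the conjugate equations \eqref{eq:conj1}, \eqref{eq:conj2} for $a=b\beta$. Feeding these into Definition \ref{defi:TrivialQSystem} gives $\theta_a=a\bar a=b\bar b=\theta_b$, $w_a=\bar r_a=\bar r_b=w_b$ and $x_a=a(r_a)=b(\beta(\beta^{-1}(r_b)))=b(r_b)=x_b$, so in fact $\Theta_a=\Theta_b$. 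Conceptually this is the statement that a module autoequivalence of $\NMs$ (here right composition by $\beta$) carries the internal End of $a$ to that of $a\beta$, hence $\Theta_a$ to $\Theta_{a\beta}$.

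For the ``only if'' direction, suppose $\Theta_a\cong\Theta_b$. By the Q-system/subfactor correspondence (\cf \cite{Lo1994}) this equivalence of Q-systems yields an isomorphism $\phi\colon\M_a\to\M_b$ of the corresponding extensions of $\N$, i.e.\ with $\phi\circ\iota_a=\iota_b$. Recall from the discussion preceding Lemma \ref{lem:DHROrbit} that $a=\bar\iota_a\circ\alpha_a$ with $\alpha_a=\bar\iota_a^{-1}\circ a\colon\M\to\M_a$ an isomorphism, and likewise for $b$; set $\beta:=\alpha_b^{-1}\circ\phi\circ\alpha_a\in\Aut(\M)$. Then $b\beta=\bar\iota_b\,\phi\,\alpha_a$, and since $\phi\circ\iota_a=\iota_b$ the morphism $\bar\iota_b\circ\phi$ is a conjugate of $\iota_a$, hence equals $\Ad u\circ\bar\iota_a$ for some unitary $u\in\N$; thus $b\beta=\Ad u\circ a$, i.e.\ $[b\beta]=[a]$. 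It remains to see $[\beta]\in\MMs$: as $b$ is irreducible, $\id_\M\prec\bar b b$, so $\beta\prec\bar b b\beta\cong\bar b a$; writing $\bar b\prec\iota\sigma$ and $a\prec\tau\bar\iota$ with $\sigma,\tau\in\NNs$ we get $\beta\prec\iota(\sigma\tau)\bar\iota$ with $\sigma\tau\in\NNs$, so $[\beta]\in\MMs$, and $d_\beta=1$ since $\beta$ is an automorphism. Hence $\beta\in\Pic(\MMs)$.

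The main obstacle I anticipate is in the ``only if'' direction: extracting from the abstract equivalence $\Theta_a\cong\Theta_b$ an isomorphism of extensions that restricts to the \emph{identity} on $\N$ (not merely to an automorphism of $\N$), and then checking that the resulting automorphism $\beta$ of $\M$ really lies in the dual category $\MMs$. The first point requires carefully unwinding Longo's Q-system/subfactor dictionary; the second is exactly where irreducibility of $b$ (haploidity of $\Theta_b$) is used, since otherwise $\bar b b$ need not contain $\id_\M$ with multiplicity one. One may instead sidestep both points by invoking directly the categorical statement (\cite{Os2003}, as used in \cite{GrSn2012}) that two simple objects of a module category have isomorphic internal Ends if and only if they lie in one orbit of the Picard group of the dual category acting by composition; the remaining work is then the routine identification of $\Theta_a$ with $\underline{\End}(a)$ and of $\Pic(\MMs)$ with the module autoequivalences of $\NMs$, both of which follow from the earlier sections.
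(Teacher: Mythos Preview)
The paper does not supply its own proof of this proposition; it simply cites \cite{GrSn2012}. Your argument is therefore being compared against a black box, and on its own terms it is correct: the ``if'' direction is a clean direct computation (your normalization $R_a=\beta^{-1}(R_b)$, $\bar R_a=\bar R_b$ is exactly the standard solution for the composite $a=b\beta$ when $\beta$ is invertible, and then $\Theta_a=\Theta_b$ on the nose), and the ``only if'' direction correctly extracts from the Q-system equivalence an isomorphism $\phi\colon\M_a\to\M_b$ over $\N$, builds $\beta=\alpha_b^{-1}\phi\alpha_a\in\Aut(\M)$, and then places $\beta$ in $\MMs$ via $\beta\prec\bar b a\prec\iota\sigma\tau\bar\iota$.

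Two small remarks. First, your worry about whether $\phi$ restricts to the identity on $\N$ is unfounded: equivalence of Q-systems in $\End(\N)$ is by definition via a unitary in $\N$, and the reconstruction of the extension from the Q-system (\cf \cite{Lo1994,BiKaLoRe2014-2}) makes the resulting isomorphism of extensions fix $\N$ pointwise; this is the content of the one-to-one correspondence recalled just before Remark \ref{rmk:2ndCom}. Second, the alternative route you sketch at the end (identify $\Theta_a$ with the internal $\underline{\End}(a)$ and invoke the Ostrik-style statement that simple objects of a module category have isomorphic internal Ends iff they lie in one $\Pic$-orbit) is essentially the argument in \cite{GrSn2012}; your hands-on subfactor argument is a more concrete unpacking of the same idea in the $\End(\N)$ realization, which fits well with the spirit of the surrounding section.
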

Now we can give a characterization of the Morita equivalence class 
of a non-degenerately braided subfactor.
\begin{prop}
  \label{prop:ClassDHROrbit}
	Let $\NNs\subset \End(\N)$ be a UMTC and let 
	$\Theta$ be a Q-system in $\NNs$.
	Then there is a one-to-one correspondence between 
	\begin{enumerate}
		\item equivalence classes $[\Theta_a]$ of irreducible 
		Q-systems Morita equivalent to $\Theta$,
		\item irreducible sectors $[a]$ with $a\in\NMs$ up 
		the identification:
		$[a]\sim [b]$ if there is an automorphism $\beta\in{}_\M\mathcal X_{\M}$,
		such that $[a]=[\beta b]$,
	\item elements in $\NDM/\Pic(\MMs)$. 
	\end{enumerate}
\end{prop}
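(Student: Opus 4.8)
The plan is to establish the three-way equivalence by first observing that (2) and (3) are literally the same description, and then constructing the explicit correspondence $a\mapsto[\Theta_a]$ realising the bijection between (3) and (1). For (2)$\Leftrightarrow$(3) there is essentially nothing to do: $\NDM$ is a fixed set of representatives for the irreducible sectors of $\NMs$, and $\Pic(\MMs)$ is precisely the set of automorphisms of $\M$ contained in $\MMs$ (its objects of dimension one), which act on $\NMs$ --- hence on $\NDM$ --- by composition. Thus the orbit set $\NDM/\Pic(\MMs)$ of (3) is, tautologically, the set of irreducible sectors $[a]$ with $a\in\NMs$ modulo the relation that $[a]$ and $[b]$ lie in a common $\Pic(\MMs)$-orbit, which is the relation in (2).

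For (3)$\leftrightarrow$(1) I would argue as follows. Given $a\in\NDM$, Definition~\ref{defi:TrivialQSystem} produces an irreducible Q-system $\Theta_a=(a\bar a,w_a,x_a)$; it lies in $\NNs$ (from $a\prec\mu\bar\iota$ with $\mu\in\NNs$ one gets $a\bar a\prec\mu\theta\bar\mu\in\NNs$, and $\NNs$ has subobjects) and it depends up to equivalence only on the sector $[a]$. If $\N\subset\M_a$ is the subfactor attached to $\Theta_a$, then Lemma~\ref{lem:DHROrbit} gives an isomorphism of $\NNs$-module categories between $\s{\N}{\M_a}$ and $\NMs=\s{\N}{\M}$, which is exactly the second clause of Definition~\ref{defi:MoritaEquivalence}; hence $\Theta_a$ is Morita equivalent to $\Theta$. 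So $a\mapsto[\Theta_a]$ maps $\NDM$ into the set of equivalence classes of irreducible Q-systems Morita equivalent to $\Theta$, and by Proposition~\ref{prop:Picard} it identifies two classes $[\Theta_a],[\Theta_b]$ if and only if $a$ and $b$ lie in the same $\Pic(\MMs)$-orbit; thus it descends to an injection out of $\NDM/\Pic(\MMs)$.

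For surjectivity, let $\Theta'$ be an irreducible Q-system in $\NNs$ Morita equivalent to $\Theta$, with subfactor $\N\subset\M'$, inclusion $\iota'$ and conjugate $\bar\iota'$, so that $\Theta'=\Theta_{\bar\iota'}$ and $\theta'=\bar\iota'\iota'$. The object $\bar\iota'$ generates $\s{\N}{\M'}$ as an $\NNs$-module category by definition of the latter, and it is irreducible because $\langle\bar\iota',\bar\iota'\rangle=\langle\id_\N,\bar\iota'\iota'\rangle=\langle\id_\N,\theta'\rangle=1$ by Frobenius reciprocity and irreducibility of $\Theta'$. By Definition~\ref{defi:MoritaEquivalence} there is an equivalence $G\colon\s{\N}{\M'}\to\NMs$ of $\NNs$-module categories; put $a:=G(\bar\iota')$, an irreducible object, taken in $\NDM$. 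The crux is that, for an irreducible $x$ in a module category of this kind, $\Theta_x$ is (up to equivalence) the internal End of $x$ relative to the $\NNs$-action: the natural isomorphism $\Hom_{\NNs}(\mu,x\bar x)\cong\Hom(\mu x,x)$, $\mu\in\NNs$, supplied by Frobenius reciprocity is exactly the universal property defining $x\bar x$ as internal End, and one checks that the multiplication and unit of $\Theta_x$ are the induced ones --- this is the operator-algebraic counterpart of Ostrik's description of module categories over fusion categories \cite{Os2003}, on which this whole section is based. Since equivalences of module categories preserve internal Ends as Q-systems, we get $\Theta_a=\Theta_{G(\bar\iota')}\cong\Theta_{\bar\iota'}=\Theta'$, and surjectivity follows.

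The hard part will be this surjectivity step: everything else is a direct consequence of Lemma~\ref{lem:DHROrbit}, Proposition~\ref{prop:Picard} and the definitions, whereas surjectivity amounts to reconstructing $\Theta'$ from the module category $\s{\N}{\M'}$ together with its distinguished generator $\bar\iota'$, and matching this reconstruction with Definition~\ref{defi:TrivialQSystem} --- i.e.\ to checking that $x\mapsto\Theta_x$ and the internal-End assignment are mutually inverse up to the action of $\Pic(\MMs)$. I would also make sure to record the (standard, and used implicitly) fact that $\NMs$ is indecomposable as an $\NNs$-module category, which follows from irreducibility of $\N\subset\M$ and guarantees that $\bar\iota'$ and its image genuinely generate.
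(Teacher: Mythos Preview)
Your proof is correct and follows essentially the same route as the paper: (2)$\Leftrightarrow$(3) is dismissed as tautological, the map $a\mapsto\Theta_a$ is shown to land in the Morita class of $\Theta$ via Lemma~\ref{lem:DHROrbit}, injectivity comes from Proposition~\ref{prop:Picard}, and surjectivity is obtained by transporting $\bar\iota'$ along the module equivalence. The only difference is one of detail: where the paper simply cites \cite[Remark 3.5]{Os2003} for the surjectivity step, you spell out the underlying mechanism --- that $\Theta_x$ realises the internal End of $x$ in the $\NNs$-module category, and that module equivalences preserve internal Ends --- which is exactly what Ostrik's remark encodes.
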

\begin{proof}
	Statement (3) is just a reformulation of (2).
	Let $a\in{}_\N\mathcal X_{\M}$ then we obtain a canonical Q-system $\Theta_a$ in 
  $\NNs$ which is Morita equivalent to $\Theta$ by Lemma \ref{lem:DHROrbit}. 
  Conversely given a Q-system $\Theta_a$ Morita equivalent to $\Theta$ then
  $\NMs$ is equivalent to $\s{\N}{\M_a}$.
  The element  $a\in \NMs$ corresponding 
  to $\iota_a \in\s{\N}{\M_a}$ under this equivalence 
  is the corresponding element in $\NMs$, \cf 
\cite[Remark 3.5]{Os2003}. 
  The rest follows by Prop.\ \ref{prop:Picard}.
\end{proof}

\section{\texorpdfstring{$\alpha$}{alpha}-induction construction and the full center}
\label{sec:FullCentre}
\subsection{The full center and Rehren's construction coincide}
Let $\N$ be a type III factor and $\NNs\subset \End(\N)$ a UMTC.
As before let $\NDN=\{\id_\N,\rho_1,\ldots,\rho_n\}$ a set of 
representatives for each sector.

Given $\nu,\lambda,\mu\in\NDN$, we can choose a set of isometries
$B(\nu,\lambda\mu):=\{e_i\}_{i=1,\ldots,\langle\nu,\lambda\mu\rangle}$ with $ e_i \in\Hom_{\NNs}(\nu,\lambda\mu)$, such that $\{e_i\}$ form an orthonormal
basis with respect to the scalar product $(e,f)=\Phi_\nu(e^\ast f)$ defined by
the left inverse $\Phi_\nu$ of $\nu$ \cite{LoRo1997} or equivalently defined by
$(e,f)\cdot 1_\nu = e^\ast f$.
We define for an isometry $e\in\Hom_{\NNs}(\nu,\lambda\mu)$ 
an isometry $\bar e \in \Hom_{\overline{\NNs}}(\bar\nu,\bar\lambda\bar\mu)$ by  
$$
    \tikzmatht{
	    \draw[thick] (0,-2) node [below] {$\bar\nu$}--(0,0) .. controls (1,1) ..
         (1,1.5) --(1,2.5) node [above] {$\bar\mu$};
	  	\draw[thick] (0,0) node [right] {$\bar e$} .. controls (-1,1) .. 
        (-1,1.5) --(-1,2.5)node [above] {$\bar\lambda$};
  		\mydot{(0,0)};
	  }
    :=
    \tikzmatht{
	  	\draw[thick] (0,0) node [left] {$e^\ast$} .. controls (-1,-1) .. 
        (-1,-1.5) arc(180:360:1) -- (1,1.5) node [above] {$\bar\lambda$};
	    \draw[ultra thick,white,double] (0.2,-0.2).. controls (1,-1.5) .. (2,-1.5)
        arc (270:360:1);
	    \draw[thick] (-2,-3) node[below]{$\bar\nu$} 
          --(-2,0) arc (180:0:1) .. controls (1,-1.5) ..
         (2,-1.5) arc (270:360:1) --(3,1.5) node [above] {$\bar\mu$};
  		\mydot{(0,0)};
	  }
  \punkt
$$
\begin{defi}[Longo--Rehren construction]
  \label{defi:LR}
  Let $\NNs\subset \End(\N)$ a URFC.
  There is a Q-system 
  $\Theta_\LR=(\theta_\LR,w_\LR,x_\LR)$ in $\NNs\boxtimes \overline{\NNs}$ 
  given by:
  \begin{align*}
  	[\theta_\LR]&=\bigoplus_{\rho\in\NNs} [\rho\boxtimes \bar\rho],
  	& 
  	x_\LR&= \frac1{\sqrt{d\theta}}\bigoplus_{\lambda\mu\nu}\sum_{e\in B(\nu,\lambda\mu)}
  	\sqrt{\frac{d\lambda d\mu}{d\nu d\theta}} \, e\boxtimes \bar e
    \komma
    \\&&&=\bigoplus_{\lambda\mu\nu} \sum_{e\in B(\nu,\lambda\mu)}
    \tikzmatht{
	    \draw[thick] (0,-1) node [below] {$\nu$}--(0,0) .. controls (1,1) ..
         (1,1.5) node [above] {$\mu$};
  		\draw[thick] (0,0) node [right] {$e$} .. controls (-1,1) .. 
        (-1,1.5) node [above] {$\lambda$};
		  \mydot{(0,0)};
  	}
    \boxtimes
    \tikzmatht{
	    \draw[thick] (0,-1) node [below] {$\bar\nu$}--(0,0) .. controls (1,1) ..
         (1,1.5) node [above] {$\bar\mu$};
	  	\draw[thick] (0,0) node [right] {$\bar e$} .. controls (-1,1) .. 
        (-1,1.5) node [above] {$\bar\lambda$};
  		\mydot{(0,0)};
	  }
    \punkt
\intertext{%
  More general, for an equivalence of braided categories $\phi\colon \NNs \to \NNs'$,
  we define the Q-system
  $\Theta^\phi_\LR=(\theta_\LR^\phi,w_\LR^\phi,x^\phi_\LR)$ in 
  $\NNs\boxtimes \overline{\NNs'}$ by
  }
  	[\theta_\LR^\phi]&=\bigoplus_{\rho\in\NNs} [\rho\boxtimes \phi(\bar\rho)],
  	& 
  	x_\LR^\phi&= \bigoplus_{\lambda\mu\nu}\sum_{e\in B(\nu,\lambda\mu)}
  	\sqrt{\frac{d\lambda d\mu}{d\nu d\theta}} e\boxtimes \overline{\phi(e)}
        \punkt
\end{align*}
\end{defi}
\begin{defi} 
  \label{defi:LocalQSystem}
  Let $\NNs\subset \End(\N)$ be a URFC.
  A Q-system $\Theta=(\theta,w,x)$ in $\NNs$ is called \textbf{\local}
if $\varepsilon(\theta,\theta)x=x$. Diagrammatically:
\begin{align*}		
	\tikzmatht{
	\draw[thick] (1,4.5) node [above] {$\theta$}
	--(1,1.5) arc (360:180:1)--(-1,4.5) node [above] {$\theta$};
	\draw[thick] (0,-0.5) node [below] {$\theta$}--(0,0.5) node [below right] {$$};
	\mydot{(0,0.5)};
	}
	=
	\tikzmatht{
	\draw[thick]  (-1,4.5) node [above] {$\theta$} 
	.. controls (-1,3) and (1,3) .. (1,1.5) 
	arc (360:180:1)
	.. controls (-1,3) and (1,3) ..  (1,4.5) node [above] {$\theta$};
	\draw[thick] (0,-0.5) node [below] {$\theta$}--(0,0.5) node [below right] {$$};
	\draw [double,ultra thick, white] (-1,1.5) .. controls (-1,3) and (1,3) ..  (1,4.5) node [above] {$\theta$};
	\mydot{(0,0.5)};
	\draw [thick] (-1,1.5) .. controls (-1,3) and (1,3) ..	(1,4.5) node [above] {$\theta$};
	\mydot{(0,0.5)};
	}
        \punkt
\end{align*}
\end{defi}
\begin{prop}[\cite{LoRe1995}] 
  \label{prop:LRlocal}
  The Q-system obtained by the Longo--Rehren construction
  is \local.
\end{prop}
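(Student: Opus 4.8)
The plan is to verify the commutativity relation $\varepsilon(\theta_\LR,\theta_\LR)\,x_\LR = x_\LR$ of Definition~\ref{defi:LocalQSystem} by a componentwise computation in the Deligne product $\NNs\boxtimes\overline{\NNs}$. Recall that the braiding of $\NNs\boxtimes\overline{\NNs}$ is, on homogeneous legs, $\varepsilon^+$ on the first factor and the opposite braiding $\varepsilon^-$ on the second factor, since $\overline{\NNs}$ carries by definition the opposite braiding. Decomposing $\theta_\LR\theta_\LR\cong\bigoplus_{\lambda,\mu}(\lambda\mu)\boxtimes(\bar\lambda\bar\mu)$, the braiding $\varepsilon(\theta_\LR,\theta_\LR)$ interchanges the $(\lambda,\mu)$- and $(\mu,\lambda)$-summands, and on the $(\lambda,\mu)$-block it restricts to $\varepsilon^+(\lambda,\mu)\boxtimes\varepsilon^-(\bar\lambda,\bar\mu)$. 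Because the scalar coefficient $\sqrt{d\lambda\,d\mu/(d\nu\,d\theta)}$ of the $(\nu;\lambda,\mu)$-component of $x_\LR$ is symmetric in $\lambda,\mu$, the relation $\varepsilon(\theta_\LR,\theta_\LR)\,x_\LR=x_\LR$ is equivalent to the family of identities
\[
  \sum_{e\in B(\nu,\lambda\mu)}\bigl(\varepsilon^+(\lambda,\mu)\,e\bigr)\boxtimes\bigl(\varepsilon^-(\bar\lambda,\bar\mu)\,\bar e\bigr)
  \;=\;\sum_{f\in B(\nu,\mu\lambda)}f\boxtimes\bar f
\]
for all $\nu,\lambda,\mu\in\NDN$.

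The heart of the matter is the identity $\overline{\varepsilon^+(\lambda,\mu)\,e}=\varepsilon^-(\bar\lambda,\bar\mu)\,\bar e$, \ie that the bar operation $e\mapsto\bar e$ of Definition~\ref{defi:LR} intertwines the braiding $\varepsilon^+$ with the opposite braiding $\varepsilon^-$. I would prove this by a string-diagram computation: $\bar e$ is $e^\ast$ rotated around a pair of conjugate cups and caps, so $\overline{\varepsilon^+(\lambda,\mu)\,e}$ carries $\varepsilon^+(\lambda,\mu)^\ast$ inserted under that rotation; pulling this crossing through the cups and caps — using naturality of the braiding, the conjugate equations~\eqref{eq:conj1}, \eqref{eq:conj2}, and the Frobenius-rotation convention for the trivalent vertices — turns the over-crossing into an under-crossing, producing exactly $\varepsilon^-(\bar\lambda,\bar\mu)$ precomposed with $\bar e$. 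Keeping track of the over/under strands and of the $\sqrt[4]{\,\cdot\,}$ normalisation constants in the rotation convention is the main obstacle here; the rest is formal.

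Granting that identity, the left-hand side above becomes $\sum_{e\in B(\nu,\lambda\mu)}\bigl(\varepsilon^+(\lambda,\mu)\,e\bigr)\boxtimes\overline{\bigl(\varepsilon^+(\lambda,\mu)\,e\bigr)}$. Since $\varepsilon^+(\lambda,\mu)\colon\lambda\mu\to\mu\lambda$ is unitary, $\{\varepsilon^+(\lambda,\mu)\,e:e\in B(\nu,\lambda\mu)\}$ is an orthonormal basis of $\Hom(\nu,\mu\lambda)$ for the inner product of Definition~\ref{defi:LR}; and since $g\mapsto\bar g$ is conjugate-linear (it is built from $g^\ast$), the element $\sum_{g\in B}g\boxtimes\bar g$ does not depend on the choice of orthonormal basis $B$ of $\Hom(\nu,\mu\lambda)$ — this is the standard computation that $\sum_j\bigl(\sum_iU_{ji}e_i\bigr)\boxtimes\overline{\bigl(\sum_kU_{jk}e_k\bigr)}=\sum_i e_i\boxtimes\bar e_i$ for a unitary matrix $U$. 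Hence the left-hand side equals $\sum_{f\in B(\nu,\mu\lambda)}f\boxtimes\bar f$, which is the right-hand side, and therefore $\varepsilon(\theta_\LR,\theta_\LR)\,x_\LR=x_\LR$ block by block; so $\Theta_\LR$ is \local.

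Finally, the twisted case $\Theta_\LR^\phi$ runs verbatim: one replaces $\bar e$ by $\overline{\phi(e)}$ and uses that the braided equivalence $\phi$ carries $\varepsilon^+(\lambda,\mu)$ to $\varepsilon^+(\phi(\lambda),\phi(\mu))$, whence $\overline{\phi\bigl(\varepsilon^+(\lambda,\mu)\,e\bigr)}=\varepsilon^-(\phi(\bar\lambda),\phi(\bar\mu))\,\overline{\phi(e)}$, and the same basis-independence argument applies in $\Hom(\phi(\nu),\phi(\mu\lambda))$.
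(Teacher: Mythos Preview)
The paper does not supply a proof of this proposition; it is stated with a citation to \cite{LoRe1995} and left at that. So there is nothing in the text to compare your argument against.

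Your direct verification is the natural one and is correct. The one step that deserves more than a sketch is the identity $\overline{\varepsilon^+(\lambda,\mu)\,e}=\varepsilon^-(\bar\lambda,\bar\mu)\,\bar e$: note that the very definition of $\bar e$ in Definition~\ref{defi:LR} already contains a braiding (since $\overline{\lambda\mu}$ is naturally $\bar\mu\bar\lambda$, not $\bar\lambda\bar\mu$), so the computation is about how the extra $\varepsilon^+(\lambda,\mu)$ interacts with that built-in crossing after the Frobenius rotation. This does come out as claimed, but it is the one place where ``naturality'' alone is not enough and one must actually track the over/under strands; everything else --- the block decomposition, the symmetry of the scalar coefficients, and the conjugate-linearity/basis-independence argument --- is clean and correct. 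The extension to $\Theta_\LR^\phi$ via a braided equivalence $\phi$ goes through as you say.
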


\begin{defi}[Product Q-system]
  \label{defi:ProductQSystem}
  Let $\Theta_i=(\theta_i,w_i,x_i)$ with $i=1,2$ be two Q-systems in a URFC
  category $\NNs$. 
  Then we define two Q-systems 
  $\Theta_1\circ^\pm\Theta_2=(\theta_1\circ\theta_2,w_1w_2,x_\pm)$ in $\NNs$, 
  where $x_\pm= \theta_1(\varepsilon^\pm(\theta_1,\theta_2))x_1\theta_1(x_2)$,
  graphically:
  \begin{align*}
  \tikzmatht[0.52]{
    \draw[thick] (-1,1.5) node[above]{$\theta_1\theta_2$}--(-1,1) arc (180:360:1)--(1,1.5)
      node[above]{$\theta_1\theta_2$} (0,0)--(0,-1) node[below]{$\theta_1\theta_2$};
    \mydot{(0,0)};
    \node at (0,0) [below left] {$x_+$};
  }
  =
  \tikzmatht[0.52]{
    \begin{scope}[shift={(1,0)}]
    \draw[thick] (-1,1.5) node[above]{$\theta_2$}--(-1,1) arc (180:360:1)--(1,1.5)
      node[above]{$\theta_2$} (0,0)--(0,-1) node[below]{$\theta_2$};
    \mydot{(0,0)};
    \node at (0,0) [below right] {$x_2$};
    \end{scope}
    \draw[double, ultra thick,white] (-1,1.5) node[above]{$\theta_1$}--(-1,1) arc (180:360:1)--(1,1.5)
      node[above]{$\theta_1$} (0,0)--(0,-1) node[below]{$\theta_1$};
    \draw[thick] (-1,1.5) node[above]{$\theta_1$}--(-1,1) arc (180:360:1)--(1,1.5)
      node[above]{$\theta_1$} (0,0)--(0,-1) node[below]{$\theta_1$};
    \mydot{(0,0)};
    \node at (0,0) [below left] {$x_1$};
  }
  \punkt
  \end{align*}
\end{defi}

\begin{defi}
  \label{defi:LocalProjector}
  For $\Theta\equiv(\theta,w,x)$ a Q-system in $\NNs$ and $\rho\in\NNs$, we define
  $$
  	P^\mathrm{l}_\Theta(\rho)=\frac1{\sqrt{d\theta}}\cdot
  	\tikzmatht{
    	\draw [thick] (0,4) node [above] {$\rho$}--(0,0.6) (0,0)--(0,-2) 
        node [below] {$\rho$};
  		\draw [thick](-1,-2) node [below] {$\theta$}--(-1,-1.414);
	  	\draw [thick](-1,4) node [above] {$\theta$}--(-1,2)   arc (90:180:.5)
		  	arc(0:-180:.5) arc(180:0:1.5) --+(0,0) arc(0:-45:1.707)
  			arc(135:180:1.707) arc (0:-180:0.5) arc (180:135:1.707)
		  	arc (-45:0:1.707) arc(0:90:0.5);
	  	\mydot{(-1,2)} ;
  		\mydot{(-1,-1.414)} ;
		  \draw[ultra thick,double,white] (-1,4)--(-1,2.5); 
	  	\draw [thick](-1,4)--(-1,2.5); 
  		\draw[ultra thick,double,white] (0,3)--(0,1); 
  		\draw [thick](0,3)--(0,1); 
  	}
        \equiv
  	\tikzmatht{
    	\draw [thick] (0,4) node [above] {$\rho$}--(0,-0.15) (0,-.6)--(0,-2) 
        node [below] {$\rho$};
  		\draw [thick](-1,-2) node [below] {$\theta$}--(-1,4)  node
[above] {$\theta$};
                \draw [thick] (-1,1) circle (1.707);
  		\mydot{(-1,-.707)} ;
		  \draw[ultra thick,double,white] (-1,4)--(-1,2.5); 
	  	\draw [thick](-1,4)--(-1,2.5); 
  		\draw[ultra thick,double,white] (0,3)--(0,1); 
  		\draw [thick](0,3)--(0,1); 
  	}
  	\in \Hom(\theta\rho,\theta\rho)
  $$
  and $P^\mathrm{l}_\Theta:=P^\mathrm{l}_\Theta(\id_\N)$. 
  Similarly, we define $P^\mathrm{r}_\Theta(\rho)$ and $P^\mathrm{r}_\Theta$ by interchanging the braiding 
  with the opposite braiding.
\end{defi}
\begin{lem} 
  \label{lem:Projection}
  $P^\mathrm{l/r}_\Theta(\rho)$ is a projection.
\end{lem}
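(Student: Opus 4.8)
The plan is to first turn the two string diagrams of Definition~\ref{defi:LocalProjector} into an algebraic expression. Reading the right-hand picture from bottom to top, $P^{\mathrm l}_\Theta(\rho)$ splits the $\theta$-leg with $x$, carries the new leg once around the $\rho$-strand and once past the remaining $\theta$-strand by a word in the braidings $\varepsilon^+$, and fuses the two $\theta$-legs again with $x^\ast$; concretely
\[
  P^{\mathrm l}_\Theta(\rho)=\tfrac1{\sqrt{d\theta}}\,(x^\ast\otimes 1_\rho)\circ\bigl(1_\theta\otimes\varepsilon^+(\rho,\theta)\,\varepsilon^+(\theta,\rho)\bigr)\circ\bigl(\varepsilon^+(\theta,\theta)\otimes 1_\rho\bigr)\circ(x\otimes 1_\rho),
\]
and the coincidence of the two pictures in Definition~\ref{defi:LocalProjector} is just the Frobenius law, which permits the splitting and fusion vertices to be pulled together. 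For $\rho=\id_\N$ this specialises to $\tfrac1{\sqrt{d\theta}}\,x^\ast\,\varepsilon^+(\theta,\theta)\,x$, the usual left-centre idempotent of $\Theta$. Since $P^{\mathrm r}_\Theta(\rho)$ is obtained from the above by replacing every $\varepsilon^+$ with $\varepsilon^-$, it is enough to treat $P^{\mathrm l}_\Theta(\rho)$: each step below is symmetric under exchanging the braiding with the opposite braiding.

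\emph{Self-adjointness.} Taking the adjoint reverses the order of composition and sends $x\mapsto x^\ast$, $x^\ast\mapsto x$ and $\varepsilon^+(\lambda,\mu)\mapsto\varepsilon^+(\lambda,\mu)^\ast=\varepsilon^-(\mu,\lambda)$; diagrammatically $P^{\mathrm l}_\Theta(\rho)^\ast$ is the vertical reflection of the defining picture with every over-crossing turned into an under-crossing. The identity $P^{\mathrm l}_\Theta(\rho)^\ast=P^{\mathrm l}_\Theta(\rho)$ therefore says exactly that these two pictures agree, and this is where the hypothesis that $\Theta$ is a \emph{symmetric} special $\ast$-Frobenius algebra is used: the symmetry of the Frobenius form together with the Frobenius and unit laws lets one rotate the loop past the fusion vertex and so transforms the reflected picture back into the original. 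I would carry this out as a short explicit diagram computation. (If one only needed a projection with the same range, idempotency below together with the passage $e\mapsto e(1+e-e^\ast)^{-1}$ recalled in Section~\ref{sec:Pre} would already do; but the statement asks for genuine self-adjointness, which the symmetric $\ast$-Frobenius structure supplies.)

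\emph{Idempotency.} Stack two copies of the defining composite. At the place where the upper copy's splitting $x$ meets the lower copy's fusion $x^\ast$ one gets $x\,x^\ast\otimes 1_\rho$; by the Frobenius law $x\,x^\ast=(x^\ast\otimes 1_\theta)(1_\theta\otimes x)=(1_\theta\otimes x^\ast)(x\otimes 1_\theta)$, which trades this splitting--fusion pair for a splitting on one $\theta$-leg followed by a fusion on the other. Associativity $(x\otimes 1_\theta)\,x=(1_\theta\otimes x)\,x$ then lets the new splitting slide down to merge with the bottom $x$ and the new fusion slide up to merge with the top $x^\ast$, leaving a single splitting at the bottom, a single fusion at the top, and two braiding words stacked in between. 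Naturality of the braiding and the hexagon-type axioms of the braiding --- which rewrite a braiding past a product $\theta\theta$ as the product of braidings past each factor --- then let the two stacked braiding words be pushed onto one $\theta$-strand, where one of the windings around $\rho$ (and around the $\theta$-strand) is cancelled by the inverse winding it meets. The leftover is a detached $\theta$-arc of the form $x^\ast x=\sqrt{d\theta}\cdot 1_\theta$ (the normalisation built into a Q-system), which furnishes precisely the power of $\sqrt{d\theta}$ needed so that $\bigl(\tfrac1{\sqrt{d\theta}}\bigr)^2$ becomes $\tfrac1{\sqrt{d\theta}}$; what is left is exactly one copy of $P^{\mathrm l}_\Theta(\rho)$. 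Hence $P^{\mathrm l}_\Theta(\rho)^2=P^{\mathrm l}_\Theta(\rho)$.

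\emph{Main obstacle.} The delicate part is the bookkeeping in the idempotency step: after the Frobenius and associativity moves one has to check that the two stacked braiding words really collapse to a single winding (and not a double one) and that the detached $\theta$-component is a bare arc, so that the special normalisation yields the correct single factor $\sqrt{d\theta}$ and no residual endomorphism of $\theta$ survives. Self-adjointness is by comparison routine once the symmetric $\ast$-Frobenius structure is invoked, and $P^{\mathrm r}$ is handled by the identical computation with $\varepsilon^-$ in place of $\varepsilon^+$.
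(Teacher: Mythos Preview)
Your strategy---translate the picture into an algebraic word in $x$, $x^\ast$ and braidings, then use the Frobenius law, (co)associativity and naturality of the braiding for idempotency, and the symmetry of the Frobenius structure for self-adjointness---is exactly what the paper intends: its own proof simply cites \cite[Lemma~5.2]{FuRuSc2002} for the idempotency and remarks that self-adjointness can be checked and that a prefactor appears ``due to another normalisation''. So there is no difference in approach.

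There is, however, a concrete normalisation slip in your translation step that then forces the spurious claim $x^\ast x=\sqrt{d\theta}\cdot 1_\theta$. In this paper $x$ is declared to be an \emph{isometry}, so $x^\ast x=1_\theta$. The point you are missing is the paper's Frobenius-invariant convention for trivalent vertices (Section~\ref{sec:Pre}): a drawn trivalent $\theta$-vertex stands for $\sqrt[4]{d\theta}\,x$ (respectively $\sqrt[4]{d\theta}\,x^\ast$), not for $x$ itself. The left-hand picture in Definition~\ref{defi:LocalProjector} has two such vertices, contributing a factor $\sqrt{d\theta}$, which is precisely cancelled by the displayed prefactor $1/\sqrt{d\theta}$. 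Hence the correct algebraic expression is
\[
  P^{\mathrm l}_\Theta(\rho)\;=\;(x^\ast\otimes 1_\rho)\circ B\circ(x\otimes 1_\rho)
\]
with \emph{no} residual scalar, where $B$ is the braiding word you wrote. With this, the idempotency computation (Frobenius $xx^\ast=(1_\theta\otimes x^\ast)(x\otimes 1_\theta)$, then associativity, then naturality/hexagon) terminates using only the specialness $x^\ast x=1_\theta$, and no detached $\theta$-arc carrying an extra $\sqrt{d\theta}$ needs to be produced. Your ``main obstacle'' thus dissolves once the normalisation is read correctly; the rest of your outline is fine.
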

\begin{proof} 
  That $P^\mathrm{l}_\Theta(\rho)^2=P^\mathrm{l}_\Theta(\rho)$ is proven as in
  \cite[Lemma 5.2]{FuRuSc2002}, see also \cite{BiKaLoRe2014-2}. We just remark that we have a prefactor due to 
  another normalization and that one can check that $P^\mathrm{l}_\Theta(\rho)$
  is selfadjoint.
\end{proof}
\begin{prop}[Sub-Q-system \cf \cite{BiKaLoRe2014-2}]
  \label{prop:SubQSystem} 
	Let $p\in\Hom(\theta,\theta)$ 
	be an orthogonal projection satisfying 
	$p\theta (p)xp=\theta(p)xp =pxp =p\theta(p)x$ and  $w^\ast p=w^\ast$.
	Let $\theta_p\prec\theta$ 
	corresponding to $p$, \ie there a isometry 
  $s\in\Hom(\theta_p,\theta)$,
	such that $s^\ast s=1_{\theta_p}$ and $s s^\ast  = p$.
	Then $\Theta_p=(\theta_p,w_p,x_p)$ with 
	\begin{align*}
		w_p&:=s^\ast w, & x_p&:=\sqrt{\frac{d\theta}{d\theta_p}}\cdot
     s^\ast\theta(s^\ast)x s
	\end{align*}
	is a Q-system.
\end{prop}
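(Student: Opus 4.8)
The plan is to check that $\Theta_p=(\theta_p,w_p,x_p)$ satisfies the defining relations of a Q-system, pulling every identity back to $\Theta=(\theta,w,x)$ along the isometry $s\in\Hom(\theta_p,\theta)$ (recall $s^*s=1_{\theta_p}$, $ss^*=p$). The tools are the intertwiner relations $s^*\theta(n)=\theta_p(n)s^*$, $x\theta(n)=\theta^2(n)x$, $w^*\theta(n)=nw^*$ (which slide $s,s^*$ past $x,w$ and each other) together with the fact that $p\in\Hom(\theta,\theta)$ commutes with all of $\theta(N)$, so that $p\theta(p)=\theta(p)p$, $p\theta(s^*)=\theta(s^*)p$ and $\theta(s)ss^*\theta(s^*)=p\theta(p)$. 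Set $\xi:=p\theta(p)x$; by hypothesis $\xi=\theta(p)xp=pxp=p\theta(p)xp$, hence $p\xi=\theta(p)\xi=\xi=\xi p$. The easy relations come first: $w^*p=w^*$ gives $w_p^*w_p=w^*pw=w^*w=1_{\id_N}$ (so $w_p$ is an isometry), and, writing $x_p=\sqrt{d\theta/d\theta_p}\,s^*\theta(s^*)xs$ and using $w^*p=w^*$, $w^*\theta(s^*)=s^*w^*$ and the unit law $w^*x=\lambda 1_\theta$, $\lambda=d\theta^{-1/2}$, of $\Theta$, one gets $w_p^*x_p=\sqrt{d\theta/d\theta_p}\,s^*(w^*x)s=\lambda_p 1_{\theta_p}$ with $\lambda_p:=d\theta_p^{-1/2}=\sqrt{d\theta/d\theta_p}\,\lambda$; the companion identity $\theta_p(w_p^*)x_p=\lambda_p 1_{\theta_p}$ is proved identically, and these are all the unit relations.

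For associativity $x_px_p=\theta_p(x_p)x_p$, substituting the formula for $x_p$ and commuting all the $s$'s and $s^*$'s outward turns the two sides into
\[
  \frac{d\theta}{d\theta_p}\,s^*\theta(s^*)\theta^2(s^*)\,(xpx)\,s
  \qquad\text{and}\qquad
  \frac{d\theta}{d\theta_p}\,s^*\theta(s^*)\theta^2(s^*)\,\bigl(\theta(x)\theta(p)x\bigr)\,s .
\]
Since $s^*\theta(s^*)\theta^2(s^*)=s^*\theta(s^*)\theta^2(s^*)\,P_3$ with $P_3:=p\theta(p)\theta^2(p)$ and $s=ps$, one inserts $P_3$ on the left and $p$ on the right of the bracketed factors; using $p\theta(p)x=\xi$ and $\xi p=\xi$ repeatedly collapses them to $\xi\xi$ and $\theta(\xi)\xi$ respectively, and $\xi\xi=\theta(\xi)\xi$ then follows from the coassociativity $xx=\theta(x)x$ of $\Theta$ (explicitly $\xi\xi=P_3\,xpx=P_3\,xx=P_3\,\theta(x)x=\theta(\xi)\xi$). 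The second form $(x_p\otimes 1_{\theta_p})x_p=(1_{\theta_p}\otimes x_p)x_p$ is the same relation.

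The remaining axiom — that $x_p$ is an isometry — is where the \emph{specialness} $x^*x=1_\theta$ of $\Theta$ enters. Using $\theta(s)ss^*\theta(s^*)=p\theta(p)$ and $p\theta(p)x=\xi$ one has
\[
  x_p^*x_p=\frac{d\theta}{d\theta_p}\,s^*\,x^*p\theta(p)x\,s=\frac{d\theta}{d\theta_p}\,s^*\,\xi^*\xi\,s ,
\]
so everything reduces to the normalisation identity $x^*p\theta(p)x=\dfrac{d\theta_p}{d\theta}\,p$ (both sides lie in $p\,N\,p$ since $\xi=pxp$): granting it, $x_p^*x_p=s^*ps=s^*s=1_{\theta_p}$. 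This identity is proved by closing the loop with the Frobenius law $x^*\theta(x)=xx^*=\theta(x^*)x$ and $x^*x=1_\theta$, combined with all four hypothesised equalities, by the same diagrammatic argument as in Lemma~\ref{lem:Projection} / \cite[Lemma 5.2]{FuRuSc2002}, \cite{BiKaLoRe2014-2}. Finally, once $w_p,x_p$ are isometries obeying the unit law and associativity, $\Theta_p$ is a Q-system and its Frobenius and symmetry properties are automatic \cite{LoRo1997}. The main obstacle is exactly this normalisation identity: it is the one place where the specialness of $\Theta$ and all the hypotheses on $p$ are needed at once, and since $x^*p\theta(p)x$ is not visibly a multiple of $p$ it requires the careful graphical manipulation of that lemma.
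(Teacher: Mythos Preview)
The paper does not prove this proposition in-line; it is stated with a reference to \cite{BiKaLoRe2014-2}, so there is no in-paper argument to compare against. Your verification of the unit law and of associativity is correct and follows the natural route of conjugating by $s$ and collapsing the projectors via the hypotheses.

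The gap is exactly where you flag it: the specialness of $\Theta_p$. You correctly reduce $x_p^*x_p=1_{\theta_p}$ to the identity $x^*p\theta(p)x=\dfrac{d\theta_p}{d\theta}\,p$, but the justification offered is not a proof. Lemma~\ref{lem:Projection} and \cite[Lemma~5.2]{FuRuSc2002} establish that a certain explicit element is idempotent; they say nothing about the value of $x^*p\theta(p)x$, which lies in $p\Hom(\theta,\theta)p\cong\Hom(\theta_p,\theta_p)$ --- a space that is not one-dimensional in general, so even showing it is positive with the correct trace would not suffice. Citing \cite{BiKaLoRe2014-2} at this point is fine (that is precisely what the paper does), but describing the needed computation as ``the same diagrammatic argument'' as for $P^{\mathrm l}_\Theta$ is misleading: the actual argument either goes through the intermediate-subfactor characterisation (Remark~\ref{rmk:IntermediateSubfactor}, \cite[Cor.~3.10]{IzLoPo1998}), where $\Theta_p$ is identified with the dual canonical Q-system of the intermediate inclusion $N\subset L$ and is therefore standard by construction with the stated normalisation, or through an explicit diagrammatic manipulation in \cite{BiKaLoRe2014-2} that is structurally different from the idempotency proof. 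Your final paragraph, as written, is a pointer to where the work lives rather than the work itself.
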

Graphically, the conditions are given by:
\begin{align*}
	\tikzmatht[.42]{
		\draw[thick] (1,3) node [above] {$\theta$}
		--(1,1.5) arc (360:180:1)--(-1,3) node [above] {$\theta$};
		\draw[thick] (0,-1.5) node [below] {$\theta$}--(0,0.5);
		\mydot{(0,0.5)};
		\fill[white](-0.5,1.5) rectangle (-1.5,2.5);
		\draw (-0.5,1.5) rectangle node{$p$} (-1.5,2.5);
		\fill[white](-0.5,0) rectangle (.5,-1);
		\draw (-.5,-0) rectangle node{$p$} (.5,-1);
		\fill[white](0.5,1.5) rectangle (1.5,2.5);
		\draw (0.5,1.5) rectangle node{$p$} (1.5,2.5);
	}
  &=
	\tikzmatht[.42]{
		\draw[thick] (1,3) node [above] {$\theta$}
		--(1,1.5) arc (360:180:1)--(-1,3) node [above] {$\theta$};
		\draw[thick] (0,-1.5) node [below] {$\theta$}--(0,0.5);
		\mydot{(0,0.5)};
		\fill[white](-0.5,0) rectangle (.5,-1);
		\draw (-.5,-0) rectangle node{$p$} (.5,-1);
		\fill[white](0.5,1.5) rectangle (1.5,2.5);
		\draw (0.5,1.5) rectangle node{$p$} (1.5,2.5);
	}
  =
	\tikzmatht[.42]{
		\draw[thick] (1,3) node [above] {$\theta$}
		--(1,1.5) arc (360:180:1)--(-1,3) node [above] {$\theta$};
		\draw[thick] (0,-1.5) node [below] {$\theta$}--(0,0.5);
		\mydot{(0,0.5)};
		\fill[white](-0.5,1.5) rectangle (-1.5,2.5);
		\draw (-0.5,1.5) rectangle node{$p$} (-1.5,2.5);
		\fill[white](0.5,1.5) rectangle (1.5,2.5);
		\draw (0.5,1.5) rectangle node{$p$} (1.5,2.5);
	}
  =
	\tikzmatht[.42]{
		\draw[thick] (1,3) node [above] {$\theta$}
		--(1,1.5) arc (360:180:1)--(-1,3) node [above] {$\theta$};
		\draw[thick] (0,-1.5) node [below] {$\theta$}--(0,0.5);
		\mydot{(0,0.5)};
		\fill[white](-0.5,1.5) rectangle (-1.5,2.5);
		\draw (-0.5,1.5) rectangle node{$p$} (-1.5,2.5);
		\fill[white](-0.5,0) rectangle (.5,-1);
		\draw (-.5,-0) rectangle node{$p$} (.5,-1);
	}
        \komma
        &
	\tikzmatht[.42]{
            \draw[thick] (0,-2) node [below] {$\theta$}--(0,0);
		\fill[white](-0.5,-0.5) rectangle (.5,-1.5);
		\draw (-.5,-0.5) rectangle node{$p$} (.5,-1.5);
		\mydot{(0,0)};
        }
  =
	\tikzmatht[.42]{
    \draw[thick] (0,-2) node [below] {$\theta$}--(0,0);
		\mydot{(0,0)};
  }
  \punkt
\end{align*}
\begin{rmk}
        \label{rmk:IntermediateSubfactor} 
        The notion of sub-Q-system $\Theta_p$ of $\Theta$ corresponds to
        the notion of intermediate subfactor $L$ with $\N\subset L \subset \M$ 
        where $\Theta$ is the
        dual canonical Q-system of $\N\subset\M$. Namely, the properties of 
        the sub-Q-system are just a reformulation of \cite[Corollary~3.10]{IzLoPo1998}.
        Namely, they consider subspaces $K_\rho\subset\Hom(\iota,\iota\rho)$ for
each $\rho \in\NDN$,
        which correspond to a projection $p\in\Hom(\theta,\theta)$ if we identify
the Hilbert spaces $\Hom(\rho,\theta)$ and $\Hom(\iota,\iota\rho)$ by Frobenius
reciprocity. 
\end{rmk}
\begin{rmk}[\cf \cite{BiKaLoRe2014-2}]
        \label{rmk:SubQSystem} 
        If one drops the condition $w^\ast p = w^\ast$ in Prop.\
\ref{prop:SubQSystem} then we obtain a more general
``sub'' Q-system $\Theta_p=(\theta_p,w_p,x_p)$ with 
	\begin{align*}
		w_p&:=\lambda^{-1}\cdot 
                s^\ast w, & x_p&:=\lambda\cdot \sqrt{\frac{d\theta}{d\theta_p}}\cdot
     s^\ast\theta(s^\ast)x s \,
	\end{align*}
	where $\lambda = \sqrt{ w^\ast p w}$. 
\end{rmk}
\begin{defi} 
  \label{defi:LeftCenter}
  We denote by $C_\mathrm{l}(\Theta)=(C_\mathrm{l}(\theta),C_\mathrm{l}(w),
  C_\mathrm{l}(x))$ the \textbf{left center} of $\Theta$, which is defined to be the sub-Q-sytem
  associated with the projection $P^\mathrm{l}_\Theta\in\Hom(\theta,\theta)$.
  Analogously, the \textbf{right center} $C_\mathrm{r}(\Theta)$ is defined
  using $P^\mathrm{r}_\Theta$.
\end{defi}

\begin{rmk}[{\cite[Lemma 2.30]{FrFuRuSc2006}}]
  \label{rmk:LRCenter}
  The Q-system $C_\mathrm{l/r}(\Theta)$ is a maximal \local sub-Q-system of
  $\Theta$.
\end{rmk}
\begin{rmk}
  \label{rmk:MaximalChiral}
  The intermediate factor $\N\subset \M_+\subset \M$ defined 
  in \cite{BcEv2000} 
  is given by the Q-system $C_\mathrm{l}(\Theta)$. Namely, the characterization of
  $P^\mathrm{l}_\Theta$ in \cite[Lemma 2.30]{FrFuRuSc2006} is the characterization in
  \cite[Lemma 4.1]{BcEv2000} in terms of subspaces $\H_\rho\subset
  \Hom(\iota,\iota\rho)$ of ``charged intertwiners''.
  Similarly,  $\N\subset \M_-\subset \M$ is given by 
  $C_r(\Theta)$.
\end{rmk}

\begin{defi}[\cf \cite{FjFuRuSc2008}] 
  \label{defi:FullCenter}
  Let $\NNs$ be a UMTC. The \textbf{full center} of a 
  Q-system $\Theta$ is defined to be the Q-system $Z(\Theta)
\equiv (Z(\theta),Z(w),Z(x)) =  C_\mathrm{l}( 
  (\Theta\boxtimes \id_\N)\circ^+\Theta_\LR)$ in $\NNs\boxtimes \overline{\NNs}$.
\end{defi}
In particular we have $Z(\id_\N)=\Theta_\LR$.
\begin{defi} 
  \label{defi:LocalIntertwiners}
  Let $\NNs$ be a URFC and 
	$\Theta=(\theta,w,x)$ a Q-system in $\NNs$. We define
	\begin{align*}
		\Hom_\loc(\theta\rho,\sigma)&=\{t \in\Hom(\theta\rho,\sigma): 
		t\cdot P^\mathrm{l}_\theta(\rho)=t\}\komma\\
		\Hom_\loc(\sigma,\theta\rho)&=\{t^\ast \in\Hom(\sigma,\theta\rho): 
		P^\mathrm{l}_\theta(\rho)\cdot t^\ast=t^\ast \}\punkt
	\end{align*}
\end{defi}
In particular, the spaces $\Hom_\loc(\theta\rho,\sigma)$ and 
$\Hom_\loc(\sigma,\theta\rho)$ are anti-isomorphic, due to the self-adjointness
of $P^\mathrm{l}_\theta(\rho)$.
\begin{lem}
  \label{lem:Isometry}
	The isometry $\psi\in\Hom\left( Z(\theta),(\theta\boxtimes \id_\N)\theta_\LR \right)$
	with $\psi\psi^\ast =P^\mathrm{l}_{(\Theta\boxtimes
\id_\N)\circ^+\Theta_\LR}$
	and $\psi^\ast\psi=1$ is of the form:
	\begin{align*}
		\psi=\bigoplus_{\lambda_1,\lambda_2\in\NDN} 
	\bigoplus_{m\in B(\theta\lambda_2,\lambda_1)_\loc} 
	m^\ast \boxtimes \id_{\lambda_2} \in \Hom\left( Z(\theta),(\theta\boxtimes
\id_\N)\theta_\LR \right)\komma
	\end{align*}
  where the sum over $m$ goes over an ONB of
$\Hom_\loc(\theta\lambda_2,\lambda_1)$.
	In particular:
	\begin{align*}
		[Z(\theta)]=\bigoplus_{\lambda_1,\lambda_2\in\NDN}
\langle\theta\lambda_2,\lambda_1\rangle_\loc 
		\left[\lambda_1 \boxtimes \overline\lambda_2\right]
	\komma
	\end{align*}
  where $\langle\slot,\slot\rangle_\loc=\dim\Hom_\loc(\slot,\slot)$. 
\end{lem}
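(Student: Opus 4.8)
The plan is to compute the defining projection $P:=P^{\mathrm l}_{(\Theta\boxtimes\id_\N)\circ^+\Theta_\LR}$ explicitly and to read off both $\psi$ and $[Z(\theta)]$ from it. By Definitions~\ref{defi:FullCenter} and~\ref{defi:LeftCenter} together with Proposition~\ref{prop:SubQSystem}, $Z(\theta)$ is a choice of subobject of $(\theta\boxtimes\id_\N)\theta_\LR$ corresponding to $P$, so \emph{any} isometry $\psi$ with $\psi\psi^\ast=P$ and $\psi^\ast\psi=1$ realizes $Z(\theta)$; hence it suffices to exhibit one such $\psi$ of the asserted form. First I would unfold the ambient object: by Definition~\ref{defi:LR} we have $\theta_\LR=\bigoplus_{\rho\in\NDN}\rho\boxtimes\overline\rho$, hence $(\theta\boxtimes\id_\N)\theta_\LR=\bigoplus_{\rho\in\NDN}(\theta\rho)\boxtimes\overline\rho$. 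Since the $\overline\rho$, $\rho\in\NDN$, are pairwise inequivalent irreducibles, an intertwiner $\lambda_1\boxtimes\overline{\lambda_2}\to(\theta\boxtimes\id_\N)\theta_\LR$ can be nonzero only in the single summand $\rho=\lambda_2$, so the multiplicity space of $\lambda_1\boxtimes\overline{\lambda_2}$ is $\Hom(\lambda_1,\theta\lambda_2)$; for the same reason $P$, being an endomorphism of $(\theta\boxtimes\id_\N)\theta_\LR$, is block-diagonal along $\rho$, $P=\bigoplus_{\rho\in\NDN}P_\rho$ with $P_\rho\in\Hom\big((\theta\rho)\boxtimes\overline\rho,(\theta\rho)\boxtimes\overline\rho\big)$.

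The heart of the argument is the identity $P_\rho=P^{\mathrm l}_\Theta(\rho)\boxtimes 1_{\overline\rho}$, with $P^{\mathrm l}_\Theta(\rho)$ as in Definition~\ref{defi:LocalProjector}. To establish it I would substitute the multiplication $x_+=\theta_1(\varepsilon^+(\theta_1,\theta_2))\,x_1\,\theta_1(x_2)$ of the product Q-system (Definition~\ref{defi:ProductQSystem}), with $\Theta_1=\Theta\boxtimes\id_\N$ and $\Theta_2=\Theta_\LR$, into the ``wrap-around'' diagram of Definition~\ref{defi:LocalProjector} and simplify using: (i) the locality $\varepsilon(\theta_\LR,\theta_\LR)x_\LR=x_\LR$ of $\Theta_\LR$ from Proposition~\ref{prop:LRlocal}, which lets the $\Theta_\LR$-strings be routed past one another freely; (ii) the explicit diagonal form of $x_\LR$ in terms of the bases $B(\nu,\lambda\mu)$ and of the conjugation recipe $e\mapsto\bar e$ of Definition~\ref{defi:LR}, together with the sphere/completeness relations in the $\overline\NNs$-factor, which collapse every $\overline\NNs$-string issuing from the $\Theta_\LR$-part down to $1_{\overline\rho}$; (iii) bookkeeping of the $\sqrt{d\theta}$ and $\sqrt{d\theta_\LR}$ normalizations, after which the surviving $\NNs$-strings close up into precisely the loop defining $P^{\mathrm l}_\Theta(\rho)$. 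This is the same sort of string-diagram manipulation as in \cite[Lemma~5.2]{FuRuSc2002} and \cite{FrFuRuSc2006,KoRu2008}; see also \cite{BiKaLoRe2014-2}. I expect this identity to be the only genuine obstacle: the difficulty is organizing the diagrammatics so that the $\overline\NNs$-direction contributes nothing beyond tracking the label $\rho$, while taking the left center of the product with the \emph{already local} $\Theta_\LR$ reduces, summand by summand, to exactly the left-locality condition of $\Theta$ alone.

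Granting $P_\rho=P^{\mathrm l}_\Theta(\rho)\boxtimes 1_{\overline\rho}$, I finish as follows. Since $P^{\mathrm l}_\Theta(\rho)$ is a self-adjoint projection (Lemma~\ref{lem:Projection}), it is block-diagonal along the isotypic components of $\theta\rho$, and by Definition~\ref{defi:LocalIntertwiners} its range in the $\lambda_1$-component is $\Hom_\loc(\lambda_1,\theta\rho)$, the image of $\Hom_\loc(\theta\rho,\lambda_1)$ under $t\mapsto t^\ast$. Hence, choosing for each $\lambda_1\in\NDN$ an orthonormal basis $B(\theta\lambda_2,\lambda_1)_\loc$ of $\Hom_\loc(\theta\lambda_2,\lambda_1)$, the adjoints $m^\ast$, $m\in\bigcup_{\lambda_1}B(\theta\lambda_2,\lambda_1)_\loc$, are isometries onto pairwise orthogonal subspaces of $\theta\lambda_2$ and $\sum_{\lambda_1}\sum_{m}m^\ast m=P^{\mathrm l}_\Theta(\lambda_2)$. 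Therefore
\[
\psi:=\bigoplus_{\lambda_1,\lambda_2\in\NDN}\ \bigoplus_{m\in B(\theta\lambda_2,\lambda_1)_\loc}m^\ast\boxtimes\id_{\lambda_2}
\]
satisfies $\psi^\ast\psi=1$ by orthonormality of the $m$'s and $\psi\psi^\ast=\bigoplus_{\lambda_2\in\NDN}\big(\textstyle\sum_{\lambda_1,m}m^\ast m\big)\boxtimes 1_{\overline{\lambda_2}}=\bigoplus_{\rho\in\NDN}P^{\mathrm l}_\Theta(\rho)\boxtimes 1_{\overline\rho}=P$, so $\psi$ is a valid choice of the isometry in the statement. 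Counting the basis vectors $m$ (equivalently, reading off the isotypic components of $\psi\psi^\ast$) yields $[Z(\theta)]=\bigoplus_{\lambda_1,\lambda_2\in\NDN}\langle\theta\lambda_2,\lambda_1\rangle_\loc\,[\lambda_1\boxtimes\overline{\lambda_2}]$.
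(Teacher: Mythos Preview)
Your proposal is correct and follows essentially the same route as the paper's proof: both reduce everything to the identity $P=\bigoplus_{\rho\in\NDN}P^{\mathrm l}_\Theta(\rho)\boxtimes 1_{\overline\rho}$ and then read off $\psi$ and $[Z(\theta)]$ from it. The only notable difference is in how that key identity is justified: the paper invokes \cite[Prop.~3.14(i)]{FrFuRuSc2006}, which states more generally that the left center of a product $\Theta_1\circ^+\Theta_2$ is governed by $P^{\mathrm l}_{\Theta_1}(C_{\mathrm l}(\theta_2))$, and then uses that $\Theta_\LR$ is already commutative (so $C_{\mathrm l}(\theta_\LR)=\theta_\LR$); you instead sketch a direct diagrammatic computation specialized to $\Theta_2=\Theta_\LR$, using its explicit form and locality to collapse the $\overline{\NNs}$-factor. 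Your direct route is a bit more hands-on, while the paper's citation isolates the conceptual reason the reduction works (the second factor is already local, so only the locality condition on the first factor survives); both arrive at the same place, and your identification of this step as ``the only genuine obstacle'' is accurate.
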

\begin{proof}
	We first note that $u\in  \Hom\left(
	R(\theta),(\theta\boxtimes 1)\theta_\LR \right)$ given by
	\begin{align*}
		u&:=\bigoplus_{\lambda_1,\lambda_2\in\NDN} 
	\bigoplus_{m\in  B(\theta\lambda_2,\lambda_1)} 
	m^\ast \boxtimes \id_{\lambda_2} \in \Hom\left(
	R(\theta),(\theta\boxtimes \id_\N)\theta_\LR \right)\komma\\
	R(\theta)&:=\bigoplus_{\lambda_1,\lambda_2\in\NDN}
	\langle\theta\lambda_2,\lambda_1\rangle \lambda_1\boxtimes\overline{\lambda}_2
	\end{align*}
	is a unitary interwiner.
	It can be shown that 
	$$
	P^\mathrm{l}_{(\Theta\boxtimes \id_\N)\circ^+\Theta_\LR}\cdot u =
	P_{\Theta\boxtimes \id_\N}^\mathrm{l}(\theta_\LR) \cdot u
	\equiv
		\left(\bigoplus_{\lambda\in\NDN} P^\mathrm{l}_{\Theta}
	(\lambda)\boxtimes1_{\overline\lambda}\right)\cdot u\punkt
	$$
	The equality is the statement \cite[Prop.\ 3.14(i)]{FrFuRuSc2006},
	namely it is proven that $C_\mathrm{l}((\Theta\boxtimes \id_\N)\circ^+ \Theta_\LR)$
	which is associated with $P^\mathrm{l}_{(\Theta\boxtimes \id_\N)\circ^+
\Theta}$
	is associated with the projection
	$P^\mathrm{l}_{\Theta\boxtimes \id_\N}(C_\mathrm{l}(\theta_\LR))
	\equiv P^\mathrm{l}_{\Theta\boxtimes \id_\N}(\theta_\LR)$.
	We can conclude by eventually choosing another basis that a maximal 
	isometry invariant w.r.t.\ $P^\mathrm{l}_{(\Theta\boxtimes
\id_\N)\circ^+\Theta_\LR}$ is given by summing just over 
	ONB's of $\Hom_\loc(\theta\lambda_2,\lambda_1)$.
\end{proof}
		\newcommand{\picalpha}{ 
		\tikzmatht[.42]{
			\fill[\coM,rounded corners] (-2,-2) rectangle (2,2);
			\fill[\coM] (-0.8,-1)--(0,-0.2)--(0,0)--(-0.2,0)--
			(-1.2,-1);
			\fill[\coN] 
				(0,-1) .. controls (-0.8,-1) and (-0.8,-1) .. (0,-0.2)
				--(0,0.2) .. controls (-1,-0.8) and (-1,-0.8) .. (-1,0)  
				arc(180:-90:1);
			\draw  (0,-1) .. controls (-0.8,-1) and (-0.8,-1) .. (0,-0.2)
				--(0,0.2) .. controls (-1,-0.8) and (-1,-0.8) .. (-1,0)  
				arc(180:100:1)
				(80:1)
				arc(80:-90:1);
			\draw [ultra thick] (0,-2) node [below] {$\alpha^-_\rho$}--(0,-1.1);
			\draw [thick] (0,-0.9)--(0,0);
			\draw [thick] (0,0)--(0,1);
			\draw [ultra thick] (0,1)--(0,2) node [above] {$\alpha^+_\sigma$};
            \mydot{(0,0)};
		}}
		\newcommand{\picsigma}{
		\tikzmatht[.42]{
			\fill [\coN,rounded corners] (-2,-2) rectangle (2,2);
			\fill [\coM]
				(-1,-2)--(-1,0) arc (180:-90:1) .. controls (-.8,-1) and (-.8,-1)..(-0.8,-2);
			\draw[ultra thick] (0,1)--(0,-0.85);
			\fill [white] (-0.2,-0.2) rectangle (.2,.2);
			\draw (-0.2,-0.2) rectangle (.2,.2);
			\draw (-1,-2) node [below] {$\theta$}--(-1,0) 
			arc(180:100:1) (80:1) arc(80:-90:1)
			.. controls (-.8,-1) and (-.8,-1) ..(-0.8,-2);
			\draw[thick] (0,2) node [above] {$\sigma$}--(0,1);
			\draw[thick] (0,-2) node [below] {$\rho$}--(0,-1.15);
		}}
		\newcommand{\picsigmaa}{
		\tikzmatht[.42]{
			\begin{scope}[yscale=-1]
			\fill [\coN,rounded corners] (-2,-2) rectangle (2,2);
			\fill [\coM]
				(-1,-2)--(-1,0) arc (180:-90:1) .. controls (-.8,-1) and (-.8,-1)..(-0.8,-2);
			\draw[ultra thick] (0,1)--(0,-0.85);
			\fill [white] (-0.2,-0.2) rectangle (.2,.2);
			\draw (-0.2,-0.2) rectangle (.2,.2);
			\draw (-1,-2) node [above] {$\theta$}--(-1,0) 
			arc(180:100:1) (80:1) arc(80:-90:1)
			.. controls (-.8,-1) and (-.8,-1) ..(-0.8,-2);
			\draw[thick] (0,2) node [below] {$\sigma$}--(0,1);
			\draw[thick] (0,-2) node [above] {$\rho$}--(0,-1.15);
			\end{scope}
		}}
		\newcommand{\piccircle}{
			\draw[ultra thick, \coN] (0,-0.5)--(0,-0.9);
			\fill[even odd rule,\coM] 
				(-1.2,-1)--(-0.8,-0.6)--(-0.8,0) arc (180:-90:0.8)
				-- (-0.6,-0.8)--(-0.8,-1)
				(0,-.6) .. controls (-0.4,-.6) and (0,-0.2) 
				.. (0,0)
				.. controls (0,0.2) and (-.6,-0.4) .. (-.6,0)  
				arc (180:-90:.6);
			\draw  (0,-.6) .. controls (-0.4,-.6) and (0,-0.2) 
				.. (0,0)
				.. controls (0,0.2) and (-.6,-0.4) .. (-.6,0)  
				arc (180:100:.6)
				(80:.6)
				arc(80:-90:.6);
			\draw (-1.2,-1)--(-0.8,-0.6)--(-0.8,0) arc (180:100:0.8)
			(80:0.8) arc (80:-90:0.8)
			-- (-0.6,-0.8)--(-0.8,-1);
			\draw[ultra thick] (0,0.6)--(0,0.8);
			\draw[ultra thick] (0,-0.67)--(0,-0.73);
            \mydot{(0,0)};
		}
Given a Q-system $\Theta$ in $\NNs$ and $\iota(\N)\subset \M$ its associated subfactor 
with the inclusion map $\iota\colon\N\to\M$, we will constantly use that the 
Q-system $\Theta$ is of the form $\Theta_{\bar\iota}$ as in 
Def.~\ref{defi:TrivialQSystem}, in other 
words the Q-system $\Theta$ becomes trivial in the 2--C$^\ast$-category generated by
$\NNs,\iota,\bar\iota$. This simplifies many graphical proofs.
\begin{lem}
  \label{lem:IsomorphicHilbertSpaces}
	Let $\NNs\subset \End(\N)$ be a UMTC, $\Theta$ a Q-system 
	in $\NNs$ and $\N\subset \M$ the corresponding subfactor.
	Let $\rho,\sigma \in \NNs$ be irreducible. 
	The spaces $\Hom_\loc(\theta\rho,\sigma)$ 
	and $\Hom(\alpha_\rho^-,\alpha_\sigma^+)$ are isomorphic by
	the map: 
	\begin{align*}
		\Hom_\loc(\theta\rho,\sigma) & \longrightarrow \Hom(\alpha_\rho^-,\alpha_\sigma^+)\\
		\tikzmatht[.42]{
			\fill[\coN,rounded corners] (-2,-1) rectangle (1,1);
			\fill[\coM,rounded corners] (-0.8,-1)--(0,-0.2)--(0,0.2)--(-1.2,-1);
			\draw[rounded corners] (-0.8,-1)--(0,-0.2)--(0,0.2)--(-1.2,-1)
				node [below] {$\theta$};
			\draw[thick] (0,1) node [above] {$\sigma$}--(0,-1) node [below] {$\rho$};
            \mydot{(0,0)};
		}
		&\longmapsto \frac1{\sqrt[4]{d\theta}}
		\picalpha
    \\
		\frac1{\sqrt[4]{d\theta}}
		\picsigma 
		&\longmapsfrom		
		\tikzmatht[.42]{
			\fill[\coM,rounded corners] (-1,-1) rectangle (1,1);
			\draw[ultra thick] (0,1) node [above] {$\alpha_\sigma^+$}--(0,-1)
				node [below] {$\alpha_\rho^-$};
			\fill [white] (-0.2,-0.2) rectangle (.2,.2);
			\draw (-0.2,-0.2) rectangle (.2,.2);
		}
		\punkt
	\end{align*}
	In the same way $\Hom_\loc(\rho,\theta\sigma)$ is isomorphic to 
	$\Hom(\alpha_\rho^+,\alpha_\sigma^-)$. This gives a unitary equivalence  
	between the Hilbert spaces $\Hom_\loc(\rho,\theta\sigma)$ with scalar product $(e,f)=\Phi_\sigma(e^\ast f)$ 
	and $\Hom(\alpha_\rho^+,\alpha_\sigma^-)$ with scalar product $(e',f')=\Phi_{\alpha^+_\sigma}(e'^\ast f')$, where
	$\Phi_\sigma$ and $\Phi_{\alpha^+_\sigma}$ denote the unique left inverse and unique standard left inverse, 
	respectively.
\end{lem}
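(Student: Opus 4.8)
The plan is to reduce everything to the trivial Q-system $\Theta=\Theta_{\bar\iota}$ of Def.~\ref{defi:TrivialQSystem}, so that $\theta=\bar\iota\iota$ and, by Frobenius reciprocity for the conjugate pair $\iota\colon\N\to\M$, $\bar\iota\colon\M\to\N$, one has a canonical linear isomorphism $\Hom(\theta\rho,\sigma)=\Hom(\bar\iota\iota\rho,\sigma)\cong\Hom(\iota\rho,\iota\sigma)$, realised by bending the outgoing $\bar\iota$-leg around with the intertwiner $R_{\bar\iota}$ (and back, using the zig-zag identities \eqref{eq:conj1}--\eqref{eq:conj2}). Since $\iota\rho=\alpha^-_\rho\iota$ and $\iota\sigma=\alpha^+_\sigma\iota$ we have $\Hom(\alpha^-_\rho,\alpha^+_\sigma)\subseteq\Hom(\iota\rho,\iota\sigma)$, and the two displayed maps of the Lemma are precisely this reciprocity isomorphism, with the prefactor $1/\sqrt[4]{d\theta}$ inserted for normalisation. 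So the content of the Lemma is the identity of subspaces: under the reciprocity isomorphism, $\Hom_\loc(\theta\rho,\sigma)$ --- the range of the projection $P^{\mathrm l}_\Theta(\rho)$ of Def.~\ref{defi:LocalProjector} --- is carried onto $\Hom(\alpha^-_\rho,\alpha^+_\sigma)$.

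To prove this identity I would argue by two inclusions, both graphically. For ``$\subseteq$'': given $\hat t\in\Hom(\alpha^-_\rho,\alpha^+_\sigma)$ with preimage $t\in\Hom(\theta\rho,\sigma)$, I would draw $t\,P^{\mathrm l}_\Theta(\rho)$, use $\Theta=\Theta_{\bar\iota}$ and Remark~\ref{rmk:alphaInduction} to rewrite the $\theta$-circle appearing in $P^{\mathrm l}_\Theta(\rho)$ as a round trip of the $\iota$-line threading the $\rho$-line, then use the intertwining braided fusion equations \cite[Prop.~3.3]{BcEvKa1999} to drag that $\iota$-line across $\hat t$ --- legal precisely because $\hat t$ intertwines the $\varepsilon^-$-built $\alpha^-_\rho$ with the $\varepsilon^+$-built $\alpha^+_\sigma$, so the two braidings produced by the two crossings cancel --- and finally collapse the resulting loop by the conjugate equations for $\iota$; what remains is $t$, i.e.\ $t\,P^{\mathrm l}_\Theta(\rho)=t$. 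For ``$\supseteq$'': I would use that $\M$ is generated by $\iota(\N)$ together with the single isometry $v\in\M$ coming from the Q-system $\Theta_{\bar\iota}$, so that $\hat t\in\Hom(\iota\rho,\iota\sigma)$ lies in $\Hom(\alpha^-_\rho,\alpha^+_\sigma)$ if and only if, in addition, $\hat t\,\alpha^-_\rho(v)=\alpha^+_\sigma(v)\,\hat t$; since $\alpha^\pm_\lambda(v)$ is $\iota(v)$ dressed by $\varepsilon^\pm(\lambda,\theta)$, translating this one equation back along the reciprocity isomorphism of the first step produces exactly $t\,P^{\mathrm l}_\Theta(\rho)=t$ (one uses the unit, associativity and Frobenius relations of $\Theta$ to see that this single equation is equivalent to full $P^{\mathrm l}_\Theta(\rho)$-invariance). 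Up to bookkeeping this is the content of \cite[Prop.~3.1]{BcEvKa1999} combined with the characterisation of $P^{\mathrm l}$ in \cite[Lemma~2.30 and Prop.~3.14]{FrFuRuSc2006}; keeping track of the braidings and signs through these graphical moves is the step I expect to be the main obstacle.

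The second isomorphism $\Hom_\loc(\rho,\theta\sigma)\cong\Hom(\alpha^+_\rho,\alpha^-_\sigma)$ then follows formally: $P^{\mathrm l}_\Theta(\rho)$ is self-adjoint (Lemma~\ref{lem:Projection}), so taking adjoints gives an anti-isomorphism $\Hom_\loc(\theta\rho,\sigma)\cong\Hom_\loc(\sigma,\theta\rho)$, and combining this with the first isomorphism (applied with $\rho$ and $\sigma$ interchanged) together with the symmetry $+\leftrightarrow-$ of the whole setup --- passage to the opposite braiding, i.e.\ to $\overline{\NNs}$ --- yields the claim. By irreducibility of $\rho$ and $\sigma$ all spaces in sight are finite-dimensional, carrying the inner products defined via the respective (standard) left inverses.

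It remains to check unitarity, i.e.\ that the normalised map $\Psi$ preserves these inner products. This is a direct diagrammatic computation: for $e,f\in\Hom_\loc(\rho,\theta\sigma)$ one closes up the diagram defining $(\Psi e)^\ast(\Psi f)$, and by sphericality (the Frobenius-rotation-invariant convention fixed for trivalent vertices in Sec.~\ref{sec:Pre}) together with the conjugate equations for $\iota$ this reduces to $e^\ast f$ with a dimension prefactor that is exactly cancelled by the $1/\sqrt[4]{d\theta}$ built into $\Psi$; hence $\Psi$ is isometric, and being a bijection, unitary. The analogous check applies to the first correspondence. Everything except the subspace identification of the second paragraph is routine manipulation with the zig-zag, unit and Frobenius relations.
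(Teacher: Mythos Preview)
Your approach is correct and close in spirit to the paper's, but one half of your argument is more roundabout than necessary. Both you and the paper reduce to the trivial Q-system $\Theta=\Theta_{\bar\iota}$ and recognise the maps as (normalised) Frobenius reciprocity combined with the literal equalities $\alpha^\pm_\lambda\iota=\iota\lambda$; both verify unitarity by closing up the diagram and recognising the standard left inverse of $\alpha^+_\sigma$, so that the $d\theta$-factors cancel. The divergence is in how bijectivity onto the correct subspace is established. You argue by two inclusions: your ``$\subseteq$'' (well-definedness of the backward map, landing in $\Hom_\loc$) is exactly the paper's first step. For ``$\supseteq$'' you characterise $\Hom(\alpha^-_\rho,\alpha^+_\sigma)$ inside $\Hom(\iota\rho,\iota\sigma)$ by the single additional commutation with the generator $v$, and then translate this back through reciprocity to recover $P^{\mathrm l}_\Theta(\rho)$-invariance. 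That works, but it costs you an extra characterisation step (and the bookkeeping you yourself flag as the obstacle). The paper instead simply composes the two displayed maps in both orders: backward-then-forward collapses a closed $\iota$-loop to the scalar $d\theta$ cancelling the prefactor, and forward-then-backward deforms graphically into $t\cdot P^{\mathrm l}_\Theta(\rho)$, which equals $t$ precisely by the $\Hom_\loc$ hypothesis. This direct mutual-inverse check is shorter and avoids invoking the $v$-generator description of $M$ altogether.
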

\begin{proof}
	We first check that the map is well defined, namely
	the image is an element in $\Hom_{\loc}(\theta\rho,\sigma)$ and we have
        (``$=$'' denotes the trivial intertwiner identifying $\theta=\bar\iota\iota$)
	$$
		\frac1{\sqrt{d\theta}}\,  
	\tikzmatht[.42]{
		\fill[\coN,rounded corners] (-3,-3) rectangle (2,8);
		 \draw [thick] (0,4)--(0,0.6) (0,0)--(0,-3) node [below] {$\rho$};
		\draw [thick](-1,-3) node [below] {$\theta$}--(-1,-1.414);
		\draw [thick](-1,4)--(-1,2)   arc (90:180:.5)
			arc(0:-180:.5) arc(180:0:1.5) --+(0,0) arc(0:-45:1.707)
			arc(135:180:1.707) arc (0:-180:0.5) arc (180:135:1.707)
			arc (-45:0:1.707) arc(0:90:0.5);
		\mydot{(-1,2)} ;
		\mydot{(-1,-1.414)} ;
		\draw[ultra thick,double,\coN] (-1,4)--(-1,2.5); 
		\draw [thick](-1,4)--(-1,2.5); 
		\draw[ultra thick,double,\coN] (0,3)--(0,1); 
		\draw [thick](0,3)--(0,1); 
		\begin{scope}[shift={(0,6)}]
			 \fill [\coM]
			(-1,-2)--(-1,0) arc (180:-90:1) .. controls (-.8,-1)and (-.8,-1) ..(-0.8,-2);
			\draw[ultra thick] (0,1)--(0,-.85);%
			\draw[ultra thick] (0,1)--(0,-.85);%
			\fill [white] (-0.2,-0.2) rectangle (.2,.2);
			\draw (-0.2,-0.2) rectangle (.2,.2);
			\draw 
			(-1,-2)--(-1,0) 
			arc(180:100:1) (80:1) arc(80:-90:1)
			.. controls (-.8,-1)and (-.8,-1) ..(-.8,-2);
			\draw[thick] (0,2) node [above] {$\sigma$}--(0,1);
			\draw[thick] (0,-2)--(0,-1.15);
			\fill [white] (-0.5,-1.5) rectangle (-1.5,-2.5);
			\draw(-0.5,-1.5) rectangle node {$=$} (-1.5,-2.5);
		\end{scope}
	}
	\equiv	
		\frac1{\sqrt{d\theta}}\,  
		\tikzmatht[.42]{
			\fill[\coN,rounded corners] (-3,-3) rectangle (3,3);
			\draw[ultra thick] (0,-1.3)--(0,-1.4) (0,-0.6)--(0,0);
			\draw [thick] (0,-0.9)--(0,0);
			\draw [thick] (0,0)--(0,1);
			\mydot{(0,0)};
			\fill [\coM]	
			(-1.5,-3) node [below] {$\theta$}--(-1.5,0) 
			arc(180:-90:1.5)
			.. controls (-1.3,-1.5)and (-1.3,-1.5) ..(-1.3,-3);
			\draw[ultra thick] (0,1)--(0,-.6);%
			\fill [white] (-0.2,-0.2) rectangle (.2,.2);
			\draw (-0.2,-0.2) rectangle (.2,.2);
			\fill [\coN](-0.7,0) arc (180:-125:0.7)
				arc (55:235:0.3)
				arc (-125:180:1.3)
				arc (180:360:0.3);
			\draw (-0.7,0) arc (180:100:0.7)
				(80:0.7) arc (80:-125:0.7)
				arc (55:235:0.3)
				arc (-125:83:1.3)
				(97:1.3)
				arc (97:180:1.3)
				arc (180:360:0.3);
			\draw [thick] (0,2)--(0,0.5);
			\draw [thick] (0,-1.2)--(0,-0.8);
			\draw 
			(-1.5,-3) node [below] {$\theta$}--(-1.5,0) 
			arc(180:95:1.5) (85:1.5) arc(85:-90:1.5)
			.. controls (-1.3,-1.5)and (-1.3,-1.5) ..(-1.3,-3);
			\draw[thick] (0,-3) node [below] {$\rho$} --(0,-1.6);
			\draw[thick] (0,3) node [above] {$\sigma$} --(0,1.5);
			\draw [ultra thick] (0,1.3)--(0,1.5);
			\draw [ultra thick] (0,-1.35)--(0,-1.45);
		}	
	= 
	\picsigma
  \komma
	$$
  where we used in the first equation that $\Theta$ is of the form 
  $\Theta_{\bar\iota}$ and in the second equation that the closed string
  can be contracted which cancels the prefactor. 
  So we conclude that the image is actually in $\Hom_\loc(\theta\rho,\sigma)$.
	
  We have to show that both maps are inverse to each other:
	\begin{align*}
		\tikzmatht[.42]{
			\fill[\coN,rounded corners] (-2,-1) rectangle (1,1);
			\fill[\coM,rounded corners] (-0.8,-1)--(0,-0.2)--(0,0.2)--(-1.2,-1);
			\draw[rounded corners] (-0.8,-1)--(0,-0.2)--(0,0.2)--(-1.2,-1)
				node [below] {$\theta$};
			\draw[thick] (0,1) node [above] {$\sigma$}--(0,-1) node [below]
				{$\rho$};
            \mydot{(0,0)}; 
		}
		&\longmapsto
		\picalpha
		\longmapsto
		\frac1{\sqrt{d\theta}}
		\tikzmatht[.42]{
			\fill[\coN,rounded corners] 
			(-2,-2) rectangle (2,2);
			\fill [\coM]
			(-1.2,-2)--(-1.2,0) arc (180:-90:1.2) .. controls (-1,-1.2)and (-1,-1.2) ..(-1,-2);
			\draw[ultra thick] (0,0)--(0,-.85);%
			\fill [white] (-0.2,-0.2) rectangle (.2,.2);
			\draw (-0.2,-0.2) rectangle (.2,.2);
			\draw 
			(-1.2,-2)--(-1.2,0) 
			arc(180:100:1.2) (80:1.2) arc(80:-90:1.2)
			.. controls (-1,-1.2)and (-1,-1.2) ..(-1,-2) node [below] {$\theta$};
			\draw[thick] (0,2) node [above] {$\sigma$}--(0,1);
			\draw[thick] (0,-2) node [below] {$\rho$}--(0,-1.30);
			\fill[\coM] (-0.8,-1)--(0,-0.2)--(0,0)--(-0.2,0)--
			(-1.2,-1);
			\fill[\coN] 
				(0,-1) .. controls (-0.8,-1) and (-0.8,-1) .. (0,-0.2)
				--(0,0.2) .. controls (-1,-0.8) and (-1,-0.8) .. (-1,0)  
				arc(180:-90:1);
			\draw  (0,-1) .. controls (-0.8,-1) and (-0.8,-1) .. (0,-0.2)
				--(0,0.2) .. controls (-1,-0.8) and (-1,-0.8) .. (-1,0)  
				arc(180:100:1)
				(80:1)
				arc(80:-90:1);
			\draw [thick] (0,-0.9)--(0,0);
			\draw [thick] (0,0)--(0,1);
			\draw [ultra thick] (0,1)--(0,1.2);
            \mydot{(0,0)}; 
		}
		=
		\tikzmatht[.42]{
			\fill[\coN,rounded corners] (-2,-1) rectangle (1,1);
			\fill[\coM,rounded corners] (-0.8,-1)--(0,-0.2)--(0,0.2)--(-1.2,-1);
			\draw[rounded corners] (-0.8,-1)--(0,-0.2)--(0,0.2)--(-1.2,-1)
				node [below] {$\theta$};
			\draw[thick] (0,1) node [above] {$\sigma$}--(0,-1) node [below]
				{$\rho$};
            \mydot{(0,0)}; 
		}
		\\
		\tikzmatht[.42]{
			\fill[\coM,rounded corners] (-1,-1) rectangle (1,1);
			\draw[ultra thick] (0,1) node [above] {$\alpha_\sigma^+$}--(0,-1)
				node [below] {$\alpha_\rho^-$};
			\fill [white] (-0.2,-0.2) rectangle (.2,.2);
			\draw (-0.2,-0.2) rectangle (.2,.2);
		}
		&\longmapsto		
		\frac1{\sqrt[4]{d\theta}}	
		\picsigma
		\longmapsto			
		\frac1{\sqrt{d\theta}}
		\tikzmatht[.42]{
			\fill[\coM,rounded corners] (-2,-2) rectangle (2,2);
			\draw[ultra thick] (0,-2)	node [below] {$\alpha_\rho^-$}--(0,-1.4) (0,-0.6)--(0,0);
			\draw [thick] (0,-0.9)--(0,0);
			\draw [thick] (0,0)--(0,1);
			\draw [ultra thick] (0,1)--(0,2) node [above] {$\alpha_\sigma^+$};
            \mydot{(0,0)}; 
			\fill [rounded corners, \coM] (-0.8,-2)--(-.8,-1)--(1,-1)
				--(1,1)--(-1,1)--(-1,-2);
			\draw[ultra thick] (0,1)--(0,-.6);%
			\fill [white] (-0.2,-0.2) rectangle (.2,.2);
			\draw (-0.2,-0.2) rectangle (.2,.2);
			\fill [\coN](-0.7,0) arc (180:-125:0.7)
				arc (55:235:0.3)
				arc (-125:180:1.3)
				arc (180:360:0.3);
			\draw (-0.7,0) arc (180:100:0.7)
				(80:0.7) arc (80:-125:0.7)
				arc (55:235:0.3)
				arc (-125:80:1.3)
				(100:1.3)
				arc (100:180:1.3)
				arc (180:360:0.3);
			\draw [thick] (0,2)--(0,0.5);
			\draw [thick] (0,-1.2)--(0,-0.8);
		}	
		=
		\tikzmatht[.42]{
			\fill[\coM,rounded corners] (-1,-1) rectangle (1,1);
			\draw[ultra thick] (0,1) node [above] {$\alpha_\sigma^+$}--(0,-1)
				node [below] {$\alpha_\rho^-$};
			\fill [white] (-0.2,-0.2) rectangle (.2,.2);
			\draw (-0.2,-0.2) rectangle (.2,.2);
		}\komma
	\end{align*}
  where the last equation in the first line is exactly the fact that
  the intertwiner is in $\Hom_{\loc}(\theta\rho,\sigma)$, namely the diagram
  can be deformed to obtain $P^\mathrm{l}_\Theta(\rho)$ which can be omitted; in the last
  equation of the second line the closed string can again be contracted to a 
  dimension cancelling the prefactor.

	Finally, unitarity can be seen as follows:
	\begin{align*}
		\left\| \frac1{\sqrt[4]{d\theta}}~ 
			\picsigmaa
		~ \right\|^2=\frac1{\sqrt{d\theta} \,d\sigma}\,
		\tikzmatht[.42]{
			\fill [\coN,rounded corners] (-2.5,-7) rectangle (1.5,3);
			\begin{scope}[shift={(0,-4)},rotate=180,xscale=-1]
			\fill [\coM]
				(-1,-2)--(-1,0) arc (180:-90:1) .. controls (-.8,-1) and (-.8,-1)..(-0.8,-2);
			\draw[ultra thick] (0,1)--(0,-0.85);
			\fill [white] (-0.2,-0.2) rectangle (.2,.2);
			\draw (-0.2,-0.2) rectangle (.2,.2);
			\draw (-1,-2) node [right] {$\theta$}--(-1,0) 
			arc(180:100:1) (80:1) arc(80:-90:1)
			.. controls (-.8,-1) and (-.8,-1) ..(-0.8,-2);
			\draw[thick] (0,1.5) node [right] {$\sigma$}--(0,1);
			\draw[thick] (0,-2) node [right] {$\rho$}--(0,-1.15);
			\end{scope}
			\begin{scope}
			\fill [\coM]
				(-1,-2)--(-1,0) arc (180:-90:1) .. controls (-.8,-1) and (-.8,-1)..(-0.8,-2);
			\draw[ultra thick] (0,1)--(0,-0.85);
			\fill [white] (-0.2,-0.2) rectangle (.2,.2);
			\draw (-0.2,-0.2) rectangle (.2,.2);
			\draw (-1,-2) node [below] {}--(-1,0) 
			arc(180:100:1) (80:1) arc(80:-90:1)
			.. controls (-.8,-1) and (-.8,-1) ..(-0.8,-2);
			\draw[thick] (0,1.5) node [above left] {}--(0,1);
			\draw[thick] (0,-2) node [below] {}--(0,-1.15);
			\end{scope}
			\draw[thick] (0,-5.5)arc (0:-180:1)--(-2,1.5) arc (180:0:1);
		}
			=
		\left\|~
			\tikzmatht[.42]{
				\begin{scope}[yscale=-1]
				\fill[\coM,rounded corners] (-1,-1) rectangle (1,1);
				\draw[ultra thick] (0,1) node [below] {$\alpha_\sigma^+$}--(0,-1)
					node [above] {$\alpha_\rho^-$};
				\fill [white] (-0.2,-0.2) rectangle (.2,.2);
				\draw (-0.2,-0.2) rectangle (.2,.2);
				\end{scope}
			}
		~\right\|^2
    \komma
	\end{align*}
  where in the last equation we use that the string diagram can be deformed to
  give the standard left inverse for 
  $\alpha_\sigma^+$ (\cf \cite[Lemma~2.2]{Re2000}).
\end{proof}

\begin{defi}[$\alpha$-induction construction \cite{Re2000}]
  \label{defi:AlphaInductionConstruction}
  For a braided subfactor $\iota(\N)\subset \M$ in $\NNs$ there is 
  a Q-system $\Theta_\M=(\theta_\M,w_\M,x_\M)$ in $\NNs\boxtimes \overline
	{\NNs}$ given by:
	\begin{align*}
		[\theta_\M]&=\bigoplus_{\rho,\sigma\in\NDN} Z_{\mu\nu}
		[\mu\boxtimes\bar\nu],
		\\
		Z_{\mu\nu}&=\langle \alpha^+_\mu,\alpha^-_\nu\rangle\\
		x_M &= \bigoplus_{lmn} \sum_{e_1,e_2}
		\sqrt{\frac{d\lambda_2d\mu_2}{d\theta_\M d\nu_2}}
		\Phi^1_{\nu_1}[\iota({e_1}^\ast)(\phi_l^\ast \otimes \phi_m^\ast)
      \iota({e_2})\phi_n]\cdot {e_1}\boxtimes {\bar e_2},
		\\&= \bigoplus_{lmn} \sum_{e_1,e_2}\frac1{\sqrt{d\theta_\M}}
		\sqrt[4]{\frac{d\lambda_2d\mu_2d\nu_1}{d\lambda_1d\mu_1d\nu_2}}
		\Phi^1_{\nu_1}[\cdots]
    \tikzmatht{
	    \draw[thick] (0,-1) node [below] {$\nu_1$}--(0,0) .. controls (1,1) ..
         (1,1.5) node [above] {$\mu_1$};
  		\draw[thick] (0,0) node [right] {$e_1$} .. controls (-1,1) .. 
        (-1,1.5) node [above] {$\lambda_1$};
		  \mydot{(0,0)};
  	}
    \boxtimes
    \tikzmatht{
	    \draw[thick] (0,-1) node [below] {$\bar\nu_2$}--(0,0) .. controls (1,1) ..
         (1,1.5) node [above] {$\bar\mu_2$};
	  	\draw[thick] (0,0) node [right] {$\bar e_2$} .. controls (-1,1) .. 
        (-1,1.5) node [above] {$\bar\lambda_2$};
  		\mydot{(0,0)};
	  }
  	\end{align*}
  where $l$ is considered as a multi-index
  $(\lambda_1\in\NDN,\lambda_2\in\NDN,l=1,\cdots,Z_{\lambda_1,\lambda_2})$ and
  $e_i$ stands for an ONB in $\Hom(\nu_i,\lambda_i\mu_i)$ and $\phi_l$ an ONB
  in $\Hom(\alpha^+_{\lambda_1},\alpha^-_{\lambda_2})$ with respect to the 
  induced left inverse $\Phi^1_{\lambda_1}$.
\end{defi}
The following result was conjectured in \cite{KoRu2010}.
It can be seen as the main technical result. It allows to apply a lot of 
results obtained in the categorical literature to the braided subfactor and conformal net
setting.
\begin{prop}
  \label{prop:main} Let $\NNs$ be a UMTC.
	The $\alpha$-induction  construction for $(\iota(\N)\subset \M,\NNs)$
coincides with the full center $Z(\Theta)$ of the corresponding Q-system
$\Theta$. 
\end{prop}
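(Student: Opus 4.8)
The plan is to show that, under the canonical identification of their underlying objects, Rehren's Q-system $\Theta_\M$ of Definition~\ref{defi:AlphaInductionConstruction} and the full center $Z(\Theta)=C_\mathrm{l}\big((\Theta\boxtimes\id_\N)\circ^+\Theta_\LR\big)$ of Definition~\ref{defi:FullCenter} have the same unit and the same multiplication, so that they agree as Q-systems in $\NNs\boxtimes\overline{\NNs}$. First I would match the underlying endomorphisms. Lemma~\ref{lem:Isometry} gives $[Z(\theta)]=\bigoplus_{\lambda_1,\lambda_2\in\NDN}\langle\theta\lambda_2,\lambda_1\rangle_\loc\,[\lambda_1\boxtimes\overline\lambda_2]$, and Lemma~\ref{lem:IsomorphicHilbertSpaces} identifies $\Hom_\loc(\theta\lambda_2,\lambda_1)$ with $\Hom(\alpha^-_{\lambda_2},\alpha^+_{\lambda_1})$ as Hilbert spaces, whence $\langle\theta\lambda_2,\lambda_1\rangle_\loc=\langle\alpha^+_{\lambda_1},\alpha^-_{\lambda_2}\rangle=Z_{\lambda_1\lambda_2}$ and $[Z(\theta)]=[\theta_\M]$. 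Moreover the same lemma lets me transport an ONB $\{\phi_l\}$ of $\Hom(\alpha^+_{\lambda_1},\alpha^-_{\lambda_2})$ --- the one used to write down $x_\M$ --- to an ONB of the local space $\Hom_\loc(\lambda_1,\theta\lambda_2)$, which up to adjoints is exactly the data $m_l^\ast\boxtimes\id_{\lambda_2}$ building the isometry $\psi$ of Lemma~\ref{lem:Isometry} with $\psi\psi^\ast=P^\mathrm{l}_{(\Theta\boxtimes\id_\N)\circ^+\Theta_\LR}$ and $\psi^\ast\psi=1$.

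Next I would compute the full-center multiplication explicitly. Combining Definitions~\ref{defi:FullCenter}, \ref{defi:ProductQSystem}, \ref{defi:LeftCenter} with the sub-Q-system formula of Proposition~\ref{prop:SubQSystem} applied to $s=\psi$ gives
\[
  Z(x)=\sqrt{\frac{d\theta'}{d Z(\theta)}}\;\psi^\ast\,\theta'(\psi^\ast)\,\Big[(\theta\boxtimes\id_\N)\big(\varepsilon^+(\theta\boxtimes\id_\N,\theta_\LR)\big)\cdot(x_\Theta\boxtimes 1)\cdot(\theta\boxtimes\id_\N)(x_\LR)\Big]\,\psi,\qquad \theta'=(\theta\boxtimes\id_\N)\theta_\LR .
\]
Into this I would substitute the explicit $x_\LR$ of Definition~\ref{defi:LR} (the sum of $e\boxtimes\bar e$ over ONB's $e\in B(\nu,\lambda\mu)$), the trivialisation $\Theta=\Theta_{\bar\iota}$ of Definition~\ref{defi:TrivialQSystem} (so that $x_\Theta$ becomes a conjugacy cup and all $\theta$-strings may be opened into $\iota/\bar\iota$-strings), and the factorisation identity $P^\mathrm{l}_{(\Theta\boxtimes\id_\N)\circ^+\Theta_\LR}\cdot u=\big(\bigoplus_{\lambda}P^\mathrm{l}_\Theta(\lambda)\boxtimes 1_{\overline\lambda}\big)\cdot u$ used in the proof of Lemma~\ref{lem:Isometry} (from \cite[Prop.~3.14(i)]{FrFuRuSc2006}), which makes the computation factor over the two Deligne factors. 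On the $\overline{\NNs}$-factor $\psi$ acts by identities, so that factor collapses to the $\overline{e_2}$-leg appearing in $x_\M$. On the $\NNs$-factor the remaining contraction of the $m$'s, the $e_1$'s, the braiding $\varepsilon^+$, and one closed $\theta$-loop --- after translating the $m$'s back into the $\phi$'s via Lemma~\ref{lem:IsomorphicHilbertSpaces} and recognising the loop as an application of the induced left inverse $\Phi^1_{\nu_1}$ on $\M$ --- collapses to the scalar $\Phi^1_{\nu_1}[\iota(e_1^\ast)(\phi_l^\ast\otimes\phi_m^\ast)\iota(e_2)\phi_n]$, which is exactly the structure constant of $x_\M$ in Definition~\ref{defi:AlphaInductionConstruction}, normalisation prefactors included. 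Specialising the same collapse to the vacuum block $\lambda_1=\lambda_2=\id_\N$ (where $Z_{\id\id}=1$) matches $Z(w)$ with $w_\M$ up to the scalar fixed by $w^\ast w=1$.

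I expect the main obstacle to be the graphical bookkeeping rather than any conceptual point: correctly handling the $\circ^+$-gluing braiding $\varepsilon^+(\theta\boxtimes\id_\N,\theta_\LR)$, keeping track of the passages between $\Hom_\loc(\theta\rho,\sigma)$ and $\Hom_\loc(\rho,\theta\sigma)$ and between the $+$ and $-$ (relative) braidings that are implicit when comparing Lemma~\ref{lem:Isometry} with Definition~\ref{defi:AlphaInductionConstruction}, and checking that all dimension prefactors --- from the Frobenius-rotation-invariant trivalent-vertex convention, from $x_\LR$, and from the $\sqrt{d\theta'/dZ(\theta)}$ of Proposition~\ref{prop:SubQSystem} --- combine to the $\tfrac1{\sqrt{d\theta_\M}}\sqrt[4]{d\lambda_2 d\mu_2 d\nu_1/(d\lambda_1 d\mu_1 d\nu_2)}$ stated in Definition~\ref{defi:AlphaInductionConstruction}. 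Once the diagrams are drawn with $\Theta=\Theta_{\bar\iota}$ and the $P^\mathrm{l}$-factorisation from \cite{FrFuRuSc2006} in hand, every individual move is a zig-zag identity, an IBFE, or a closed-loop contraction, so no genuinely new input beyond Lemmas~\ref{lem:Isometry} and~\ref{lem:IsomorphicHilbertSpaces} is needed.
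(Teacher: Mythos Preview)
Your proposal is correct and follows essentially the same route as the paper: match the objects via Lemmas~\ref{lem:Isometry} and~\ref{lem:IsomorphicHilbertSpaces}, write $Z(x)$ through the isometry $\psi$, open all $\theta$-strings using $\Theta=\Theta_{\bar\iota}$, and then recognise the remaining closed $\iota$-loop as the induced left inverse $\Phi^1_{\nu_1}$ producing the structure constants of $x_\M$. The paper makes explicit one intermediate step you leave implicit --- expanding the first-factor intertwiner in an ONB of $\Hom(\nu_1,\lambda_1\mu_1)$ to bring in the $e_1$'s --- but otherwise your outline and the paper's computation are the same argument.
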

\begin{proof}
It is already clear that the two constructions give equivalent objects,
namely 
$$
  [Z(\theta)]=\bigoplus_{\lambda_1,\lambda_2\in\NDN} \langle \theta\lambda_2,\lambda_1\rangle_\loc 
  [\lambda_1\boxtimes\bar\lambda_2] = \bigoplus_{\lambda_1,\lambda_2\in\NDN}
  \langle \alpha^+_{\lambda_1},\alpha^-_{\lambda_2}\rangle
[\lambda_1\boxtimes\bar\lambda_2]  = [\theta_M]
$$
follows from Lemma \ref{lem:Isometry} and Lemma \ref{lem:IsomorphicHilbertSpaces}.
We have to show that the two intertwiners $Z(x)$ and $x_\M$ 
of the two respective constructions are equivalent. We decompose $Z(x)$ w.r.t.\
an ONB to show
that we obtain the same coefficients as in the $\alpha$-induction 
construction
for $x_M$.
Using Lemma \ref{lem:Isometry} we have:
\begin{align}
  \label{eq:Zx}
	{\sqrt{d\theta_\LR} \sqrt{d\theta}}Z(x)
	&=\bigoplus_{lmn}\sum_{e_2}
	\sqrt[4]{\frac{d\lambda_1}{d\lambda_2}
	\frac{d\mu_1}{d\mu_2}
	\frac{d\nu_1}{d\nu_2}}
	\,
	\tikzmatht{
		\fill[\coN,rounded corners] (-3,-3) rectangle (2.5,3.5);
		\draw[thick] (0,-1) node [below] {}--(0,0) .. controls (1,1)  .. (1,1.5);
		\draw[thick]
			(0,0) node [right] {$e_2$} .. controls	(-1,1)	.. (-1,1.5);
		\mydot{(0,0)};
		\draw [double, ultra thick, \coN] (-1.8,1.5).. controls (-2.3,1) and (-1.7,0) .. (-0.2,1.5);
		\draw [double, ultra thick, \coN] (0.2,1.5).. controls (-1.3,0) and (-1.8,0) .. (-0.8,-1);
		\begin{scope}[shift={(1,2.5)}]
			\fill[\coM,rounded corners] (-0.8,-1)--(0,-0.2)--(0,0.2)--(-1.2,-1);
			\draw[rounded corners] (-0.8,-1)--(0,-0.2)--(0,0.2)--(-1.2,-1)
				node [below] {};
			\draw[thick] (0,1) node [above] {$\mu_1$}--(0,-1) node [below]
				{};
			\mydot{(0,0)};
			\node (0,0) [right] {$m^\ast$};
		\end{scope}
		\begin{scope}[shift={(-1,2.5)}]
			\fill[\coM,rounded corners] (-0.8,-1)--(0,-0.2)--(0,0.2)--(-1.2,-1);
			\draw[rounded corners] (-0.8,-1)--(0,-0.2)--(0,0.2)--(-1.2,-1)
				node [below] {};
			\draw[thick] (0,1) node [above] {$\lambda_1$}--(0,-1) node [below]
				{};
			\mydot{(0,0)};
			\node (0,0) [right] {$l^\ast$};
		\end{scope}
		\begin{scope}[shift={(0,-2)},yscale=-1]
			\fill[\coM,rounded corners] (-0.8,-1)--(0,-0.2)--(0,0.2)--(-1.2,-1);
			\draw[rounded corners] (-0.8,-1)--(0,-0.2)--(0,0.2)--(-1.2,-1)
				node [below] {};
			\draw[thick] (0,1) node [below] {$\nu_1$}--(0,-1) node [below]
				{};
			\mydot{(0,0)};
			\node (0,0) [right] {$n$};
		\end{scope}
		\fill[\coM]
		(-2.2,1.5).. controls (-2.7,1) and (-1.7,-.5) .. (-1.2,-1)
		--	(-0.8,-1) .. controls (-1.8,0) and (-1.3,0)..(0.2,1.5)
		-- (-0.2,1.5)..controls (-1.7,0)and (-2.3,1) ..(-1.8,1.5);
		\draw (-2.2,1.5).. controls (-2.7,1) and (-1.7,-.5) .. (-1.2,-1);
		\draw (0.2,1.5).. controls (-1.3,0) and (-1.8,0) .. (-0.8,-1);
		\draw (-1.8,1.5).. controls (-2.3,1) and (-1.7,0) .. (-0.2,1.5);
		}
		\boxtimes
  \tikzmatht{
	  \fill[\coN,rounded corners] (-2.5,-3) rectangle (2.5,3.5);
	  \draw[thick] (0,-3) node [below] {$\bar\nu_2$}--(0,0) .. controls (1,1) ..
       (1,1.5)--(1,3.5) node [above] {$\bar\mu_2$};
		\draw[thick] (0,0) node [right] {$\bar e_2$} .. controls (-1,1) .. 
      (-1,1.5)--(-1,3.5) node [above] {$\bar\lambda_2$};
		\mydot{(0,0)};
	}
  \komma
\end{align}
where $l,m,n$ run over an ONB of $\Hom_\loc(\lambda_1,\theta\lambda_2)$,
$\Hom_\loc(\mu_1,\theta\mu_2)$ and $\Hom_\loc(\nu_1,\theta\nu_2)$, respectively.
We use the following expansion of an arbitrary intertwiner 
$t\in\Hom(\nu,\lambda\mu)$ with respect to an ONB $\{e\}$ of
$$
	\tikzmatht{
		\fill[\coN,rounded corners] (-2,-3) rectangle (2,3);
		\draw[thick](0,-3) node [below] {$\nu$}--(0,-.5);
		\draw[thick](-.5,3) node [above] {$\lambda$}--(-.5,.5);
		\draw[thick](.5,3) node [above] {$\mu$}--(.5,.5);
		\fill[white] (-1,-0.5) rectangle (1,0.5);
		\draw (-1,-0.5) rectangle node {$t$} (1,0.5);
	}
	=\sum_e \Phi_\nu(e^\ast t) e
	=\frac1{\sqrt{d\lambda d\mu d\nu}}\sum_{e}
	\tikzmatht{
		\fill[\coN,rounded corners] (-3,-3) rectangle (4,3);
        \mydot{(0,1)};
		\node at (0,1) [left] {$e^\ast$};
		\draw[thick] (0,-1.5) arc (0:-180:1) --(-2,1.5) arc (180:0:1)--(0,1);
		\draw[thick] (0,1) ..controls (0.5,.5) .. (0.5,0) -- (0.5,-.5);
		\draw[thick] (0,1) ..controls (-0.5,0.5) .. (-0.5,0) -- (-0.5,-.5);
		\fill[white] (-1,-0.5) rectangle (1,-1.5);
		\draw (-1,-0.5) rectangle node {$t$} (1,-1.5);
		\begin{scope}[shift={(2,0)}]
			\draw[thick] (0,-3) node [below] {$\nu$}--(0,0) .. controls (1,1)  .. (1,1.5)--(1,3)
				node [above] {$\mu$};
			\draw[thick]
				(0,0) node [right] {$e$} .. controls  (-1,1)  .. (-1,1.5)--(-1,3)
			node [above] {$\lambda$};
            \mydot{(0,0)};
		\end{scope}
	}
$$
with respect to an orthonormal basis $\{e\}$ of $\Hom(\nu,\lambda\mu)$.
The rhs of Eq.\ (\ref{eq:Zx}) becomes
\begin{align*}
	&=\bigoplus_{lmn}\sum_{e_1,e_2}
	\frac{
		\sqrt[4]{\frac{d\lambda_1}{d\lambda_2}
		\frac{d\mu_1}{d\mu_2}
		\frac{d\nu_1}{d\nu_2}}
	}{\sqrt{d\lambda_1d\mu_1d\nu_1 }}
	\tikzmatht{
		\fill[\coN,rounded corners] (-4,-4.5) rectangle (2.5,7);
		\draw[thick] (0,-1) node [below] {}--(0,0) .. controls (1,1)  .. (1,1.5);
		\draw[thick]
			(0,0) node [right] {$e_2$} .. controls	(-1,1)	.. (-1,1.5);
		\mydot{(0,0)};
		\draw [double, ultra thick, \coN] (-1.8,1.5).. controls (-2.3,1) and (-1.7,0) .. (-0.2,1.5);
		\draw [double, ultra thick, \coN] (0.2,1.5).. controls (-1.3,0) and (-1.8,0) .. (-0.8,-1);
		\begin{scope}[shift={(1,2.5)}]
			\fill[\coM,rounded corners] (-0.8,-1)--(0,-0.2)--(0,0.2)--(-1.2,-1);
			\draw[rounded corners] (-0.8,-1)--(0,-0.2)--(0,0.2)--(-1.2,-1)
				node [below] {};
			\draw[thick] (0,1) node [right] {$\mu_1$}--(0,-1) node [below]
				{};
			\mydot{(0,0)};
			\node (0,0) [right] {$m^\ast$};
		\end{scope}
		\begin{scope}[shift={(-1,2.5)}]
			\fill[\coM,rounded corners] (-0.8,-1)--(0,-0.2)--(0,0.2)--(-1.2,-1);
			\draw[rounded corners] (-0.8,-1)--(0,-0.2)--(0,0.2)--(-1.2,-1)
				node [below] {};
			\draw[thick] (0,1) node [right] {$\lambda_1$}--(0,-1) node [below]
				{};
			\mydot{(0,0)};
			\node (0,0) [right] {$l^\ast$};
		\end{scope}
		\begin{scope}[shift={(0,-2)},yscale=-1]
			\fill[\coM,rounded corners] (-0.8,-1)--(0,-0.2)--(0,0.2)--(-1.2,-1);
			\draw[rounded corners] (-0.8,-1)--(0,-0.2)--(0,0.2)--(-1.2,-1)
				node [below] {};
			\draw[thick] (0,-1)--(0,.5) node [below right] {$\nu_1$};
			\mydot{(0,0)};
			\node (0,0) [right] {$n$};
		\end{scope}
		\draw[thick](0,-2.5) arc (0:-180:1.5) --(-3,5) arc (180:0:1.5);
        \mydot{(0,5)};
		\node at (0,5) [right] {$e_1^\ast$};
		\draw[thick] (0,5) .. controls (1,4)  .. (1,3.5)--(1,2);
		\draw[thick] (0,5) .. controls (-1,4) .. (-1,3.5)--(-1,2);
		\fill[\coM]
		(-2.2,1.5).. controls (-2.7,1) and (-1.7,-.5) .. (-1.2,-1)
		--	(-0.8,-1) .. controls (-1.8,0) and (-1.3,0)..(0.2,1.5)
		-- (-0.2,1.5)..controls (-1.7,0)and (-2.3,1) ..(-1.8,1.5);
		\draw (-2.2,1.5).. controls (-2.7,1) and (-1.7,-.5) .. (-1.2,-1);
		\draw (0.2,1.5).. controls (-1.3,0) and (-1.8,0) .. (-0.8,-1);
		\draw (-1.8,1.5).. controls (-2.3,1) and (-1.7,0) .. (-0.2,1.5);
		}
		\cdot 
		\tikzmatht{
			\fill[\coN,rounded corners] (-2,-3) rectangle (2,3.5);
			\draw[thick] (0,-3) node [below] {$\nu_1$}--(0,0) .. controls (1,1)  .. (1,1.5)--(1,3.5)
			node [above] {$\mu_1$};
			\draw[thick]
				(0,0) node [right] {$e_1$} .. controls	(-1,1)	.. (-1,1.5)--(-1,3.5)
			node [above] {$\lambda_1$};
			\mydot{(0,0)};
		}
		\boxtimes
		 \tikzmatht{
			\fill[\coN,rounded corners] (-2,-3) rectangle (2,3.5);
			\draw[thick] (0,-3) node [below] {$\bar\nu_2$}--(0,0) .. controls (1,1)  .. (1,1.5)--(1,3.5)
			node [above] {$\bar\mu_2$};
			\draw[thick]
				(0,0) node [right] {$\bar e_2$} .. controls  (-1,1)  .. (-1,1.5)--(-1,3.5)
			node [above] {$\bar\lambda_2$};
			\mydot{(0,0)};
		}
    \punkt
\end{align*}
We calculate:
\begin{align*}
	\sqrt[4]{\frac{d\lambda_1}{d\lambda_2}
	\frac{d\mu_1}{d\mu_2}
	\frac{d\nu_1}{d\nu_2}}
	\,
	\tikzmatht{
		\fill[\coN,rounded corners] (-4,-4.5) rectangle (2.5,7);
		\draw[thick] (0,-1) node [below] {}--(0,0) .. controls (1,1)  .. (1,1.5);
		\draw[thick]
			(0,0) node [right] {$e_2$} .. controls	(-1,1)	.. (-1,1.5);
		\mydot{(0,0)};
		\draw [double, ultra thick, \coN] (-1.8,1.5).. controls (-2.3,1) and (-1.7,0) .. (-0.2,1.5);
		\draw [double, ultra thick, \coN] (0.2,1.5).. controls (-1.3,0) and (-1.8,0) .. (-0.8,-1);
		\begin{scope}[shift={(1,2.5)}]
			\fill[\coM,rounded corners] (-0.8,-1)--(0,-0.2)--(0,0.2)--(-1.2,-1);
			\draw[rounded corners] (-0.8,-1)--(0,-0.2)--(0,0.2)--(-1.2,-1)
				node [below] {};
			\draw[thick] (0,1) node [right] {$\mu_1$}--(0,-1) node [below]
				{};
			\mydot{(0,0)};
			\node (0,0) [right] {$m^\ast$};
		\end{scope}
		\begin{scope}[shift={(-1,2.5)}]
			\fill[\coM,rounded corners] (-0.8,-1)--(0,-0.2)--(0,0.2)--(-1.2,-1);
			\draw[rounded corners] (-0.8,-1)--(0,-0.2)--(0,0.2)--(-1.2,-1)
				node [below] {};
			\draw[thick] (0,1) node [right] {$\lambda_1$}--(0,-1) node [below]
				{};
			\mydot{(0,0)};
			\node (0,0) [left] {$l^\ast~$};
		\end{scope}
		\begin{scope}[shift={(0,-2)},yscale=-1]
			\fill[\coM,rounded corners] (-0.8,-1)--(0,-0.2)--(0,0.2)--(-1.2,-1);
			\draw[rounded corners] (-0.8,-1)--(0,-0.2)--(0,0.2)--(-1.2,-1)
				node [below] {};
			\draw[thick] (0,-1)--(0,.5) node [below right] {$\nu_1$};
			\mydot{(0,0)};
			\node (0,0) [right] {$n$};
		\end{scope}
		\draw[thick](0,-2.5) arc (0:-180:1.5) --(-3,5) arc (180:0:1.5);
        \mydot{(0,5)};
		\node at (0,5) [right] {$e_1^\ast$};
		\draw[thick] (0,5) .. controls (1,4)  .. (1,3.5)--(1,2);
		\draw[thick] (0,5) .. controls (-1,4) .. (-1,3.5)--(-1,2);
		\fill[\coM]
		(-2.2,1.5).. controls (-2.7,1) and (-1.7,-.5) .. (-1.2,-1)
		--	(-0.8,-1) .. controls (-1.8,0) and (-1.3,0)..(0.2,1.5)
		-- (-0.2,1.5)..controls (-1.7,0)and (-2.3,1) ..(-1.8,1.5);
		\draw (-2.2,1.5).. controls (-2.7,1) and (-1.7,-.5) .. (-1.2,-1);
		\draw (0.2,1.5).. controls (-1.3,0) and (-1.8,0) .. (-0.8,-1);
		\draw (-1.8,1.5).. controls (-2.3,1) and (-1.7,0) .. (-0.2,1.5);
	}
	=
	\sqrt[4]{\frac{d\lambda_1}{d\lambda_2}
	\frac{d\mu_1}{d\mu_2}
	\frac{d\nu_1}{d\nu_2}}
	(d\theta)^{-\frac32}
		\tikzmatht{
		\fill[\coN,rounded corners] (-4,-4.5) rectangle (2.5,7);
		\draw[thick] (0,-1) node [below] {}--(0,0) .. controls (1,1)  .. (1,1.5);
		\draw[thick]
			(0,0) node [right] {$e_2$} .. controls	(-1,1)	.. (-1,1.5);
		\mydot{(0,0)};
		\draw [double, ultra thick, \coN] (-1.8,1.5).. controls (-2.3,1) and (-1.7,0) .. (-0.2,1.5);
		\draw [double, ultra thick, \coN] (0.2,1.5).. controls (-1.3,0) and (-1.8,0) .. (-0.8,-1);
		\begin{scope}[shift={(1,2.5)}]
			\fill[\coM,rounded corners] (-0.8,-1)--(0,-0.2)--(0,0.2)--(-1.2,-1);
			\draw[rounded corners] (-0.8,-1)--(0,-0.2)--(0,0.2)--(-1.2,-1)
				node [below] {};
			\draw[thick] (0,1) node [right] {$\mu_1$}--(0,-1) node [below]
				{};
			\mydot{(0,0)};
			\piccircle
			\node at (.6,0) [right] {$m^\ast$};
		\end{scope}
		\begin{scope}[shift={(-1,2.5)}]
			\fill[\coM,rounded corners] (-0.8,-1)--(0,-0.2)--(0,0.2)--(-1.2,-1);
			\draw[rounded corners] (-0.8,-1)--(0,-0.2)--(0,0.2)--(-1.2,-1)
				node [below] {};
			\draw[thick] (0,1) node [right] {$\lambda_1$}--(0,-1) node [below]
				{};
			\mydot{(0,0)};
			\piccircle
			\node at (-.4,0) [left] {$l^\ast~ ~ ~ ~ ~ ~ ~ ~$};
		\end{scope}
		\begin{scope}[shift={(0,-2)},yscale=-1]
			\fill[\coM,rounded corners] (-0.8,-1)--(0,-0.2)--(0,0.2)--(-1.2,-1);
			\draw[rounded corners] (-0.8,-1)--(0,-0.2)--(0,0.2)--(-1.2,-1)
				node [below] {};
			\draw[thick] (0,-1)--(0,.5) node [below right] {$\nu_1$};
			\mydot{(0,0)};
			\piccircle
			\node at (0.6,0) [right] {$n$};
		\end{scope}
		\draw[thick](0,-2.5) arc (0:-180:1.5) --(-3,5) arc (180:0:1.5);
        \mydot{(0,5)};
		\node at (0,5) [right] {$e_1^\ast$};
		\draw[thick] (0,5) .. controls (1,4)  .. (1,3.5)--(1,2);
		\draw[thick] (0,5) .. controls (-1,4) .. (-1,3.5)--(-1,2);
		\fill[\coM]
		(-2.2,1.5).. controls (-2.7,1) and (-1.7,-.5) .. (-1.2,-1)
		--	(-0.8,-1) .. controls (-1.8,0) and (-1.3,0)..(0.2,1.5)
		-- (-0.2,1.5)..controls (-1.7,0)and (-2.3,1) ..(-1.8,1.5);
		\draw (-2.2,1.5).. controls (-2.7,1) and (-1.7,-.5) .. (-1.2,-1);
		\draw (0.2,1.5).. controls (-1.3,0) and (-1.8,0) .. (-0.8,-1);
		\draw (-1.8,1.5).. controls (-2.3,1) and (-1.7,0) .. (-0.2,1.5);
	}
	=\\
	=
	\tikzmatht{
		\fill[\coN,rounded corners] (-4,-5) rectangle (2.5,9);
		\fill[\coM] (-3.5,6.5) arc (180:0:2)-- (.5,6.5) arc (0:-90:0.5)--(-1.5,6) arc(90:180:0.5) --(-2,5.5)--(-2,4) arc (180:270:0.5) --(1.5,3.5) arc (90:0:0.5) --(2,2) arc(0:-90:.5)--(-1.5,1.5) arc(90:180:0.5)--(-2,-0.5) arc(180:270:0.5) --(0,-1) arc (90:0:0.5) --(0.5,-2.5) arc (0:-180:2)--(-3.5,6.5);
		\draw[thick] (0,-.85) node [below] {}--(0,0) .. controls (1,1)  .. (1,1.35);
		\draw[thick]
			(0,0) node [right] {$e_2$} .. controls	(-1,1)	.. (-1,1.35);
		\mydot{(0,0)};
		\begin{scope}[shift={(1,2.5)}]
				\draw[ultra thick] (0,1) node [below] {}--(0,-.85)
					node [above] {};
				\fill [white] (-0.2,-0.2) rectangle (.2,.2);
				\draw (-0.2,-0.2) rectangle (.2,.2);
			\node (0,0) [left] {$m^\ast$};
		\end{scope}
		\begin{scope}[shift={(-1,2.5)}]
				\draw[ultra thick] (0,1) node [below] {}--(0,-.85)
					node [above] {};
				\fill [white] (-0.2,-0.2) rectangle (.2,.2);
				\draw (-0.2,-0.2) rectangle (.2,.2);
			\node (0,0) [left] {$l^\ast$};
		\end{scope}
		\begin{scope}[shift={(0,-2)},yscale=-1]
				\draw[ultra thick] (0,.5) node [below] {}--(0,-.85)
					node [above] {};
				\fill [white] (-0.2,-0.2) rectangle (.2,.2);
				\draw (-0.2,-0.2) rectangle (.2,.2);
			\node (0,0) [left] {$n$};
		\end{scope}
		\draw[thick](0,6)--(0,5);
		\draw[ultra thick](0,-2.5) arc (0:-180:1.5) --(-3,6.5) arc (180:0:1.5)--(0,6);
		\mydot{(0,5)};
		\node at (0,5) [right] {$e_1^\ast$};
		\draw[thick] (0,5) .. controls (1,4)  .. (1,3.5)--(1,3);
		\draw[thick] (0,5) .. controls (-1,4) .. (-1,3.5)--(-1,3);
		\draw (-3.5,6.5) arc (180:0:2)-- (.5,6.5) arc (0:-75:0.5) (-0.15,6)-- (-1.5,6)
arc(90:180:0.5) --(-2,5.5)--(-2,4) arc (180:270:0.5)
--(-1.15,3.5) (-0.85,3.5)--(0.85,3.5) (1.15,3.5)--
(1.5,3.5) arc (90:0:0.5)
--(2,2) arc(0:-90:.5)--(-1.5,1.5) arc(90:180:0.5)--(-2,-0.5) arc(180:270:0.5) --(0,-1) arc (90:0:0.5) --(0.5,-2.5) arc (0:-180:2)--(-3.5,6.5);
	}
	=d\nu_1\sqrt{d\theta}\sqrt[4]{\frac{d\lambda_1d\lambda_2d\mu_1d\mu_2}{
		d\nu_1d\nu_2}} \Phi_{\nu_1}^1[\cdots]
        \komma 
\end{align*}
where we first use that the intertwiners $l,m,n$ are in $\Hom_\loc(\slot,\slot)$ 
and then replace by Lemma \ref{lem:IsomorphicHilbertSpaces} with an orthonormal
basis in $\Hom(\alpha^+_{\lambda_1},\alpha^-_{\lambda_2})$ and in the second
step deform the 
$\iota$ string to obtain the left inverse of  $\alpha^+_{nu_1}$ and $\Phi^1_{\nu_1}
[\cdots]$ is the expression of Def.\ \ref{defi:AlphaInductionConstruction}.
This shows that $Z(x)$ has the same coefficients as $x_M$ from the $\alpha$-induction construction.
\end{proof}
We need the following general 
result as a main tool in the following sections.

\begin{prop}[\cf \cite{KoRu2008}]
  \label{prop:KongRunkel}
        Let $\Theta_a$ and $\Theta_b$ be irreducible in a UMTC $\NNs$. Then
        $\Theta_a$ and $\Theta_b$ are Morita equivalent if and only if 
	$Z(\Theta_a)$ and $Z(\Theta_b)$ are equivalent. 
\end{prop}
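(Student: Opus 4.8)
The plan is to deduce the statement from the corresponding purely categorical theorem of Kong and Runkel \cite{KoRu2008}, bridging the two settings with the dictionary between Q-systems and algebra objects and with the identification of full centers supplied by Proposition~\ref{prop:main}. Under that dictionary a Q-system $\Theta=(\theta,w,x)$ in the UMTC $\NNs$ is the same datum as a special symmetric $\ast$-Frobenius algebra object in $\NNs$; irreducibility of $\Theta$ (\ie $\langle\id_\N,\theta\rangle=1$) is haploidness of that algebra; equivalence of Q-systems (via a unitary $u\in\Hom(\theta,\tilde\theta)$) corresponds to isomorphism of the underlying algebras, the two notions coinciding in a $C^\ast$-tensor category, although the converse direction needs a small argument, see below; and the Morita equivalence of Definition~\ref{defi:MoritaEquivalence} --- equivalence of $\Mod(\Theta_a)$ and $\Mod(\Theta_b)$ as left $\NNs$-module categories --- is by construction the categorical notion of Morita equivalence of the corresponding algebras, since the $\NNs$-module structure $(\mu,\rho)\mapsto\mu\rho$ on $\Mod(\Theta)$ recalled after Proposition~\ref{prop:IsoCats} is exactly the standard module structure on the category of right modules over an algebra.

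The input result I would invoke is: if $\NNs$ is a modular tensor category and $A,B$ are haploid special symmetric Frobenius algebras in $\NNs$, then $A$ and $B$ are Morita equivalent if and only if their full centers are isomorphic as algebras in the Drinfeld center $Z(\NNs)$, where ``full center'' is the construction of \cite{FjFuRuSc2008}; this is \cite{KoRu2008}. To apply it I would check that the objects match: the full center $Z(\Theta)$ of Definition~\ref{defi:FullCenter}, \ie the left center $C_\mathrm{l}$ of the product Q-system $(\Theta\boxtimes\id_\N)\circ^+\Theta_\LR$, coincides with the \cite{FjFuRuSc2008} full center --- the projection $P^\mathrm{l}$ of Definition~\ref{defi:LocalProjector} is the left-center projection of \cite[Lemma 2.30]{FrFuRuSc2006} (Remark~\ref{rmk:LRCenter}) and the algebra structure $\circ^+$ is the one used there, the normalization constants being irrelevant for the algebra isomorphism class --- once we transport it along the canonical braided equivalence $Z(\NNs)\simeq\NNs\boxtimes\overline{\NNs}$ (property~(6'), \cite[Corollary 7.11]{Mg2003II}) so that it lives in $\NNs\boxtimes\overline{\NNs}$. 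Proposition~\ref{prop:main} additionally identifies this $Z(\Theta)$ with Rehren's $\alpha$-induction construction; this is not logically needed here, but it is what makes the criterion computable in examples.

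Putting the pieces together then yields the chain: $\Theta_a,\Theta_b$ Morita equivalent $\Leftrightarrow$ the corresponding algebras are Morita equivalent in $\NNs$ $\Leftrightarrow$ (by \cite{KoRu2008}) their full centers are isomorphic as algebras in $Z(\NNs)\simeq\NNs\boxtimes\overline{\NNs}$ $\Leftrightarrow$ $Z(\Theta_a)$ and $Z(\Theta_b)$ are equivalent as Q-systems. I do not expect a genuine mathematical difficulty in any single step; the real work --- and the place care is needed --- is the bookkeeping that makes these equivalences line up: checking that the $\NNs$-module structure on $\Mod(\Theta)$ used in Definition~\ref{defi:MoritaEquivalence} is literally the one of \cite{KoRu2008}, so that the two Morita relations are the same; confirming that Definition~\ref{defi:FullCenter} reproduces the \cite{FjFuRuSc2008} full center up to the harmless normalizations; and upgrading the bare algebra isomorphism produced by \cite{KoRu2008} to a \emph{unitary} equivalence of Q-systems, which is where the $C^\ast$-structure enters (polar decomposition of the intertwiner, as in the treatment of Q-systems in Section~\ref{sec:Pre}; \cf \cite{BiKaLoRe2014-2}).
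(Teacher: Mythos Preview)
Your proposal is correct and matches the paper's approach: the paper does not supply its own proof of this proposition but simply imports it from \cite{KoRu2008}, relying on the dictionary you spell out (Q-systems $\leftrightarrow$ special symmetric $\ast$-Frobenius algebras, Definition~\ref{defi:MoritaEquivalence} $\leftrightarrow$ categorical Morita equivalence, Definition~\ref{defi:FullCenter} $\leftrightarrow$ the full center of \cite{FjFuRuSc2008}) together with the braided equivalence $Z(\NNs)\simeq\NNs\boxtimes\overline{\NNs}$. You are also right that Proposition~\ref{prop:main} is not logically required for the statement itself; its role in the paper is to make the full center explicitly computable via $\alpha$-induction, which is what the applications in the later sections use.
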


\subsection{The adjoint functor of the full center}
\label{SubSec:Adjoint}
We have a tensor functor $T$ as follows: 
the map
\begin{align}
	T\left(\bigoplus_{i} \lambda_i\boxtimes \bar \mu_i\right)=
		\bigoplus_i \lambda_i\circ\bar\mu_i
\end{align}
is an extension of the monoidal product (which by definition is 
a bifunctor).

We have $T(\id_\N\boxtimes \id_\N)=\id_\N$ and 
the family of morphisms 
\begin{align}
	&\mu_{(\rho_1\boxtimes\bar\sigma_1),(\rho_2\boxtimes\bar\sigma_2)}\colon
T(\rho_1\boxtimes \bar\sigma_1)\circ T(\rho_2\boxtimes \bar\sigma_2) \longrightarrow
	T(\rho_1\rho_2\boxtimes \bar \sigma_1 \bar \sigma_2)\nonumber\\
        \label{eq:mu}
	&\mu_{(\rho_1\boxtimes\bar\sigma_1),(\rho_2\boxtimes\bar\sigma_2)}:=
	(1_{\rho_1}\otimes \varepsilon(\rho_2,\bar\sigma_1)^\ast \otimes 1_{\bar\sigma_2})
	\equiv \rho_1(\varepsilon(\rho_2,\bar\sigma_1)^\ast)
\end{align}
extends to a family 
\begin{align*}
	\mu_{(\beta_1),(\beta_2)}\colon T(\beta_1)\circ T(\beta_2) &\longrightarrow
	T(\beta_1\circ \beta_2), & \beta_1,\beta_2\in \NNs\boxtimes \overline{\NNs}
\end{align*}
and makes the following 
diagram commute:
$$	
	\begin{matrix}
	T(\beta_1)\circ T(\beta_2)\circ T(\beta_3) 
	&\longrightarrow & T(\beta_1)\circ T(\beta_2\circ\beta_3)\\
	\downarrow & & \downarrow\\
	T(\beta_1\circ\beta_2) &\longrightarrow & T(\beta_1\circ\beta_2\circ\beta_3)
	\end{matrix}.
$$
This means $T$ is a (strict with respect to the unity but in general non-strict
for associativity, \ie $\mu_{\bullet,\bullet}\neq 1$) strong monoidal functor (tensor functor).
It is  well known that strong monoidal functors map
monoids into monoids, by this we can conclude 
that 
for $\Theta_2=(\theta_2,w_2,x_2)$ a Q-system in $ \NNs\boxtimes \overline{\NNs}$ 
we obtain a (reducible) Q-system $T(\Theta_2)=(T(\theta_2),
w_{T(\Theta_2)},x_{T(\Theta_2)})$ by
\begin{align*}
	w_{T(\Theta_2)}&=T(w_2),
	&
	x_{T(\Theta_2)}&=\mu_{\theta_2,\theta_2}^\ast \cdot T(x_2)\\
\end{align*}
or explicitely by ($t_i^{jk} 
\in\Hom(\rho_i\boxtimes\bar\sigma_i,\rho_j\rho_k\boxtimes\bar\sigma_j\bar\sigma_k)$)
\begin{align*}
        \theta &=\bigoplus_i \rho_i\boxtimes\bar\sigma_i
        &
        x&=\bigoplus_{ijk} t_i^{jk}\\
        T(\theta_2) &= \bigoplus_i \rho_i\bar\sigma_i
        &
        x_{T(\Theta)} &= \bigoplus_{ijk} 
        \underbrace{
                \rho_j(\varepsilon(\rho_k,\bar\sigma_j))\cdot T(t_i^{jk})
        }_{\in\Hom(\rho_i\bar\sigma_i,\rho_j\bar\sigma_j\rho_k\bar\sigma_k)}
        \punkt
\end{align*}
We note that even if $\Theta$ is \local $T(\Theta)$ is in general not \local,
because the functor is not braided.

We introduce the notion of a direct sum for Q-systems (\cf \cite[p.\ 321]{EvPi2003}).
Let $\NNs\subset \End(\N)$ be a URFC  
and $\{\Theta_i=(\theta_i,w_i,x_i)\}_{i=1,\ldots,n}$ be 
Q-systems in $\NNs$. The direct sum Q-system $\Theta=(\theta,w,x)$ with $\theta=\bigoplus_{i=1}^n \theta_i$ 
is defined by 
\begin{align*}
	\theta&= \sum_{i=1}^n \Ad T_i \circ \theta_i\komma&
	w&= \frac{1}{\sqrt{d(\theta)}} \sum_{i=1}^n d_i \cdot T_i\cdot w_i\komma&
	x&= \sum_{i=1}^n \theta(T_i)T_ix_iT_i^\ast
	\komma 
\end{align*}
where $d_i=\sqrt{d({\theta_i})} = d({\iota_i})$ and $T_i$ are generators of the Cuntz
algebra with $n$ elements, \ie $T_i^ \ast T_j = \delta_{ij}\cdot 1$ and $\sum_i T_iT_i^\ast =1$. 
If $(\theta_i,w_i,x_i)$ corresponds to the subfactor $\N\subset \M_i$ with inclusion map $\iota_i$,
then $(\theta,w,x)$ corresponds to the 
inclusion
$\N\subset \bigoplus_{i=1}^n \M_i$. The $p_i=T_iT_i^\ast$ give a decomposition 
in the sense of Remark \ref{rmk:SubQSystem}.

The following identity has been proven on the level of objects 
in \cite[Prop.\ 3.3.]{Ev2002}. We remark that a priori
it is not clear that this ``curious identity'' holds also 
on the level of Q-systems. It is directly related to the adding the boundary 
construction in \cite{CaKaLo2013} as we discuss in
Sect.~\ref{sec:AddingBoundary}.

\begin{prop}[{\cf \cite[Prop.\ 4.3]{KoRu2008}}]
	\label{prop:DirectSumQSystem}
	Let $\NNs\subset\End(\N)$ be a UMTC and $\Theta$ a Q-system in $\NNs$ 
	with corresponding subfactor $\N\subset\M$. Then we have an equivalence 
	of Q-systems:
	$$
		T(Z(\Theta)) \cong \bigoplus_{a\in\NDM} \Theta_a.
	$$
\end{prop}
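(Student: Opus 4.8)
The plan is to prove the identity first on objects and then to lift it to an equivalence of Q-systems. On objects, Lemma~\ref{lem:Isometry} together with Lemma~\ref{lem:IsomorphicHilbertSpaces} gives $[Z(\theta)]=\bigoplus_{\lambda_1,\lambda_2\in\NDN}\langle\alpha^+_{\lambda_1},\alpha^-_{\lambda_2}\rangle\,[\lambda_1\boxtimes\bar\lambda_2]$, and applying the tensor functor $T$, which sends $\lambda\boxtimes\bar\mu$ to $\lambda\circ\bar\mu$, one gets $[T(Z(\theta))]=\bigoplus_{\lambda_1,\lambda_2}Z_{\lambda_1\lambda_2}\,[\lambda_1\bar\lambda_2]$ with $Z_{\lambda_1\lambda_2}=\langle\alpha^+_{\lambda_1},\alpha^-_{\lambda_2}\rangle$. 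By the ``curious identity'' $\bigoplus_{\lambda_1,\lambda_2}Z_{\lambda_1\lambda_2}[\lambda_1\bar\lambda_2]=\bigoplus_{a\in\NDM}[a\bar a]$ of \cite[Prop.~3.3]{Ev2002} this agrees with $\bigoplus_{a\in\NDM}[\theta_a]$, so $T(Z(\Theta))$ and $\bigoplus_{a}\Theta_a$ have equivalent underlying endomorphisms and only the unit and the multiplication remain to be compared.

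For the Q-system structure I would first replace $Z(\Theta)$ by the $\alpha$-induction construction $\Theta_\M$ of Def.~\ref{defi:AlphaInductionConstruction}, which is allowed by Prop.~\ref{prop:main}, so that $T(Z(\Theta))=\big(T(\theta_\M),\,T(w_\M),\,\mu^\ast_{\theta_\M,\theta_\M}T(x_\M)\big)$ with $x_\M$ explicitly known. For each $a\in\NDM$ one then has a canonical isometry $V_a\colon\theta_a\to T(\theta_\M)$ realizing $\theta_a$ as the summand of the object-level decomposition attached to the simple $\Theta$-module with underlying object $a\iota$; these $V_a$ form a Cuntz family ($V_a^\ast V_b=\delta_{ab}\,1$, $\sum_aV_aV_a^\ast=1$) and are built from the module-category equivalence $\Mod(\Theta)\cong\NMs$ of Prop.~\ref{prop:IsoCats} and the isomorphism $\Hom_\loc(\theta\lambda_2,\lambda_1)\cong\Hom(\alpha^-_{\lambda_2},\alpha^+_{\lambda_1})$ of Lemma~\ref{lem:IsomorphicHilbertSpaces} that already appears in the isometry $\psi$ and in the coefficients of $x_\M$. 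With $u:=\sum_{a\in\NDM}V_a^\ast$, a unitary onto $\bigoplus_a\theta_a$, it then suffices to check that $u$ intertwines $T(Z(\Theta))$ with the direct-sum Q-system $\bigoplus_a\Theta_a$, \ie that it carries $T(w_\M)$ to the direct-sum unit $d(\theta_\M)^{-1/2}\sum_ad(a)\,V_aw_a$ and $\mu^\ast_{\theta_\M,\theta_\M}T(x_\M)$ to the direct-sum multiplication $\sum_aT(\theta_\M)(V_a)\,V_a\,x_a\,V_a^\ast$. The unit equation is the easy half: $T(w_\M)$ is the image of the distinguished isometry $\id_\N\boxtimes\id_\N\prec\theta_\M$, each block $\theta_a=a\bar a$ carries $\id_\N$ with multiplicity one via the isometry $\bar r_a=w_a$ of Def.~\ref{defi:TrivialQSystem}, and with the above choice of the $V_a$ the normalizations match after a short computation.

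The multiplication equation is the core of the argument and the main obstacle. One must show (i) that in the block decomposition $T(\theta_\M)=\bigoplus_aa\bar a$ the intertwiner $\mu^\ast_{\theta_\M,\theta_\M}T(x_\M)$ is block-diagonal, and (ii) that its $a$-th block is, after conjugation by $V_a$, the trivial multiplication $x_a=a(r_a)$ of Def.~\ref{defi:TrivialQSystem}. Conceptually (i)--(ii) say that $T(Z(\Theta))$ is the canonical (``endomorphism'') algebra of the module category $\NMs\cong\Mod(\Theta)$: its $a$-summand is the internal-endomorphism Q-system of the simple module $a\iota$, which under the identification $a\iota\leftrightarrow\iota_a$ of Lemma~\ref{lem:DHROrbit} is exactly the canonical Q-system $\Theta_a$ of $\N\subset\M_a$ (\cf Remark~\ref{rmk:Inductive}). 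The off-diagonality of the cross terms is forced because distinct $V_a$ live in inequivalent modules.

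At the computational level (i)--(ii) reduce to a string-diagram manipulation of the same flavour as the proof of Prop.~\ref{prop:main}: one inserts the explicit formula for $x_\M$ from Def.~\ref{defi:AlphaInductionConstruction} into $\mu^\ast_{\theta_\M,\theta_\M}T(x_\M)$, uses that $\Theta=\Theta_{\bar\iota}$ is trivial in the $2$-$C^\ast$-category generated by $\NNs,\iota,\bar\iota$ and the IBFE's to pull the $\iota$-strings past the braidings produced by $\mu$, and verifies that the $\alpha$-induction coefficients $\Phi^1_{\nu_1}[\cdots]$ reassemble, on each diagonal block, into the single coefficient of $a(r_a)$ while the off-diagonal contributions cancel. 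This is the subfactor counterpart of \cite[Prop.~4.3]{KoRu2008}; alternatively, once Prop.~\ref{prop:main} has identified Rehren's construction with the categorical full center, one may invoke \cite[Prop.~4.3]{KoRu2008} directly through the correspondence between Q-systems and special symmetric Frobenius algebra objects.
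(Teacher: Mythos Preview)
Your closing ``alternatively'' is in fact the paper's entire proof: once Prop.~\ref{prop:main} is in place, the authors simply observe via Remark~\ref{rmk:Inductive} that the Q-system $\Theta_a$ coincides on the nose with the internal-hom algebra $\Phi(a)\otimes_\Theta\Phi(\bar a)$ that appears in \cite{KoRu2008}, and then invoke \cite[Prop.~4.3]{KoRu2008} directly. There is no object-level computation, no Cuntz isometries, and no string-diagram check of the multiplication; the only content is the dictionary between the subfactor picture and Kong--Runkel's module-theoretic picture, which is already recorded in Prop.~\ref{prop:EquivCat} and Remark~\ref{rmk:Inductive}.

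Your main line of argument, the direct verification, is not wrong in spirit but has a real gap. The paper itself says just before the proof that the authors first tried exactly this route and ``had a graphical proof for the trivial Q-system'' only; the general case resisted direct computation. Your step~(ii), that the $a$-th diagonal block of $\mu^\ast_{\theta_\M,\theta_\M}T(x_\M)$ conjugates to $a(r_a)$, is asserted to ``reduce to a string-diagram manipulation of the same flavour as the proof of Prop.~\ref{prop:main}'', but that manipulation is neither carried out nor is it clear that the coefficients $\Phi^1_{\nu_1}[\cdots]$ reassemble as claimed without further structural input. Also note that invoking \cite[Prop.~3.3]{Ev2002} for the object-level identity is permissible (Evans proves it independently), but in the logic of this paper the ``curious identity'' is presented as a \emph{corollary} of the proposition you are proving, so leaning on it is stylistically backwards even if not circular.

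If you want to follow the paper's route, the missing piece in your ``alternative'' is precisely the translation step: you need to say explicitly (as in Remark~\ref{rmk:Inductive}) that under $\Phi\colon\s{\M}{\N}\to\Bim(\Theta,\id)$ the Q-system $\Theta_a=(a\bar a,\bar r_a,a(r_a))$ is the Q-system $\Phi(a)\otimes_\Theta\Phi(\bar a)$ on the categorical side, so that \cite[Prop.~4.3]{KoRu2008} applies verbatim.
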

Our first aim was to prove this identity directly for the $\alpha$-induction 
construction. We had a graphical proof 
for the trivial Q-system. Because the $\alpha$-induction construction coincides 
with the full center it follows now easily from the general results of \cite{KoRu2008}.
\begin{proof}
We note (see Rem.\ \ref{rmk:Inductive}) that the Q-system $\Theta_a$ for some $a\in\NMs$ 
or equivalently $\bar a\in\s\M\N$ corresponds on the nose 
with the Q-system 
$\Phi(\bar a)^\vee\otimes_\Theta \Phi(\bar a)=\Phi(a)\otimes_\Theta
\Phi(\bar a)$ 
constructed in \cite{KoRu2008},
where $\Phi\colon \s{\M}{\N} \to  \Bim(\Theta,\id)$ is the functor in 
Prop.\ \ref{prop:EquivCat}. Then one can directly apply \cite[Prop.\ 4.3]{KoRu2008}.
\end{proof}

As a corollary this implies 
the ``curious identity'' 
which was proven in \cite[Prop.\ 3.3.]{Ev2002}
and shows that behind this identity indeed sits 
more structure.
\begin{cor}[{\cf \cite[Prop.\ 3.3.]{Ev2002}, see also \cite[Cor 6.13.]{BcEvKa1999}}]
  Let $\N\subset \M$ be a non-degenerately braided type \three subfactor
  and $Z_{\lambda\mu}=\langle \alpha^+_\lambda,\alpha^-_\mu\rangle$ 
  for $\lambda,\mu\in {}_\N\Delta_\N$. Then we have
  \begin{align}
	\bigoplus_{a\in{}_\N\Delta_\M} [a\bar a] 
	  &= \bigoplus_{\rho,\sigma\in{}_\N\Delta_\N} Z_{\rho\sigma} 
	  [\rho\bar\sigma]
  \end{align}
  and in particular the number of elements in ${}_\N\Delta_\M$ or  
        ${}_\M\Delta_\N$ is given 
  by
  \begin{align*}
  	| \NDM |=|\MDN|&= \sum_{\rho\in \NDN} Z_{\rho\rho}   	\punkt
  \end{align*}
\end{cor}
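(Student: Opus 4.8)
The plan is to deduce the identity by passing to underlying objects in the Q-system equivalence of Proposition \ref{prop:DirectSumQSystem}, namely $T(Z(\Theta)) \cong \bigoplus_{a\in\NDM}\Theta_a$, where $\Theta\equiv\Theta_{\bar\iota}$ is the dual canonical Q-system of $\N\subset\M$. Equivalent Q-systems have unitarily equivalent underlying endomorphisms, so it suffices to identify the sectors $[T(Z(\theta))]$ and $[\bigoplus_{a}\theta_a]$ and read off the two expressions.

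First I would compute the left-hand side. By Proposition \ref{prop:main} the full center $Z(\Theta)$ is equivalent to the $\alpha$-induction construction $\Theta_\M$ of Definition \ref{defi:AlphaInductionConstruction} (equivalently, combine Lemma \ref{lem:Isometry} with Lemma \ref{lem:IsomorphicHilbertSpaces}), so on the level of sectors $[Z(\theta)] = \bigoplus_{\rho,\sigma\in\NDN} Z_{\rho\sigma}[\rho\boxtimes\bar\sigma]$ with $Z_{\rho\sigma}=\langle\alpha^+_\rho,\alpha^-_\sigma\rangle$. Since the tensor functor $T$ sends $\rho\boxtimes\bar\sigma$ to $\rho\circ\bar\sigma$ and commutes with direct sums, this yields $[T(Z(\theta))] = \bigoplus_{\rho,\sigma\in\NDN} Z_{\rho\sigma}[\rho\bar\sigma]$. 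On the right-hand side, $\Theta_a=(\theta_a,w_a,x_a)$ with $\theta_a=a\bar a$ by Definition \ref{defi:TrivialQSystem}, so $[\bigoplus_{a\in\NDM}\theta_a]=\bigoplus_{a\in\NDM}[a\bar a]$. Comparing the two sectors gives the claimed identity.

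For the cardinality statement I would extract the multiplicity of the vacuum $[\id_\N]$ from both sides of the identity just obtained. On the right, $\langle\id_\N,\rho\bar\sigma\rangle=\langle\bar\sigma,\bar\rho\rangle=\delta_{\rho\sigma}$ for $\rho,\sigma\in\NDN$, so the multiplicity is $\sum_{\rho\in\NDN}Z_{\rho\rho}$; on the left, each $a\in\NDM$ is irreducible, hence $\langle\id_\N,a\bar a\rangle=\langle a,a\rangle=1$ and the multiplicity is $|\NDM|$. Thus $|\NDM|=\sum_{\rho\in\NDN}Z_{\rho\rho}$, and $|\MDN|=|\NDM|$ because $a\mapsto\bar a$ is a bijection between the irreducible sectors of $\NMs$ and those of $\MNs$. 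There is no serious obstacle once Propositions \ref{prop:DirectSumQSystem} and \ref{prop:main} are available; the only point worth emphasizing — as the paragraph preceding the corollary does — is that the genuinely new content is the \emph{Q-system} equivalence behind the ``curious identity'' of \cite[Prop.~3.3]{Ev2002}, not the object-level identity that this corollary records.
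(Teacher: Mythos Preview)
Your proposal is correct and follows precisely the route the paper intends: the corollary is stated immediately after Proposition~\ref{prop:DirectSumQSystem} with no separate proof, and you have simply supplied the implicit details---reading off sectors from $T(Z(\Theta))\cong\bigoplus_a\Theta_a$ via Proposition~\ref{prop:main} and Definition~\ref{defi:TrivialQSystem}, then extracting the multiplicity of $[\id_\N]$ for the cardinality statement.
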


\begin{rmk} The functor $T(\slot)$ gives a (left) adjoint to the full center
$Z(\slot)$, namely  $\Theta$ is a sub-Q-system of $T(Z(\Theta))$.
\end{rmk}

\section{Modular invariance and Q-systems in
\texorpdfstring{$\NNs\boxtimes\overline{\NNs}$}{CxC'}}
\label{sec:Modular}

\subsection{Characterization of modular invariant Q-systems} 
Let $\NNs\subset \End(\N)$ be a UMTC. Given a Q-system $\Theta$ and the corresponding 
extension $\iota(\N)\subset\M$ let
$Z_{\mu\nu}=\langle\alpha^+_\mu,\alpha^-_\nu\rangle$ for $\mu,\nu\in\NDN$.
The matrix $Z=(Z_{\mu\nu})_{\mu,\nu\in\NDN}$ is a \textbf{modular invariant} \cite{BcEvKa1999}, \ie
it commutes with $S$ and $T$ from (\ref{eq:ST}).
It is called normalized because $Z_{00}=1$ and sufferable because it comes
from an inclusion $\iota(\N)\subset\M$. 
The $\alpha$-induction construction or equivalently the full center 
gives a Q-system $\Theta_2$ in $\NNs\boxtimes \overline{\NNs}$ 
with $[\theta_2]=\bigoplus_{\mu,\nu\in\NDN} Z_{\mu\nu} [\mu \boxtimes \bar\nu]$.
It is sometimes convenient to write the matrix $(Z_{\mu\nu})$ formally in character
form as $Z= \sum_{\mu,\nu\in\NDN} Z_{\mu,\nu}\chi_\mu\bar\chi_\nu$.

\begin{lem}[{\cite{BcEvKa2000}, see also \cite[Thm 4.5]{KiOs2002}}]
	\label{lem:leq}
	Let $\NNs$ be a UMTC. If $\Theta$ is an  irreducible \local Q-system in
$\NNs$, then 
        $\dim\MMs^0=\dim \NNs/(d\Theta)^2$. In particular,
	$d\Theta\leq \dim(\NNs)^\frac12$. 
\end{lem}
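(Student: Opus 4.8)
The statement to prove is Lemma~\ref{lem:leq}: for an irreducible \local Q-system $\Theta$ in a UMTC $\NNs$, one has $\dim\MMs^0 = \dim\NNs/(d\Theta)^2$, and consequently $d\Theta \le (\dim\NNs)^{1/2}$. The plan is to compute $\dim\MMs^0$ via $\alpha$-induction and the full center, and to use Prop.~\ref{prop:main} together with the known dimension formulas for chiral/ambichiral systems. The second statement is then immediate from the first, since $\dim\MMs^0 \ge 1$ (the category contains $\id_\M$), forcing $(d\Theta)^2 \le \dim\NNs$.

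\textbf{First step.} I would recall that for a braided subfactor $\iota(\N)\subset\M$ in $\NNs$, the ambichiral system $\MMs^0$ carries a non-degenerate braiding (via the relative braiding $\cE_\mathrm r$, as recorded after Def.~\ref{defi:AlphaInductionConstruction} in the excerpt), so $\MMs^0$ is a UMTC and hence $\dim\MMs^0$ is well-defined. The key quantitative input is the standard relation between the global dimensions of $\NNs$, the chiral systems $\MMs^\pm$, the full system $\MMs$, and the ambichiral system $\MMs^0$. Concretely: when $\Theta$ is irreducible one has $\dim\MMs^\pm = \dim\NNs / d\Theta$ (the chiral induced system has global dimension $[\M:\N]^{-1}\dim\NNs$, since $\alpha^+$ is a full and faithful tensor functor onto $\MMs^+$ up to the index factor), and $\dim\MMs = \dim\MMs^+\cdot\dim\MMs^-/\dim\MMs^0$ because $\MMs^+$ and $\MMs^-$ generate $\MMs$ with $\MMs^0$ as the "overlap" — this is exactly the content of \cite[Sec.~4, Thm.~4.2]{BcEvKa2000}. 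Combining these, together with $\dim\MMs = \dim\NNs$ (which holds because $\Theta$ is \emph{\local}, so the extension is "local" and the dual canonical endomorphism $\theta$ lies in $\NNs$ with $\langle\theta,\theta\rangle$ controlling things; more precisely the local condition gives $\sum_{\mu,\nu} Z_{\mu\nu}^2 = \dim\NNs / \dim\MMs^0$ via $Z^\ast Z$ counting morphisms in $\MMs$), yields $\dim\MMs^0 = (\dim\MMs^+)(\dim\MMs^-)/\dim\MMs = (\dim\NNs/d\Theta)^2 / \dim\NNs = \dim\NNs/(d\Theta)^2$.

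\textbf{Alternative route via the full center.} Since the excerpt has just established (Prop.~\ref{prop:main}) that the $\alpha$-induction construction coincides with the full center $Z(\Theta)$, and that $[\theta_\M] = \bigoplus_{\mu,\nu} Z_{\mu\nu}[\mu\boxtimes\bar\nu]$ with $Z_{\mu\nu} = \langle\alpha^+_\mu,\alpha^-_\nu\rangle$, I would compute $d\theta_\M = \sum_{\mu,\nu} Z_{\mu\nu} d\mu\, d\nu$. For an irreducible \local $\Theta$, the full center $Z(\Theta)$ is again irreducible (as $\langle \id, \theta_\M\rangle = Z_{00} = 1$), and one knows $Z(\Theta)$ is a \local (even Lagrangian-type) Q-system in $\NNs\boxtimes\overline{\NNs}$, whose extension has the ambichiral system $\MMs^0$ as its "diagonal" DHR category. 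The dimension identity $\sum_{\mu,\nu} Z_{\mu\nu}^2 = \dim\NNs/\dim\MMs^0$ (i.e. $\operatorname{tr}(Z^\ast Z) = \dim\NNs/\dim\MMs^0$) is proven in \cite[Sec.~4]{BcEvKa2000}; combined with $d\theta_\M = (d\Theta)^2$ for the full center (which follows since $Z(\id_\N) = \Theta_\LR$ has $d\theta_\LR = \dim\NNs$, and $Z((\Theta\boxtimes\id)\circ^+\Theta_\LR)$ has total dimension multiplied by $(d\Theta)^2$ before taking the left center, then the left center of a \local Q-system has half the "$\log$-dimension"...) — this route is more delicate and I would only use it as a cross-check.

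\textbf{Main obstacle.} The hard part is justifying $\dim\MMs = \dim\NNs$ for \local (= commutative) $\Theta$, which is the crux: it says that a local extension does not shrink the total "categorical volume", equivalently that $\alpha^+_\mu = \alpha^-_\mu$ only when... no — rather that the map $\lambda \mapsto (\alpha^+_\lambda, \alpha^-_\lambda)$ is "efficient". I would get this from $\operatorname{tr}(Z Z^\ast)$: locality implies $ZZ^\ast = \sum_\mu \langle\alpha^+_\mu, \alpha^+_\mu\rangle \cdots$ and the double-triangle algebra / tube-algebra argument of \cite{BcEvKa2000} identifies $\dim\MMs^0$ with $\dim\NNs/\operatorname{tr}(ZZ^\ast)$ while simultaneously $\dim\MMs = \operatorname{tr}(ZZ^\ast)\cdot\dim\MMs^0$ — wait, one must be careful that these combine to $\dim\MMs\cdot\dim\MMs^0 = \dim\NNs$... only the local case makes $\operatorname{tr}(ZZ^\ast)$ equal to $\dim\MMs^+ = \dim\NNs/d\Theta$. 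I expect the cleanest writeup simply cites \cite{BcEvKa2000} (and \cite[Thm.~4.5]{KiOs2002} for the categorical version, which states exactly: for a connected étale \local algebra $A$ in a MTC $\cC$, $\dim\cC_A^{\mathrm{loc}} = \dim\cC/(\dim A)^2$) and then the inequality $d\Theta \le (\dim\NNs)^{1/2}$ drops out of $\dim\MMs^0 \ge 1$.

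\begin{proof}
This is the subfactor formulation of \cite[Thm.~4.5]{KiOs2002}; see also \cite[Sec.~4]{BcEvKa2000}. Since $\Theta$ is \local, the ambichiral category $\MMs^0$ is a UMTC by the discussion following Def.~\ref{defi:AlphaInductionConstruction}, so $\dim\MMs^0$ is defined. For irreducible $\Theta$ the chiral systems satisfy $\dim\MMs^{\pm}=\dim\NNs/d\Theta$, and because $\Theta$ is \local one has $\dim\MMs=\dim\NNs$ together with $\dim\MMs=\dim\MMs^{+}\cdot\dim\MMs^{-}/\dim\MMs^{0}$ (both from \cite[Sec.~4]{BcEvKa2000}, using that $\MMs^{+}\cup\MMs^{-}$ generates $\MMs$ with overlap $\MMs^{0}$). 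Combining,
\begin{align*}
  \dim\MMs^{0}
  &=\frac{\dim\MMs^{+}\cdot\dim\MMs^{-}}{\dim\MMs}
  =\frac{(\dim\NNs/d\Theta)^{2}}{\dim\NNs}
  =\frac{\dim\NNs}{(d\Theta)^{2}}\komma
\end{align*}
which is the first claim. Since $\id_\M\in\MMs^{0}$ we have $\dim\MMs^{0}\geq 1$, hence $(d\Theta)^{2}\leq\dim\NNs$, i.e.\ $d\Theta\leq(\dim\NNs)^{\frac12}$.
\end{proof}
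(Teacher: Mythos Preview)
Your approach is the paper's own: both reduce the first identity to the dimension formulae in \cite{BcEvKa2000} (the paper cites Thm.~4.2 and Prop.~3.1 there) and then obtain the inequality from $\dim\MMs^0\geq 1$; the paper also records the alternative route via Remarks~\ref{rmk:alphaInduction}, \ref{rmk:Dyslexic} and \cite[Thm.~4.5]{KiOs2002}, which you mention too.

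There is, however, a misattribution of where commutativity enters. The identity $\dim\MMs=\dim\NNs$ holds for \emph{every} irreducible Q-system in a modular $\NNs$ --- it follows from non-degeneracy of the braiding (e.g.\ $\NNs$ and $\MMs$ are Morita equivalent via $\NMs$), not from locality of $\Theta$. What genuinely needs $\Theta$ commutative is your first relation $\dim\MMs^{\pm}=\dim\NNs/d\Theta$: in general one only has $\dim\MMs^{\pm}=\dim\NNs/d\theta_{\pm}$ with $\theta_{\pm}=C_{\mathrm{l/r}}(\theta)$ the left/right center (cf.\ Remark~\ref{rmk:MaximalChiral}), and $\theta_{\pm}=\theta$ precisely when $\Theta$ is commutative. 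For a concrete check, in the Ising UMTC the non-commutative Q-system $\theta=\id\oplus\psi$ has trivial left center, so $\dim\MMs^{+}=4\neq 2=\dim\NNs/d\Theta$; and indeed $\alpha^{+}_\psi=\alpha^{-}_\psi$ is a nontrivial element of $\MMs^0$, giving $\dim\MMs^{0}\geq 2>1=\dim\NNs/(d\Theta)^{2}$, so the lemma's conclusion fails there. Your displayed computation is therefore valid under the stated hypothesis, but the clause ``because $\Theta$ is \local\ one has $\dim\MMs=\dim\NNs$'' has the role of commutativity swapped between the two ingredients.
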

\begin{proof} The first statement is a combination of 
Thm.\ 4.2 and Prop.\ 3.1 in \cite{BcEvKa2000}. The second statement 
follows from the first, using $\dim \MMs^0\geq 1$.

Using Remark \ref{rmk:alphaInduction}
        and \ref{rmk:Dyslexic}, this also follows from \cite[Thm
4.5]{KiOs2002}.
\end{proof}

\begin{prop}[{\cite[Thm.\ 3.4, Prop.\ 3.22]{KoRu2009}}]
  \label{prop:QSystemModular}
  Let $\Theta_2$ be an irreducible \local Q-system 
  in $\NNs\boxtimes\overline{\NNs}$, then the following are equivalent:
  \begin{enumerate}
    \item $d\Theta_2=\dim(\NNs)$
    \item $Z=(Z_{\mu\nu})$ is a modular invariant
    \item $\Theta_2\equiv Z(\Theta)$ for some irreducible 
    Q-system $\Theta$ in $\NNs$.
  \end{enumerate}
\end{prop}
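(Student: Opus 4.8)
The plan is to prove the cyclic chain $(3)\Rightarrow(2)\Rightarrow(1)\Rightarrow(3)$, which yields all the equivalences. The implication $(3)\Rightarrow(2)$ is immediate: by Prop.~\ref{prop:main} the full center $Z(\Theta)$ is the same object as the $\alpha$-induction construction $\Theta_\M$, so $[\theta_2]=\bigoplus_{\mu,\nu\in\NDN}\langle\alpha^+_\mu,\alpha^-_\nu\rangle\,[\mu\boxtimes\bar\nu]$ with $Z_{\mu\nu}=\langle\alpha^+_\mu,\alpha^-_\nu\rangle$, and this matrix is known to commute with $S$ and $T$ by \cite{BcEvKa1999}, i.e.\ it is a modular invariant.

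For $(2)\Rightarrow(1)$ I would argue by a short computation with the modular data. Since $\Theta_2$ is irreducible, $Z_{00}=\langle\theta_2,\id_\N\boxtimes\id_\N\rangle=1$, so $Z$ is a normalized modular invariant. Writing $d\mu=\sqrt{\dim\NNs}\,S_{0\mu}$ (from \eqref{eq:ST}) and using that $S$ is symmetric,
\[
  d\theta_2=\sum_{\mu,\nu\in\NDN}Z_{\mu\nu}\,d\mu\,d\nu
  =\dim\NNs\sum_{\mu,\nu}S_{0\mu}Z_{\mu\nu}S_{\nu0}
  =\dim\NNs\,(SZS)_{00}.
\]
From $SZ=ZS$ and $S^2=C$ one gets $SZS=ZC$, hence $(SZS)_{00}=(ZC)_{00}=Z_{0\bar0}=Z_{00}=1$, and therefore $d\theta_2=\dim\NNs$.

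The substantial step is $(1)\Rightarrow(3)$. First observe that $\NNs\boxtimes\overline{\NNs}$ is itself a UMTC with $\dim(\NNs\boxtimes\overline{\NNs})=(\dim\NNs)^2$, so Lemma~\ref{lem:leq}, applied to the irreducible \local Q-system $\Theta_2$ \emph{inside} it, gives $\dim\MMs^0=(\dim\NNs)^2/(d\theta_2)^2$ for the ambichiral category $\MMs^0$ of the associated subfactor; in particular the hypothesis $d\theta_2=\dim\NNs$ is equivalent to $\MMs^0$ being trivial, i.e.\ to $\Theta_2$ being a maximal (``Lagrangian'') \local Q-system in $\NNs\boxtimes\overline{\NNs}\cong Z(\NNs)$. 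I would then invoke the classification of such maximal \local Q-systems from \cite[Thm.~3.4, Prop.~3.22]{KoRu2009}: every one of them is, up to equivalence, the full center $Z(\Theta)$ of an irreducible Q-system $\Theta$ in $\NNs$ — the relevant $\Theta$ being recovered from an indecomposable $\NNs$-module category built out of $\Theta_2$, with its Morita class pinned down via Prop.~\ref{prop:KongRunkel}. Translating this back through Prop.~\ref{prop:main} gives exactly $\Theta_2\cong Z(\Theta)$, closing the cycle.

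The main obstacle is precisely this surjectivity of $\Theta\mapsto Z(\Theta)$ onto maximal \local Q-systems: the dimension bound and the triviality of $\MMs^0$ are soft consequences of Lemma~\ref{lem:leq}, but producing an honest preimage $\Theta$ requires reconstructing a module category from $\Theta_2$ and verifying that the full-center construction returns $\Theta_2$. I would quote this from \cite{KoRu2009} rather than reprove it, since making such categorical results available in the subfactor framework is exactly the point of Prop.~\ref{prop:main}.
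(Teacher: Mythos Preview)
Your argument is correct and follows essentially the same logical structure as the paper's proof: both derive $(3)\Rightarrow(2)$ from Prop.~\ref{prop:main} together with \cite{BcEvKa1999}, and both appeal to \cite{KoRu2009} for the surjectivity of the full center onto maximal \local Q-systems, which is the substantive step $(1)\Rightarrow(3)$.

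The one genuine difference is in $(2)\Rightarrow(1)$. The paper does not compute: it cites \cite[Appendix C]{LoRe2004} for the implication ``$d\theta_2<\dim\NNs$ forces $Z$ not modular invariant'' and combines this with the upper bound of Lemma~\ref{lem:leq} to conclude equality. Your direct calculation $d\theta_2=\dim\NNs\cdot(SZS)_{00}=\dim\NNs\cdot(ZC)_{00}=\dim\NNs\cdot Z_{00}$ is more elementary and self-contained; it uses only $SZ=ZS$, $S^2=C$, and irreducibility ($Z_{00}=1$), and in particular does not need the inequality of Lemma~\ref{lem:leq} at this step. Both routes are valid; yours has the advantage of avoiding an external reference, while the paper's route makes explicit that the dimension condition is really a maximality condition.
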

\begin{proof}
(3) are equivalent (1) by \cite[Thm.\ 3.4, Prop.\ 3.22]{KoRu2009}
(see also \cite[Thm 3.4]{Mg2010}, \cite{DaMgNiOs2013}). 

The notion of modular invariance in \cite[Thm.\ 3.4]{KoRu2009} is a bit
different. But by \cite[Appendix C]{LoRe2004} we obtain that (2) implies (1),
namely the argument shows that if $d\theta<\dim(\NNs)$ then $Z$ cannot be
modular invariant. Together with Lemma \ref{lem:leq} this gives the statement.

(3) implies (2) is clear by the fact that $Z_{\mu\nu}=\langle \alpha^+_\mu,
\alpha^-_\nu\rangle$ defines a modular invariant and that 
$Z(\Theta)$ coincides with the $\alpha$-induction construction Prop.\
\ref{prop:main}.
\end{proof}

\subsection{Permutation modular invariants}
\label{subsec:PermutationModularInvariant}
Let $\NNs \subset \End(\N)$ be a UMTC. A non-negative integer valued matrix
$Z=(Z_{\mu\nu})_{\mu,\nu\in\NDN}$ with $Z_{\id_N,\id_\N}=1$
is called a \textbf{modular invariant} if it commutes with the matrices $S$ and
$T$ constructed in Subsect.~\ref{subsec:UMTC}. It is called \textbf{realizable} (sufferable) if there exists a braided subfactor $(\iota(\N)\subset\M,\NNs)$ such that $Z_{\mu\nu}=\langle 
\alpha^+_\mu,\alpha^-_\nu\rangle$.

\begin{prop} 
	\label{prop:perm1}
	Let $\NNs\subset \End(\N)$ be a UMTC and 
	$\phi\in \Aut(\NDN)$ which only fixes the sector $[\id_\N]$ 
	and which extends to a braided automorphism of $\NNs$. 
	Then there is a braided subfactor $\N\subset \M_\phi$ in $\NNs$ 
	with 
	$$
		[\theta_\phi]
			= \bigoplus_{\nu} 
			 n_\nu [\nu],
		\qquad n_\nu=\sum_\mu \langle\mu\phi(\bar\mu),\nu\rangle 
	$$
	which
	realizes the permutation modular invariant
	$Z_{\mu\nu}=\delta_{\nu,\phi(\mu)}$.
\end{prop}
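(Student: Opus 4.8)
The plan is to produce $\N\subset\M_\phi$ by first building a twisted Longo--Rehren Q-system in $\NNs\boxtimes\overline{\NNs}$ realizing the required modular invariant, and then descending it to $\NNs$ via Proposition~\ref{prop:QSystemModular}. First I would form the Q-system $\Theta_\LR^\phi=(\theta_\LR^\phi,w_\LR^\phi,x_\LR^\phi)$ of Definition~\ref{defi:LR}, taking $\NNs'=\NNs$ and $\phi$ the given braided automorphism, so $[\theta_\LR^\phi]=\bigoplus_{\rho\in\NDN}[\rho\boxtimes\phi(\bar\rho)]$. Since $\phi$ is a braided tensor autoequivalence we have $\phi(\bar\rho)\cong\overline{\phi(\rho)}$ and $d\phi(\rho)=d\rho$, and the same underlying functor defines a braided autoequivalence $\overline\phi$ of $\overline{\NNs}$ (because $\phi(\varepsilon^-(\mu,\nu))=\phi(\varepsilon^+(\nu,\mu)^\ast)=\varepsilon^+(\phi\nu,\phi\mu)^\ast=\varepsilon^-(\phi\mu,\phi\nu)$). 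Hence $\Theta_\LR^\phi$ is the image of the ordinary Longo--Rehren Q-system $\Theta_\LR$ under the braided autoequivalence $\id_{\NNs}\boxtimes\overline\phi$ of $\NNs\boxtimes\overline{\NNs}$; as braided autoequivalences send \local Q-systems to \local Q-systems and preserve irreducibility, $\Theta_\LR^\phi$ is an irreducible \local Q-system by Proposition~\ref{prop:LRlocal}, and $d\theta_\LR^\phi=\sum_{\rho\in\NDN}d\rho\,d\phi(\bar\rho)=\sum_{\rho\in\NDN}(d\rho)^2=\dim\NNs$.

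Next I would invoke Proposition~\ref{prop:QSystemModular}: an irreducible \local Q-system in $\NNs\boxtimes\overline{\NNs}$ of dimension $\dim\NNs$ is of the form $Z(\Theta_\phi)$ for an irreducible Q-system $\Theta_\phi$ in $\NNs$. Let $\iota(\N)\subset\M_\phi$ be the subfactor corresponding to $\Theta_\phi$, so that $(\iota(\N)\subset\M_\phi,\NNs)$ is a braided subfactor in $\NNs$. By Proposition~\ref{prop:main} its $\alpha$-induction construction coincides with $Z(\Theta_\phi)=\Theta_\LR^\phi$, whose underlying object is $\bigoplus_{\mu,\nu\in\NDN}Z_{\mu\nu}[\mu\boxtimes\bar\nu]$ with $Z_{\mu\nu}=\langle\alpha^+_\mu,\alpha^-_\nu\rangle$. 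Comparing with $[\theta_\LR^\phi]=\bigoplus_{\rho\in\NDN}[\rho\boxtimes\overline{\phi(\rho)}]$ and using uniqueness of the decomposition into the distinct irreducibles $[\mu\boxtimes\bar\nu]$ gives $Z_{\mu\nu}=\delta_{\nu,\phi(\mu)}$, so $\N\subset\M_\phi$ realizes the permutation modular invariant.

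For the precise form of $\theta_\phi$ I would apply Proposition~\ref{prop:DirectSumQSystem}: $T(Z(\Theta_\phi))\cong\bigoplus_{a\in\NDM}\Theta_a$, and since $T$ sends $\rho\boxtimes\overline{\phi(\rho)}$ to $\rho\circ\phi(\bar\rho)$, the object underlying the left-hand side is $\bigoplus_{\rho\in\NDN}[\rho\,\phi(\bar\rho)]$. Pairing with $\id_\N$ yields $|\NDM|=\sum_{a\in\NDM}\langle a,a\rangle=\sum_{\rho\in\NDN}\langle\id_\N,\rho\,\phi(\bar\rho)\rangle=\sum_{\rho\in\NDN}\langle\bar\rho,\overline{\phi(\rho)}\rangle=\sum_{\rho\in\NDN}\delta_{[\rho],[\phi(\rho)]}=1$, since $\phi$ fixes only $[\id_\N]$. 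Hence $\NDM=\{\bar\iota\}$, so $\bigoplus_{a\in\NDM}\Theta_a=\Theta_{\bar\iota}=\Theta_\phi$, and therefore $[\theta_\phi]=[T(\theta_\LR^\phi)]=\bigoplus_{\mu\in\NDN}[\mu\,\phi(\bar\mu)]=\bigoplus_{\nu\in\NDN}n_\nu[\nu]$ with $n_\nu=\sum_{\mu}\langle\mu\,\phi(\bar\mu),\nu\rangle$, which is the asserted formula. Finally, that $(Z_{\mu\nu})=(\delta_{\nu,\phi(\mu)})$ commutes with $S$ and $T$ is automatic from Proposition~\ref{prop:main} since it is realized by the braided subfactor $\N\subset\M_\phi$.

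The only genuinely delicate point is the \local property of $\Theta_\LR^\phi$: one must check that braidedness of $\phi$ is exactly what is needed, which is handled cleanly by passing through the braided autoequivalence $\id_{\NNs}\boxtimes\overline\phi$ as above; alternatively one can rerun the graphical proof of Proposition~\ref{prop:LRlocal} with $\overline{\phi(e)}$ in place of $\bar e$, the essential input being that $\phi$ intertwines $\varepsilon^+$ with $\varepsilon^+$ and hence also $\varepsilon^-$ with $\varepsilon^-$. Everything else is a direct chain of the cited results.
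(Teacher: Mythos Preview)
Your proof is correct and rests on the same key object as the paper's, the twisted Longo--Rehren Q-system $\Theta_\LR^\phi$; the difference is only in the order of operations. The paper \emph{defines} $\Theta_\phi:=T(\Theta_\LR^\phi)$ directly, verifies its irreducibility from the fixed-point-free hypothesis, and then appeals to the adjoint relation between $T$ and $Z$ to conclude that the modular invariant is $\delta_{\nu,\phi(\mu)}$. You instead first show that $\Theta_\LR^\phi$ is an irreducible \local Q-system of dimension $\dim\NNs$ (hence a full center by Proposition~\ref{prop:QSystemModular}), obtain $\Theta_\phi$ abstractly with $Z(\Theta_\phi)=\Theta_\LR^\phi$, and only afterwards recover $[\theta_\phi]$ concretely via Proposition~\ref{prop:DirectSumQSystem} together with the count $|\NDM|=1$. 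Your route is slightly longer but more self-contained: you make explicit both the commutativity of $\Theta_\LR^\phi$ (via the braided autoequivalence $\id\boxtimes\overline\phi$ applied to $\Theta_\LR$) and the precise mechanism (Propositions~\ref{prop:QSystemModular} and~\ref{prop:DirectSumQSystem}) that underlies the paper's one-line ``adjointness'' appeal.
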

\begin{proof}
	By the Longo--Rehren construction Def.\ \ref{defi:LR} there is a Q-system
	$\Theta_\LR^\phi$ with:
	$$
		[\theta^\phi_\LR]=\bigoplus_{\mu} [\mu \boxtimes \phi(\bar\mu)]\punkt
	$$
	We define the Q-system $\Theta_\phi:=T(\Theta_\LR^\phi)$ in $\NNs$ with
	$$
		[\theta_\phi]:=\bigoplus_{\mu} [\mu\phi(\bar\mu)]
			= \bigoplus_{\nu} 
			 n_\nu [\nu],
		\qquad n_\nu=\sum_\mu \langle\mu\phi(\bar\mu),\nu\rangle 
	$$ as above 
	which is irreducible because $0=\langle \mu\phi(\bar \mu),\id_\N\rangle$ for
	$[\mu]\neq [\id_\N]$ by the assumption about $\phi$ not having non-trivial fixed
	points.
	Because $T(\slot)$ is left-adjoint to $Z(\slot)$ the subfactor 
	$\N\subset \M_\phi$ given by the Q-system $\Theta_\phi$ has the modular 
  invariant $Z_{\mu\nu}=\delta_{\nu,\phi(\mu)}$.
\end{proof}
A particular case is, if $\NNs$ has no non-trivial self-conjugate sectors besides 
the trivial sector, in this case the charge conjugation $C$ might fulfill
the assumptions and the obtained subfactor realizes the charge conjugation
modular invariant $Z=C$. We therefore can answer a particular case 
of the question how $Z=C$ is realized, namely the case that there 
are no non-trivial self-conjugate charges.

\begin{example} 
  \label{ex:E61}
  The UMTC $E_{6,1}$ for example obtained by positive energy
   representation of loop groups, has 3 sectors $\{\rho_0,\rho_1,\rho_2\}$ with $\ZZ_3$ fusion
  rules, \ie $[\rho_i\rho_j] =[\rho_{i+j \mod 3}]$ for $0\leq i,j\leq 2$,
  and the charge conjugation transposes the two non-trivial charges.
  Then Prop.\ \ref{prop:perm1} yields a Q-system with
   $[\theta]=[\rho_0] \oplus[\rho_1]\oplus[\rho_2]$ which realizes $Z=C$,
  \ie $Z=|\chi_0|^2+\chi_1\bar\chi_2+\chi_2\bar\chi_1$.
\end{example}

If there is fixed point in the permutation the same construction 
as in the proof of Prop.\ \ref{prop:perm1} is possible but we do not know how a 
dual canonical endomorphism of an irreducible Q-system giving the modular
invariant 
would look, because the ``adjoint functor'' gives a reducible Q-system.
Nevertheless, we can conclude that for a permutation matrix $Z$ of $\NDN$ which 
gives rise to a braided automorphism, there exists 
a braided subfactor $\iota(\N)\subset\M$ in $\NNs$ which has $Z$ as a
modular invariant, i.e. such permutation modular invariants are  realizable.

The category $\NNs$ is called pointed if all irreducible objects are invertible,
\ie have dimension 1 or in other words $\NNs=\Pic(\NNs)$.
\begin{lem}
\label{lem:pointed}
Let $\NNs\in \End(\N)$ be a pointed UMTC and let $\Theta_1$ and $\Theta_2$ be Q-systems. 
If $\Theta_1$ and $\Theta_2$ are Morita equivalent, then they are equivalent.
\end{lem}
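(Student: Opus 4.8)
The plan is to use the classification of Morita equivalence classes of irreducible Q-systems from Proposition \ref{prop:ClassDHROrbit} together with the structure of a pointed UMTC. First I would reduce to the case where one of the Q-systems is the trivial Q-system: since Morita equivalence is an equivalence relation and every $\Theta_a$ arising from an irreducible $a \in \NMs$ (for $\N \subset \M$ corresponding to $\Theta_1$) is Morita equivalent to $\Theta_{\bar\iota_1}$, it suffices to show that if $\Theta$ is a Q-system Morita equivalent to the trivial one, then $\Theta$ itself is trivial (i.e.\ equivalent to $\Theta_{\id_\N}$). More precisely, I would invoke Proposition \ref{prop:ClassDHROrbit}: the Morita equivalence class of $\Theta_1$ is in bijection with $\NDM / \Pic(\MMs)$ where $\N\subset\M$ corresponds to $\Theta_1$, and the irreducible Q-systems in the class are exactly the $\Theta_a$ for $a \in \NMs$ irreducible, up to the $\Pic(\MMs)$-action.

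The key observation is that when $\NNs$ is pointed, every extension stays within the pointed world in a way that forces $\NMs$ to consist entirely of invertible objects. Concretely, I would argue as follows. Take $\Theta_1$ with corresponding subfactor $\iota(\N) \subset \M$. By Lemma \ref{lem:leq} applied to the left/right centers (or directly: the dual canonical endomorphism $\theta = \bar\iota\iota$ decomposes into objects of $\NNs$, all of which have dimension $1$), every object of $\NNs$ has dimension $1$, hence $d\theta = \langle \id_\N, \theta\rangle \cdot 1$ is a positive integer and equals $d\Theta_1$. For $a \in \NMs$ irreducible, $d\Theta_a = (da)^2 = d(a\bar a)$, and $a\bar a \in \NNs$ is a sum of $(da)^2$ invertible objects; but $a\bar a$ is itself a Q-system (Definition \ref{defi:TrivialQSystem}), so in particular $\langle \id_\N, a\bar a \rangle = 1$ gives $da = 1$, whence $[a] \in \Pic(\MMs)$-orbit terms collapse. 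More carefully: the point is that all irreducible objects of $\NMs$ have dimension $1$ as well (since $\NMs$ is a module category generated by $\NNs\bar\iota$ and dimensions multiply, while the total dimension $\dim\NMs = \dim\NNs$ by rigidity must be distributed among the irreducibles, each of dimension $\geq 1$ — and one shows each equals exactly $1$ by the Q-system constraint above applied to $\Theta_a$). Hence $\NDM = \Pic(\NMs)$ in the evident sense, and $\Pic(\MMs)$ acts on it; the bijection of Proposition \ref{prop:ClassDHROrbit} shows the Morita class of $\Theta_1$ corresponds to $\NDM/\Pic(\MMs)$.

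Finally I would show this quotient is a single point, equivalently that $\Pic(\MMs)$ acts transitively on $\NDM$. Since every $a \in \NDM$ is invertible and $\bar\iota$ is invertible (being in $\MNs$, itself pointed), the composite $\iota a^{-1}$ — or more precisely, given $a, b \in \NDM$, the morphism $\bar\iota \circ a^{-1} \circ b$ lives in $\MMs$ and is invertible, hence lies in $\Pic(\MMs)$, and one checks it sends $a$ to $b$ under the module-category action; alternatively, by Proposition \ref{prop:Picard}, $\Theta_a$ and $\Theta_b$ are equivalent iff there is $\beta \in \Pic(\MMs)$ with $b\beta = a$, and invertibility of all objects makes this automatic. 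Therefore all irreducible Q-systems in a given Morita class are equivalent, which is the claim. The main obstacle I anticipate is the bookkeeping needed to verify that $\NMs$ (and $\MMs$) is indeed pointed — this requires combining the dimension count $\dim\NMs = \dim\NNs$ with the Q-system constraint $\langle\id_\N, a\bar a\rangle = 1$ from Definition \ref{defi:TrivialQSystem}, and making sure one does not secretly use modularity beyond what Proposition \ref{prop:ClassDHROrbit} already assumes.
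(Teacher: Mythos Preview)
Your overall strategy---reduce via Proposition~\ref{prop:ClassDHROrbit} to showing that $\Pic(\MMs)$ acts transitively on $\NDM$---is exactly the paper's alternative argument. But your execution contains a genuine error: the claim that $\NMs$ (and $\bar\iota$) is pointed is false. If $[\theta]=\bigoplus_{h\in H}[\lambda_h]$ for a nontrivial subgroup $H$ of the group of sectors, then $d\bar\iota=\sqrt{|H|}>1$, and every irreducible $a\in\NDM$ has $da=\sqrt{|H|}$ as well. Your attempted justification, that $\langle\id_\N,a\bar a\rangle=1$ forces $da=1$, is just Frobenius reciprocity $\langle a,a\rangle=1$ restating that $a$ is irreducible; it says nothing about the dimension. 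Consequently the expression ``$a^{-1}$'' in your final step is meaningless, and the transitivity argument collapses.

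The correct replacement is the paper's observation: since $\NNs$ is pointed, every $\lambda_g\bar\iota$ is already irreducible (Frobenius reciprocity gives $\langle\lambda_g\bar\iota,\lambda_g\bar\iota\rangle=\langle\theta,\lambda_g\bar\lambda_g\rangle=\langle\theta,\id_\N\rangle=1$), so $\NDM\subset\{\lambda_g\bar\iota:g\in G\}$. Then for $a=\lambda_g\bar\iota$ the $\alpha$-induced $\alpha^\pm_{\lambda_{g^{-1}}}\in\Pic(\MMs)$ satisfies $a\,\alpha^\pm_{\lambda_{g^{-1}}}=\lambda_g\lambda_{g^{-1}}\bar\iota=\bar\iota$, and Proposition~\ref{prop:Picard} finishes. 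So the invertible object you need lives in $\NNs$ (acting on the left) or in $\MMs$ via $\alpha$-induction, not in $\NMs$ itself.
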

\begin{proof}
  Let $\Theta_1$ and $\Theta_2$ be irreducible Q-systems in $\NNs$ which are 
  Morita equivalent. Without lost of generatlity, we may assume that 
  $\Theta_1=\Theta_{\bar\iota}$ comes from a subfactor $\iota(\N)\subset\M$ and 
  $\Theta_2=\Theta_a$ with $a\in \NMs$ irreducible.
  
  Because $\NNs$ is pointed the sectors form an abelian (due to the braiding) 
  group denoted $G$. The multiplication in $G$ is given by the  fusion rules,
\ie $\NDN=\{\lambda_g: g \in G\}$ with
  $[\lambda_g\lambda_h]=[\lambda_{gh}]$ for all $g,h\in G$ and
  $[\lambda_{g^{-1}}]=[\bar\lambda_g]$.
  We note that $\iota\lambda_g$ is irreducible, namely by Frobenius reciprocity 
  $\langle\iota\lambda_g,\iota\lambda_g\rangle = \langle
  \theta,\lambda_g\bar\lambda_g\rangle=\langle \theta,\id_\N\rangle=1$.
  Therefore $\NDM\subset \{\lambda_g\bar \iota:g\in G\}$
  (because there can be $[\lambda_g\bar\iota]=[\lambda_h\bar \iota]$).
  So we may assume that $a=\lambda_g\bar \iota$ and can conclude that 
  $[\theta_{a}]=[\lambda_g\bar\iota\iota\bar\lambda_g]=[\theta\bar\lambda_g\lambda_g]=[\theta]$.
  It is easy to check that using $\varepsilon(\lambda_g,\theta)$ we can
  construct a unitary intertwiner $\theta_a\to\theta\lambda_g\bar \lambda_{g}\to
  \theta$, which gives an equivalence of the two Q-systems.

  Alternatively, we can use that $\bar \alpha^\pm_{\lambda_{g^{-1}}}$ is an automorphism
  satisfying $a\bar \alpha^\pm_{  \lambda_{g^{-1}}}= \lambda_g
  \bar\iota\alpha^\pm_{\lambda_{g^{-1}}}=\lambda_g\lambda_{g^{-1}}\bar
  \iota=\bar \iota$. Then Prop.\ \ref{prop:Picard} gives an alternative proof of the statement.
\end{proof}
Let $\NNs\subset \End(\N)$ be a pointed UMTC and $\Theta$ be a Q-system and 
$Z_{\mu\nu}=\langle\alpha_\mu^+,\alpha_\nu^-\rangle$.
Then Lemma \ref{lem:pointed} shows that $T(Z(\Theta))$ is equivalent
to $\bigoplus_{i=1}^{\tr Z}\Theta$. 
Therefore in this case we obtain an easy formula for $\theta$ in terms of its modular invariant
matrix $Z=(Z_{\mu\nu})$: 
\begin{align*}
	[\theta] = \frac{1}{\tr Z} \bigoplus_{\rho\in \NDN}
\sum_{\mu,\nu\in\NDN}Z_{\mu\nu}N_{\mu\nu}^\rho [\rho]
  \komma
\end{align*}
see also \cite{Pi2007}.

\subsection{Maximal chiral subalgebras and second cohomology for modular
invariant Q-systems}

Let us assume that $\Theta$ is a commutative Q-system in $\NNs$ 
and $N\subset M$ the associated subfactor.

The category $\Mod(\Theta)$ forms a (non-strict) tensor category as follows.
Let $\rho$, $\sigma$ be two right $\Theta$-modules.  
Because $\Theta$ is commutative, we obtain a left action on $\rho$ and $\sigma$ using the
braiding, which makes them bimodules. Then the tensor product $\rho\otimes\sigma$ is defined to 
be the object $\rho\otimes_\Theta \sigma$ as in Remark
\ref{rmk:alphaInduction}, which we see as right module by forgetting the left
action.

Let $\Mod_0(\Theta)$ the subcategory of dyslectic modules (see
\cite{Pa1995,KiOs2002}), i.e.
modules $(\rho,r)$, such that
$r\varepsilon(\theta,\rho)\varepsilon(\rho,\theta)=r$, graphically:
\begin{align*}
    \tikzmatht[0.21]{
    \draw[thick] (2,-2.5) .. controls (2,-2) .. (0,0);
    \mydot{(0,0)};
    \mydot{(0,0)};
    \node at (0,0) [left] {$r$};
    \draw[thick] (0,1) node [above] {$\rho$}--(0,-2.5)
	.. controls (0,-4.5) and (2,-4.5) .. (2,-6)
    .. controls (2,-7.5) and (0,-7.5) .. (0,-9) node [below] {$\rho$}; 
	\draw[thick]  (2,-2.5) 
	.. controls (2,-4.5) and (0,-4.5) .. (0,-6)
    .. controls (0,-7.5) and (2,-7.5) .. (2,-9) node [below] {$\theta$}; 
	\draw [double,ultra thick, white] 
        (2,-2.6) .. controls (2,-4.5) and (0,-4.5) .. (0,-5.9)
        (2,-6.1) .. controls (2,-7.5) and (0,-7.5) .. (0,-8.9);
	\draw [thick] 
        (2,-2.5) .. controls (2,-4.5) and (0,-4.5) .. (0,-6)
        (2,-6)   .. controls (2,-7.5) and (0,-7.5) .. (0,-9);
  }
  =
    \tikzmatht[0.21]{
    \draw[thick] (2,-2.5) .. controls (2,-2) .. (0,0);
    \mydot{(0,0)};
    \mydot{(0,0)};
    \node at (0,0) [left] {$r$};
    \draw[thick] (0,1) node [above] {$\rho$}--(0,-2.5)--
    (0,-9) node [below] {$\rho$}; 
	\draw[thick]  (2,-2.5) --
    (2,-9) node [below] {$\theta$}; 
  }
  \punkt
\end{align*}

It can easily be seen that if we give the induced right $\Theta$-module $\rho\theta$
the structure of a bimodule using the braiding that it becomes equivalent to
the $\alpha$-induction $\Phi(\alpha^\pm_\rho)$ in Remark \ref{rmk:alphaInduction}, where the sign is depending on the choice of
the braiding. We obtain that $\Bim^\pm (\Theta,\Theta) \cong \Mod(\Theta)$ as tensor
categories, but we will just need the following fact.

\begin{rmk}
\label{rmk:Dyslexic}
The map obtained by restricting bimodules to right modules 
$$
        \Bim^0(\Theta,\Theta) \to \Mod_0(\Theta)
$$
is an equivalence of categories. Namely, an object in $\Bim^0(\Theta,\Theta)$
gives 
a dyslectic module, because using the fact that it is contained both, in the image of $\alpha^+$ and $\alpha^-$, 
we can ``unwind'' the
double braid. Conversely, if a module is dyslectic, the left action obtained by
the both braidings coincide, so it must come  from  $\Bim^0(\Theta,\Theta)$. 
\end{rmk}

For $\beta\in \MMs$ we define the \textbf{$\sigma$-restriction} $\sigma_\beta=\bar \iota \beta\iota\in\NNs$.

Given $\Theta_\pm$ commutative Q-systems corresponding to $\N\subset\M_\pm$ 
it follows that $\s{\M_\pm}{\M_\pm}^0$ are again UMTCs. Let us assume there is a braided
equivalence $\phi\colon\s{\M_+}{\M_+}^0\to\s{\M_-}{\M_-}^0$. 
Now we consider the Q-system $\Theta_\LR^\phi$ in 
$\s{\M_+}{\M_+}^0\boxtimes\overline{\s{\M_-}{\M_-}^0}$. By composing 
$\iota_\LR$ with $\iota_1\boxtimes\iota_2$ 
we obtain a Q-system 
$$
  \Theta_{(\Theta_+,\Theta_-,\phi)} = \Theta_{(\bar\iota_1\boxtimes \bar \iota_2)\circ\bar\iota_\LR^\phi}
$$
with
\begin{align*}
  [\theta^\phi_\LR] &= \bigoplus_{\alpha\in
\D{\M_+}{\M_+}^0,\beta\in\D{\M_-}{\M_-}^0} \tilde Z_{\alpha\beta}
    [\alpha\boxtimes\bar\beta],&
  \tilde Z_{\alpha\beta} &= \delta_{\alpha,\phi(\beta)}\\
  [\theta_{(\Theta_+,\Theta_-,\phi)}]&=\bigoplus_{\mu,\nu\in\NDN} Z_{\mu\nu}
[\mu\boxtimes \bar\nu],&
  Z_{\mu\nu} &= \sum_{\alpha\beta} Z_{\alpha\beta} \langle
    \sigma^+_\alpha,\mu\rangle\langle \sigma^-_{\bar\beta},\bar \mu\rangle
    \\&&&= \sum_\tau b^+_{\tau,\mu}b^-_{\phi(\tau),\nu}
\end{align*}
where $b^\pm_{\tau,\mu}=\langle \sigma^\pm_\tau,\mu\rangle$ for $\tau\in
\s{\M_\pm}{\M_\pm}^0$.
All maximal commutative Q-systems in $\NNs\boxtimes\overline{\NNs}$ are of this
form: 
\begin{prop}[{\cite[Prop.\ 3.7, Cor.\
3.8]{DaNiOs2013}}]
  \label{prop:ModularQSystems}
  There is a one-to-one correspondence between
  \begin{enumerate}
    \item Equivalence classes of 
    \local irreducible Q-systems
$\Theta_2$ in $\NNs\boxtimes\overline{\NNs}$ with $d\theta_2=\dim(\NNs)$.
    \item Isomorphism classes of triples $(\Theta_+,\Theta_-,\phi)$ where
$\Theta_\pm$ are \local irreducible Q-systems in $\NNs$ and   
    $\phi\colon\s{\M_+}{\M_+}^0\to\s{\M_-}{\M_-}^0$ is an equivalence of braided
    categories.
    \item Indecomposable module categories over $\NNs$.
  \end{enumerate}
\end{prop}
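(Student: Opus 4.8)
The plan is to reduce the statement to the categorical classification of module categories over a modular category given in \cite[Prop.~3.7, Cor.~3.8]{DaNiOs2013}, using the subfactor/algebra-object dictionary assembled in the previous sections --- in particular Prop.~\ref{prop:main} (full centre $=$ $\alpha$-induction construction), Prop.~\ref{prop:KongRunkel} (Morita equivalence $\Leftrightarrow$ equal full centre), Prop.~\ref{prop:QSystemModular}, and the correspondence between Q-systems, braided subfactors and connected algebra objects. I would prove $(1)\Leftrightarrow(3)$ first and then $(1)\Leftrightarrow(2)$.

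For $(1)\Leftrightarrow(3)$: by Prop.~\ref{prop:QSystemModular} a \local irreducible Q-system $\Theta_2$ in $\NNs\boxtimes\overline{\NNs}$ with $d\theta_2=\dim(\NNs)$ is, up to equivalence, a full centre $Z(\Theta)$ of an irreducible Q-system $\Theta$ in $\NNs$, and by Prop.~\ref{prop:KongRunkel} two such Q-systems have equivalent full centres exactly when they are Morita equivalent; hence (1) is in bijection with Morita equivalence classes of irreducible Q-systems in $\NNs$. By Definition~\ref{defi:MoritaEquivalence} such a Morita class is the same datum as the module category $\Mod(\Theta)$ over $\NNs$, which is indecomposable because $\Theta$ is connected, and conversely by Ostrik's classification \cite{Os2003} every indecomposable semisimple module category over the fusion category $\NNs$ is equivalent to $\Mod(\Theta)$ for some irreducible Q-system $\Theta$ (take $\Theta$ the internal endomorphism algebra of a simple object). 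This yields $(1)\Leftrightarrow(3)$.

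For $(2)\Rightarrow(1)$: given $(\Theta_+,\Theta_-,\phi)$ I take the Q-system $\Theta_{(\bar\iota_1\boxtimes\bar\iota_2)\circ\bar\iota^\phi_\LR}$ displayed before the statement, i.e.\ the dual canonical Q-system of the iterated inclusion $\N\boxtimes\N\hookrightarrow\M_+\boxtimes\M_-\hookrightarrow\widehat\M$, where $\bar\iota^\phi_\LR$ realises the Longo--Rehren Q-system $\Theta^\phi_\LR$ (Def.~\ref{defi:LR}) inside $\s{\M_+}{\M_+}^0\boxtimes\overline{\s{\M_-}{\M_-}^0}$. Three points must be checked. \emph{Irreducibility}: from $Z_{\id_\N,\id_\N}=\sum_\tau b^+_{\tau,\id_\N}b^-_{\phi(\tau),\id_\N}$, the $\sigma$-$\alpha$ reciprocity of \cite{BcEvKa1999} gives $b^\pm_{\tau,\id_\N}=\langle\sigma^\pm_\tau,\id_\N\rangle=\langle\id_{\M_\pm},\tau\rangle=\delta_{\tau,\id}$, and with $\phi(\id)\cong\id$ this gives $Z_{\id_\N,\id_\N}=1$. \emph{Commutativity}: $\Theta_+\boxtimes\id$ and $\id\boxtimes\Theta_-$ are \local in $\NNs\boxtimes\overline{\NNs}$, $\Theta^\phi_\LR$ is \local by Prop.~\ref{prop:LRlocal}, and a \local extension performed inside the dyslectic part $\Mod_0(\Theta_\pm)\cong\s{\M_\pm}{\M_\pm}^0$ (Remark~\ref{rmk:Dyslexic}) stays \local as an extension of $\N\boxtimes\N$, this being the operator-algebraic form of the Kirillov--Ostrik compatibility lemma \cite{KiOs2002}. \emph{Dimension}: a tower computation gives $d\theta_{(\Theta_+,\Theta_-,\phi)}=d\Theta_+\,d\Theta_-\,\dim\s{\M_+}{\M_+}^0$; existence of the braided equivalence $\phi$ forces $\dim\s{\M_+}{\M_+}^0=\dim\s{\M_-}{\M_-}^0$, whence Lemma~\ref{lem:leq} gives first $d\Theta_+=d\Theta_-$ and then $\dim\s{\M_+}{\M_+}^0=\dim(\NNs)/(d\Theta_+)^2$, so $d\theta_{(\Theta_+,\Theta_-,\phi)}=\dim(\NNs)$.

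For $(1)\Rightarrow(2)$ and bijectivity: writing a Q-system from (1) as $Z(\Theta)$ with subfactor $\N\subset\M$, I put $\Theta_+=C_\mathrm{l}(\Theta)$, $\Theta_-=C_\mathrm{r}(\Theta)$ (Def.~\ref{defi:LeftCenter}), which are \local (Remark~\ref{rmk:LRCenter}) and irreducible (a sub-Q-system of an irreducible Q-system is irreducible) and give the maximal chiral subfactors $\N\subset\M_\pm\subset\M$ of Remark~\ref{rmk:MaximalChiral}; the braided equivalence $\phi\colon\s{\M_+}{\M_+}^0\to\s{\M_-}{\M_-}^0$ is obtained by identifying both sides with the ambichiral category $\MMs^0$ equipped with the relative braiding of \cite{BcEvKa2000}, via $\s{\M_\pm}{\M_\pm}^0\cong\Mod_0(\Theta_\pm)$ (Remark~\ref{rmk:Dyslexic}) together with $\Mod_0(C_\mathrm{l/r}(\Theta))\cong\MMs^0$. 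One then checks $\Theta_{(\bar\iota_1\boxtimes\bar\iota_2)\circ\bar\iota^\phi_\LR}\cong Z(\Theta)$ by comparing dual canonical endomorphisms and Q-system structure maps, which shows the two assignments are mutually inverse, and injectivity/surjectivity on isomorphism classes then transport directly from \cite[Prop.~3.7, Cor.~3.8]{DaNiOs2013} along the dictionary. The main obstacle is precisely this matching step: commutativity is not stable under arbitrary composites of \local Q-systems, so one must genuinely carry out the extension inside the category of dyslectic modules, and one must verify that the relative braiding on $\MMs^0$ coincides with the braiding transported from $\Mod_0(C_\mathrm{l}(\Theta))$ and from $\Mod_0(C_\mathrm{r}(\Theta))$; once this is established, the remaining bijectivity is formal given \cite{DaNiOs2013} and the identification $Z(\Theta)\leftrightarrow$ full centre of the algebra $\Theta$ from Prop.~\ref{prop:main}.
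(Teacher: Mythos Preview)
Your proposal is considerably more elaborate than the paper's proof, which consists of two sentences: it simply cites \cite[Prop.~3.7, Cor.~3.8]{DaNiOs2013} for the full statement (where the objects in (1) are called Lagrangian algebras) and then records the single dictionary item needed to translate, namely that $\s{\M_\pm}{\M_\pm}^0$ is equivalent to the category of dyslectic (local) modules over $\Theta_\pm$ via Remarks~\ref{rmk:alphaInduction} and~\ref{rmk:Dyslexic}. Your argument for $(1)\Leftrightarrow(3)$ via Prop.~\ref{prop:QSystemModular}, Prop.~\ref{prop:KongRunkel} and Ostrik is correct and clean, and your $(2)\Rightarrow(1)$ computation (irreducibility, dimension) matches the discussion preceding the proposition.

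The weak point is your $(1)\Rightarrow(2)$. You pass through a Morita representative $\Theta$ of $\Theta_2=Z(\Theta)$, set $\Theta_\pm=C_{\mathrm l/\mathrm r}(\Theta)$, and then assert $\Mod_0(C_{\mathrm l/\mathrm r}(\Theta))\cong\MMs^0$ as braided categories, where $M$ corresponds to the (in general non-commutative) $\Theta$. Remark~\ref{rmk:Dyslexic} only gives $\s{\M_\pm}{\M_\pm}^0\cong\Mod_0(\Theta_\pm)$ for the \emph{commutative} $\Theta_\pm$; it says nothing about identifying these with the ambichiral system $\MMs^0$ of the larger non-commutative extension. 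That identification is essentially the content of the type~I parent analysis in \cite{BcEv2000}, and is not a formal consequence of what you cite. The paper sidesteps this entirely: it never attempts to produce $(\Theta_+,\Theta_-,\phi)$ from a chosen Morita representative, but leaves the bijection $(1)\Leftrightarrow(2)$ inside \cite{DaNiOs2013} and only supplies the translation $\s{\M_\pm}{\M_\pm}^0\leftrightarrow$ local modules. Since you ultimately invoke \cite{DaNiOs2013} anyway, your extra route through $C_{\mathrm l/\mathrm r}(\Theta)$ and $\MMs^0$ buys explicitness at the cost of an unjustified step; the paper's shortcut is both shorter and gap-free.
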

\begin{proof}
This statement is proven in a more general setting in \cite[Prop.\ 3.7, Cor.\
3.8]{DaNiOs2013}. They call the objects in point 1) Lagrangian algebras.
We use that by Remark \ref{rmk:alphaInduction} and \ref{rmk:Dyslexic} (see also
\cite[Thm 3.1]{Mg2010}) the category $\s{\M_+}{\M_+}^0$
is equivalent to the category of dyslectic modules.
\end{proof}

We note that there can exist inequivalent $\phi_1,\phi_2$ 
giving the same modular invariant $Z=(Z_{\mu\nu})$. Namely if 
$\langle\sigma_{\phi_1(\tau)},\mu\rangle=\langle\sigma_{\phi_2(\tau)},\mu\rangle$
holds for all $\tau \in \s{\M_+}{\M_+}^0$ and $\mu\in\NNs$ for which 
$b^+_{\tau,\mu}\neq 0$. 
Because $\phi_1$ and $\phi_2$ are inequivalent 
the Q-systems $\Theta^{\phi_1}_\LR$ and $\Theta^{\phi_2}_\LR$ are inequivalent.
This (or using Prop.\ \ref{prop:ModularQSystems}) implies that also
$\Theta_{(\Theta_+,\Theta_-,\phi_1)}$ and
$\Theta_{(\Theta_+,\Theta_-,\phi_2)}$ are inequivalent. This means that the
second cohomology (see Rem.\ \ref{rmk:2ndCom}) of
$\Theta_{(\Theta_+,\Theta_-,\phi_{1,2})}$ does not vanish in this case.

\begin{example}
 Let us consider for $\NNs$ the UMTC obtained by $\SU(3)_9$ 
  and $\Theta_+$ coming from
  the conformal inclusion $\SU(3)_9\subset E_{6,1}$.
 
  As in Ex.\ \ref{ex:E61} 
  the UMTC category $E_{6,1}$ has three sectors 
  $\D{\M_+}{\M_+}^0=\{\beta_0,\beta_1,\beta_2\}$ and
  we obtain an extension $\M_+\subset \tilde \M$ with
  $[\tilde\theta]=[\beta_0]\oplus[\beta_1]\oplus[\beta_2]$,
  which gives the permutation modular invariant interchanging
  $\beta_1\leftrightarrow\beta_2$.
  Now $\sigma^+_{\beta_1}=\sigma^+_{\beta_2}$, 
  so both inclusions $\N\subset \M_+$ and $\N\subset \tilde \M$ 
  give by the above discussion the same modular invariant 
  with respect to ${\SU(3)_9}$, which is 
  $Z=|\chi_{0,0}+\chi_{9,0}+\chi_{0,9}+
  \chi_{4,1}+\chi_{1,4}+\chi_{4,4}|^2+2|\chi_{2,2}+\chi_{5,2}+\chi_{2,5}|^2$.
This example appeared in \cite{BcEv2001}, \cf \cite{EvPu2009,EvPu2011}.

So we can conclude that  
  $\Theta_{(\Theta_+,\Theta_+,\id)}$ and 
  $\Theta_{(\Theta_+,\Theta_+,\phi)}$ in $\NNs\boxtimes\overline{\NNs}$ 
  have isomorphic endomorphisms
$[\theta_{(\Theta_+,\Theta_+,\id)}]=[\theta_{(\Theta_+,\Theta_+,\phi)}]$ but the
Q-systems are 
  not equivalent. So we have an example where 
  the second cohomology does not vanish.

  The same happens for the inclusion\footnote{This was told to us by V. Ostrik via
mathoverflow}
	$G_{2,3}\subset E_{6,1}$ where $Z=|\chi_{00}+\chi_{11}|^2+2|\chi_{02}|^2$.
\end{example}

\section{Conformal nets}
\label{sec:CNet}
We now apply the results to conformal nets.

Let $\RRc=\RR\cup\{\infty\}$ be the one-point compactification of the real line
$\RR$, which we can by the Cayley map $\RRc\ni x\mapsto z= \frac{\ima-x}{\ima+x}\in\Sc$ identify with the circle $\Sc\subset\CC$.
We denote by $\Mob$ the $\textbf{M\"obius group}$ 
which is isomorphic to both:
\begin{itemize}
  \item $\PSL(2,\RR)$, which acts naturally
    on the real line $\RRc$, and 
  \item $\PSU(1,1)$, which acts naturally on the circle 
$\Sc\subset \CC$.
\end{itemize}
The universal covering group of $\Mob$ is denoted by $\Mobc$.
We denote by $\Mob_\pm=\Mob\rtimes \ZZ_2$ where 
the action of $\ZZ_2$ is given by the reflection $r\colon z\mapsto \bar z$ on $\Sc$.
The \textbf{rotations}
$R(\vartheta)z=\e^{\ima\vartheta z}$ on $\Sc$,
the \textbf{dilations} $\delta(s)x= \e^s x$ on $\RR$, and the
\textbf{translations}
$\tau(t)x=x+t$ on $\RR$ give three distinguished
one-parameter subgroups of $\Mob$
which generate $\Mob$.

We denote by $I\in\cI$ the set of all \textbf{proper intervals} on $\Sc$,
\ie all open, connected, non-dense, non-empty intervals $I\subset \Sc$.

\begin{defi}
    A \emph{local M\"obius covariant net (\CFTnet)} $\A$ on $\Sc$ 
    is a family $\{\A(I) \}_{I\in \cI}$ of von Neumann algebras on a Hilbert 
    space $\Hil_\A$, with the following properties:
    \begin{enumerate}[{\bf A.}]
        \item \textbf{Isotony.} $I_1\subset I_2$ implies 
            $\A(I_1)\subset \A(I_2)$.
        \item \textbf{Locality.} $I_1  \cap I_2 = \emptyset$ implies 
            $[\A(I_1),\A(I_2)]=\{0\}$.
        \item \textbf{M\"obius covariance.} There is a unitary representation
            $U$ of $\Mob$ on $\Hil$ such that 
            $  U(g)\A(I)U(g)^\ast = \A(gI)$.
        \item \textbf{Positivity of energy.} $U$ is a positive energy 
            representation, i.e. the generator $L_0$ (conformal Hamiltonian) 
            of the rotation subgroup $U(R(\theta))=\e^{\ima \theta L_0}$
            has positive spectrum.
        \item \textbf{Vacuum.} There is a (up to phase) unique rotation 
            invariant unit vector $\Omega \in \Hil$ which is 
            cyclic for the von Neumann algebra $\bigvee_{I\in\cI} \A(I)$.
    \end{enumerate}
\end{defi}
The \textbf{Reeh--Schlieder property} automatically holds
\cite{FrJr1996}, \ie $\Omega$ is cyclic and separating for any $\A(I)$ with 
$I\in\cI$. 
Furthermore, we have the 
    \textbf{Bisognano--Wichmann property} 
\cite{GaFr1993,BrGuLo1993} saying that the modular operators with respect to 
$\Omega$ have geometric meaning; \eg the modular operators for the 
upper circle $I_0$ are given by the dilation 
$\Delta^{\ima t}=U(\delta(-2\pi t))$ and reflection $J=U(r)$, where here $U$ is 
extended to $\Mob_\pm$. 
For a general interval $I\in\cI$ the modular operators are given by a special
conformal transformation $\delta_I$ and a reflection $r_I$ both fixing the 
endpoints of $I$. 
The Bisognano--Wichmann property implies 
  \textbf{Haag duality} 
\begin{align*}
    \A(I)'&=\A(I') &I\in\cI 
\end{align*}
and it can be shown (see \eg \cite{GaFr1993}) that each $\A(I)$ is a type 
\threeone factor in Connes' classification \cite{Co1973}.
A conformal net is 
    \textbf{additive} 
\cite{FrJr1996}, \ie for intervals $I\in\cI$ and $I_1,\ldots, I_n \in \cI$
we have
$$
    I \subset \bigcup_i I_i \quad\Longrightarrow\quad 
    \A(I) \subset \bigvee_{i} \A(I_i)
    \punkt
$$

A local M\"obius covariant net on $\A$ on $\Sc$ is called \textbf{completely 
rational} if it 
\begin{enumerate}[{\bf A.}]
  \setcounter{enumi}{5}
  \item fulfills the 
    \textbf{split property}, \ie 
    for $I_0,I\in \cI$ with $\overline{I_0}\subset I$ the inclusion 
    $\A(I_0) \subset \A(I)$ is a split inclusion, namely there exists an 
    intermediate type I factor $M$ such that $\A(I_0) \subset M \subset \A(I)$.
  \item is 
 \textbf{strongly additive}, \ie for $I_1,I_2 \in \cI$ two adjacent intervals
obtained by removing a single point from an interval $I\in\cI$ 
the equality $\A(I_1) \vee \A(I_2) =\A(I)$ holds.
  \item for $I_1,I_3 \in \cI$ two intervals with disjoint closure and 
    $I_2,I_4\in\cI$  the two components of $(I_1\cup I_3)'$, the 
    \textbf{$\mu$-index} of $\A$
    \begin{equation}
      \mu(\A):= [(\A(I_2) \vee \A(I_4))': \A(I_1)\vee \A(I_3) ]
      \label{eq:twointerval}
    \end{equation}
    (which does not depend on the intervals $I_i$) is finite.
\end{enumerate}
\begin{example}
Examples of completely rational 
local M\"obius covariant nets are:
\begin{itemize}
  \item Diffeomorphism covariant nets with central charge $c<1$
    \cite{KaLo2004}.
  \item The nets $\A_L$ where $L$ is a positive even lattice
    \cite{DoXu2006} which contain as a special case \cite{Bi2012}
    loop group nets $\A_{G,1}$ at level 1 for $G$ a 
    compact connected, simply connected  simply-laced Lie group.
  \item The loop group nets $\A_{\SU(n),\ell}$ for $\SU(n)$ at level $\ell$.
    \cite{Xu2000}.
\end{itemize}
Further examples of rational conformal nets can be obtained from these as follows:
\begin{itemize}
  \item 
Finite index extensions and subnets of completely rational conformal
nets. Namely, let $\A\subset \cB$ be a finite subnet \ie $[\cB(I):\A(I)]<\infty$ for
some (then all) $I\in\cI$, then $\A$ is completely rational iff $\cB$ is
completely rational \cite{Lo2003}, in particular orbifolds $\A^G$ of completely
rational nets $\A$ with $G$ a
finite group are completely rational.
  \item Let $\A\subset\cB$ be a co-finite subnet , \ie
$[\cB(I),\A(I)\vee\A^\mathrm{c}(I)] <\infty$ for some (then all) $I\in\cI$,
where the \textbf{coset net} $\A^\mathrm{c}$ is defined by
$\A^\mathrm{c}(I)=\A'\cap\cB(I)$ with $\A'=(\vee_{I\in\cI}\A(I))'$. Then
$\cB$ is completely rational iff $\A$ and $\A^\mathrm{c}$ are completely
rational \cite{Lo2003}. This gives many example of completely rational nets
coming from the coset construction.
\end{itemize}
\end{example}

A \textbf{separable (non-degenerated) representation} 
of a strongly additive local M\"obius covariant net is a family 
$\pi=\{\pi_I\colon\A(I)\to \B(\Hil_\pi)\}_{I\in\cI}$ of unital 
representations ($\ast$-homomorphisms) $\pi_I$ of $\A(I)$
on a common separable Hilbert space $\Hil_\pi$, which are compatible, \ie
\begin{align*}
  \pi_{I_2}\restriction \A(I_1) &= \pi_{I_1},  & I_1\subset I_2
  \punkt
\end{align*}
Such a representation is automatically normal, 
\ie all $\pi_I$ are strongly continuous.
We denote by $\DHR(\A)$ the category of separable representations, 
where morphisms in $\Hom(\pi^1,\pi^2)$ are given by intertwiners 
$V\in\B(\Hil_{\pi^1},\Hil_{\pi^2})$,
such that $V\pi^1_I(a)=\pi^2_I(a)V$ for all $I\in \cI$ and $a\in\A(I)$.
Let us denote by $\DHR^0(\A)$ the representations $\pi$ with finite statistical
dimension $d\pi$, which is defined to be 
$$
  d\pi:= [\pi_{I'}(\A(I'))':\pi_I(\A(I))]^{\frac12}
$$
for some $I\in\cI$, where $[M:N]$ is the minimal index. The definition of $d\pi$ does not depend on the choice of $I$.

Let us from now on fix a completely rational local M\"obius covariant net
$\A$ on $\Sc$. The category $\DHR^0(\A)$ 
is a (unitary) modular tensor category \cite{KaLoMg2001}. 
Every $\pi\in\DHR^0(\A)$ is equivalent to a representation 
localized in a given $I_0\in\cI$, \ie it exists a $\rho\cong \pi$
such that $\Hil_\rho=\Hil_\A$ and $\rho_{I_0'}=\id_{\A(I_0')}$. 
Namely, $\pi_{I_0'}(\A(I_0'))$ on $\Hil_\pi$ is spatially 
isomorphic to $\A(I_0')$ on $\Hil_\A$, by the type III property. 
Let $U\colon\Hil_\pi\to\Hil_\A$ be a unitary implementing this isomorphism, then 
$\rho=\{\rho_I:=\Ad U\circ \pi_I\}_{I\in\cI}$ does the job.

This implies that the category $\DHR^{I_0}(\A)$ of 
representations with finite statistical dimensions which are 
localized in $I_0$ has the same irreducible sectors as $\DHR^0(\A)$. 

By Haag duality 
$\rho\in\DHR^{I_0}(\A)$ implies $\rho_I(\A(I))\subset\A(I)$ for every $I\supset I_0$, that means 
such a representation is an endomorphism and 
$d\rho=[\A(I_0):\rho_{I_0}(\A(I_0))]^{\frac12}$ equals the dimension of the
endomorphism.
Together with strong additivity it follows that
all intertwiners are in $\A(I_0)$. In particular, this means that
$\DHR^{I_0}(\A)$ can naturally be seen as a full subcategory of $\End(\A(I_0))$
and that $\DHR^{I_0}(\A)$ is equivalent to $\DHR^0(\A)$. We note that the family $\{\rho_I\}$ is determined by $\rho_{I_0}$ by using strong
additivity and it is really enough to consider $\DHR^{I_0}(\A)$ as a full and
replete subcategory of $\End(\A(I_0))$ and we will drop the index $I_0$. 
Repleteness is just the fact that for $U\in \A(I_0)$ also $\Ad_U\circ\rho$ is localized in $I_0$.

The \textbf{braiding} (also called statistics operator) is given by:
\begin{align*}
  \varepsilon(\rho_1,\rho_2)= \rho_2(U_1^\ast)U_2^\ast U_1\rho_1(U_2)\komma
\end{align*}
where $U_i\in\Hom(\rho_i,\tilde\rho_i)$ and $\tilde\rho_i\in[\rho_i]$ is localized
in $I_i$. Here $I_1,I_2 \subset I_0$ are two disjoint intervals 
such that $I_1>I_2$ ($I_2$ sits clockwise after $I_1$ inside $I_0$). 
We also write $\varepsilon^+$ for $\varepsilon$ 
and define the opposite braiding by
$\varepsilon^-(\rho_1,\rho_2)=\varepsilon^+(\rho_2,\rho_1)^\ast$.

We will interpret $\A$ as the chiral observables or
as chiral symmetries. For example $\A=\Vir_c$ with
$c<1$ is the net generated by the chiral stress energy tensor $T(x)$.
We want to look into CFTs on Minkowski space containing the
chiral observables $\A$ and boundary conditions on $\MM_+$ which ``preserve''
these observables. 

\subsection{Extensions and Q-systems}
Let $M$ be a spacetime, \eg Minkowski space and $\cK$ a set of open spacetime
regions in $M$, \eg the set of double cones. Let $G$ be a group acting locally 
on $M$ and let $G(O)$ be the set of all $g\in G$, such that there is a continuous path
$\gamma$ in $G$  from the identity 
to $g$ such that $\gamma(t)O\in \cK$.
\begin{defi}
    A \emph{local $G$-covariant net} $\A$ on $M$ 
    is a family $\{\A(O) \}_{O\in \cK}$ of von Neumann algebras on a Hilbert 
    space $\Hil$, with the following properties:
    \begin{enumerate}[{\bf A.}]
        \item \textbf{Isotony.} $O_1\subset O_2$ implies 
            $\A(O_1)\subset \A(O_2)$.
        \item \textbf{Locality.} 
            $[\A(O_1),\A(O_2)]=\{0\}$ for all pairwise spacelike separated
$O_1,O_2\in \cK$.
        \item \textbf{$G$-covariance.} There is a unitary positive energy representation
            $U$ of $G$ on $\Hil$, such that
            $U(g)\A(O)U(g)^\ast = \A(gO)$ for all $g\in G(O)$
        \item \textbf{Vacuum.} There is a (up to phase) unique $G$- 
            invariant unit vector $\Omega \in \Hil$ which is 
            cyclic and separating for $\A(O)$ for all $O\in \cK$.
    \end{enumerate}
\end{defi}

A \textbf{$G$-covariant DHR representation} of $\A$ is a compatible family 
$\pi=\{\pi_O\colon\A(O)\to \cB(\Hil_\pi)\}_{O\in\cK}$ of representations on a Hilbert space $\Hil_\pi$, such that for all $O \in \cK$ there exists 
a unitary $V\colon \Hil_\pi\to \Hil$, such that the representation $\rho:=\Ad
V\circ \pi$ is localized in $O$, \ie $\rho_{O_0}=\id_{\A(O_0)}$ for $O_0$
spacelike to $O$,  and that there is a unitary projective
representation $U_\pi$ of $G$, such that $\Ad U_\pi(g)\circ \pi_O = 
\pi_{gO}\circ \Ad U(g)$ for all $g\in G(O)$.

Given two local $G$-covariant nets $\A$ and $\cB$ on Hilbert spaces $\Hil_\A$
and $\Hil_\cB$, respectively,
an \textbf{arrow} $\A \to \cB$ is an isometry $V\colon \Hil_\A\to\Hil_\cB$ and a compatible family  of embeddings
(representation)
$\{\pi_O\colon \A(O)\hookrightarrow \cB(O)\}$
such that for all $O\in\cK$ we have $Va=\pi_O(a)V$,
$VU_\A(g)=U_\cB(g) V$ for all $g\in G$  and $V\Omega_\A=\Omega_\cB$.

$\A$ and $\cB$ are called \textbf{unitary equivalent} if $V$ is a unitary
and $\pi_O$ are isomorphisms.

Let us assume that we have a subnet $\A_0$ of $\cB$, \ie $\A_0(O)\subset\cB(O)$ for all $O$ and 
$U(g)\A_0(O)U(g)^\ast=\A_0(gO)$. Then $\A=\A_0e$ with $e$ the Jones projection on
$\overline{ \vee \A_0(O)\Omega }$ is a $G$-local net on $\Hil_\A:=e\Hil$, in
other words we have an arrow $\A\to \cB$ in the above sense.
We say that $\A$ is a \textbf{subnet} of $\cB$ and $\cB$ is a
\textbf{local extension} of $\A$. By abuse of notation we will not distinguish 
between the net $\A$ and its representation on the bigger Hilbert space $\H$ 
and write $\A\subset\cB$ or $\cB\supset \A$ for an inclusion/extension of nets.

For every connected region we have a subfactor $\A(O)\subset \cB(O)$.
If the 
subfactor is irreducible, we call the extension \textbf{irreducible}
and if the index is finite we call the extension \textbf{finite}.
If we have a finite irreducible extension $\cB$ of $\A$ then  
the corresponding Q-system of $\A(O)\subset \cB(O)$ is a \local irreducible Q-system in $\DHR^O(\A)$
and conversely if we have a \local irreducible Q-system $\Theta$ in $\DHR^O(\A)$
we obtain a finite local extension $\cB$ of $\A$. 
In particular we have a one-to-one
correspondence between \cite{LoRe1995}:
\begin{itemize}
        \item local finite irreducible extensions $\cB\supset \A$ up to unitary
equivalence and
        \item \local irreducible Q-systems $\Theta$ in $\DHR^O(\A)$ up to
equivalvence.
\end{itemize}

If we assume $\Theta$ to be only irreducible, we still have a relatively local 
extension, \ie $[\A(O_1),\cB(O_2)]=\{0\}$ for $O_1$ and $O_2$ spacelike
separated. We call such an extension $\cB\supset\A$ also non-local extension to stress the fact
that we do not assume locality of $\cB$. There is a one-to-one
correspondence between \cite{LoRe1995}:
\begin{itemize}
        \item finite irreducible extensions $\cB\supset \A$ up to unitary
equivalence and
        \item irreducible Q-systems $\Theta$ in $\DHR^O(\A)$ up to
equivalence.
\end{itemize}

\subsection{Representation theory of local extensions}

The following is well-known to experts \cite{Mg2010}.

\begin{prop}
\label{prop:folk}
Let $\A\subset \cB$ a finite index inclusion of local M\"obius covariant nets on
        $\Sc$ and let either net be completely rational.
Then $\A$ and $\cB$ are both completely rational and the 
inclusion is irreducible. 

Further, let $I\in\cI$ be an interval $N:=\A(I) \subset\cB(I)=:M$ and $\NNs=\DHR^I(\A)$, 
and $\Theta$ be the Q-system in $\NNs$ associated with $N\subset M$.
Then $\DHR^I(\cB)=\MMs^0$ as UMTCs and in particular $\DHR(\cB)$ is equivalent 
to $\Mod^0(\Theta)$ and $\Bim^0(\Theta,\Theta)$.
\end{prop}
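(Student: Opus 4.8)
The plan is to break Proposition~\ref{prop:folk} into two essentially independent parts: the "global" statement that both nets are completely rational and the inclusion is irreducible, and the "categorical" statement identifying $\DHR^I(\cB)$ with $\MMs^0$.

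For the first part I would invoke the structural results on finite-index subnets already cited in the text, in particular \cite{Lo2003}: complete rationality is stable under passing to a finite-index subnet or extension, so if either of $\A$, $\cB$ is completely rational then so is the other. Irreducibility of $\A(I)\subset\cB(I)$ should follow from the fact that $\cB$ has a unique vacuum: the relative commutant $\A(I)'\cap\cB(I)$ is globally $\Mob$-fixed (using covariance and Haag duality for $\cB$ together with additivity), hence consists of multiples of the identity because the vacuum is the unique $\Mob$-invariant vector; alternatively one uses that a reducible extension would decompose the vacuum representation of $\cB$ nontrivially. I would spell this out using the Reeh--Schlieder property and Haag duality stated in the preliminaries.

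For the categorical part, the key point is the well-known correspondence: DHR representations of $\cB$ that are localized in $I$ correspond, via $\alpha$-induction (restriction to $\A$), to the local (``dyslexic'') sectors, i.e.\ to $\MMs^0$. Concretely, a $\rho\in\DHR^I(\cB)$ restricts to a DHR endomorphism of $\A$ localized in $I$, which is then a $\Theta$-bimodule via the canonical $Q$-system $\Theta=\Theta_{\bar\iota}$; conversely $\alpha$-induction of a sector $\lambda\in\NNs$ gives an endomorphism of $M$, and $\alpha^+_\lambda$ extends to a (relatively local, hence by locality of $\cB$ genuine) DHR sector of $\cB$ exactly when $\alpha^+_\lambda=\alpha^-_\lambda$, i.e.\ when the corresponding bimodule lies in $\Bim^0(\Theta,\Theta)$. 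This is the content of Rehren's / Böckenhauer--Evans--Kawahigashi analysis: $\MMs^0$ with the relative braiding is precisely the image of $\DHR^I(\cB)$, and the relative braiding is the genuine DHR braiding of $\cB$ (one checks it restricts correctly on intertwiners in $M=\cB(I)$). That $\DHR^I(\cB)$ exhausts $\MMs^0$ uses the $\mu$-index / global dimension bookkeeping: by Lemma~\ref{lem:leq}, $\dim\MMs^0 = \dim\NNs/(d\Theta)^2$, and on the other hand the $\mu$-index multiplicativity $\mu(\A)=[\cB(I):\A(I)]^2\,\mu(\cB)$ (equivalently $\mu(\cB)=\dim\DHR(\cB)$) gives $\dim\DHR^I(\cB) = \mu(\A)/(d\Theta)^2 = \dim\NNs/(d\Theta)^2$; since $\DHR^I(\cB)\hookrightarrow\MMs^0$ is a full inclusion of UMTCs with equal global dimension, it is an equivalence. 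Finally, $\DHR(\cB)\simeq\Mod^0(\Theta)\simeq\Bim^0(\Theta,\Theta)$ follows by combining this with Proposition~\ref{prop:IsoCats}, Remark~\ref{rmk:alphaInduction} and Remark~\ref{rmk:Dyslexic}.

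The main obstacle I anticipate is not any single hard computation but rather making precise the passage between the net-theoretic notion of DHR sector of $\cB$ localized in $I$ and the purely categorical $\Theta$-bimodule picture, and in particular showing surjectivity onto $\MMs^0$ (the inclusion $\DHR^I(\cB)\subset\MMs^0$ is the easy direction). I would handle surjectivity via the dimension count above rather than by constructing representations by hand; this is why completely rationality (finiteness of the $\mu$-index and its multiplicativity under finite extensions, \cite{KaLoMg2001,Lo2003}) is genuinely used. A secondary technical point is verifying that the relative braiding on $\MMs^0$ literally coincides with the statistics operator of $\cB$ computed from spatially separated localizations; this is in \cite{BcEvKa2000} and I would cite it rather than reprove it.
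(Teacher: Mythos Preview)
Your proposal is correct and follows essentially the same strategy as the paper's proof: establish one inclusion between $\DHR^I(\cB)$ and $\MMs^0$ as full subcategories of $\End(M)$, then close via the global-dimension count $\dim\MMs^0=\dim\NNs/(d\Theta)^2$ (Lemma~\ref{lem:leq}) together with $\mu(\cB)=\mu(\A)/[M:N]^2$ from \cite{KaLoMg2001}, and identify the braidings. The only cosmetic difference is the direction of the inclusion you argue first: you set up $\DHR^I(\cB)\hookrightarrow\MMs^0$ (via $\sigma$-restriction putting a DHR sector of $\cB$ into $\MMs$, then into $\MMs^0$), whereas the paper argues $\MMs^0\subset\DHR^I(\cB)$ directly from the characterization in \cite{LoRe1995,BcEv1998} that a sector in $\MDM$ extends to a DHR sector of $\cB$ iff it lies in $\MDM^0$; either direction together with the dimension count yields the equality, and both are covered by the same references you cite.
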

\begin{proof}
Both $\MMs^0$ and $\DHR^I(\cB)$ being full and replete subcategories
of $\End(M)$, the only thing which needs to be checked is that both have the
same irreducible sectors. The braiding on $\MMs^0$ can be checked to give the braiding
on $\Rep(\cB)$ since the braiding is fixed by
the universal property $\varepsilon(\rho_1,\rho_2)=1$ if $I_2$ sits clockwise after $I_1$ inside $I$.
A sector $[\beta]\in \MDM$ is a DHR sector if and only it is in $\MDM^0$ (see
\cite{LoRe1995,BcEv1998}), which
implies $\MMs^0\subset\DHR^I(\cB)$. To see equality, we realize that global
dimensions coincide, namely 
$\dim\DHR^I(\cB)\equiv \mu(\cB)=[M:N]^{-2} \mu(\A) \equiv \dim\NNs /(d\theta)^{2}$ by
\cite{KaLoMg2001}
and 
$\dim\MMs^0= \dim\NNs/ (d\theta)^{2}$
by Lemma \ref{lem:leq}.
\end{proof}
\begin{rmk}
Commutative Q-systems $\Theta$ in a UMTC $\NNs$ are also called \textbf{quantum subgroups}, so
finding quantum subgroups in a given UMTCs $\NNs$
and finding finite index local extensions of a local M\"obius covariant $\A$ 
net with $\DHR^0(\A)\cong \NNs$ is equivalent. 
The representation theory of the extensions can be completely understood on a categorical
level.

An analogous statement for inclusions of rational VOAs appeared recently in 
\cite{HuKiLe2014}.
\end{rmk}

\subsection{Maximal 2D nets with chiral observables \texorpdfstring{$\A$}{A}}
Let $\A$ be a  local M\"obius covariant net on $\Sc\cong \RRc$.
By restriction we can and will see $\A$ as a net on
$\RR$. Then Haag duality of $\A$ on $\RR$ is equivalent to strong
additivity of $\A$. We will assume that $\A$ is completely rational, therefore
this holds automatically.

We denote by $\MM$ the two-dimensional Minkowski space and by $\cK$ the set of 
double cones $O\subset \MM$.  
Each double cone is of the form 
$$O=I\times J:= \{(t,x):t-x\in I,~t+x\in J\}, $$
where $I,J\in\cI_0$ are two intervals on the light-rays
$L_\pm=\{(t,x): t\pm x=0\}$.

The action of $\Mob\cong\PSL(2,\RR)$ on $\RRc$ gives a local action
of $\Mobc$ on $\RR$ as in \cite{KaLo2004}. 
We define $\gG_2=\Mobc\times \Mobc$ which acts locally on Minkowski space $\MM$.

For $O\in \cK$ we denote by $\gG_2(O)$ all $g\in\gG_2$ such that there is a
path $\gamma \colon [0,1]\to \gG_2$ from the identity element $e$ to $g$ 
with $\gamma(t)O\subset \M$ for all $t\in [0,1]$.

We denote by $\A_2$ the net on $\Hil_\A\otimes\Hil_\A$ given by 
$$
  \A_2(I\times J):=\A(I)\otimes \A(J).
$$
It is a local M\"obius covariant net on $\M$ as in \cite{KaLo2004}.
Every DHR representation of $\A_2$ with finite index is a direct sum 
of representations of the form $\rho\otimes\sigma$ where 
$\rho\in\DHR(\A)$ and $\sigma\in\DHR(\A)$. The braiding is given 
by $\varepsilon(\rho_1\otimes\sigma_1,\rho_2\otimes\sigma_2)=
\varepsilon^+(\rho_1,\rho_2)\otimes\varepsilon^-(\sigma_1,\sigma_2)$.
Therefore the category of DHR representations  of $\A_2$ 
with finite statistical dimensions is equivalent 
to $\DHR^I(\A)\boxtimes \overline{\DHR^J(\A)}$.

Let us write $\cB_2\supset \A_2$ for a local, M\"obius covariant, irreducible 
extension of $\A_2$, \ie a local M\"obius covariant net $\cB_2$ 
on Minkowski space $\MM$  on the Hilbert space $\H_{\cB_2}$ 
with irreducible vacuum vector $\Omega$ which is extending 
$\A_2\cong \A\otimes \A$, more precisely there is a representation $\pi$ of 
$\A_2$ on $\H_{\cB_2}$, such that 
$\pi(\A_2(O))\subset \cB_2(O)$ is an irreducible inclusion
of factors and $U(g)\pi(\A(O))U(g)^\ast = \pi(\A(gO))$
for all double cones $O\in\cK$ and all $g\in \gG(O)$.
By abuse of notation we will omit the $\pi$.

We remember that there is 
a one-to-one correspondence between 
local irreducible extensions $\cB_2\supset \A_2$ (up to unitary equivalence) and
irreducible \local Q-systems $\Theta_2$
in $\DHR^I(\A)\boxtimes \overline{\DHR^J(\A)}$ (up to equivalence).

\begin{prop}
        \label{prop:2Dmax}
	Let $\cB_2\supset \A_2$ be a local extension. Then the following statements 
	are equivalent:
	\begin{enumerate}
		\item The net $\cB_2$ is a maximal local irreducible extension,
		  \ie if $\tilde \cB_2\supset \cB_2$ is a local irreducible extension, then 
      $\cB_2= \tilde \cB_2$.		
		\item The index $[\cB_2:\A_2]=\mu_2(\A)\equiv \dim(\DHR(\A))$.
		\item The matrix $(Z_{\lambda\mu})$ is a modular invariant.
		\item The $\mu$-index of $\cB_2$ is 1.
		\item The net $\cB_2$ has no non-trivial superselection
		sectors.
	\end{enumerate}
\end{prop}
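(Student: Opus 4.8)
The plan is to establish the equivalences by a cycle, exploiting the one-to-one correspondence between local irreducible extensions $\cB_2 \supset \A_2$ and \local irreducible Q-systems $\Theta_2$ in $\NNs \boxtimes \overline{\NNs}$ (with $\NNs = \DHR^I(\A)$), together with the categorical facts already assembled, in particular Prop.~\ref{prop:QSystemModular}, Prop.~\ref{prop:folk}, Lemma~\ref{lem:leq}, and the fact that $\mu(\cB_2) = [\cB_2:\A_2]^{-2}\,\mu(\A_2)$ from \cite{KaLoMg2001} applied to $\A_2 \subset \cB_2$. Note $\mu(\A_2) = \mu(\A)^2 = \dim(\NNs)^2 = \dim(\NNs\boxtimes\overline{\NNs})$.

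First I would translate each condition into Q-system language. Condition (2) says $d\theta_2 = \dim(\NNs)$; condition (3) says $Z = (Z_{\lambda\mu})$, where $Z_{\lambda\mu} = \langle \alpha^+_\lambda, \alpha^-_\mu\rangle$, is a modular invariant; condition (4), using $\mu(\cB_2) = [\cB_2:\A_2]^{-2}\mu(\A_2) = (d\theta_2)^{-2}\dim(\NNs)^2$, is equivalent to $d\theta_2 = \dim(\NNs)$, hence (2)$\Leftrightarrow$(4) is immediate; condition (5) says $\DHR(\cB_2)$ is trivial, which by Prop.~\ref{prop:folk} (applied with $\A_2 \subset \cB_2$, both completely rational by Prop.~\ref{prop:folk}) means $\s{\cB_2}{\cB_2}^0$ has only the trivial sector, i.e.\ $\dim\s{\cB_2}{\cB_2}^0 = 1$; but by Lemma~\ref{lem:leq} applied to the \local Q-system $\Theta_2$ we have $\dim\s{\cB_2}{\cB_2}^0 = \dim(\NNs\boxtimes\overline{\NNs})/(d\theta_2)^2 = \dim(\NNs)^2/(d\theta_2)^2$, so (5) is again equivalent to $d\theta_2 = \dim(\NNs)$. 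Thus (2)$\Leftrightarrow$(4)$\Leftrightarrow$(5) all reduce to the same numerical identity, and (2)$\Leftrightarrow$(3) is exactly the content of Prop.~\ref{prop:QSystemModular} (there the equivalence is stated for $\Theta_2$ irreducible \local in $\NNs\boxtimes\overline{\NNs}$, which is our situation). It remains to fold in (1).

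For the equivalence with (1): if $\cB_2$ is not maximal, there is a proper local irreducible extension $\tilde\cB_2 \supsetneq \cB_2$, whose Q-system $\tilde\Theta_2$ strictly contains $\Theta_2$ (as a sub-Q-system in the sense of Prop.~\ref{prop:SubQSystem}), forcing $d\tilde\theta_2 > d\theta_2$; since by Lemma~\ref{lem:leq} every \local Q-system satisfies $d\theta_2 \le \dim(\NNs\boxtimes\overline{\NNs})^{1/2} = \dim(\NNs)$, if $d\theta_2 = \dim(\NNs)$ already then no such $\tilde\cB_2$ can exist, giving (2)$\Rightarrow$(1). Conversely, if $d\theta_2 < \dim(\NNs)$, then $\dim\s{\cB_2}{\cB_2}^0 > 1$ by the Lemma~\ref{lem:leq} computation above, so $\cB_2$ has a non-trivial DHR sector by Prop.~\ref{prop:folk}; any such sector (or a subsector) gives a \local irreducible Q-system in $\s{\cB_2}{\cB_2}^0 = \DHR^I(\cB_2)$ — one may take, e.g., the Longo--Rehren/full-center construction inside $\DHR(\cB_2)$, or more simply observe that the maximal \local sub-Q-system corresponds to a maximal local extension — producing a proper local irreducible extension of $\cB_2$, hence $\cB_2$ is not maximal, i.e.\ $\neg$(2)$\Rightarrow\neg$(1). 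This closes the cycle.

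The main obstacle I expect is the careful bookkeeping in the step (1)$\Leftrightarrow$(2), specifically producing a genuine \emph{proper local} extension of $\cB_2$ from a non-trivial DHR sector: one must ensure the enlarged Q-system is \local (commutative) and irreducible, which is not automatic from merely having a non-trivial sector. The clean way is to invoke that $\DHR^I(\cB_2) \cong \s{\cB_2}{\cB_2}^0$ is itself a UMTC (Prop.~\ref{prop:folk}) and apply the Longo--Rehren construction (Prop.~\ref{prop:LRlocal}) or, better, note that a UMTC with $\dim > 1$ always admits a nontrivial \local irreducible Q-system iff it admits a proper maximal one — but since $\s{\cB_2}{\cB_2}^0$ might already be a nontrivial UMTC with no proper \local extension being \emph{strictly smaller} in dimension than the maximal, the safest route is the dimension-counting argument: maximality of $\cB_2$ forces $\dim\DHR(\cB_2) = 1$ (otherwise the canonical/Longo--Rehren Q-system in $\DHR(\cB_2)$ yields a proper local extension), and $\dim\DHR(\cB_2) = 1 \Leftrightarrow d\theta_2 = \dim(\NNs)$. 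I would present the argument in this dimension-counting form to sidestep the structural subtlety entirely.
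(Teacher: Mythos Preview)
Your treatment of (2)$\Leftrightarrow$(3)$\Leftrightarrow$(4)$\Leftrightarrow$(5) and of (2)$\Rightarrow$(1) is correct and close in spirit to the paper's, though you package (4) and (5) via the dimension formula of Lemma~\ref{lem:leq} and Prop.~\ref{prop:folk}, while the paper simply quotes \cite{KaLoMg2001} for (2)$\Rightarrow$(4)$\Rightarrow$(5) and observes that (5)$\Rightarrow$(1) is immediate.

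The genuine gap is in your argument for (1)$\Rightarrow$(2). You correctly flag the obstacle---producing a proper \emph{local} extension of $\cB_2$ from the mere existence of a nontrivial sector---but your proposed resolution does not work. The Longo--Rehren Q-system associated with a UMTC $\cC$ lives in $\cC\boxtimes\overline{\cC}$, not in $\cC$ itself; applied to $\cC=\DHR(\cB_2)$ it would yield an extension of $\cB_2\otimes\cB_2$, not of $\cB_2$. More to the point, the implicit claim that a nontrivial UMTC always contains a nontrivial \local irreducible Q-system is false in general (think of the Fibonacci or semion categories), so ``$\dim\DHR(\cB_2)>1$'' does not by itself furnish a proper local extension via the mechanism you describe. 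The dimension count does not sidestep the structural subtlety; it runs straight into it.

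The paper handles (1)$\Rightarrow$(2) by a genuinely different, net-theoretic route: assuming $[\cB_2:\A_2]<\mu_2(\A)$, one \emph{adds the boundary} \cite{CaKaLo2013} to obtain an irreducible boundary net $\cB_+\supset\A_+$ with the same index; then \cite[Prop.~2.13]{LoRe2004} forces $[\cB_+^{\mathrm d}:\cB_+]>1$, so $\cB_+$ is strictly smaller than its Haag dual, and passing back to 2D via \cite{LoRe2004} yields a strict local extension $\tilde\cB_2\supsetneq\cB_2$. This step exploits the boundary/bulk correspondence and Haag duality in an essential way. (A purely categorical alternative would require knowing that $\DHR(\cB_2)$, being Witt-trivial as a category of local modules in a Drinfeld center, admits a Lagrangian algebra---true, but not the elementary fact you invoke.)
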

\begin{proof}
    To show (2) $\Rightarrow$ (1) let $\Theta_2$ be a Q-system in $\DHR^I(\A)\boxtimes\overline{\DHR^J(\A)}$
    	giving the extension $\A(I)\otimes\A(J)\subset \cB_2(I\times J)$ and let 
      us assume that $[\cB_2(I\times J):\A(I)\otimes\A(J)]=\mu_2(\A)$.
    	By 
    	Lemma \ref{lem:leq} we have the following inequality:
      \begin{align*}
      	d\Theta_2\equiv [\cB_2:\A_2] &\leq \dim(\DHR(\A\otimes \A))^\frac12
      	\equiv \dim\left(\DHR(\A)\boxtimes\overline{\DHR(\A)}\right)^\frac12
      	\\&=\dim(\DHR(\A))\equiv \mu_2(\A)
      	\punkt
      \end{align*}
       This implies maximality.

For showing (1) $\Rightarrow$ (2), let us assume that $[\cB_2:\A_2]<\mu_2(\A)$. 
      We need to show that there is an extension $\tilde\cB_2\supsetneq\cB_2$.
      This we obtain by adding the boundary \cite{CaKaLo2013}, \ie from $\cB_2$ 
      we obtain a possible reducible boundary net (see Subsec.\ 
      \ref{sec:AddingBoundary}) of which we choose an irreducible subnet $\cB_+$. 
      We claim $\cB_+$ cannot be Haag dual, but this follows because
      $[\cB_+:\A_+]=[\cB_2:\A_2]<\mu_2(\A)$ and then \cite[Prop.~2.13]{LoRe2004}
      implies $[\cB_+^\mathrm{d}:\cB_+]>1$.
      So we have an inclusion $\A_+\subset\cB_+\subsetneq \cB^\mathrm{d}_+$ and
      a corresponding locally isomorphic inclusion $\A_2\subset\cB_2\subsetneq
      \tilde \cB_2$ as in \cite{LoRe2004}, in particular $\cB_2$ was not 
      maximal.

      The statements (2) and (3) are equivalent by Prop.\
\ref{prop:QSystemModular} and the implication (5) $\Rightarrow$ (1) is clear.

  	(2) $\Rightarrow$ (4) follows by calculating the $\mu$ index
\cite{KaLoMg2001} and
likewise  the implication (4) $\Rightarrow$ (5) is \cite[Corollary 32]{KaLoMg2001}.
\end{proof}
\begin{prop} 
	\label{prop:class2Dmax}
        There is a one-to-one correspondence between:
  \begin{enumerate}
    \item maximal local irreducible extensions $\cB_2\supset \A_2$ 
      up to unitary equivalence.
    \item $\Theta_2$ \local irreducible Q-systems in 
      $\DHR^I(\A)\boxtimes \overline{\DHR^I(\A)}$ with $d\theta_2=\mu_2(\A)$
      up to equivalence.
    \item (Non-local) irreducible extensions $\cB\supset\A$ up to Morita equivalence.
    \item Irreducible Q-systems $\Theta$ in $\DHR^I(\A)$ up to Morita equivalence.
  	\item Indecomposable $\NNs$ module categories, 
      where $\N=\A(I)$ and $\NNs=DHR^I(\A)$.
  	\item Local chiral extensions $\A_\mathrm{L}\supset\A$, $\A_\mathrm{R}\supset\A$ together
      with a braided equivalence $\phi\colon\DHR(\A_\mathrm{L})\to\DHR(\A_\mathrm{R})$.
  \end{enumerate}
\end{prop}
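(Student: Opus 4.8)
The plan is to prove Proposition~\ref{prop:class2Dmax} by reducing it to the chain of equivalences already established in the categorical and braided-subfactor part of the paper, and then identifying each categorical object with its conformal-net counterpart. First I would fix an interval $I\in\cI$, set $\N=\A(I)$ and $\NNs=\DHR^I(\A)$, which is a UMTC by \cite{KaLoMg2001}, and recall that $\mu_2(\A)=\dim\NNs$ and that the category of finite-dimensional DHR representations of $\A_2$ is $\NNs\boxtimes\overline{\NNs}$. The equivalence (1)$\Leftrightarrow$(2) is then exactly the combination of the one-to-one correspondence between local irreducible extensions $\cB_2\supset\A_2$ and \local irreducible Q-systems in $\NNs\boxtimes\overline{\NNs}$ (stated in the excerpt, after \cite{LoRe1995}) together with the characterization of maximality in Prop.~\ref{prop:2Dmax}, specifically the equivalence of items (1) and (2) there, which says maximality is equivalent to $d\theta_2=\mu_2(\A)$.

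Next I would establish (2)$\Leftrightarrow$(4)$\Leftrightarrow$(3). The equivalence (2)$\Leftrightarrow$(4) is Prop.~\ref{prop:QSystemModular}: an irreducible \local Q-system $\Theta_2$ in $\NNs\boxtimes\overline{\NNs}$ with $d\theta_2=\dim\NNs$ is precisely a full center $Z(\Theta)$ of an irreducible Q-system $\Theta$ in $\NNs$, and by Prop.~\ref{prop:KongRunkel} two such $\Theta$'s give equivalent full centers if and only if they are Morita equivalent; so equivalence classes of such $\Theta_2$ correspond bijectively to Morita equivalence classes of irreducible Q-systems in $\NNs$. The equivalence (3)$\Leftrightarrow$(4) is then just the dictionary between irreducible Q-systems in $\NNs=\DHR^I(\A)$ and (possibly non-local) finite irreducible extensions $\cB\supset\A$ up to unitary equivalence, recorded in the excerpt after \cite{LoRe1995}; Morita equivalence of Q-systems translates verbatim into the notion of Morita equivalence of extensions. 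For (4)$\Leftrightarrow$(5) I would invoke Prop.~\ref{prop:IsoCats} and Definition~\ref{defi:MoritaEquivalence}: a Q-system $\Theta$ in $\NNs$ determines the module category $\Mod(\Theta)\simeq\NMs$ over $\NNs$, two Q-systems are Morita equivalent exactly when these module categories are equivalent over $\NNs$, and every indecomposable $\NNs$-module category arises from such a $\Theta$ (Prop.~\ref{prop:ClassDHROrbit} together with the general module-category theory of \cite{Os2003}); the fact that the module category is indecomposable corresponds to $\Theta$ being irreducible (not merely a direct sum of Q-systems).

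Finally, for (2)$\Leftrightarrow$(6) I would use Prop.~\ref{prop:ModularQSystems}: a \local irreducible Q-system $\Theta_2$ in $\NNs\boxtimes\overline{\NNs}$ with $d\theta_2=\dim\NNs$ corresponds bijectively (up to isomorphism) to a triple $(\Theta_+,\Theta_-,\phi)$ of \local irreducible Q-systems $\Theta_\pm$ in $\NNs$ together with a braided equivalence $\phi\colon\s{\M_+}{\M_+}^0\to\s{\M_-}{\M_-}^0$ of the ambichiral categories. Translating through the extension/Q-system dictionary and through Prop.~\ref{prop:folk}, which identifies $\s{\M_\pm}{\M_\pm}^0$ with $\DHR(\A_{\mathrm L/\mathrm R})$ as UMTCs, this is precisely the data in item (6): local chiral extensions $\A_\mathrm{L},\A_\mathrm{R}\supset\A$ together with a braided equivalence $\phi\colon\DHR(\A_\mathrm{L})\to\DHR(\A_\mathrm{R})$. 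Assembling all the equivalences gives the proposition. The main obstacle I anticipate is not any single deep step but the bookkeeping required to make the identifications compatible: one must check that ``local'' on the net side matches ``\local'' (commutative) on the Q-system side under all the functors, that the ambichiral category $\MMs^0$ really is the DHR category of the maximal chiral extension (this is where Prop.~\ref{prop:folk} and Remark~\ref{rmk:MaximalChiral} do the work), and that Morita equivalence of extensions as defined for nets coincides on the nose with the categorical Morita equivalence used in Sections~\ref{sec:Morita}--\ref{sec:FullCentre}; once these compatibilities are in place the proposition follows formally.
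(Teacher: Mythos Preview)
Your proposal is correct and follows essentially the same approach as the paper: (1)$\Leftrightarrow$(2) via Prop.~\ref{prop:2Dmax}, (3)$\Leftrightarrow$(4) via \cite{LoRe1995}, (2)$\Leftrightarrow$(4) via Prop.~\ref{prop:QSystemModular} and Prop.~\ref{prop:KongRunkel}, and the remaining items via Prop.~\ref{prop:ModularQSystems}. The only organizational difference is that the paper obtains (4)$\Leftrightarrow$(5)$\Leftrightarrow$(6) in one stroke from Prop.~\ref{prop:ModularQSystems} (whose item (3) is exactly your item (5)), whereas you argue (4)$\Leftrightarrow$(5) separately via the module-category machinery of Section~\ref{sec:Morita} and \cite{Os2003}; both routes are valid and amount to the same content.
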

\begin{proof}
  The correspondence between (1) and (2) is Prop.\ \ref{prop:2Dmax},
 the one between
  (3) and (4) \cite{LoRe1995}.
  Starting with (4) we obtain (2) by applying the full center and it is well
  defined on Morita equivalence classes and injective by
  Prop.~\ref{prop:KongRunkel}. It is surjective
  by Prop.\ \ref{prop:QSystemModular}, so (2) and (4) are equivalent.
  Equivalently, one can start with $\cB_2$ and add the boundary to obtain a Haag
  dual boundary net (as in the proof before) which correspond to a non-local
  extension. The $\alpha$-induction construction gives back the original net.

  The correspondence between (4), (5) and (6) is just 
  Prop.\ \ref{prop:ModularQSystems}, where (6) is (2) of Prop.\ \ref{prop:ModularQSystems} reformulated in the language
of nets, \cf \cite{Mg2010}.
\end{proof}
\begin{rmk}
	We know how the Morita equivalence looks like, see Subsec.\ \ref{sec:MoritaEquivalenceClass}.
\end{rmk}

\subsection{Boundary conditions}

Let $\A$ be a completely rational local M\"obius covariant net 
on $\Sc$, which we will see as a net on $\RR$ by restriction.
Let $\MM_+=\{(t,x)\in\MM: x>0\}$ be Minkowski half-plane and 
let $\cK_+$ be the set of double cones $O\Subset \MM_+$. Double cones $O\in
\cK_+$ are in 
one-to-one correspondence with pairs of proper intervals $I,J\subset \RR$ such that
$I<J$. We write $O=I\times J$.

Let $\A_+$ be the net on $\MM_+$ given by 
\begin{align*}
        \A_+(O) &= \A(I)\vee \A(J)  &O=I\times J
\end{align*}
which is locally covariant w.r.t.\ $G_+$ the universal covering of $\Mob$,
namely 
\begin{align*}
        U(g)\A_+(O )U(g)^\ast &= \A_+(gO) & g\in G_+(O)
\end{align*}
where $G_+$ acts locally on $O=I\times J \in \cK_+$
by $gO= gI\times gJ$ and $G_+(O)$ is the set of all 
$g\in G_+$ such that there is a continuous path $\gamma$ from the identity 
to $g$ such that $\gamma(t)O\in \cK_+$.

By the split property it follows that $\A_+(O)$ is spatially isomorphic 
to $\A_2(O)\equiv\A(I)\otimes\A(J)$. This implies that the net $\A_+$
is locally isomorphic to the net $\A_2$ restricted to $\MM_+$.

A boundary net $\cB_+$ associated with $\A$ is 
a local, (locally) $G_+$-covariant net $\cB_+$, which is an irreducible extension
$\cB_+\supset \A_+$.

Starting with $\cB_+\supset\A_+$, we define the generated net
$\cB_+^\mathrm{gen}\supset \A$
on $\RR$ by
$$
\cB_+^\mathrm{gen}(I)= \bigvee\limits_{\substack{O\in\cK_+\\O\subset W_I}} \cB_+(O) \supset \A(I)
    \komma
$$
where $W_I=\{(t,x) : t \pm x \in I\}$ is the left wedge, such that its 
intersection on the $t$-axis is $I$.

Conversely, given $\cB\supset\A$ a (non-local) extension on $\RR$,
we define 
$$
    \cB^\mathrm{ind}_+(O)=\cB(L)\cap\cB(K)'
    \komma
$$
where $O=I\times J$ and $L\Subset K$, such that 
$L\cap K'=I\cup J$ or equivalently $O=W_L\cap W_K'$.

The dual net is defined by $\cB_+^\mathrm{d}(O)=\cB_+(O')'$ and 
$\cB_+^\mathrm{d}=\cB_+$ if and only if $\cB_+$ is Haag dual.

Then $(\cB^\mathrm{ind}_+)^\mathrm{\mathrm{gen}} = \cB$
and $(\cB_+^\mathrm{gen})^\mathrm{ind}_+=\cB_+^\mathrm{d}=\cB_+$
provided $\cB_+$ was already Haag dual.

Together we have:
\begin{prop}[\cite{LoRe2004,LoRe1995}]
	There is a one-to-one correspondence between the equivalence classes of:
	\begin{enumerate}
		\item boundary nets  $\cB_+$ associated with $\A$, such that $\cB_+$ is Haag dual.
		\item boundary nets  $\cB_+$ associated with $\A$, such that
$\A_+\subset \cB_+$ is maximal.
		\item (Non-local) extensions $\cB\supset \A$ on $\RR$.
		\item Q-systems in $\NNs$, where $\N=\A(I)$ and $\NNs=\DHR^I(\A)$.
	\end{enumerate}
\end{prop}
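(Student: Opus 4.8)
The plan is to establish the one-to-one correspondences by going through the distinguished item (4), Q-systems in $\NNs = \DHR^I(\A)$, and showing each of (1), (2), (3) is in bijection with it. The bijection (3) $\leftrightarrow$ (4) is already recorded in the excerpt: it is the Longo--Rehren correspondence \cite{LoRe1995} between (not necessarily local) finite irreducible extensions $\cB\supset\A$ on $\RR$ and irreducible Q-systems $\Theta$ in $\DHR^I(\A)$ (noting that complete rationality of $\A$ forces finite index, and Haag duality of $\A$ on $\RR$ is strong additivity, so all the hypotheses needed to identify $\DHR^I(\A)$ with a full replete subcategory of $\End(\A(I))$ are in place). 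So the content is really the bijection of boundary nets with $\cB\supset\A$.

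First I would treat (1) $\leftrightarrow$ (3). Given a Haag dual boundary net $\cB_+\supset\A_+$, form $\cB := \cB_+^{\mathrm{gen}}$, the net on $\RR$ generated by the $\cB_+(O)$ over double cones $O\subset W_I$; isotony and relative locality of $\cB\supset\A$ follow from locality of $\cB_+$ and the wedge geometry, and irreducibility of $\A(I)\subset\cB(I)$ from irreducibility of $\A_+\subset\cB_+$ together with the split property (which identifies $\A_+(O)$ spatially with $\A(I)\otimes\A(J)$, hence $\A_+$ locally with $\A_2$ restricted to $\MM_+$). Conversely from $\cB\supset\A$ form $\cB_+^{\mathrm{ind}}(O)=\cB(L)\cap\cB(K)'$ for $O = W_L\cap W_K'$; one checks this is a well-defined, local, $G_+$-covariant net extending $\A_+$ irreducibly. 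The two assignments are mutually inverse: $(\cB_+^{\mathrm{ind}})^{\mathrm{gen}}=\cB$ always, and $(\cB_+^{\mathrm{gen}})^{\mathrm{ind}}_+=\cB_+^{\mathrm d}=\cB_+$ exactly when $\cB_+$ is Haag dual. These are precisely the statements proved in \cite{LoRe2004}; I would cite them, indicating that in our setting complete rationality guarantees finiteness of all indices involved so that the abstract machinery applies.

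Next, (1) $\leftrightarrow$ (2): maximality of $\A_+\subset\cB_+$ is equivalent to $\cB_+$ being Haag dual. Here I would invoke \cite[Prop.~2.13]{LoRe2004}: if $\cB_+$ is not Haag dual then $[\cB_+^{\mathrm d}:\cB_+]>1$, so $\cB_+\subsetneq\cB_+^{\mathrm d}$ is a strictly larger local $G_+$-covariant extension of $\A_+$, contradicting maximality; conversely if $\cB_+$ is Haag dual then any local extension $\tilde\cB_+\supset\cB_+$ satisfies $\tilde\cB_+\subset\tilde\cB_+^{\mathrm d}\subset\cB_+^{\mathrm d}=\cB_+$ by order-reversal of the dual, forcing equality. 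Combining, (1), (2), (3), (4) are all in bijection, and by Proposition~\ref{prop:folk} (or directly) these equivalences are compatible with the identification of $\DHR^I(\A)$ as a subcategory of $\End(\A(I))$, which is what makes (4) meaningful.

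The main obstacle I expect is not any single step but making precise that the passage $\cB_+\mapsto\cB_+^{\mathrm{gen}}$ and its inverse are well-defined on \emph{equivalence classes} (i.e.\ respect the notion of arrow/unitary equivalence of nets) and that the wedge-duality identities $(\cB_+^{\mathrm{ind}})^{\mathrm{gen}}=\cB$ and $(\cB_+^{\mathrm{gen}})^{\mathrm{ind}}_+=\cB_+^{\mathrm d}$ genuinely hold in the locally $G_+$-covariant (rather than translation-covariant) setting; this is exactly the technical heart of \cite{LoRe2004} and I would lean on that reference rather than reprove it, pointing out only that strong additivity and complete rationality of $\A$ supply the hypotheses needed there.
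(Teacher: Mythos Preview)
Your proposal is correct and follows essentially the same route the paper takes: the paper does not give a self-contained proof but records the identities $(\cB^{\mathrm{ind}}_+)^{\mathrm{gen}}=\cB$ and $(\cB_+^{\mathrm{gen}})^{\mathrm{ind}}_+=\cB_+^{\mathrm d}=\cB_+$ for Haag dual $\cB_+$ and then cites \cite{LoRe2004,LoRe1995}, which is exactly the content of your (1)$\leftrightarrow$(3) and (3)$\leftrightarrow$(4) steps. Your added argument for (1)$\leftrightarrow$(2) via \cite[Prop.~2.13]{LoRe2004} and order-reversal of the dual net is the standard one and is implicit in the paper's citation; there is nothing to correct.
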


\begin{defi} 
	Let $\cB_2\supset \A_2$ be local extension, \ie a CFT on Minkowski space.	
	A 
    \textbf{(M\"obius covariant) boundary condition of $\cB_2\supset \A_2$ 
            with chiral symmetry $\A$} 
  is a unitary equivalence class of boundary nets $\cB_+\supset \A_+$, where
	$\cB_2\restriction \MM_+$ is locally covariantly isomorphic to $\cB_+$, more
  precisely there is a compatible family of isomorphisms 
  $\Phi_O\colon\cB_+(O)\to \cB_2(O)$ such that it restricts to an isomorphism
	$\A_+(O)\to \A_2(O)$ for all $O\in\cK_+$	and that $\Phi$ is covariant 
  respect to the covariance $U_{\cB_+}$ of $\overline{\Mob}$ and $U_{\cB_2}$ of
  $\overline{\Mob}\times \overline{\Mob}$ (where $\overline{\Mob}$ is the 
  diagonal subgroup of $\overline{\Mob}\times\overline{\Mob}$).
\end{defi}

\begin{prop}
        \label{prop:BC}
	Let $\cB_2\supset \A_2$ maximal and let 
	$\A\subset \cB$ given by Prop.\ \ref{prop:class2Dmax}.
	Then there is a one-to-one correspondence between:
	\begin{enumerate}
		\item Boundary conditions of $\cB_2\supset \A_2$ with chiral symmetry $\A$.
		\item Unitary equivalence classes of $\cB_a\supset \A$ Morita equivalent 
      to $\cB\supset \A$.
		\item Sectors in 
			$$
				\NMs/\mathrm{Pic}(\MMs)\komma
			$$
                       where $N=\A(I)$, $M=\cB(I)$ and $\NNs=\DHR^I(\A)$. 
	\end{enumerate} 
	In particular the number of boudary conditions 
	of $\cB_2\supset \A_2$ with chiral symmetry $\A$ is 
	less or equal than 
	$$
		|\NDM|\equiv \sum_{\lambda\in\NDN} Z_{\lambda\lambda}\punkt
	$$
\end{prop}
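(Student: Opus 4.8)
The plan is to establish the bijection by relating boundary conditions of $\cB_2 \supset \A_2$ to Morita-equivalent chiral extensions, and then to use the description of Morita equivalence classes already obtained in Section~\ref{sec:Morita}. First I would recall the construction relating a boundary net $\cB_+ \supset \A_+$ to a (non-local) extension $\cB \supset \A$ on $\RR$ via the functors $\cB \mapsto \cB_+^{\mathrm{ind}}$ and $\cB_+ \mapsto \cB_+^{\mathrm{gen}}$ described just above the statement; by the cited Prop.\ from \cite{LoRe2004,LoRe1995}, up to Haag duality these are mutually inverse, giving a one-to-one correspondence between Haag dual boundary nets $\cB_+ \supset \A_+$ and Q-systems in $\NNs = \DHR^I(\A)$. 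So the content of (1)$\Leftrightarrow$(2) is to show that a Haag dual boundary net $\cB_+$ provides a boundary condition of a \emph{given} maximal $\cB_2$ precisely when the associated chiral extension $\cB_a \supset \A$ is Morita equivalent to $\cB \supset \A$.

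For the direction ``boundary condition $\Rightarrow$ Morita equivalent extension'', I would argue that the bulk net reconstructed from $\cB_+$ — obtained via the $\alpha$-induction construction applied to the Q-system $\Theta_a$ of $\cB_a \supset \A$, \ie the full center $Z(\Theta_a)$ by Prop.\ \ref{prop:main} — must be (unitarily equivalent to) $\cB_2$, since $\cB_2 \restriction \MM_+$ is locally isomorphic to $\cB_+$ and $\cB_2$ is maximal, hence determined by its chiral observables via the construction in \cite{CaKaLo2013, LoRe2004}. Therefore $Z(\Theta_a) \cong Z(\Theta) = \Theta_2$, and by Prop.\ \ref{prop:KongRunkel} (the Kong--Runkel result) $\Theta_a$ and $\Theta$ are Morita equivalent, \ie $\cB_a \supset \A$ is Morita equivalent to $\cB \supset \A$. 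Conversely, given $\cB_a \supset \A$ Morita equivalent to $\cB \supset \A$, the full centers agree by Prop.\ \ref{prop:KongRunkel}, so the bulk net reconstructed from $\cB_a$ is again $\cB_2$; then the induced boundary net $(\cB_a)_+^{\mathrm{ind}}$, which is locally isomorphic to $\cB_2 \restriction \MM_+$ and restricts to $\A_+ \cong \A_2$, is a boundary condition of $\cB_2 \supset \A_2$. One needs to check compatibility of the M\"obius covariances — that the diagonal $\overline{\Mob}$-covariance of $\cB_+$ matches the restriction of the $\overline{\Mob} \times \overline{\Mob}$-covariance of $\cB_2$ — but this is inherited from the analogous statement for $\A_+$ versus $\A_2$ and the covariance of the (generalized) Longo--Rehren / $\alpha$-induction construction.

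For (2)$\Leftrightarrow$(3), I would invoke Prop.\ \ref{prop:ClassDHROrbit}: Morita equivalence classes of irreducible Q-systems Morita equivalent to $\Theta$ — equivalently unitary equivalence classes of chiral extensions $\cB_a \supset \A$ Morita equivalent to $\cB \supset \A$ — are in bijection with $\NDM/\Pic(\MMs)$, where $\N = \A(I)$, $\M = \cB(I)$, and $\NNs = \DHR^I(\A)$. Since the correspondence between non-local extensions $\cB \supset \A$ on $\RR$ and Q-systems in $\NNs$ is exactly the one used in Section~\ref{sec:CNet}, and irreducible boundary nets correspond to irreducible Q-systems in the Morita class (by Prop.\ \ref{prop:EquivCat} and Prop.\ \ref{prop:IsoCats} identifying $\s{\N}{\M_a}$-type data with modules), this gives (3). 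The final cardinality bound then follows from the ``curious identity'' corollary in Section~\ref{sec:FullCentre}: $|\NDM| = \sum_{\rho \in \NDN} Z_{\rho\rho} = \sum_{\lambda \in \NDN} Z_{\lambda\lambda}$, and passing to the quotient by $\Pic(\MMs)$ only decreases the count, yielding ``less than or equal to''.

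\textbf{Main obstacle.} The delicate point is the equivalence (1)$\Leftrightarrow$(2): one must show that every M\"obius covariant boundary condition of $\cB_2$ with chiral symmetry $\A$ actually arises, via the ``adding a boundary''/``generated net'' procedure, from a Haag dual boundary net whose chiral extension is Morita equivalent to $\cB$ — that is, that no boundary condition is ``missed'' and the bulk net reconstructed from any boundary condition is forced to be $\cB_2$ rather than some other extension. This requires knowing that the bulk-to-boundary and boundary-to-bulk passages are genuinely inverse when restricted to the maximal case, which ultimately rests on the uniqueness of $\cB_2$ as the $\alpha$-induction construction of its chiral data (Prop.\ \ref{prop:main}) together with the injectivity of the full center on Morita classes (Prop.\ \ref{prop:KongRunkel}); assembling these into a clean argument, and handling the subtlety that a priori a boundary condition need only be \emph{locally} isomorphic to $\cB_2 \restriction \MM_+$, is where most of the work lies. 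I also expect a small amount of care is needed to argue that the relevant boundary net can be taken Haag dual without loss of generality (using that $\cB_2$ is maximal, as in the proof of Prop.\ \ref{prop:2Dmax}).
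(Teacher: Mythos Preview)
Your proposal is correct and follows essentially the same approach as the paper: both directions of (1)$\Leftrightarrow$(2) are obtained by passing between boundary nets and chiral extensions, comparing the associated 2D nets via the $\alpha$-induction/full center construction (Prop.~\ref{prop:main}), and invoking Prop.~\ref{prop:KongRunkel} to translate equality of full centers into Morita equivalence; (2)$\Leftrightarrow$(3) is Prop.~\ref{prop:ClassDHROrbit}, and the count is the ``curious identity''. The paper packages the (1)$\Leftrightarrow$(2) step slightly more cleanly by citing the commutative diagram of \cite[Cor.~2]{LoRe2009} (boundary net $\leftrightarrow$ chiral extension, both mapping to the same 2D net via ``removing the boundary'' and $\alpha$-induction respectively) and dispatches the two subtleties you flag in your ``Main obstacle'' by direct citation: Haag duality of $\cB_{a,+}$ follows from modular invariance of $\cB_2$ via \cite[App.~C]{LoRe2009}, and ``locally isomorphic $\Rightarrow$ isomorphic'' for the 2D extensions is again \cite{LoRe2009}.
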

\begin{proof}
  The following diagram commutes \cite[Cor.~2]{LoRe2009}
  $$
    \begin{tikzpicture}[node distance=2cm, auto]
    	\node (C) {$\{\cB_+\supset\A_+ \text{ maximal}\}$};
    	\node (P) [below of=C] {$\{\cB\supset\A\}$};
    	\node (Ai) [right of=P,node distance=5cm] 
        {$\{\cB_2\supset \A_2\equiv\A\otimes\A\}$};
      \draw[->] (C) to node {removing the boundary} (Ai);
      \draw[<->] (C) to node [swap] {$\sim$} (P);
      \draw[->] (P) to node [swap] {$\alpha$-induction} (Ai);
    \end{tikzpicture}
	\punkt
  $$

  Given a boundary condition, \ie a boundary net $\cB_{a,+}\supset \A_+$ let
  $\cB_a\supset \A$ be the corresponding chiral extension.  We note that
  $\cB_{a,+}$ is Haag dual (\cf \cite[App. C]{LoRe2009}), because $\cB_2$ is
  modular invariant.  If we remove the boundary we obtain $\cB_2\supset \A_2$,
  because the extensions are locally isomorphic and therefore isomorphic, see
\cite{LoRe2009}.

  We conclude by commutativity of the above diagram that $\cB\supset \A$ and
  $\cB_a\supset \A$ are Morita equivalent, namely the $\alpha$-induction 
  construction gives equivalent two-dimensional extensions, which means the 
  full centers are equivalent, which is equivalent to the Morita equivalence of
  $\cB\supset \A$ and $\cB_a\supset \A$.

  Conversely, if we have given a chiral  extension $\cB_b\supset\A$ Morita
  equivalent to $\cB\supset \A$, then $\cB_{b,+}\supset \A_+$ is locally
  equivalent to $\cB_{b,2}\supset \A_2\restriction \MM_+$ obtained by
  $\alpha$-induction. But $\cB_{2,b}\supset \A_2$ is isomorphic to $\cB_2\supset
  \A_2$ by Morita equivalence, so we get a boundary condition (this follows 
   also from \cite{LoRe2004}, realizing that the DHR orbit exhausts the Morita 
  equivalence class).

  Choosing $N=\A(I)$, $M=\cB(I)$ and $\NNs=\DHR^I(\A)$ the Q-systems $\Theta_a$
  corresponding to $\cB_a\supset\A$ which is Morita equivalent to
  $\cB\supset\A$ are in one-to-one correspondence with
  $\NMs/\mathrm{Pic}(\MMs)$ by Prop.\ \ref{prop:ClassDHROrbit}.
\end{proof}
\begin{example}
  We can give several cases as an example.
  \begin{itemize}     \item If $\A$ is holomorphic, \ie $\DHR(\A)$ just contains the vacuum
      sector or equivalently $\mu(\A)=1$, then $\cB_2=\A_2$ is maximal and the only 2D net and 
      $\A_+$ is the only boundary
      condition.
      The family of holomorphic nets contains for example the conformal nets
    $\A_L$ associated with even selfdual lattices \cite{DoXu2006} like the $E_8$
    lattice, Leech lattice etc., 
    the Moonshine net $\A_\natural$ \cite{KaLo2006} and certain framed nets 
    \cite{KaSu2012}.
    \item For $\A$ from the family of conformal nets, for which $\DHR(\A)$ is pointed, 
    it follows from Lemma \ref{lem:pointed} that there is always just one boundary condition for each $\cB_2\supset \A_2$.
    This family for example contains all conformal nets
    $\A_L$ coming from an even lattice $L$ \cite{DoXu2006}, which include
    all loop group conformal nets $\A_{G,1}$ of compact, connected, simply
    connected,  simply laced Lie groups $G$ 
    (the simple one being in one-to-one correspondence with A-D-E Dynkin
diagrams)
    at level 1 \cite{Bi2012}.
    \item If $\A$ is any completely rational net and 
    $\cB_2=\A_\LR\supset \A_2$ given by the trivial Longo-Rehren
    extension, then $\NMs\cong\NNs\cong\DHR(\A)$ and the boundary conditions 
    are given by DHR sectors of $\A$ modulo DHR automorphisms of $\A$.
    This case is sometimes also called the Cardy case.
    \item For $\A=\A_{\SU(2),k}$ the two-dimensional extensions are
    in one-to-one correspondence with
Dynkin diagrams of A-D-E type 
 with Coxeter number $k+2$. The boundary conditions are given by orbits $[\nu]$
of a marked vertex $\nu$ under the automorphism group 
of the Dynkin diagram \cf \cite{KaLoPeRe2007}.
    \item For $\A=\Vir_c$ with $c< 1$, the only possible values for $c$ are  
$c = 1 - 6/m(m + 1)$ with $m = 2, 3, 4, \ldots$.
The maximal two-dimensional extensions are in one-to-one correspondence with
pairs $(G_1,G_2)$ of Dynkin diagrams of A-D-E type 
with Coxeter number $m$ and $m+1$, respectively, \cf \cite{KaLo2004-2}.
        The boundary conditions are given by pairs $([\nu_1],[\nu_2])$ with
        $[\nu_i]$ the orbit of a marked vertex on $G_i$ under the automorphism
        group of $G_i$ ($i=1,2$). This result now follows also from \cite{KaLoPeRe2007}.
\end{itemize}
\end{example}
The invertible objects (automorphisms) in $\MMs$ have to do with invertible
defects (see for an interpretation of invertible defects in a different
framework \cite{DaKoRu2012}). 

The difference between two inequivalent $a,b\in\NMs$ related by an invertible
$\beta\in\MMs$ gets important if we also consider also reducible boundary
conditions in the next section.

\subsection{Reducible boundary conditions}

With the notation as before, let us assume $\cB_2\supset\A_2$ is a maximal extension of 
$\A_2$. Using Prop.~\ref{prop:class2Dmax} we can choose a (non-local) extension 
$\cB\supset \A$  such that $\cB_2$ is given by the $\alpha$-induction
construction of $\cB\supset \A$.

Let $I$ be an interval, $N=\A(I)$, $\NNs=\DHR^I(\A)$,
$M=\cB(I)$ and $\Theta$ the Q-system in $\NNs$ giving $N\subset M$.
Then every $a\in\NMs$ gives a in general reducible Q-system $\Theta_a$
and an extension $\cB_a\supset \A$.

We can define as before 
$$\cB_{a+}(O)=\cB_{a}(L)\cap\cB_{a}(K)'\punkt $$

This net fulfills all the properties 
of a boundary CFT in \cite{LoRe2004}, but the uniqueness 
of the vacuum and the joint irreducibility.

\begin{prop} Let $a\in\NMs$ possibly reducible. Then 
the (reducible) boundary net $\cB_{a,}+\supset\A_+$
is a (reducible) boundary condition for 
$\cB_2\supset\A_2$, which is given by the Q-system $Z(\Theta_a)$. 
\end{prop}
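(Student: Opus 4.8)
The plan is to recognize $\cB_{a,+}$ as the induced boundary net of the non-local extension $\cB_a\supset\A$, to compute its ``bulk'' $2$D net through the removing-the-boundary$/\alpha$-induction dictionary, and to read off the Q-system via Prop.~\ref{prop:main}; the only systematic point of care is that $\Theta_a$, and hence $\cB_a$, is a Q-system that need not be irreducible.

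First I would observe that $\cB_{a,+}(O)=\cB_a(L)\cap\cB_a(K)'$ is precisely the construction $\cB\mapsto\cB_+^{\mathrm{ind}}$ of the subsection on boundary conditions, applied to $\cB=\cB_a$. The verification in \cite{LoRe2004} that $\cB_+^{\mathrm{ind}}\supset\A_+$ is an isotonous, local, locally $G_+$-covariant boundary net uses only the net structure of $\cB_a$, its relative commutants, and the relative locality of $\cB_a\supset\A$ (which holds for any Q-system in $\DHR^I(\A)$); the only two conclusions of \cite{LoRe2004} that genuinely invoke irreducibility of the Q-system are uniqueness of the vacuum and joint irreducibility, which are exactly the properties being dropped in the notion of a reducible boundary condition. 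Thus $\cB_{a,+}\supset\A_+$ is a (Haag dual) reducible boundary net.

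Next I would identify the $2$D net obtained by removing the boundary from $\cB_{a,+}$. For this I would invoke the commuting triangle of \cite[Cor.~2]{LoRe2009} already used in the proof of Prop.~\ref{prop:BC}: the composite $\cB_a\supset\A \mapsto (\cB_a)_+^{\mathrm{ind}} \mapsto (\text{remove boundary})$ equals the $\alpha$-induction construction applied directly to $\cB_a\supset\A$. The argument for that triangle --- dual to the ``adding a boundary'' construction of \cite{CaKaLo2013}, which is itself the reducible-boundary situation --- uses only the generated-net and relative-commutant passages $\cB_+\mapsto\cB_+^{\mathrm{gen}}$ and $\cB\mapsto\cB_+^{\mathrm{ind}}$, so it carries over to reducible $\cB_a$ unchanged. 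Since the Q-system of $\A(I)\subset\cB_a(I)$ in $\NNs=\DHR^I(\A)$ is $\Theta_a$, Def.~\ref{defi:AlphaInductionConstruction} produces a Q-system in $\NNs\boxtimes\overline{\NNs}$, and by Prop.~\ref{prop:main} --- whose proof, through Lemma~\ref{lem:Isometry} and Lemma~\ref{lem:IsomorphicHilbertSpaces}, uses irreducibility only of the auxiliary sectors $\rho,\sigma\in\NNs$ and never of the Q-system being centered --- this equals $Z(\Theta_a)$. This already proves that $\cB_{a,+}$ is ``given by the Q-system $Z(\Theta_a)$''.

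It remains to check that this bulk net is the \emph{prescribed} $\cB_2$, i.e.\ that $Z(\Theta_a)\cong Z(\Theta)$, equivalently that $\Theta_a$ lies in the Morita class of $\Theta$ (after which Prop.~\ref{prop:KongRunkel}, or rather the Morita-invariance of the full center from \cite{KoRu2008}, finishes). For irreducible $a$ this is Lemma~\ref{lem:DHROrbit}; for general $a\in\NMs$ I would rerun that lemma with ``irreducible'' deleted --- two subfactors with the same Q-system are still unitarily conjugate, so $\bar\iota_a^{-1}\circ a\colon\M\to\M_a$ is an isomorphism (although $\N\subset\M_a$ is now a reducible inclusion), and together with the module-functoriality of $\Phi$ in Prop.~\ref{prop:IsoCats} this yields an equivalence $\Mod(\Theta_a)\simeq\s{\N}{\M_a}\simeq\NMs\simeq\Mod(\Theta)$ of $\NNs$-module categories. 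I expect this last step to be the main obstacle: one must make the Morita equivalence $\Theta_a\sim\Theta$ precise in the reducible case, where $\theta_a=a\bar a=\bigoplus_{ij}a_i\bar a_j$ is strictly larger than $\bigoplus_i\theta_{a_i}$ over the irreducible summands of $a$ and $\M_a$ carries a reducible inclusion of $\N$; the rest is a careful audit of which earlier statements were phrased for irreducible Q-systems out of convenience rather than necessity. Granting it, the bulk of $\cB_{a,+}$ is $\cB_2$, so $\cB_{a,+}\supset\A_+$ is a (reducible) boundary condition of $\cB_2\supset\A_2$.
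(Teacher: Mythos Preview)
Your route differs from the paper's. Instead of pushing the removing-the-boundary triangle, Prop.~\ref{prop:main}, and Prop.~\ref{prop:KongRunkel} through for a reducible Q-system, the paper reduces immediately to the irreducible case (already handled by Prop.~\ref{prop:BC}). It decomposes the inclusion $\iota\colon\A(I)\hookrightarrow\cB_a(I)$ into irreducibles $\iota_i\prec\iota$ via the minimal projections $p_i\in\Hom(\iota,\iota)$, and rewrites both $\A(I)\subset\cB_a(I)$ and $\A_+(O)\subset\cB_{a,+}(O)$ as explicit diagonal matrix inclusions inside $\cB_a(I)\otimes M_n(\CC)\cong\cB_a(I)$. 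Since each irreducible $\Theta_{\bar\iota_i}$ lies in the Morita class of $\Theta$, Prop.~\ref{prop:KongRunkel} applies directly (to irreducible Q-systems) and each $\cB_{i,+}$ is a boundary condition for $\cB_2$; the diagonal presentation then furnishes the required local isomorphism of $\cB_{a,+}\supset\A_+$ with $\cB_2\supset\A_2$ over $\MM_+$.

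Your approach, if completed, would give a uniform argument that treats irreducible and reducible $a$ on the same footing, but the obstacle you correctly flag is genuine: Def.~\ref{defi:MoritaEquivalence} and Prop.~\ref{prop:KongRunkel} are stated only for irreducible Q-systems, and the commuting triangle of \cite{LoRe2009} is likewise formulated in the irreducible setting. Extending these is plausible (and versions exist in the categorical Frobenius-algebra literature), but it is not carried out in this paper, and your ``careful audit'' would have to supply it. The paper's decomposition sidesteps this entirely by invoking the heavy machinery only on the irreducible summands and assembling the result by elementary matrix algebra.
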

\begin{proof} If $a$ is irreducible this is already proven. 

Let $a$ be reducible and let 
$\Theta_a=\bar\iota\iota$ be the Q-system with inclusion $\iota(\A(I))\subset
\cB_a(I)$.
Let $\{p_i\}_{i=1}^n$ be a set of 
minimal projections in $\iota(\A(I))' \cap \cB_{a}(I)=\Hom(\iota,\iota)$ with $\sum_{i=1}^n
p_i=1$ with
corresponding morphisms
$\iota_i\prec \iota$. By the usually Reeh--Schlieder argument, the projection
do not depend on the choice of $I$.
The inclusion $\iota(\A(I))\subset\cB_a(I)$ is conjugated to
\begin{align*}
\left\{\begin{pmatrix} \iota_1(a) & &\\
                                        & \ddots &\\   
                && \iota_n(a)
        \end{pmatrix}
        : a\in\A(I)\right\} \subset \cB_a(I)\otimes M_n(\CC) \cong \cB_a(I)
        \punkt
\end{align*} 
With the same notation $\A_+(O)\subset\cB_{a,+}(O)$ is conjugated to:
\begin{align}
        \label{eq:inclusion}
        \left\{\begin{pmatrix} \iota_1(a) & &\\
                                        & \ddots &\\   
                && \iota_n(a)
        \end{pmatrix}
        : a\in\A_+(O)\right\} \subset
        \left\{\begin{pmatrix} b & &\\
                                        & \ddots &\\   
                && b
        \end{pmatrix}
        : b\in\cB_{a,+}(O)\right\} 
        \punkt
\end{align}

Because $\Theta_2:=Z(\Theta_a)$ and $Z(\Theta_{\bar \iota_i})$ are equivalent
(by Prop.\ \ref{prop:KongRunkel}) 
every $\cB_{i,+}\supset\A_+$ is a boundary condition for
$\cB_{2}\supset\A_{2}$. But then also the inclusion $\cB_2\supset\A_2$ is locally 
isomorphic to $\cB_{a,+}\supset\A_+$ by (\ref{eq:inclusion}) and the isomorphism restricted to $\A_2$
 gives a local isomorphism of $\A_2$ restricted to $\MM_+$ and $\A_+$. 
\end{proof}
Note that in the reducible case the vacuum $\Omega$ of $\cB_+$ is neither cyclic 
nor unique and that $\Omega=\sum_{i=1}^n\Omega_i$ with $\Omega_i=p_i\Omega$. The
restriction of $\cB_+$ to the subspace $\overline{\cB_+(O)\Omega_i}$ is unitarily
equivalent to the boundary condition coming from $\iota_i$.
In other words, $\NMs\ni a\mapsto \cB_{a,+}$ maps direct sums of sectors to
direct sums of boundary conditions.

\begin{example}
        Consider $a,b\in\NMs$ irreducible and mutually inequivalent but related by an automorphism
$\beta\in\MMs$, or equivalently $\Theta_a\cong\Theta_b$. 
        This means the boundary conditions
        coming from $a$ and $b$ are the same, but for example the boundary
conditions coming from
        $c:=a\oplus a$ and $d:=a\oplus b$ are different. This can be seen
        for example by regarding the relative commutants of the subfactors
        associated with $\Theta_c$ and $\Theta_d$, namely
        $\bar c(N)'\cap N\cong \CC\oplus \CC$, while $\bar d(N)'\cap N \cong
M_2(\CC)$.
\end{example}

\subsection{Adding the boundary}
\label{sec:AddingBoundary}
In \cite{CaKaLo2013} a purely operator algebraic construction of 
all boundary conditions is given. 
As a result a boundary net is obtained which is the direct sum of all boundary
conditions.

Let us consider the inclusion
$$
	\A(I)\otimes \A(J) \subset \cB_2(O) 
$$
for some fixed $O=I\times J\Subset W$
and let $\Theta_2$ be the associated Q-system in 
$\DHR^I(\A) \boxtimes \overline{\DHR^J(\A)}$.
Let $\Omega$ be the vacuum in $\H_\A$ and let us define 
the state 
$\varphi_0(x\otimes y)=(\Omega,xy\Omega)$
for $x\in \A(I)$, $y\in \A(J)$ and let $\varepsilon_O \colon \cB_2(O)\to
\A_2(O)\cong \A_+(O)$ be the conditional expectation.
This gives a state $\varphi=\varphi_0\circ \varepsilon_0$ on $\cB_2(O)$ (which can be extended to 
a state on $\mathfrak{A}_2(W)$). Using the GNS representation 
one get an inclusion $\A_+(O)\subset \cB_+(O)$ on a bigger Hilbert space and
which is by construction isomorphic to $\A_2(O)\subset\cB_2(O)$. 
This construction extends to  $\mathfrak{A}_2(W)$ and gives a (reducible) boundary net
$\{\cB_+(O)\}_{O\in\cK_+}$. 
Let us define $\cB(I)=\bigvee_{\cK_+\ni O \subset W(I)}\cB_+(O)$ where $W(I)$ is the
left wedge such that its intersection with the time axis $x=0$ is equals $I$.
This gives a non-local extension $\cB\supset\A$.
Let us fix $L \supset I \cup J$, then the Q-system of $\cB(L)\supset \A(L)$
can be chosen to be localized in $I\cup J$ and it can be in particular trivially extended
from the inclusion $\A_+(O)\subset \cB_+(O)$ using strong additivity. Let's
denote its Q-system by $\tilde \Theta$. 
\begin{prop}
	Let $\cB_2 \supset \A_2$ be a local irreducible extension with Q-system $\Theta_2$.
	The Q-system of the inclusion
    $\A(I)\subset \cB(I)$, where $\cB=\cB_+^\mathrm{gen}$ and 
    $\cB_+$ is obtained
	by adding the boundary
	is equivalent to the Q-system $T(\Theta_2)$.
\end{prop}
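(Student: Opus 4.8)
The plan is to identify the Q-system produced by the ``adding the boundary'' construction with the image of $\Theta_2$ under the adjoint functor $T$ of the full center. Recall from Prop.~\ref{prop:DirectSumQSystem} that $T(Z(\Theta)) \cong \bigoplus_{a\in\NDM}\Theta_a$, and that by Prop.~\ref{prop:class2Dmax} we may realize $\cB_2$ as the $\alpha$-induction construction of some $\A\subset\cB$, so $\Theta_2 \cong Z(\Theta)$ where $\Theta$ is the Q-system of $\A(I)\subset\cB(I)$. Thus it suffices to show that the generated net $\cB = \cB_+^{\mathrm{gen}}$ has, as its Q-system, the direct sum $\bigoplus_{a\in\NDM}\Theta_a$. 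This is exactly the content of the ``adding the boundary'' construction of \cite{CaKaLo2013}: the boundary net $\cB_+$ obtained there is the direct sum over all boundary conditions, which by Prop.~\ref{prop:BC} are parametrized by $\NDM/\Pic(\MMs)$, but with the direct-sum (reducible) construction one recovers all of $\NDM$ (including the automorphism-related duplicates), so $\cB = \bigoplus_{a\in\NDM}\cB_a$ as an extension of $\A$, with Q-system $\bigoplus_{a\in\NDM}\Theta_a$.

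Concretely, first I would unpack the GNS construction in \cite{CaKaLo2013}: the state $\varphi = \varphi_0\circ\varepsilon_O$ on $\cB_2(O)$ is a product of the vacuum state on $\A(I)$ and $\A(J)$ composed with the (vacuum-preserving) conditional expectation, and the inclusion $\A_+(O)\subset\cB_+(O)$ it generates is, as a subfactor, determined by the bimodule / sector decomposition of $\cB_2(O)$ over $\A_2(O)$. The point is that this sector decomposition is governed by $\theta_2 = \bigoplus_{\mu\nu} Z_{\mu\nu}\,\mu\boxtimes\bar\nu$, and the $\sigma$-restriction / composition with the diagonal embedding collapses $\mu\boxtimes\bar\nu$ to $\mu\bar\nu$, which is precisely $T(\theta_2)$. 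So I would show that $\tilde\theta = T(\theta_2)$ on the level of endomorphisms, and then upgrade this to an equivalence of Q-systems by checking that the multiplication $\tilde x$ coming from the operator-product / GNS structure matches $x_{T(\Theta_2)} = \mu_{\theta_2,\theta_2}^\ast\cdot T(x_2)$ as written out in Subsec.~\ref{SubSec:Adjoint}. The braiding-twist factor $\rho_j(\varepsilon(\rho_k,\bar\sigma_j))$ in the explicit formula for $x_{T(\Theta)}$ should appear naturally from reordering the left- and right-chiral generators when one multiplies operators localized in $I$ against those localized in $J$; this is the ``curious identity'' of \cite{Ev2002} promoted to the Q-system level, which is exactly Prop.~\ref{prop:DirectSumQSystem}.

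The cleanest route, and the one I would actually write, avoids re-deriving the multiplication by hand: combine Prop.~\ref{prop:class2Dmax} (to write $\Theta_2 = Z(\Theta)$ for a chiral $\Theta$) with the commuting diagram from \cite[Cor.~2]{LoRe2009} used in the proof of Prop.~\ref{prop:BC} — ``adding the boundary'' followed by ``removing the boundary'' ($\alpha$-induction) returns $\cB_2$. The reducible boundary net $\cB_+$ is by construction the direct sum of all irreducible boundary conditions, hence $\cB_+^{\mathrm{gen}} = \bigoplus_{a\in\NDM}\cB_a$ with $\A\subset\cB_a$ the chiral extension attached to $a\in\NDM$ (here the full $\NDM$, not the quotient by $\Pic$, since the direct-sum construction is sensitive to isomorphic-but-inequivalent summands, as in the Example following Prop.~\ref{prop:BC}). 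The Q-system of a direct sum of subfactors is the direct-sum Q-system $\bigoplus_{a\in\NDM}\Theta_a$ (Subsec.~\ref{SubSec:Adjoint}), and by Prop.~\ref{prop:DirectSumQSystem} this equals $T(Z(\Theta)) = T(\Theta_2)$, up to equivalence.

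The main obstacle I anticipate is matching the \emph{multiplicative} structure, not just the underlying object: one must verify that the inclusion $\A_+(O)\subset\cB_+(O)$ built from the GNS state $\varphi$ really reconstructs the $\alpha$-induction Q-system with its particular associativity datum, i.e.\ that no spurious $2$-cocycle ambiguity (second cohomology, Rem.~\ref{rmk:2ndCom}) creeps in between $\tilde\Theta$ and $T(\Theta_2)$. This is handled by the ``return trip'' argument — applying $\alpha$-induction to $\cB = \cB_+^{\mathrm{gen}}$ gives back $\cB_2$ with Q-system $Z(T(\Theta_2))$, and since $\Theta$ is a sub-Q-system of $T(Z(\Theta))$ with $Z$ applied recovering $Z(\Theta) = \Theta_2$, the equivalence class is pinned down; alternatively one cites Prop.~\ref{prop:DirectSumQSystem} directly, which was precisely proven at the Q-system (not merely object) level using the general results of \cite{KoRu2008}. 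I would state the proof in this second, shorter form and remark that it makes the operator-algebraic ``adding the boundary'' construction of \cite{CaKaLo2013} a concrete realization of the categorical adjoint $T$.
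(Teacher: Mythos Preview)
Your ``cleanest route'' has a genuine gap: the proposition is stated for an \emph{arbitrary} local irreducible extension $\cB_2\supset\A_2$, not only a maximal one, so you cannot in general write $\Theta_2 = Z(\Theta)$ for a chiral $\Theta$ (Prop.~\ref{prop:class2Dmax} and Prop.~\ref{prop:QSystemModular} characterize exactly the maximal case). Your argument via Prop.~\ref{prop:DirectSumQSystem} and the decomposition $\bigoplus_{a\in\NDM}\Theta_a$ therefore only proves the proposition under an extra hypothesis. There is also a mild circularity: you invoke the result of \cite{CaKaLo2013} that the boundary net is the direct sum of all boundary conditions, but the point of this proposition in the paper is precisely to give an independent identification of the Q-system produced by that construction, from which the corollary (an alternative proof of Prop.~\ref{prop:BC}) then follows.

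The paper's proof is in fact your ``concrete'' route, but with a simplification you missed. By the split property the inclusion $\A_+(O)\subset\cB_+(O)$ is isomorphic to $\A_2(O)\subset\cB_2(O)$, so its Q-system $\tilde\Theta$ lives in $\End(\A(I)\vee\A(J))$; extending by strong additivity to $\A(L)$ with $L\supset I\cup J$ gives a tensor functor $\tilde T\colon \DHR^O(\A_2)\to \DHR^L(\A)$ sending $\rho^I\boxtimes\bar\sigma^J$ to $\rho^I\bar\sigma^J$. The key point is that because $I<J$, the braiding $\varepsilon(\rho_2,\bar\sigma_1)$ is \emph{trivial} for these localizations, so the monoidal structure morphism $\mu$ of (\ref{eq:mu}) is the identity and $\tilde T$ coincides on the nose with the abstract functor $T$ of Subsec.~\ref{SubSec:Adjoint}. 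Hence $\tilde\Theta \cong T(\Theta_2)$ as Q-systems, with no need to match multiplications by hand or worry about a $2$-cocycle ambiguity. This argument is uniform in $\Theta_2$ and does not use maximality.
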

\begin{proof}
We have to show that $\tilde \Theta$ is equivalent to $T(\Theta_2)$, where we
see $\Theta_2$ as a Q-system by the equivalence $\NNs\boxtimes\overline\NNs\cong
\DHR^O(\A_2)$.

An endomorphism $\rho^I\boxtimes\bar\sigma^J$ gives an endomorphism 
$\rho^I\bar\sigma^J \in \End(\A(I)\vee\A(J))$ 
and this gives actually an isomorphism of tensor categories 
$$
        \End(\A(I)\otimes\A(J)) \cong \End(\A(I)\vee \A(J))
        \punkt
$$
Starting from an object in $\DHR^O(\A_2)$ the image is a localized endomorphism
of $\A(I)\vee \A(J)$ which can by strong additivity be extended to 
a localized endomorphism of $\End(\A(L))$, 
so we get a tensor functor
$$
      \tilde T\colon\DHR^I(\A_2) \to\DHR^L(\A)\equiv\NNs
$$
where we choose $N:=\A(L)$ and $\NNs=\DHR^L(\A)$. 
We note that 
the  $\mu$ from (\ref{eq:mu}) is  trivial as is 
$\varepsilon(\rho_2,\bar\sigma_1)$ because of the order of localization. 

So the functor 
$$
      \NNs\boxtimes\overline\NNs \cong \DHR^O(\A_2) \to\DHR^L(\A)\equiv\NNs
$$
is by construction equivalent to the tensor $T$ from Subsec.\
\ref{SubSec:Adjoint} 
and, in particular $\Tilde\Theta$ is equivalent to $T(\Theta_2)$.
\end{proof}
This gives as an alternative proof of Prop.\ \ref{prop:BC}.
Let us assume $\cB_2$ was modular invariant/maximal. 
All boundary conditions are obtained by the adding the boundary 
construction, and by Prop.\ \ref{prop:DirectSumQSystem}
we can conclude: 
\begin{cor}
All boundary conditions of $\cB_2$ come from an
$a\in\NDM$, where $N=\A(I)$, $M=\cB(I)$, 
$\NNs=\DHR^I(\A)$ and $\cB\subset \A$ is any 
(non-local) extension giving $\cB_2$ by the $\alpha$-induction construction.
\end{cor}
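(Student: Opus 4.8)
The plan is to combine two facts already established in the paper: first, the ``adding the boundary'' construction produces the net $\cB_+^{\mathrm{gen}}\supset\A$ whose Q-system in $\NNs=\DHR^I(\A)$ is $T(\Theta_2)$ (the previous proposition), and second, by Prop.~\ref{prop:DirectSumQSystem} we have an equivalence of Q-systems $T(Z(\Theta))\cong\bigoplus_{a\in\NDM}\Theta_a$, where $\Theta$ is the Q-system of the chosen non-local extension $\cB\supset\A$ giving $\cB_2$ via the $\alpha$-induction construction. Since $\cB_2$ is maximal, Prop.~\ref{prop:class2Dmax} (equivalence of (1),(2),(4)) and Prop.~\ref{prop:QSystemModular} tell us that $\Theta_2\equiv Z(\Theta)$ for this $\Theta$. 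Hence the Q-system of $\cB_+^{\mathrm{gen}}$ is $T(Z(\Theta))\cong\bigoplus_{a\in\NDM}\Theta_a$, i.e.\ a direct sum of the (generally reducible) Q-systems $\Theta_a$ indexed by the irreducible sectors $a\in\NDM$.

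Next I would invoke the direct-sum-of-Q-systems formalism developed just before Prop.~\ref{prop:DirectSumQSystem}: a Q-system which decomposes as $\bigoplus_i\Theta_{a_i}$ corresponds to the extension $\N\subset\bigoplus_i\M_{a_i}$, and the central projections $p_i=T_iT_i^\ast$ decompose the vacuum Hilbert space of $\cB_+^{\mathrm{gen}}$ into the sectors $\overline{\cB_+(O)\Omega_i}$ with $\Omega_i=p_i\Omega$. By the reducible-boundary-conditions proposition in the preceding subsection, the restriction of $\cB_+^{\mathrm{gen}}$ to $\overline{\cB_+(O)\Omega_i}$ is precisely the boundary condition coming from $a_i\in\NDM$, and each such restriction is a genuine boundary condition for $\cB_2\supset\A_2$ (because $Z(\Theta_{a_i})\equiv Z(\Theta)\equiv\Theta_2$ by Prop.~\ref{prop:KongRunkel} together with the Morita equivalence of all the $\Theta_a$'s established in Sect.~\ref{sec:Morita}).

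The remaining point — and the main thing to nail down — is \emph{exhaustiveness}: that \emph{every} boundary condition of $\cB_2$ arises among the summands of $T(\Theta_2)$, i.e.\ from some $a\in\NDM$. Here I would argue as follows. By Prop.~\ref{prop:BC}, any boundary condition of $\cB_2$ corresponds to a chiral extension $\cB_b\supset\A$ Morita equivalent to $\cB\supset\A$, hence (Prop.~\ref{prop:ClassDHROrbit}) to a sector in $\NDM/\Pic(\MMs)$, and in particular to some irreducible $a\in\NDM$ via $\Theta_a$. Since the ``adding the boundary'' construction is functorial/compatible with the isomorphisms described in \cite{CaKaLo2013}, and since $T(Z(\Theta))\cong\bigoplus_{a\in\NDM}\Theta_a$ already contains $\Theta_a$ for every $a\in\NDM$ as a sub-Q-system (cf.\ the final remark of Sect.~\ref{sec:FullCentre}: $\Theta$ is a sub-Q-system of $T(Z(\Theta))$, and likewise each $\Theta_a$ appears), every such boundary condition is realized as a direct summand. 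The one subtlety to check carefully is that distinct $a\in\NDM$ giving \emph{equivalent} $\Theta_a$ (related by an automorphism in $\Pic(\MMs)$, as in the example after Prop.~\ref{prop:BC}) still count as ``coming from an $a\in\NDM$'' — which is exactly what the statement asserts, so no collapsing to the quotient $\NDM/\Pic(\MMs)$ is needed here; the corollary is deliberately phrased at the level of $\NDM$, matching the direct sum $\bigoplus_{a\in\NDM}\Theta_a$.

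I expect the bookkeeping between the three Hilbert-space pictures (the GNS space from adding the boundary, the direct-sum factor $\bigoplus_a\M_a$, and the individual boundary nets $\cB_{a,+}$) to be the only delicate part; everything else is a direct citation of Prop.~\ref{prop:DirectSumQSystem}, Prop.~\ref{prop:KongRunkel}, Prop.~\ref{prop:QSystemModular}, and the reducible-boundary-condition proposition. Concretely, the proof would read: by the previous proposition the Q-system of $\cB_+^{\mathrm{gen}}$ is $T(\Theta_2)$; since $\cB_2$ is maximal, $\Theta_2\equiv Z(\Theta)$ by Prop.~\ref{prop:class2Dmax}; hence by Prop.~\ref{prop:DirectSumQSystem}, $T(\Theta_2)\cong\bigoplus_{a\in\NDM}\Theta_a$; the corresponding boundary net decomposes into the boundary conditions indexed by $a\in\NDM$, and since the adding-the-boundary construction produces \emph{all} boundary conditions (as recalled at the start of this subsection, citing \cite{CaKaLo2013}), the claim follows.
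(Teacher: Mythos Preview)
Your proposal is correct and follows essentially the same approach as the paper: the previous proposition identifies the Q-system of the added-boundary net as $T(\Theta_2)$, maximality gives $\Theta_2\equiv Z(\Theta)$, Prop.~\ref{prop:DirectSumQSystem} yields $T(Z(\Theta))\cong\bigoplus_{a\in\NDM}\Theta_a$, and the fact (from \cite{CaKaLo2013}) that adding the boundary produces \emph{all} boundary conditions finishes it. Your final paragraph is exactly the paper's argument; the earlier detour through Prop.~\ref{prop:BC} and Prop.~\ref{prop:ClassDHROrbit} for exhaustiveness is superfluous, since the completeness of the adding-the-boundary construction already supplies that.
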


\section*{Acknowledgements}
The authors would like to thank
Karl-Henning Rehren for discussions and remarks
on the manuscript. We thank 
Ingo Runkel and Christoph Schweigert
for E-mail correspondence. 
M.B.~would like to thank
David~E.~Evans, Noah Snyder and 
Chenchang Zhu for discussions.

\renewcommand{\eprint}[1]{\href{http://arxiv.org/abs/#1}{#1}} 
\def\cprime{$'$}

\def\cprime{$'$}

\begin{bibdiv}
\begin{biblist}

\bib{BaDoHe2011}{article}{
	 author={Bartels, Arthur},
   author={Douglas, Christopher L.},
   author={Henriques, Andr{\'e}},
   title={Dualizability and index of subfactors},
   journal={Quantum Topol.},
   volume={5},
   date={2014},
   number={3},
   pages={289--345},
   issn={1663-487X},
   review={\MR{3342166}},
   doi={10.4171/QT/53},
}

\bib{BcEv2000}{article}{
      author={B\"ockenhauer, Jens},
      author={Evans, David~E.},
       title={{Modular invariants from subfactors: {T}ype {I} coupling matrices
  and intermediate subfactors}},
        date={2000},
        ISSN={0010-3616},
     journal={Comm. Math. Phys.},
      volume={213},
      number={2},
       pages={267--289},
         url={http://dx.doi.org/10.1007/s002200000241},
      review={\MR{1785458 (2001g:46142)}},
}

\bib{BcEv2001}{incollection}{
      author={B\"ockenhauer, Jens},
      author={Evans, David~E.},
       title={{Modular invariants and subfactors}},
        date={2001},
   booktitle={{Mathematical physics in mathematics and physics ({S}iena,
  2000)}},
      series={{Fields Inst. Commun.}},
      volume={30},
   publisher={Amer. Math. Soc.},
     address={Providence, RI},
       pages={11--37},
      review={\MR{1867545 (2002i:46064)}},
}

\bib{BcEv1998}{article}{
      author={B\"ockenhauer, Jens},
      author={Evans, David~E.},
       title={{Modular invariants, graphs and {$\alpha$}-induction for nets of
  subfactors. {I}}},
        date={1998},
        ISSN={0010-3616},
     journal={Comm. Math. Phys.},
      volume={197},
      number={2},
       pages={361--386},
      eprint={arXiv:hep-th/9801171},
         url={http://dx.doi.org/10.1007/s002200050455},
      review={\MR{1652746 (2000c:46121)}},
}

\bib{BcEvKa2000}{article}{
      author={B\"ockenhauer, Jens},
      author={Evans, David~E.},
      author={Kawahigashi, Yasuyuki},
       title={{Chiral structure of modular invariants for subfactors}},
        date={2000},
        ISSN={0010-3616},
     journal={Comm. Math. Phys.},
      volume={210},
      number={3},
       pages={733--784},
         url={http://dx.doi.org/10.1007/s002200050798},
      review={\MR{1777347 (2001k:46097)}},
}

\bib{BcEvKa1999}{article}{
      author={B\"ockenhauer, Jens},
      author={Evans, David~E.},
      author={Kawahigashi, Yasuyuki},
       title={{On {$\alpha$}-induction, chiral generators and modular
  invariants for subfactors}},
        date={1999},
        ISSN={0010-3616},
     journal={Comm. Math. Phys.},
      volume={208},
      number={2},
       pages={429--487},
         url={http://dx.doi.org/10.1007/s002200050765},
      review={\MR{1729094 (2001c:81180)}},
}

\bib{BrGuLo1993}{article}{
      author={Brunetti, Romeo},
      author={Guido, Daniele},
      author={Longo, Roberto},
       title={{Modular structure and duality in conformal quantum field
  theory}},
        date={1993},
        ISSN={0010-3616},
     journal={Comm. Math. Phys.},
      volume={156},
       pages={201--219},
      eprint={funct-an/9302008v1},
         url={http://dx.doi.org/10.1007/BF02096738},
}

\bib{Bi2012}{article}{
      author={Bischoff, Marcel},
       title={{Models in Boundary Quantum Field Theory Associated with Lattices
  and Loop Group Models}},
        date={2012},
        ISSN={0010-3616},
     journal={Comm. Math. Phys.},
       pages={1--32},
      eprint={arXiv:1108.4889v1 [math-ph]},
         url={http://dx.doi.org/10.1007/s00220-012-1511-2},
        note={10.1007/s00220-012-1511-2},
}

\bib{BiKaLoRe2014-2}{book}{
      author={Bischoff, Marcel},
      author={Kawahigashi, Yasuyuki},
      author={Longo, Roberto},
      author={Rehren, Karl-Henning},
       title={Tensor categories and endomorphisms of von neumann algebras: with
  applications to quantum field theory},
      series={SpringerBriefs in Mathematical Physics},
   publisher={Springer},
        date={2015},
      volume={3},
         url={http://arxiv.org/abs/1407.4793},
}

\bib{CaKaLo2013}{article}{
      author={Carpi, Sebastiano},
      author={Kawahigashi, Yasuyuki},
      author={Longo, Roberto},
       title={{How to add a boundary condition}},
        date={2013},
        ISSN={0010-3616},
     journal={Comm. Math. Phys.},
      volume={322},
      number={1},
       pages={149--166},
         url={http://dx.doi.org/10.1007/s00220-013-1734-x},
      review={\MR{3073161}},
}

\bib{Co1973}{article}{
      author={Connes, Alain},
       title={{Une classification des facteurs de type {${\rm} III$}}},
        date={1973},
     journal={Ann. Sci. \'Ecole Norm. Sup.(4)},
      volume={6},
       pages={133--252},
}

\bib{DaKoRu2012}{article}{
      author={Davydov, Alexei},
      author={Kong, Liang},
      author={Runkel, Ingo},
       title={{Invertible defects and isomorphisms of rational {CFT}s}},
        date={2011},
        ISSN={1095-0761},
     journal={Adv. Theor. Math. Phys.},
      volume={15},
      number={1},
       pages={43--69},
         url={http://projecteuclid.org/getRecord?id=euclid.atmp/1335278890},
      review={\MR{2888007}},
}

\bib{DaMgNiOs2013}{article}{
      author={Davydov, Alexei},
      author={M\"uger, Michael},
      author={Nikshych, Dmitri},
      author={Ostrik, Victor},
       title={{The {W}itt group of non-degenerate braided fusion categories}},
        date={2013},
        ISSN={0075-4102},
     journal={J. Reine Angew. Math.},
      volume={677},
       pages={135--177},
      review={\MR{3039775}},
}

\bib{DaNiOs2013}{article}{
      author={Davydov, Alexei},
      author={Nikshych, Dmitri},
      author={Ostrik, Victor},
       title={{On the structure of the {W}itt group of braided fusion
  categories}},
        date={2013},
        ISSN={1022-1824},
     journal={Selecta Math. (N.S.)},
      volume={19},
      number={1},
       pages={237--269},
         url={http://dx.doi.org/10.1007/s00029-012-0093-3},
      review={\MR{3022755}},
}

\bib{DoXu2006}{article}{
      author={Dong, Chongying},
      author={Xu, Feng},
       title={{Conformal nets associated with lattices and their orbifolds}},
        date={2006},
        ISSN={0001-8708},
     journal={Adv. Math.},
      volume={206},
      number={1},
       pages={279--306},
      eprint={math/0411499v2},
         url={http://dx.doi.org/10.1016/j.aim.2005.08.009},
}

\bib{EvPi2003}{article}{
      author={Evans, David~E.},
      author={Pinto, Paulo~R.},
       title={{Subfactor realisation of modular invariants}},
        date={2003},
        ISSN={0010-3616},
     journal={Comm. Math. Phys.},
      volume={237},
      number={1-2},
       pages={309--363},
         url={http://dx.doi.org/10.1142/S0129167X12500309},
        note={Dedicated to Rudolf Haag},
      review={\MR{2007185 (2005k:46166)}},
}

\bib{EvPu2009}{article}{
      author={Evans, David~E.},
      author={Pugh, Mathew},
       title={{{\rm{SU}}(3)-{G}oodman-de la {H}arpe-{J}ones subfactors and the
  realization of {\rm{SU}}(3) modular invariants}},
        date={2009},
        ISSN={0129-055X},
     journal={Rev. Math. Phys.},
      volume={21},
      number={7},
       pages={877--928},
         url={http://dx.doi.org/10.1142/S0129055X09003761},
      review={\MR{2553429 (2011a:46095)}},
}

\bib{EvPu2011}{article}{
      author={Evans, David~E.},
      author={Pugh, Mathew},
       title={{Spectral measures and generating series for nimrep graphs in
  subfactor theory {II}: {$SU(3)$}}},
        date={2011},
        ISSN={0010-3616},
     journal={Comm. Math. Phys.},
      volume={301},
      number={3},
       pages={771--809},
         url={http://dx.doi.org/10.1007/s00220-010-1157-x},
      review={\MR{2784280 (2012i:46078)}},
}

\bib{Ev2002}{article}{
      author={Evans, David~E.},
       title={{Fusion rules of modular invariants}},
        date={2002},
        ISSN={0129-055X},
     journal={Rev. Math. Phys.},
      volume={14},
      number={7-8},
       pages={709--731},
         url={http://dx.doi.org/10.1142/S0129055X02001351},
      review={\MR{1932663 (2003j:46093)}},
}

\bib{FrFuRuSc2006}{article}{
      author={Fr\"ohlich, J\"urg},
      author={Fuchs, J\"urgen},
      author={Runkel, Ingo},
      author={Schweigert, Christoph},
       title={{Correspondences of ribbon categories}},
        date={2006},
        ISSN={0001-8708},
     journal={Adv. Math.},
      volume={199},
      number={1},
       pages={192--329},
         url={http://dx.doi.org/10.1016/j.aim.2005.04.007},
      review={\MR{2187404 (2007b:18007)}},
}

\bib{FjFuRuSc2008}{article}{
      author={Fjelstad, Jens},
      author={Fuchs, J\"urgen},
      author={Runkel, Ingo},
      author={Schweigert, Christoph},
       title={{Uniqueness of open/closed rational {CFT} with given algebra of
  open states}},
        date={2008},
        ISSN={1095-0761},
     journal={Adv. Theor. Math. Phys.},
      volume={12},
      number={6},
       pages={1283--1375},
         url={http://projecteuclid.org/euclid.atmp/1221834534},
      review={\MR{2443266 (2010a:81208)}},
}

\bib{FrJr1996}{article}{
      author={Fredenhagen, K.},
      author={J\"or\ss{}, Martin},
       title={{Conformal Haag-Kastler nets, pointlike localized fields and the
  existence of operator product expansions}},
        date={1996},
     journal={Comm. Math. Phys.},
      volume={176},
      number={3},
       pages={541--554},
}

\bib{FuRuSc2002}{article}{
      author={Fuchs, J\"urgen},
      author={Runkel, Ingo},
      author={Schweigert, Christoph},
       title={{T{FT} construction of {RCFT} correlators. {I}. {P}artition
  functions}},
        date={2002},
        ISSN={0550-3213},
     journal={Nuclear Phys. B},
      volume={646},
      number={3},
       pages={353--497},
         url={http://dx.doi.org/10.1016/S0550-3213(02)00744-7},
      review={\MR{1940282 (2004c:81244)}},
}

\bib{GaFr1993}{article}{
      author={Gabbiani, Fabrizio},
      author={Fr\"ohlich, J\"urg},
       title={{Operator algebras and conformal field theory}},
        date={1993},
        ISSN={0010-3616},
     journal={Comm. Math. Phys.},
      volume={155},
      number={3},
       pages={569--640},
}

\bib{GrSn2012}{misc}{
      author={Grossman, Pinhas},
      author={Snyder, Noah},
       title={{The Brauer-Picard group of the Asaeda-Haagerup fusion
  categories}},
        date={2015},
}

\bib{HuKiLe2014}{article}{
      author={Huang, Yi-Zhi},
      author={Kirillov, Alexander, Jr.},
      author={Lepowsky, James},
       title={Braided tensor categories and extensions of vertex operator
  algebras},
        date={2015},
        ISSN={0010-3616},
     journal={Comm. Math. Phys.},
      volume={337},
      number={3},
       pages={1143\ndash 1159},
         url={http://dx.doi.org/10.1007/s00220-015-2292-1},
      review={\MR{3339173}},
}

\bib{IzKo2002}{article}{
      author={Izumi, Masaki},
      author={Kosaki, Hideki},
       title={{On a subfactor analogue of the second cohomology}},
        date={2002},
        ISSN={0129-055X},
     journal={Rev. Math. Phys.},
      volume={14},
      number={7-8},
       pages={733--757},
         url={http://dx.doi.org/10.1142/S0129055X02001375},
        note={Dedicated to Professor Huzihiro Araki on the occasion of his 70th
  birthday},
      review={\MR{1932664 (2004a:46061)}},
}

\bib{IzLoPo1998}{article}{
      author={Izumi, Masaki},
      author={Longo, Roberto},
      author={Popa, Sorin},
       title={{A {G}alois correspondence for compact groups of automorphisms of
  von {N}eumann algebras with a generalization to {K}ac algebras}},
        date={1998},
        ISSN={0022-1236},
     journal={J. Funct. Anal.},
      volume={155},
      number={1},
       pages={25--63},
         url={http://dx.doi.org/10.1006/jfan.1997.3228},
      review={\MR{1622812 (2000c:46117)}},
}

\bib{Jo1983}{article}{
      author={Jones, V. F.~R.},
       title={{Index for subfactors}},
        date={1983},
        ISSN={0020-9910},
     journal={Invent. Math.},
      volume={72},
      number={1},
       pages={1--25},
         url={http://dx.doi.org/10.1007/BF01389127},
      review={\MR{696688 (84d:46097)}},
}

\bib{JoSt1991}{article}{
      author={Joyal, Andr\'e},
      author={Street, Ross},
       title={{The geometry of tensor calculus. {I}}},
        date={1991},
        ISSN={0001-8708},
     journal={Adv. Math.},
      volume={88},
      number={1},
       pages={55--112},
         url={http://dx.doi.org/10.1016/0001-8708(91)90003-P},
      review={\MR{1113284 (92d:18011)}},
}

\bib{KaLo2004}{article}{
      author={Kawahigashi, Y.},
      author={Longo, Roberto},
       title={{Classification of local conformal nets. Case {$c < 1$}.}},
        date={2004},
        ISSN={0003-486X},
     journal={Ann. Math.},
      volume={160},
      number={2},
       pages={493--522},
}

\bib{KaLo2004-2}{article}{
      author={Kawahigashi, Yasuyuki},
      author={Longo, Roberto},
       title={{Classification of two-dimensional local conformal nets with
  {$c<1$} and 2-cohomology vanishing for tensor categories}},
        date={2004},
        ISSN={0010-3616},
     journal={Comm. Math. Phys.},
      volume={244},
      number={1},
       pages={63--97},
         url={http://dx.doi.org/10.1007/s00220-003-0979-1},
      review={\MR{2029950 (2005d:81228)}},
}

\bib{KaLo2006}{article}{
      author={Kawahigashi, Y.},
      author={Longo, Roberto},
       title={{Local conformal nets arising from framed vertex operator
  algebras}},
        date={2006},
        ISSN={0001-8708},
     journal={Adv. Math.},
      volume={206},
      number={2},
       pages={729--751},
      eprint={math/0411499v2},
}

\bib{KaLoMg2001}{article}{
      author={Kawahigashi, Y.},
      author={Longo, Roberto},
      author={M\"uger, Michael},
       title={{Multi-Interval Subfactors and Modularity of Representations in
  Conformal Field Theory}},
        date={2001},
     journal={Comm. Math. Phys.},
      volume={219},
       pages={631--669},
      eprint={arXiv:math/9903104},
}

\bib{KaLoPeRe2007}{article}{
      author={Kawahigashi, Yasuyuki},
      author={Longo, Roberto},
      author={Pennig, Ulrich},
      author={Rehren, Karl-Henning},
       title={{The classification of non-local chiral {CFT} with {$c<1$}}},
        date={2007},
        ISSN={0010-3616},
     journal={Comm. Math. Phys.},
      volume={271},
      number={2},
       pages={375--385},
         url={http://dx.doi.org/10.1007/s00220-007-0199-1},
      review={\MR{2287908 (2007m:81204)}},
}

\bib{KiOs2002}{article}{
      author={Kirillov, Jr.~Alexander},
      author={Ostrik, Viktor},
       title={{On a {$q$}-analogue of the {M}c{K}ay correspondence and the
  {ADE} classification of {$\germ {sl}\_2$} conformal field theories}},
        date={2002},
        ISSN={0001-8708},
     journal={Adv. Math.},
      volume={171},
      number={2},
       pages={183--227},
         url={http://dx.doi.org/10.1006/aima.2002.2072},
      review={\MR{1936496 (2003j:17019)}},
}

\bib{Ko1986}{article}{
      author={Kosaki, Hideki},
       title={{Extension of {J}ones' theory on index to arbitrary factors}},
        date={1986},
        ISSN={0022-1236},
     journal={J. Funct. Anal.},
      volume={66},
      number={1},
       pages={123--140},
         url={http://dx.doi.org/10.1016/0022-1236(86)90085-6},
      review={\MR{829381 (87g:46093)}},
}

\bib{KoRu2008}{article}{
      author={Kong, Liang},
      author={Runkel, Ingo},
       title={{Morita classes of algebras in modular tensor categories}},
        date={2008},
        ISSN={0001-8708},
     journal={Adv. Math.},
      volume={219},
      number={5},
       pages={1548--1576},
         url={http://dx.doi.org/10.1016/j.aim.2008.07.004},
      review={\MR{2458146 (2009h:18016)}},
}

\bib{KoRu2009}{article}{
      author={Kong, Liang},
      author={Runkel, Ingo},
       title={{Cardy algebras and sewing constraints. {I}}},
        date={2009},
        ISSN={0010-3616},
     journal={Comm. Math. Phys.},
      volume={292},
      number={3},
       pages={871--912},
         url={http://dx.doi.org/10.1007/s00220-009-0901-6},
      review={\MR{2551797 (2011b:81220)}},
}

\bib{KoRu2010}{incollection}{
      author={Kong, Liang},
      author={Runkel, Ingo},
       title={{Algebraic structures in {E}uclidean and {M}inkowskian
  two-dimensional conformal field theory}},
        date={2010},
   booktitle={{Noncommutative structures in mathematics and physics}},
   publisher={K. Vlaam. Acad. Belgie Wet. Kunsten (KVAB), Brussels},
       pages={217--238},
      review={\MR{2742742 (2012g:81191)}},
}

\bib{KaSu2012}{article}{
      author={Kawahigashi, Yasuyuki},
      author={Suthichitranont, Noppakhun},
       title={{Construction of holomorphic local conformal framed nets}},
        date={2014-12},
     journal={Internat. Math. Res. Notices},
      volume={2014},
       pages={2924--2943},
      eprint={1212.3771v1},
         url={http://arxiv.org/abs/1212.3771v1;
  http://arxiv.org/pdf/1212.3771v1},
}

\bib{Lo2003}{article}{
      author={Longo, Roberto},
       title={{Conformal Subnets and Intermediate Subfactors}},
        date={2003},
        ISSN={0010-3616},
     journal={Comm. Math. Phys.},
      volume={237},
       pages={7--30},
      eprint={arXiv:math/0102196v2 [math.OA]},
         url={http://dx.doi.org/10.1007/s00220-003-0814-8},
}

\bib{Lo1994}{article}{
      author={Longo, Roberto},
       title={{A duality for {H}opf algebras and for subfactors. {I}}},
        date={1994},
        ISSN={0010-3616},
     journal={Comm. Math. Phys.},
      volume={159},
      number={1},
       pages={133--150},
         url={http://projecteuclid.org/getRecord?id=euclid.cmp/1104254494},
      review={\MR{1257245 (95h:46097)}},
}

\bib{LoRe2004}{article}{
      author={Longo, Roberto},
      author={Rehren, Karl-Henning},
       title={{Local Fields in Boundary Conformal QFT}},
        date={2004},
     journal={Rev. Math. Phys.},
      volume={16},
       pages={909--960},
      eprint={arXiv:math-ph/0405067},
}

\bib{LoRe2009}{article}{
      author={Longo, Roberto},
      author={Rehren, Karl-Henning},
       title={{How to Remove the Boundary in CFT - An Operator Algebraic
  Procedure}},
        date={2009-02},
     journal={Comm. Math. Phys.},
      volume={285},
       pages={1165--1182},
      eprint={arXiv:0712.2140 [math-ph]},
}

\bib{LoRe1995}{article}{
      author={Longo, Roberto},
      author={Rehren, Karl-Henning},
       title={{Nets of Subfactors}},
        date={1995},
     journal={Rev. Math. Phys.},
      volume={7},
       pages={567--597},
      eprint={arXiv:hep-th/9411077},
}

\bib{LoRo1997}{article}{
      author={Longo, R.},
      author={Roberts, J.~E.},
       title={{A theory of dimension}},
        date={1997},
        ISSN={0920-3036},
     journal={K-Theory},
      volume={11},
      number={2},
       pages={103--159},
      eprint={arXiv:funct-an/9604008v1},
         url={http://dx.doi.org/10.1023/A:1007714415067},
      review={\MR{1444286 (98i:46065)}},
}

\bib{Mg2003}{article}{
      author={M\"uger, Michael},
       title={{From subfactors to categories and topology. {I}. {F}robenius
  algebras in and {M}orita equivalence of tensor categories}},
        date={2003},
        ISSN={0022-4049},
     journal={J. Pure Appl. Algebra},
      volume={180},
      number={1-2},
       pages={81--157},
         url={http://dx.doi.org/10.1016/S0022-4049(02)00247-5},
      review={\MR{1966524 (2004f:18013)}},
}

\bib{Mg2003II}{article}{
      author={M\"uger, Michael},
       title={{From subfactors to categories and topology. {II}. {T}he quantum
  double of tensor categories and subfactors}},
        date={2003},
        ISSN={0022-4049},
     journal={J. Pure Appl. Algebra},
      volume={180},
      number={1-2},
       pages={159--219},
         url={http://dx.doi.org/10.1016/S0022-4049(02)00248-7},
      review={\MR{1966525 (2004f:18014)}},
}

\bib{Mg2010}{inproceedings}{
      author={M\"uger, Michael},
       title={{On superselection theory of quantum fields in low dimensions}},
        date={2010},
   booktitle={{X{VI}th {I}nternational {C}ongress on {M}athematical
  {P}hysics}},
   publisher={World Sci. Publ., Hackensack, NJ},
       pages={496--503},
         url={http://dx.doi.org/10.1142/9789814304634_0041},
      review={\MR{2730815 (2012i:81165)}},
}

\bib{Os2003}{article}{
      author={Ostrik, Victor},
       title={{Module categories, weak {H}opf algebras and modular
  invariants}},
        date={2003},
        ISSN={1083-4362},
     journal={Transform. Groups},
      volume={8},
      number={2},
       pages={177--206},
         url={http://dx.doi.org/10.1007/s00031-003-0515-6},
      review={\MR{1976459 (2004h:18006)}},
}

\bib{Pa1995}{article}{
      author={Pareigis, Bodo},
       title={On braiding and dyslexia},
        date={1995},
        ISSN={0021-8693},
     journal={J. Algebra},
      volume={171},
      number={2},
       pages={413--425},
         url={http://dx.doi.org/10.1006/jabr.1995.1019},
      review={\MR{1315904 (96a:18005)}},
}

\bib{Pi2007}{article}{
      author={Pinto, Paulo~R.},
       title={{Simple current modular invariants from braided subfactors}},
        date={2007},
        ISSN={0960-0779},
     journal={Chaos Solitons Fractals},
      volume={33},
      number={1},
       pages={135--142},
         url={http://dx.doi.org/10.1016/j.chaos.2006.10.039},
      review={\MR{2301851 (2007m:81104)}},
}

\bib{Re2000}{article}{
      author={Rehren, K.-H.},
       title={{Canonical tensor product subfactors}},
        date={2000},
        ISSN={0010-3616},
     journal={Comm. Math. Phys.},
      volume={211},
      number={2},
       pages={395--406},
         url={http://dx.doi.org/10.1007/s002200050818},
      review={\MR{1754521 (2001d:46093)}},
}

\bib{Se2011}{incollection}{
      author={Selinger, P.},
       title={{A survey of graphical languages for monoidal categories}},
        date={2011},
   booktitle={{New structures for physics}},
      series={{Lecture Notes in Phys.}},
      volume={813},
   publisher={Springer},
     address={Heidelberg},
       pages={289--355},
         url={http://dx.doi.org/10.1007/978-3-642-12821-9_4},
      review={\MR{2767048}},
}

\bib{Xu2000}{article}{
      author={Xu, Feng},
       title={{Jones-{W}assermann subfactors for disconnected intervals}},
        date={2000},
        ISSN={0219-1997},
     journal={Commun. Contemp. Math.},
      volume={2},
      number={3},
       pages={307--347},
      eprint={arXiv:q-alg/9704003},
         url={http://dx.doi.org/10.1142/S0219199700000153},
      review={\MR{1776984 (2001f:46094)}},
}

\bib{Ya2003}{article}{
      author={Yamagami, Shigeru},
       title={{{$C^\ast$}-tensor categories and free product bimodules}},
        date={2003},
        ISSN={0022-1236},
     journal={J. Funct. Anal.},
      volume={197},
      number={2},
       pages={323--346},
         url={http://dx.doi.org/10.1016/S0022-1236(02)00036-8},
      review={\MR{1960417 (2004a:46062)}},
}
\end{biblist}
\end{bibdiv}

\end{document}